\theoremstyle{plain}
\newtheorem{theorem}{Theorem}[section]
\newtheorem{lemma}[theorem]{Lemma}
\theoremstyle{remark}
\newtheorem{remark}[theorem]{Remark}
\newlist{assumpenum}{enumerate}{1}
\setlist[assumpenum]{leftmargin=3.5em, label={\bf (A.\arabic*)}, ref=(A.\arabic*)}
\newlist{lemmaitemenum}{enumerate}{1}
\setlist[lemmaitemenum]{leftmargin=4.5em, label=(\thelemma.\arabic*), ref=(\thelemma.\arabic*)}
\def\lcrarrow#1{\overset{#1}{\longrightarrow}}
\def\crarrow#1{\overset{#1}{\rightarrow}}
\def\bs#1{\boldsymbol{#1}}
\def\Pn{\mathbb{P}_n}
\def\Pj{\mathbb{P}_j}
\def\Qn{\mathbb{Q}_n}
\def\Qj{\mathbb{Q}_j}
\DeclareMathOperator{\IF}{IF}
\begin{document}
\allowdisplaybreaks

\title{Efficient Estimation of the Maximal Association between Multiple Predictors and a Survival Outcome}
\author[1]{Tzu-Jung Huang}
\author[1,2]{Alex Luedtke}
\author[3]{Ian W. McKeague}

\affil[1]{\small Department of Statistics, University of Washington}
\affil[2]{\small Vaccine and Infectious Disease Division, Fred Hutchinson Cancer Research Center}
\affil[3]{\small Department of Biostatistics, Columbia University}

\date{\today}
\maketitle

\fontsize{12}{16pt plus.8pt minus .6pt}\selectfont

\begin{abstract}
This paper develops a new approach to post-selection inference for screening high-dimensional predictors of survival outcomes. Post-selection inference for right-censored outcome data has been investigated in the literature, but much remains to be done to make the methods both reliable and computationally-scalable in high-dimensions.  Machine learning tools are commonly used to provide {\it predictions} of survival outcomes, but the estimated effect of a selected predictor suffers from confirmation bias unless the selection is taken into account.  The new approach involves construction of semi-parametrically efficient estimators of the linear association between the predictors and the survival outcome, which are used to build a test statistic for detecting the presence of an association between any of the predictors and the outcome. Further, a stabilization technique reminiscent of bagging allows a normal calibration for the resulting test statistic, which enables the construction of confidence intervals for the maximal association between predictors and the outcome and also greatly reduces computational cost. Theoretical results show that this testing procedure is valid even when the number of predictors grows superpolynomially with sample size, and our simulations support that this asymptotic guarantee is indicative the performance of the test at moderate sample sizes. The new approach is applied to the problem of identifying patterns in viral gene expression associated with the  potency of an antiviral drug.
\end{abstract}

\section{Introduction}
The problem of identifying associations between high-dimensional predictors and a survival outcome is of great interest in the biomedical sciences.  In virology, for example, the potency of an antiviral drug (in controlling viral replication) is typically assessed in terms of a type of survival time outcome, and it is important to identify associations between patterns of viral gene expression  and the drug's potency \citep{Gilbert2017}. In cancer genomics,  patterns of patients' gene expression can also influence survival time outcomes. Diffuse large B-cell lymphoma, for instance, has been studied with the aim of identifying such patterns from massive collections of gene-expression data \citep{Rosenwald2002,Bovelstad2009}. In earlier work \citep{Huang2019}, we introduced an approach to this general problem based on  marginal accelerated failure time modeling.  In the present paper, we expand this approach to provide a semiparametrically efficient  and more computationally tractable method that can handle the screening of extremely large numbers of predictors (as is typical with gene expression data). 

Our approach is based on marginal screening of the predictors,  and specifying the link between the survival outcome and the predictors by a general semiparametric accelerated failure time (AFT) model that does not make any distributional assumption on the error term. The error term is merely taken to be  uncorrelated with the predictors (i.e., the so-called {\it assumption lean} linear model setting).
Let $T$ be the (log-transformed) time-to-event outcome, and $\bs{U}=(U_1,\ldots,U_p)^T$ denote a
$p$-dimensional vector of predictors. Note that $p=p_n$ can grow with $n$, but we omit the subscript $n$ throughout for notational simplicity unless otherwise stated.
The AFT model takes the form
\begin{align}
  T=\alpha_0 + \bs{U}^T\bs{\beta}_0 +\varepsilon, \label{eq:linear_model}
\end{align}
where $\alpha_0 \in \mathbb{R}$ is an intercept,  $\bs{\beta}_0 \in \mathbb{R}^p$ is a vector of slope parameters, and $\varepsilon$
is the zero-mean error term that is uncorrelated with $\bs{U}$. The transformed survival outcome $T$ is possibly right-censored by $C$, and we only observe $X = \min\{T,C\}$ and $\delta=1(T \le C)$.
The problem is to test the global null  $\bs{\beta}_0=0$. We emphasize that model (\ref{eq:linear_model}) is locally nonparametric \citep{vanderLaan&Robins2003} and holds without  distributional assumptions (such as independent errors) apart from mild moment conditions, by defining the second term as the $L_2$-projection  of $T$ onto the linear span of $U_1,\ldots,U_p$.

An especially attractive feature of the AFT model is that the marginal association between $T$ and each predictor can be represented directly in terms of a correlation, and does not require any structural assumptions.
This allows us to reduce the high-dimensional screening problem (involving all $p$ components of $\bs{\beta}_0$) to a single test of whether the most correlated predictor with $T$ is significant.
A popular approach to the screening of predictors in survival analysis is to use relative or excess conditional hazard function representations of associations.
However, the AFT approach has the advantage that a lack of any marginal correlation implies the absence of all correlation between $T$ and $\bs{U}$ (under the mild assumption that the covariance matrix of $\bs{U}$ is invertible); in the hazard-rate setting, there is no such connection and the semiparametric model needs to hold for testing methods to be useful.

\citet{Koul1981} (henceforth, KSV) introduced the technique of inversely weighting the observed outcomes by the Kaplan--Meier estimator of the censoring distribution, enabling the use of standard least squares estimators from the uncensored linear model.
Subsequently, two additional sophisticated methods were proposed to fit the semiparametric AFT model. The Buckley--James estimator \citep{Buckley1979, Ritov1990} replaces the censored survival outcome by the conditional expectation of $T$ given the data. The rank-based method is an estimating equation approach formulated in terms of the partial likelihood score function \citep{Tsiatis1990, Lai1991, Lai1991a, Ying1993, Jin2003}. A difficulty with the Buckley--James and rank-based methods is that they fail to preserve a direct link with the AFT, which is essential for marginal screening based on correlation.
Our new marginal screening test will rely on finding an asymptotically efficient estimator of each marginal slope parameter; this will have a considerable  advantage  in terms of  efficiency over the marginal screening method based on the KSV estimator \citep{Huang2019}.  

The marginal KSV estimators stem from regressing the estimated synthetic response $Y=\delta X/\hat{G}_n(X)$ on
successive components of $\bs{U}$, where $Y$ is regarded as an inverse probability weighted estimate, and $\hat{G}_n$ is the standard Kaplan--Meier estimator
of the survival function of $C$.
Under independent censoring, the use of least squares estimators, treating $Y$ as a response variable, is justified in view of the uniform consistency of $\hat{G}_n$ under mild  conditions (e.g.,  when
the distribution functions of $T$ and $C$ have no common jumps; see \citealp{Stute1993}). Independent censoring is a common assumption in the high-dimensional screening of predictors for
survival outcomes \citep{He2013, Song2014, Li2016}.
 
We now outline the various novel steps involved in developing  our proposed test.  In \citet{Huang2019} we showed that
$|{\rm Corr}(U_j, T)|$ and  $|{\rm Corr}(U_j, \tilde Y)|$, for $j=1,\ldots,p$, are maximized at a common index $k$. This was used to justify replacing $T$ by $\tilde Y= \delta X/{G}(X)$ (and in turn its estimate $Y$)  in  the  empirical version of the  
slope parameter  $\Psi(P)={\rm Cov}_P(U,T)/{\rm Var}_P(U)$, where $U=U_k$, for use as a test statistic for the global null hypothesis $\bs{\beta}_0=0$. 
Our aim now is to replace this test statistic by one that is more efficient (when $k$ is treated as fixed), and also that is easier to calibrate taking the selection of $k$ into account.   Writing  $\tilde{\mathbb{P}}_n$  as  the empirical distribution of $(U,\tilde Y)$, when $G$ is known, the influence function of $\Psi(\tilde{\mathbb{P}}_n)$  will  be derived from that of the sample correlation coefficient in the uncensored case \citep{Devlin1975}.  This will lead to the influence function $\IF(O|P)$ of the (inefficient) KSV estimator $\hat \Psi(\tilde{\mathbb{P}}_n)$ that replaces the unknown $G$ in $\Psi(\tilde{\mathbb{P}}_n)$ by $\hat G_n$.  This derivation will be based  on the influence function of  $\hat G_n$ and some empirical process and Slutsky-type arguments.   The next step is to project $\IF(\cdot|P)$ onto the tangent space of the observation model to obtain an efficient influence function $\IF^*$, which in turn will lead to an asymptotically efficient one-step estimator of $\Psi(P)$.  This will be accomplished in part using results of \cite{vanderLaan&Robins2003} and \cite{vanderLaanetal2000}.  
The one-step estimator takes the form $S(\hat{P}_n, \mathbb{P}_n) = \Psi(\hat{P}_n) + \mathbb{P}_n \IF^*(\cdot|\hat{P}_n)$, where $\hat{P}_n$ is a plug-in estimator of the various features of  $P$ that appear in $\IF^*$, $\Psi(\hat{P}_n) = \hat \Psi(\tilde{\mathbb{P}}_n)$, and $\mathbb{P}_n$ is the empirical distribution of the data (acting as an expectation operator). Estimation of those features will involve estimation of the function $E[\tilde{Y}|U=u, X\ge s]$ as in \cite{vanderLaan1998},  the empirical distribution of the selected  predictor $U$, and a local Kaplan--Meier estimator of the conditional censoring distribution  given $U$.   

The final step is to develop a method to calibrate the test.  This will be done by introducing a  stabilized version of $S(\hat{P}_n, \mathbb{P}_n)$ that ``smooths out'' the implicit selection of $k$, along the lines of \cite{Luedtke2018} in the uncensored case.  The stabilized version of $S(\hat{P}_n, \mathbb{P}_n)$ is constructed by taking a weighted average over  sub-samples, and is asymptotically equivalent to a martingale sum (provided $\hat P_n$ has no effect asymptotically), which leads to a standard normal limit even under growing dimensions, when $p=p_n \to \infty$ and $\log(p_n)/n^{1/4} \to 0$. Although the stabilized one-step estimator can have slightly diminished power compared with its un-stabilized counterpart, at least in the uncensored case \citep{Luedtke2018}, 
it vastly reduces  computational cost by avoiding the need for a double bootstrap \citep{Huang2019}. The most challenging step in establishing the asymptotic normality involves finding an exponential tail bound for a collection of martingale integrals in which the integrands fall in a class of functions of bounded variation. This is done using bracketing entropy and involves the novel application of a uniform probability inequality bound for a family of counting process integrals due to  \cite{vandeGeer1995}.

In practice, the implementation of our stabilized one-step estimator to screen predictors of dimension $p=10^6$ based on data of $n=500$ on a single-core laptop only takes one minute. Hence our proposed test enjoys both statistical and computational
efficiency. Further, it provides an asymptotically valid confidence interval for the slope parameter of the selected predictor. As far as we know, no other competing method provides all of these features in the setting of high-dimensional marginal screening for  survival outcomes.  

Variable selection methods for right-censored survival data are widely available, although formal testing procedures
are far less prevalent. For example, variants of  regularized Cox regression have been studied by \citep{Tibshirani1997, Fan2002, Bunea2005, Zhang2007, Bovelstad2009, Engler2009, Antoniadis2010, Binder2011, Wu2012, Sinnott2016}. Penalized AFT models have been
considered by \citep{Huang2006, Datta2007, Johnson2008, Johnson2008a, Cai2009, Huang2010, Bradic2011, Ma2012, Li2014}.
These methods ensure the consistency of variable selection  only (i.e., the oracle property), and do not address the issue of post-selection inference. \cite{Fang2016} have established asymptotically valid confidence intervals for a preconceived regression parameter in a high-dimensional Cox model after variable selection on the remaining predictors,
but this does not apply to marginal screening (where no regression parameter is singled out, a priori). \cite{Yu2021} recently constructed valid  confidence intervals for the regression parameters in   high-dimensional Cox models, but their approach also does not apply to marginal screening because it is predicated on the presence of active predictors (and also pre-selection of parameters of interest). \cite{Zhong2015} have considered the  problem for preconceived regression parameters within a high-dimensional additive risk model. \cite{Taylor2017} proposed a method of finding post-selection corrected p-values and confidence intervals for the Cox model based on conditional testing. However, to the best of our knowledge, their method has not been explored theoretically (except in the uncensored linear regression setting with fixed design and normal errors; see \citealp{Lockhart2014}).
Statistical methods for variable selection based on marginal screening for survival data have been studied by \cite{Fan2010}, who extended sure independence screening to survival outcomes based on the Cox model.
Their method applies to the selection of components of ultra-high dimensional predictors, but no formal testing is available. Other relevant references
include \cite{Zhao2012}, \cite{Gorst-Rasmussen2013}, \cite{He2013}, \cite{Song2014}, \cite{Zhao2014}, \cite{Hong2016}, \cite{Li2016}, \cite{Hong2017}, \cite{Pan2019}, \cite{Xia2019},
\cite{Hong2020} and \cite{LiuChenLi2020}.

The article is organized as follows. In Section \ref{sec:preliminary} we formulate the estimation problem and introduce  background material on semiparametric efficiency. The one-step efficient estimator of the target parameter is developed in Section \ref{sec:eff_est_fixed_d} in the case of a single predictor. In Section \ref{sec:stabilized_onestep} we develop an asymptotic normality result for calibrating the proposed test statistic that takes selection of the predictor into account.
Various competing methods are discussed in Section \ref{sec:competing_methods}. Numerical results reported in Section \ref{sec:numerical} show
that the proposed approach has favorable performance compared with these competing methods. In Section \ref{sec:real_data_application} we present an application using data on  viral gene expression as related to the  potency of an anti-retroviral drug for the treatment of HIV-1. Concluding remarks are given in
Section \ref{sec:discussion}. Proofs are placed in the Appendix.

\section{Preliminaries} 
\label{sec:preliminary}

First we recall the  standard survival analysis model with independent right censorship. Let $T$ and $C$ denote a (log-transformed) survival time and censoring time, respectively, and suppose we observe $n$ i.i.d.\ copies of $O=(X,\delta,\bs{U})\sim P$, where $X=\min\{T,C\}$, $\delta=1(T \leq C)$, and $\bs{U}=(U_k, k= 1, \ldots, p)$ is a $p$-vector of predictors. We denote the joint distribution of $(T,\bs{U})$ by $Q$ and the censoring distribution by $G$, and we also assume throughout that the censoring time $C$ is independent of $(T,\bs{U})$. Though this joint independence assumption will be stronger than needed, it will greatly simplify the developments when $\bs{U}$ is of large dimension relative to sample size. The distribution $P$ belongs to the statistical model $\mathcal{M}$, which is the collection of distributions $P_1$ parameterized by $(Q_1,G_1)$ such that $P_1$ has density with respect to an appropriate dominating measure $\nu$ given by
\begin{align*}
  \frac{dP_1}{d\nu}(x,\delta,\bs{u})&= [q_1(x|\bs{u})G_1(C\ge x)]^{\delta}[Q_1(T\ge x|\bs{U}=\bs{u})g_1(x)]^{1-\delta} q_1(\bs{u}),  
\end{align*}
where $q_1$ and $g_1$ are the densities of $Q_1$ and $G_1$ with respect to $\nu$. Let the follow-up period be $\mathcal{T}=(-\infty,\tau]$. The sample space is denoted by $\mathcal{X}=\mathcal{T}\times\{0,1\}\times\mathbb{R}^{p}$ and the empirical distribution on this space is denoted $\Pn$. Moreover, for a distribution $P_1$ on the support of $O$ and a function $f$ mapping from a realization of $O$ to $\mathbb{R}^d$, we let $P_1 f\equiv P_1 f(O)\equiv \int f(o) dP_1(o)$.

Our approach to marginal screening is based on an  estimator of the maximal (absolute) slope parameter from fitting a marginal linear
regression of the survival outcome $T$ against each predictor $U_k$. That is, we target the parameter
\begin{align}\label{eq:def_parameter}
 \Psi(P) \equiv \max_{k = 1,\ldots, p}\left|\Psi_{k}(P)\right|,
\end{align}
where $\Psi_k: \mathcal{M} \rightarrow \mathbb{R}$ is given by
\begin{align}\label{eq:def_parameter.1}
 \Psi_{k}(P)= \frac{{\rm Cov}_P(U_k,T)}{{\rm Var}_{P}(U_k)}.
\end{align}
Throughout we assume that $U_k$ and $T$ have non-degenerate finite second moments.  Further, in order for the target parameter to be proportional  to the maximal absolute (Pearson) correlation, we implicitly assume that all the $U_k$ are pre-standardized to have  unit variance --- this assumption only plays an interpretive  role in the sequel. 
The parameter $\Psi_k(P)$ can be identified in terms of the conditional mean lifetime $E[T|U_k]$ and the marginal distribution of $U_k$. Indeed,
\begin{align}\label{eq:def_parameter.2}
  {\rm Cov}_P(U_k, T) = {\rm Cov}_P( U_k,E[T|U_k]).   
\end{align} 
The proposed one-step estimator of $\Psi_k(P)$ that we will develop also involves estimation of $G$.

We will need some general concepts from semiparametric efficiency theory \citep[e.g.,][]{Pfanzagl1990}. Suppose we observe a general random vector $O\sim P$.
Let $L_0^2(P)$ denote the Hilbert space of $P$-square integrable functions with mean zero. Consider a smooth one-dimensional family of probability
measures $\{ P_\epsilon\} $ passing through $P$ and having score function $k\in L_0^2(P)$ at $\epsilon=0$. The tangent space $\mathbf{T}^{\mathcal{M}}(P)$ is the
$L_0^2(P)$-closure of the linear span of all such score functions $k$. For example, if nothing is known about $P$, then
$P_\epsilon(do)=(1+\epsilon k(o))P(do)$ is such a submodel for any bounded function $k$ with mean zero (provided $\epsilon$ is sufficiently small),
so $\mathbf{T}^{\mathcal{M}}(P)$ is seen to be the whole of $L_0^2(P)$ in this case.

Let $\psi : \mathcal{M}\rightarrow\mathbb{R}$ be a parameter that is pathwise differentiable at $P \colon$ there exists
$g\in L_0^2(P)$ such that $\lim_{\epsilon\to 0} \left ( \psi(P_\epsilon)-\psi(P)\right)/\epsilon =\langle g,k\rangle$, for any smooth submodel
$\{ P_\epsilon\}$ with score function $k$, as above, where $\langle \cdot,\cdot\rangle$ is the inner product in $L_0^2(P)$. The function $g$ is
called a gradient (or influence function) for $\psi$; the projection $\IF_\psi$ of any gradient on the tangent space $\mathbf{T}^{\mathcal{M}}(P)$ is unique and is known as
the canonical gradient (or efficient influence function). The supremum of the Cram\'er--Rao bounds for all submodels (the information bound)
is given by the second moment of $\IF_\psi(O)$.  Furthermore, the influence function of any regular and asymptotically linear estimator must be a gradient \cite[Proposition 2.3 in][]{Pfanzagl1990}.

A one-step estimator is an empirical bias correction of a na\"{i}ve plug-in estimator in the direction of a gradient of the parameter of interest \citep{Pfanzagl1982}; when this gradient is the canonical gradient, then this results in an efficient estimator under some regularity conditions. A one-step estimator for $\psi(P)$  is constructed as follows. First, one obtains an initial estimate $\hat{P}$ of $P$. For any gradient $D(\hat{P})$ of the parameter $\psi$ evaluated at $\hat{P}$, by the definition of the gradient this initial estimate satisfies  
\begin{align*}
  \psi(\hat{P}) - \psi(P)&= -P D(\hat{P}) + {\rm Rem}_{\psi}(\hat{P},P),
\end{align*}
where ${\rm Rem}_{\psi}(\hat{P},P)$ is negligible if $\hat{P}$ is close to $P$ in an appropriate sense. As $D(P)$ has mean zero under $P$, we expect that $P D(\hat{P})$ is close to zero if $D$ is continuous in its argument and $\hat{P}$ is close to $P$. However, the rate of convergence of $P D(\hat{P})$ to zero as sample size grows may be slower than $n^{-1/2}$. The one-step estimator aims to improve $\psi(\hat{P})$ and achieve $n^{1/2}$-consistency and asymptotically normality by adding an empirical estimate $\mathbb{P}_n D(\hat{P})$ of its deviation from $\psi(P)$. By the above, the one-step estimator $\hat{\psi}\equiv \psi(\hat{P}) + \mathbb{P}_n D(\hat{P})$ satisfies the expansion
\begin{align*}
  \hat{\psi} - \psi(P)&= (\mathbb{P}_n-P) D(\hat{P}) + {\rm Rem}_{\psi}(\hat{P},P).
\end{align*}
Under an empirical process and $L^2(P)$ consistency condition on $D(\hat{P})$, the leading term on the right-hand side is asymptotically equivalent to $(\mathbb{P}_n-P) D(P)$, which converges in distribution to a mean-zero Gaussian 
limit with consistently estimable covariance. The construction of this one-step estimator is generally non-unique because there is generally more than one gradient for $\psi$; this is true in our setting when $\psi=\Psi_k$ and we assume that $C$ is independent of $(T,\bs{U})$. To minimize the variance of the Gaussian limit, then $D(\hat{P})$ can generally be chosen to be equal to the canonical gradient of $\psi$ at $\hat{P}$, since under conditions the mean-square limit of the efficient influence function at $\hat{P}$ will be equal to the efficient influence function at $P$.

\section{Slope estimation with a single predictor} \label{sec:eff_est_fixed_d}
Restricting attention to the case of a single predictor $U_k$, in this section we develop an efficient estimator of $\Psi_k(P)$; for notational simplicity we suppress the subscript $k$, and just write $\Psi(P)$ and $U$, both here and in the corresponding proofs in the sequel. 

\subsection{Inefficient estimation with known censoring distribution \texorpdfstring{$G$}{Lg}}\label{sec:ipw}
Let $G(t) \equiv {\rm P}(C \ge t)$ be the unknown survival function of $C$, where throughout we assume 
$G(\tau)>0$. For a given survival function $G_1$, let
\begin{align*}
    \Psi_{G_1}(P) \equiv \frac{{\rm Cov}_P(U,\delta X/G_1(X))}{{\rm Var}_{P}(U)}.
\end{align*}
Also let $\tilde{Y} \equiv \delta X/G(X)$. Noting that $E[T]=E[\tilde Y]$ and $E[UT]=E[U \tilde Y]$, we obtain the useful identity that $\Psi(P)=\Psi_G(P)$. This suggests that, if $G$ is known, then it suffices to try to estimate the value of the inverse-probability-of-censoring weighted parameter $\Psi_G$ evaluated at the distribution that generated the observed data.

To study the parameter $\Psi_G$, it will be useful to use existing arguments from the uncensored case. To do this, we define $f_G(o)\equiv (u,\tilde{y})$, and then let $\tilde{P}$ denote the pushforward measure $\tilde P\equiv P\circ f_G^{-1}$, that is, the distribution of $f_G(O)$ when $O\sim P$. Writing
\begin{align*}
    \Gamma(\tilde{P})\equiv \frac{{\rm Cov}_{\tilde{P}}(U,\tilde{Y})}{{\rm Var}_{\tilde{P}}(U)},
\end{align*}
we then see that the parameter $P\mapsto \Gamma(P\circ f_G^{-1})$ is the same as the parameter $P\mapsto \Psi_G(P)$. Therefore, a first-order expansion of the parameter $\Gamma$ will yield a first-order expansion of the parameter $\Psi_G$. Noting that $\Gamma$ is simply the slope parameter in a linear regression in which both the predictor and the outcome are observed, we can develop an expansion of this parameter using an existing result in \cite{Devlin1975}, who attributed the result to C. L. Mallows. Let
\begin{align*}
  \IF^{ipw}_G(\cdot|P) \colon o \mapsto  \frac{(u-P[U])(\tilde y-P[T])}{{\rm Var}_{P}(U)}
  -\frac{{\rm Cov}_{P}(U, T)}{{\rm Var}_{P}^2(U)}(u-P[U])^2,
\end{align*}
where we note that $\tilde{y}$ is a function of $(x,\delta)$ and therefore of $o$, and the subscript $G$ indicates the implicit dependence of $\tilde y$ on $G$.
Using  $\tilde{\mathbb{P}}_n$ to denote  the empirical distribution of $(U,\tilde Y)$ when $G$ is known, Mallows' result implies that, for a term ${\rm Rem}_G(\Pn,P)$ defined in Appendix Section \ref{sec:canonical_max_slope},
\begin{equation} \label{eq:if_G_fixed_derivation}
\begin{split}
  &\Gamma(\tilde{\mathbb{P}}_n)-\Gamma(\tilde P) \\
  &\hspace{0.1cm} = (\tilde{\mathbb{P}}_n-\tilde P) \left[ \left\{ \frac{(U-\tilde{P}[U])(\tilde Y-\tilde{P}[\tilde Y])}{{\rm Var}_{\tilde P}(U)}
  -\frac{{\rm Cov}_{\tilde P}(U, \tilde Y)}{{\rm Var}_{\tilde P}^2(U)}(U-\tilde{P}[U])^2 \right\} \right]+{\rm Rem}_G(\mathbb{P}_n, P) \\
  &\hspace{0.1cm} = (\mathbb{P}_n-P) \left[ \left\{ \frac{(U-P[U])(\tilde Y-P[T])}{{\rm Var}_{P}(U)}
  -\frac{{\rm Cov}_{P}(U, T)}{{\rm Var}_{P}^2(U)}(U-P[U])^2 \right\} \right]+{\rm Rem}_G(\mathbb{P}_n,P) \\
  &\hspace{0.1cm} = (\Pn-P) \left[\IF^{ipw}_G(O|P)\right]+{\rm Rem}_G({\mathbb{P}}_n, P) \\
  &\hspace{0.1cm} = \Pn \left[\IF^{ipw}_G(O|P)\right]+{\rm Rem}_G({\mathbb{P}}_n, P),
\end{split}
\end{equation}
where the second equality follows from $P[T] = \tilde P[\tilde Y]$, ${\rm Cov}_P(U,T) = {\rm Cov}_{\tilde P}(U,\tilde Y)$ and
the fact that $U$ has the same marginal distribution under $P$ and $\tilde P$, and the last equality follows from $P\big[\IF^{ipw}_G(O|P)\big]=0$.
For more details of the derivation of $\IF^{ipw}_G$, see Appendix Section \ref{sec:canonical_max_slope}. For concise notations, we hereafter omit the subscript $P$ from the functions evaluated at $P$ unless otherwise stated, which simplifies the presentation of the influence function to
\begin{equation} \label{eq:if_ipw}
  \IF^{ipw}_G(\cdot|P) \colon o \mapsto  \frac{(u-P[U])(\tilde y-P[T])}{{\rm Var}(U)}
  -\frac{{\rm Cov}(U, T)}{{\rm Var}^2(U)}(u-P[U])^2.
\end{equation}
Noting that $\tilde{\mathbb{P}}_n=\Pn\circ f_G^{-1}$ and recalling that the parameter $P\mapsto \Gamma(P\circ f_G^{-1})$ is the same as the parameter $P\mapsto \Psi_G(P)$, \eqref{eq:if_G_fixed_derivation} yields that
\begin{align} \label{eq:if_G_fixed}
  \Psi_G(\Pn)-\Psi_G(P)
  = \Pn \big[\IF^{ipw}_G(O|P)\big]+{\rm Rem}_G({\mathbb{P}}_n, P).
\end{align}

\subsection{Efficiency gains through estimation of \texorpdfstring{$G$}{Lg}} \label{sec:estimation_G}
We now consider the effect on the inverse-probability-weighted estimator $\Psi_G(\Pn)$ of replacing $G$ by its Kaplan--Meier (K--M) estimator $\hat{G}_n$. This involves replacing $\tilde Y$ by the estimated (and observed) synthetic response $Y=\delta X/\hat{G}_n(X)$, and results in the following estimator  
of $\Psi(P)$ that now just depends on the empirical distribution $\Pn$ of the observed data:
\begin{align*}
  \Psi_{\hat{G}_n}(\Pn)={\rm Cov}_{\Pn}(U,Y)/{\rm Var}_{\Pn}(U).
\end{align*}
Importantly, unlike for the estimator presented in Section~\ref{sec:ipw}, constructing this estimator does not rely on having knowledge of $G$.

Under the following conditions, Theorem~\ref{Thm:if_slope} below shows that $\Psi_{\hat{G}_n}(\mathbb{P}_n)$  is asymptotically linear and identifies its influence function. The boundedness part of the first assumption is stronger than we need, but is made to simplify the proof.
\begin{assumpenum}[series=assumptions]
  \item \label{assump:Covariates} The predictor $U$ has bounded support and is non-degenerate. 
  \item \label{assump:Survival function} The survival function of the censoring, $G$, is continuous and  $G(\tau) > 0$.
  \item \label{assump:At-risk prob} There is a positive probability of a subject  still being at risk at the end of follow-up: ${\rm P}(X \geq \tau) > 0$.
\end{assumpenum}

\begin{theorem} \label{Thm:if_slope}
Given \ref{assump:Covariates}--\ref{assump:At-risk prob},
\begin{align*}
  \Psi_{\hat{G}_n}(\Pn) - \Psi(P)&= \Pn \IF(O|P) + o_p(n^{-1/2}),
\end{align*}
where
\begin{align*}
  \IF(O|P) \equiv \IF^{ipw}_G(O|P) - \IF^{nu}_G(O|P), 
\end{align*} 
the influence function $\IF^{ipw}_G(\,\cdot \,|P)$ is given in \eqref{eq:if_ipw}, and
\begin{align*} 
  \IF^{nu}_G(\cdot|P) \colon o \mapsto \frac{1}{{\rm Var}(U)} P\left[(U-P[U])\int_{\mathcal{T}}E[T|U,X \geq s]
  \frac{ 1(x \in ds, \delta=0)-1(x \geq s)d\Lambda(s)}{G(s)}\right],
\end{align*}
where $\Lambda$ is the cumulative hazard function corresponding to $G$.
\end{theorem}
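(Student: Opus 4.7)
The plan is to decompose
\[
\Psi_{\hat G_n}(\Pn) - \Psi(P) = \bigl[\Psi_G(\Pn) - \Psi(P)\bigr] + \bigl[\Psi_{\hat G_n}(\Pn) - \Psi_G(\Pn)\bigr]
\]
and treat the two summands separately. The first is an ``oracle'' piece whose expansion is already recorded in \eqref{eq:if_G_fixed}: it equals $\Pn\,\IF^{ipw}_G(O|P) + {\rm Rem}_G(\Pn,P)$, and I would verify that ${\rm Rem}_G(\Pn,P) = o_p(n^{-1/2})$ by a standard delta-method argument for the sample-covariance and sample-variance functionals. Under \ref{assump:Covariates} and \ref{assump:Survival function}, $\tilde Y = \delta X/G(X)$ is bounded on $\mathcal T$ and $U$ is bounded with ${\rm Var}(U) > 0$, so Mallows' quadratic remainder factors as a product of two $O_p(n^{-1/2})$ empirical averages and is therefore $O_p(n^{-1})$.

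The substantive work lies in linearizing the second summand. Writing
\[
\Psi_{\hat G_n}(\Pn) - \Psi_G(\Pn) = \frac{\Pn\bigl[(U - \bar U_n)(Y - \tilde Y)\bigr]}{{\rm Var}_{\Pn}(U)}
\]
and using ${\rm Var}_{\Pn}(U) \to {\rm Var}(U)$ in probability, the task reduces to expanding the numerator. I would combine $Y - \tilde Y = \delta X[1/\hat G_n(X) - 1/G(X)]$ with Gill's uniform Kaplan--Meier representation
\[
\hat G_n(t) - G(t) = -G(t)\,\Pn\!\left[\int_{-\infty}^{t} \frac{dM^c(s;\cdot)}{\pi(s)}\right] + o_p(n^{-1/2}) \quad \text{uniformly in } t \le \tau,
\]
where $dM^c(s;o) = 1(x \in ds,\delta=0) - 1(x \ge s)\,d\Lambda(s)$ and $\pi(s) = P(X \ge s)$ is bounded away from zero on $\mathcal{T}$ by \ref{assump:Survival function}--\ref{assump:At-risk prob}. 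This yields $Y_i - \tilde Y_i = \tilde Y_i\,\Pn[\phi(\cdot;X_i)] + o_p(n^{-1/2})$ with $\phi(o;t) = \int_{-\infty}^t dM^c(s;o)/\pi(s)$. Inserting into the numerator produces the V-statistic $n^{-2}\sum_{i,j}(U_i-\bar U_n)\tilde Y_i\,\phi(O_j;X_i)$; the martingale property $E[\phi(O;t)]=0$ eliminates one of the two symmetric H\'ajek projections, and standard V-statistic theory then identifies the double sum with $\Pn g + o_p(n^{-1/2})$ where $g(o) = E[(U - P[U])\,\tilde Y\,\phi(o;X)]$. A Fubini swap together with the identity $E[\delta h(X)/G(X)\mid \bs U,T] = h(T)$ (valid by the joint independence of $C$ and $(T,\bs U)$) rewrites the inner expectations in terms of $E[T\mid U, X \ge s]$, and yields $g(o) = -{\rm Var}(U)\,\IF^{nu}_G(o|P)$; combining with the first summand then gives the stated conclusion.

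The main obstacle is uniform empirical-process control at two levels. The Kaplan--Meier linearization rests on the classical martingale representation of $\hat G_n$, whose $o_p(n^{-1/2})$ error holds uniformly on $\mathcal T$ because $\inf_{s \le \tau}\pi(s) \ge G(\tau)\,P(X \ge \tau) > 0$ by \ref{assump:Survival function}--\ref{assump:At-risk prob}. For the V-statistic I would control the degenerate component $n^{-2}\sum_{i \ne j}[\cdot]$ by a second-moment calculation exploiting boundedness of $\tilde Y$ on $\mathcal T$ and uniform $L^2$-boundedness of $\phi(\cdot;t)$ for $t \le \tau$ (the predictable variation being dominated by $\int \pi(s)^{-1}\,d\Lambda(s) < \infty$). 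Once these uniformities are established, the Slutsky-type replacements of $\bar U_n$ by $P[U]$ and of ${\rm Var}_{\Pn}(U)$ by ${\rm Var}(U)$ contribute only $o_p(n^{-1/2})$, and the desired representation follows.
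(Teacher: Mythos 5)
Your proof is correct in broad outline but takes a genuinely different route from the paper's. Both arguments start from the obvious oracle piece \eqref{eq:if_G_fixed}, but you split the remainder as $\Psi_{\hat G_n}(\Pn) - \Psi_G(\Pn)$ (plug-in error of $\hat G_n$ inside the \emph{empirical} functional), whereas the paper decomposes as $[\Psi_{\hat G_n}(\Pn)-\Psi_{\hat G_n}(P)]+[\Psi_{\hat G_n}(P)-\Psi(P)]$. The paper handles the first summand by replacing $\hat G_n$ with $G$ inside the influence function via a Donsker-class preservation argument (Lemma 19.24 of van der Vaart, packaged as Lemma~\ref{lemma:estG_by_G_for_if_ipw}), and linearizes the second summand by a Gateaux/delta-method expansion of $G\mapsto\Psi_G(P)$ around $G$, which produces $-\IF^{nu}_G$ directly from the Kaplan--Meier influence function. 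Your route avoids the empirical-process replacement step entirely and instead reduces the plug-in error to a degenerate-plus-H\'ajek V-statistic decomposition; the vanishing of one projection comes from the martingale mean-zero property $E[\phi(O;t)]=0$ (uniformly in $t$), exactly the structure that makes this tractable. Both approaches reach the same $g(o)$, and your boundedness observations plus the second-moment control of the degenerate term are the right ingredients. Two small cautions you would need to spell out in a full write-up: (i) the Fubini/conditional-expectation step that rewrites $E[(U-P[U])\tilde Y\,\phi(o;X)]$ in terms of the conditional residual life $E[T\mid U, X\ge s]$ involves passing the factor $1(X\ge s)$ through a conditional expectation given $U$, and $P(X\ge s\mid U)$ is not a priori constant in $U$ --- make sure the conditioning is applied in the same order the paper intends, so the factor $G(s)$ in $\IF^{nu}_G$ arises from the IPW identity rather than from $\pi(s)=P(X\ge s)=G(s)S(s)$; (ii) the Slutsky replacement of $\bar U_n$ and ${\rm Var}_{\Pn}(U)$ by their limits is innocuous only because the numerator is already $O_p(n^{-1/2})$, so the correction is $O_p(n^{-1})$ --- worth saying explicitly. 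Overall your proof strategy is more elementary (classical U/V-statistic theory instead of abstract Donsker preservation) at the cost of a more explicit computation.
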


Since $P[\IF(\cdot|P)]=0$, Theorem~\ref{Thm:if_slope} implies that $n^{1/2}[\Psi_{\hat{G}_n}(\Pn) - \Psi(P)] \lcrarrow{d} N(0,\sigma^2)$, where $\sigma^2=P[\IF^2(\cdot|P)]$.
The proof of Theorem~\ref{Thm:if_slope} and the relevant Lemmas \ref{lemma:remainder_estG}-\ref{lemma:projection_of_if_ipw} are deferred to Appendix \ref{app-sec:proof_Thm_if_slope}.
In particular, Lemma \ref{lemma:projection_of_if_ipw} shows that $\IF^{nu}_G$ is the projection of $\IF^{ipw}_G$ onto the nuisance tangent space
\begin{align*}
  \mathbf{T}^{nu}(G) = \left\{\int_{\mathcal{T}} H(s)\, dM(s) \,\middle|\,  H\colon \mathcal{T} \rightarrow \mathbb{R} \right\},                                                
\end{align*}
where $H$ is any measurable function for which the integral has finite variance, $M$ is the martingale part of the single-jump counting process for a censored observation: 
$dM(s) = 1(X \in ds, \delta=0)-1(X \geq s)d\Lambda(s)$ and the filtration is $\mathcal{F}_{s} = \sigma\{1(X \leq s'\,, \delta=0), 1(X \geq s'),\, U,\; s' \le s \in \mathcal{T}\}$.

\subsection{One-step estimator}   
\label{sec:one_step_estimator}
In Section~\ref{sec:ipw}, we showed that the inverse probability weighted estimator $\Psi_G(\Pn)$ has influence function $\IF^{ipw}_G(\cdot|P)$. Of course, this estimator can only be evaluated in the case that $G$ is known. In Section~\ref{sec:estimation_G} we showed that plug-in of the Kaplan--Meier estimator of $G$ leads to improved efficiency, even in the case that $G$ is known, and the resulting estimator is  regular and asymptotically linear with influence function $\IF(\cdot|P)$ in the model where $G$ is unknown but $C$ is assumed independent of $(T,U)$. 
However, as will become apparent, $\IF(\cdot|P) \in \mathbf{T}^{nu}(G)^{\perp} $ does not fall in the tangent space $\mathbf{T}^{\mathcal{M}}(P)$ at $P$ in the model $\mathcal{M}$, where $\perp$ denotes the orthogonal complement in $L_0^2(P)$. Therefore we need to project $\IF(\cdot|P)$ onto $\mathbf{T}^{\mathcal{M}}(P)$ to obtain an efficient influence function $\IF^*$. Once we have access to $\IF^*$, it will be then be feasible to construct an asymptotically efficient one-step estimator of $\Psi(P)$. 

To compute this projection, despite our assumption of independent censoring, it is convenient to consider the broader coarsening-at-random (CAR) model $\mathcal{M}^{car}\supseteq \mathcal{M}$. Under $\mathcal{M}^{car}$, $G$ is viewed as a survival function for $C$ conditionally on  $U$, and this survival function may depend nontrivially on $U$. Since we have assumed that $C$ is independent of $(T,U)$ for the particular distribution that generated our data, this conditional survival function is equal to the marginal survival function $G(\cdot)$ for that distribution. This observation slightly simplifies the expression for the  tangent space for $G$ in $\mathcal{M}^{car}$, which is given by
\[
  \mathbf{T}^{car}(G) = \left\{\int_{\mathcal{T}} H(U,s)\,dM(s)\,\middle|\, H \colon \mathbb{R} \times \mathcal{T} \rightarrow \mathbb{R} \right\},
\]
where $H$ is any measurable function for which the integral has finite variance, and $dM(s) = 1(X \in ds, \delta=0)-1(X \geq s)d\Lambda(s)$ with $\Lambda( \cdot )$ as the cumulative hazard function corresponding to $G(\cdot)$ with respect to the filtration $\mathcal{F}_{s} = \sigma\{1(X \leq s'\,, \delta=0)\,, 1(X \geq s')\,, U, \; s' \le s \in \mathcal{T}\}.$
See Example 1.12 in \cite{vanderLaan&Robins2003} for further details. Moreover, since $\mathcal{M}\subseteq \mathcal{M}^{car}$, $\mathbf{T}^{nu}(G)\subseteq \mathbf{T}^{car}(G)$.

To obtain the efficient influence function $\IF^*$, we could project $\IF(\cdot|P)$ onto $\mathbf{T}^{\mathcal{M}}(P)$. To compute this projection, it will be useful to first show that $\mathbf{T}^{car}(G)^{\perp}\subseteq \mathbf{T}^{\mathcal{M}}(P)$. This can be shown as follows. Let $\mathbf{T}^{car}(Q)$ and $\mathbf{T}^{\mathcal{M}}(Q)$ respectively denote the tangent space generated by local fluctuations of $P$ in the CAR and $\mathcal{M}$ models that modify $Q$ but leave $G$ unchanged. Because both CAR and $\mathcal{M}$ could induce a (locally) nonparametric model for $Q$, $\mathbf{T}^{car}(Q)=\mathbf{T}^{\mathcal{M}}(Q)$. Furthermore, because $P$ factorizes as a product of mappings of the variation-independent components $Q$ and $G$, we can write (i) $\mathbf{T}^{car}(P)=\mathbf{T}^{car}(Q)\oplus \mathbf{T}^{car}(G)$ and (ii) $\mathbf{T}^{\mathcal{M}}(P)=\mathbf{T}^{\mathcal{M}}(Q)\oplus \mathbf{T}^{\mathcal{M}}(G)$, as orthogonal sums.  By (i) and the fact that $\mathbf{T}^{car}(P)=L_0^2(P)$, we have $\mathbf{T}^{car}(Q)=\mathbf{T}^{car}(G)^\perp$. Hence, by (ii) and the fact that $\mathbf{T}^{car}(Q)=\mathbf{T}^{\mathcal{M}}(Q)$, we find that $\mathbf{T}^{car}(G)^{\perp}\subseteq \mathbf{T}^{\mathcal{M}}(P)$.

Using $\Pi(\cdot|S)$ to denote the projection operator onto a closed linear subspace $S\subseteq L_0^2(P)$, we have
\begin{align*}
  &\IF^*(\cdot|P)= \Pi( \;  \IF(\cdot|P)\,|\,\mathbf{T}^{\mathcal{M}}(P) \;) \\
  &\quad = \Pi( \; \Pi( \; \IF(\cdot|P)\,|\,\mathbf{T}^{car}(G) \;) + \Pi( \; \IF(\cdot|P)\,|\,\mathbf{T}^{car}(G)^{\perp} \;)\,|\,\mathbf{T}^{\mathcal{M}}(P) \;) \\
  &\quad = \Pi( \; \Pi( \; \IF(\cdot|P)\,|\,\mathbf{T}^{car}(G) \;)\,|\,\mathbf{T}^{\mathcal{M}}(P) \;) + \Pi( \; \Pi( \; \IF(\cdot|P)\,|\,\mathbf{T}^{car}(G)^{\perp} \;)\,|\,\mathbf{T}^{\mathcal{M}}(P) \;) \\
  &\quad = \Pi( \; \IF(\cdot|P)\,|\,\mathbf{T}^{nu}(G) \;) + \Pi( \; \IF(\cdot| P)\,|\,\mathbf{T}^{car}(G)^{\perp} \,),
\end{align*}
where the final equality uses that (1) $\mathbf{T}^{\mathcal{M}}(P)\cap \mathbf{T}^{car}(G)=\mathbf{T}^{nu}(G)$;
(2)  $\mathbf{T}^{car}(G)^{\perp} \subseteq \mathbf{T}^{\mathcal{M}}(P)$. By Lemma~\ref{lemma:projection_of_if_ipw}, we know that $\IF(\cdot|P) \in \mathbf{T}^{nu}(G)^{\perp}$, implying that the first term on the right-hand side is zero. The same lemma tells us that $\IF(\cdot|P)=\Pi(\; \IF^{ipw}_G(\cdot|P) \,|\, \mathbf{T}^{nu}(G)^\perp \;)$, so the above display, along with the fact that $\mathbf{T}^{nu}(G)\subseteq \mathbf{T}^{car}(G)$, implies that
\begin{align*}
  \IF^*(\cdot|P)&= \Pi( \; \IF^{ipw}_G(\cdot|P)\,|\,\mathbf{T}^{car}(G)^{\perp} \,).
\end{align*}
It remains to project $\IF^{ipw}_G(\cdot|P)$ onto $\mathbf{T}^{car}(G)^{\perp}$. The projection of $\IF^{ipw}_G(\cdot|P)$ onto $\mathbf{T}^{car}(G)$ is given by
\begin{equation} \label{eq:if_nu_eff}
\begin{split}
  \IF^{car}_G(\cdot|P)  \colon o \mapsto \frac{(u-P[U])}{{\rm Var}(U)}\int_{\mathcal{T}} E[T|U=u,X \geq s] \frac{1(x \in ds, \delta=0)-1(x \geq s)d\Lambda(s)}{G(s)}, 
\end{split}
\end{equation}
and the projection onto $\mathbf{T}^{car}(G)^{\perp}$ is given by
\begin{align}\label{eq:if_star}
  \IF^*(\cdot|P) = \IF^{ipw}_G(\cdot|P) - \IF^{car}_G(\cdot|P).
\end{align}

In general, $\IF(\cdot|P)$ is not  equivalent to 
$\IF^*(\cdot|P)$, so $\IF(\cdot|P)$ does not  fall in the tangent space $\mathbf{T}^{\mathcal{M}}(P)$ at $P$ for the model $\mathcal{M}$, and therefore also that the estimator from Section~\ref{sec:estimation_G} is  inefficient.  Note that $\IF^{car}_G(\cdot|P)$ in \eqref{eq:if_nu_eff} is similar to the earlier definition of $\IF^{nu}_G(\cdot|P)$, but differs in that the marginal expectation over $P$ is removed, which can lead to an efficiency gain. Indeed, unlike the influence functions for the estimators in the previous two subsections, the above function applied to an observation $O_i$ will make use of the (always) observed $U_i$, regardless of whether that individual's event time was right-censored, both to estimate the $U$ portion of the covariance between $U$ and $T$ and to evaluate the conditional residual life function on the actually observed $U_i$, rather than on an ``average'' $U$.
This can be derived by noting that, under a coarsening at random model, the nuisance tangent space consists of all functions $f\in L^2(P)$ satisfying $E[f(O)|T,U]=0$ almost surely, where the expectation is over the full data distribution whose coarsening gave rise to $P$ \cite[Theorem 1.3 in][]{vanderLaan&Robins2003}. This projection is given in Proposition~5.4 of \cite{vanderLaanetal2000}, and takes the form given in \eqref{eq:if_nu_eff}.


To construct the one-step estimator we need to estimate the various components of $P$ in (\ref{eq:if_star}).  To do that, we introduce the following estimator $\hat{P}_n$ of the various features of $P$ that are needed$\colon$
\begin{itemize} 
\item[] (i) $\mathbb{Q}_{n} \colon$ the empirical distribution used to estimate the marginal distribution of any given single predictor $U$, denoted by $Q_u$.

\item[] (ii)  $\hat{G}_{n}(\cdot|u) \colon$ the Kaplan--Meier estimator of the censoring distribution conditional on $c(U)=c(u)$, with $c$ a fixed, user-defined coarsening so that $c(U)$, $U\sim P$, is a finitely supported discrete random variable. Here $\hat{G}_{n} $ is defined as a maximum likelihood estimator in a model whose tangent space for $G$ is denoted by $\mathbf{T}^*(G)$. The corresponding estimator of the conditional cumulative hazard function is
\[
  \hat{\Lambda}_n(\cdot|u) = \int_0^{\cdot} \frac{1(Y_n(u,s)>0)}{Y_n(u,s)}N_n(u,ds),
\]
where 
\begin{align*}  
  N_n(u,s)=\sum_{i=1}^n 1(X_i \leq s, \delta_i=0, c(U_i)=c(u)),\;
  Y_n(u,s) = \sum_{i=1}^n 1(X_i \geq s, c(U_i)=c(u))
\end{align*}
denote the stratified basic counting process and the size of the risk set at time $s$, respectively.

\item[] (iii) $\hat{E}_n(u,s) \colon$ an estimator of $E[\tilde{Y}|U=u, X\ge s]$ under $\tilde P$ (the joint distribution of $(U,\tilde Y)$), which is restricted to take values in some $P$-Donsker family of uniformly bounded functions of $(u,s)\in \mathbb{R} \times \mathcal{T}$. Note that along with (ii), this is equivalent to estimating $E[T|U=u,X \geq s]$, according to the equality $E[T|U=u,X \geq s] = G(s)E[\tilde{Y}|U=u,X \geq s]$.
We suppress the argument $s$ if $s=-\infty$, namely, using $\hat{E}_n(u)$ to estimate $E[\tilde{Y}|U=u]$ that is equal to $E[T|U=u]$. See Remark \ref{remark:est_conditional_res_life} below for one possible estimator of $E[\tilde{Y}|U=u, X\ge s]$. Also, we require that, for each $u$, the process $s \mapsto \hat{E}_n(u,s)$ is predictable with respect to the filtration
\begin{align*}
  \sigma\{N_n(\cdot, s')\,,\, Y_n(\cdot, s')\,,\, U_i\,,\, i=1,\ldots,n,\, s' \le s \in \mathcal{T}\}.
\end{align*}
\end{itemize}

In view of (\ref{eq:def_parameter.2}), which can be expressed in terms of  $E[\tilde{Y}|U]$, $Q_u$ and $G$, we see that $\Psi(P)$ can be estimated by plugging-in  $\hat{P}_n$:
\begin{align*}
  \Psi(\hat{P}_n) = \frac{{\rm Cov}_{\mathbb{Q}_{n}}(U, \hat E_n(U))}{{\rm Var}_{\mathbb{Q}_{n}}(U)}.    
\end{align*}
Moreover, $\IF^{ipw}(\cdot|P)$ in \eqref{eq:if_ipw} and $\IF^{car}$ in \eqref{eq:if_nu_eff} can be re-expressed in terms of these features of $P$ as
\begin{equation} \label{eq:reexpressed_if}
\begin{split}
  &\IF^{ipw}(\cdot|P) \colon o \mapsto  \frac{(u-Q_u[U])\big(\tilde{y}-Q_u[E[\tilde{Y}|U]]\big)}{{\rm Var}_{Q_u}(U)}
  -\frac{{\rm Cov}_{Q_u}(U, E[\tilde{Y}|U])}{{\rm Var}_{Q_u}^2(U)}(u-Q_u[U])^2;\\
  &\IF^{car}(\cdot|P)  \colon o \mapsto \frac{(u-Q_u[U])}{{\rm Var}_{Q_u}(U)}\int_{\mathcal{T}} E[\tilde{Y}|U=u,X \geq s]\{1(x \in ds, \delta=0)-1(x \geq s)d\Lambda(s)\},
\end{split}
\end{equation} 
where the expression of $\IF^{car}(\cdot|P)$ follows by applying the equality $$E[T|U=u,X \geq s] = G(s)E[\tilde{Y}|U=u,X \geq s]$$ to \eqref{eq:if_nu_eff}.
This implies that $\IF^{*}(\cdot|P)=\IF^{ipw}(\cdot|P)-\IF^{car}(\cdot|P)$ is represented by the introduced features of $P$, and
enables the estimation of ${\rm IF}^{*}(\cdot|P)$ in terms of $\hat{P}_n$. 
Note that henceforth the subscript $G$ is suppressed because $G$ is one of the three identified features of $P$.
We also wish to point out that the second term in $\mathbb{P}_n {\rm IF}^{ipw}(\cdot|\hat{P}_n)$ is precisely $\Psi(\hat{P}_n)$.

The one-step estimator is then given by
\begin{equation} \label{eq:if_onestep}
\begin{split} 
  &S(\mathbb{P}_n, \hat{P}_n) = \Psi(\hat{P}_n) + \mathbb{P}_n \IF^*(\cdot|\hat{P}_n) \\
  &\quad =\Psi(\hat{P}_n) +\frac{\mathbb{P}_n(U-\mathbb{Q}_{n}[U])(\delta X / \hat{G}_n(X|U)-\mathbb{Q}_{n}[\hat{E}_n(U)])}{{\rm Var}_{\mathbb{Q}_{n}}(U)}
  -\Psi(\hat{P}_n)- \mathbb{P}_n \IF^{car}(\cdot|\hat P_n) \\    
  & \quad = \frac{\mathbb{P}_n(U-\mathbb{Q}_{n}[U])Y}{{\rm Var}_{\mathbb{Q}_{n}}(U)}
  -\frac{1}{{\rm Var}_{\mathbb{Q}_{n}}(U)}\mathbb{P}_n\left[(U-\mathbb{Q}_{n}[U])\int_{\mathcal{T}} \hat{E}_n(U,s)\hat{M}(ds|U)\right],
\end{split}
\end{equation}
where $\hat{M}(ds|u)=1(X \in ds, \delta=0, c(U)=c(u))-1(X \geq s, c(U)=c(u))d\hat{\Lambda}_n(s|u)$; the third equality holds by
$\mathbb{P}_n(U-\mathbb{Q}_{n}[U])\mathbb{Q}_{n}[\hat{E}_n(U)]=\mathbb{Q}_n(U-\mathbb{Q}_{n}[U])\mathbb{Q}_{n}[\hat{E}_n(U)]=0$ and 
$Y = \delta X / \hat{G}_n(X|U)$. 

We next establish the asymptotic linearity of this one-step estimator, for which we need an extended version of \ref{assump:At-risk prob} and a mild stability condition on the behavior of the estimator $\hat E_n$.
\begin{assumpenum}[resume=assumptions]
  \item \label{assump:At-risk prob_covariate} Assumption \ref{assump:At-risk prob} holds for each subgroup defined by the coarsening of $U$.
  
  \item \label{assump:Conditional mean} There exists a uniformly-bounded, non-random function $(u,s) \mapsto \bar{E}(u,s)$ that is left-continuous in $s$, such that
  $\mathbb{E}\{|\hat{E}_n(u,s) - \bar{E}(u,s)|\} = o(n^{-1/4})$ for each $(u,s)$ and $\sup_{(u,s)}|\hat{E}_n(u,s) - \bar{E}(u,s)|$ is bounded in probability,
  where $\mathbb{E}$ denotes the expectation over $O_1,\ldots,O_n.$
\end{assumpenum}
When $\bar{E}(U) \not= E[\tilde{Y}|U]$, where
$\bar E(u) = \bar E(u, -\infty)$ and $\bar E$ is from \ref{assump:Conditional mean}, we need a correction to $\IF^*$ given by
\begin{align*} 
  \IF^{\dagger}(\cdot|\bar{E},P) : o \mapsto \frac{{\rm Cov}_{Q_u}(U, \bar{E}(U)-E[\tilde{Y}|U])}{{\rm Var}^2_{Q_u}(U)}[(u-Q_u[U])^2-{\rm Var}_{Q_u}(U)].
\end{align*}
\begin{theorem}\label{Thm:if_one_step}
Under conditions \ref{assump:Covariates}, \ref{assump:Survival function}, \ref{assump:At-risk prob_covariate} and \ref{assump:Conditional mean},
\begin{align*}
  S(\mathbb{P}_n, \hat{P}_n)-\Psi(P)=[\mathbb{P}_n-P]\,\Pi\big( \IF^*(\cdot|\bar{E}, Q_u, G) + \IF^{\dagger}(\cdot|\bar{E},P)\,|\,\mathbf{T}^*(G)^\perp \big) + o_p(n^{-1/2}),
\end{align*}
where $\mathbf{T}^*(G)$ is the tangent space defined in (iii) above.
\end{theorem}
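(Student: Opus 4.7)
My plan is the standard one-step analysis: first expand
$S(\mathbb{P}_n,\hat P_n) - \Psi(P) = (\mathbb{P}_n - P)\IF^*(\cdot|\hat P_n) + R_n$,
where $R_n \equiv \Psi(\hat P_n) - \Psi(P) + P\IF^*(\cdot|\hat P_n)$ is the canonical second-order remainder. Denote the limiting plug-in by $P^\star \equiv (\bar E, Q_u, G)$, so that $\IF^*(\cdot|P^\star)$ is precisely the object $\IF^*(\cdot|\bar E,Q_u,G)$ that appears in the theorem. A short calculation using the martingale property of $dM(s|U)$ under $P$, together with the identity $E[\tilde Y|U,X\ge s]\,G(s)=E[T|U,X\ge s]$, yields $P\IF^{car}(\cdot|P^\star)=0$ and $P\IF^{ipw}(\cdot|P^\star)=\Psi(P)-\Psi(P^\star)$, which gives the reduced form $R_n = [\Psi(\hat P_n)-\Psi(P^\star)] + [P\IF^*(\cdot|\hat P_n)-P\IF^*(\cdot|P^\star)]$. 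The theorem then reduces to three tasks: (i) replacing $\hat P_n$ by $P^\star$ inside the empirical-process term; (ii) linearizing the two first-order pieces in $R_n$ and isolating $\IF^\dagger$; and (iii) invoking the MLE structure of $\hat G_n$ to project the result onto $\mathbf{T}^*(G)^\perp$.

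For step (i), Assumption \ref{assump:Conditional mean} provides uniform convergence $\hat E_n \to \bar E$ in probability and places $\hat E_n$ within a uniformly-bounded Donsker family, while \ref{assump:At-risk prob_covariate} together with the finiteness of the coarsening $c$ ensures uniform consistency of the stratified Kaplan--Meier estimator $\hat G_n(\cdot|u)\to G(\cdot)$ on each stratum. The class of plug-in influence functions $\{\IF^*(\cdot|\tilde P)\}$ indexed by plausible $\tilde P$ is then $P$-Donsker with $L^2(P)$-equicontinuity at $P^\star$, and standard preservation results give $(\mathbb{P}_n-P)[\IF^*(\cdot|\hat P_n) - \IF^*(\cdot|P^\star)] = o_p(n^{-1/2})$. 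For step (ii), I linearize $\Psi(\hat P_n) - \Psi(P^\star)$ in $(\mathbb{Q}_n - Q_u, \hat E_n - \bar E)$ and $P[\IF^*(\cdot|\hat P_n) - \IF^*(\cdot|P^\star)]$ in all three components of $\hat P_n - P^\star$. The coarsening-at-random double-robustness inherited from Proposition~5.4 of \cite{vanderLaanetal2000} makes the cross term between $\hat E_n-\bar E$ and $\hat\Lambda_n-\Lambda$ factor as a product of $L^2(P)$ errors; by \ref{assump:Conditional mean} and the standard $n^{-1/2}$ rate for the stratified Kaplan--Meier, this product is $o_p(n^{-1/2})$. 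The residual first-order contribution attributable to the misspecification $\bar E\ne E[\tilde Y|U]$ combines the surviving piece of $\Psi(\hat P_n)-\Psi(P^\star)$ with the compensating piece from $P[\IF^{ipw}(\cdot|\hat P_n)-\IF^{ipw}(\cdot|P^\star)]$ to yield exactly $(\mathbb{P}_n-P)\IF^\dagger(\cdot|\bar E,P)+o_p(n^{-1/2})$.

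For step (iii), I appeal to the standard result (e.g., \cite{vanderLaan&Robins2003}) that when a one-step estimator uses an MLE-based plug-in for the nuisance $G$, the effective influence function is the projection of the naive one onto the orthogonal complement of the MLE tangent space $\mathbf{T}^*(G)$. Concretely, the stratified Nelson--Aalen score identities $\mathbb{P}_n \int H(c(U),s)\,d\hat M_n(s|U)=0$ for every bounded predictable $H$ imply, via a Slutsky-type argument and the asymptotic linearity of $\hat\Lambda_n-\Lambda$, that the portion of $\IF^*(\cdot|P^\star)+\IF^\dagger(\cdot|\bar E,P)$ lying in $\mathbf{T}^*(G)$ contributes $o_p(n^{-1/2})$ when averaged against $\mathbb{P}_n-P$, so only the $\mathbf{T}^*(G)^\perp$-projection survives. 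Combining the three steps delivers the stated expansion. The main obstacle I anticipate is the step-(i) Donsker/equicontinuity claim, because $\IF^*(\cdot|\hat P_n)$ depends jointly on a nonparametric stratified Kaplan--Meier component whose fluctuations must be controlled uniformly in each stratum; this will require combining the Donsker constraint built into (iii) with bracketing-entropy control of the bounded-variation class of integrands appearing in $\IF^{car}(\cdot|\cdot)$, in the spirit of \cite{vandeGeer1995}.
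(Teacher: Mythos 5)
Your decomposition of $S(\mathbb{P}_n,\hat P_n)-\Psi(P)$ into $(\mathbb{P}_n-P)\IF^*(\cdot|\hat P_n)+R_n$ is the raw one-step expansion, whereas the paper instead strategically splits the estimator as $S(\mathbb{P}_n,\hat P_n)-\Psi(P)= \big[S(\mathbb{P}_n,\hat P_n')-\Psi(P)\big]+\big[\Phi(\hat G_n)-\Phi(G)\big]+o_p(n^{-1/2})$, where $\hat P_n'$ uses the true $G$ and $\Phi(\tilde G)\equiv P[\IF^*(\cdot|\bar E,Q_u,\tilde G)]$. That split is not cosmetic: it is exactly the shape required by Theorem~2.3 of \cite{vanderLaan&Robins2003} — a regular asymptotically linear estimator in the $G$-known model (with influence function $\IF^*+\IF^\dagger$, established in the paper's Lemma~\ref{lemma:if_SnG}) added to an efficient MLE-plug-in estimate of a smooth functional of $G$. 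Your derivation of the ``reduced form'' of $R_n$ using $P\IF^{car}(\cdot|P^\star)=0$ and $P\IF^{ipw}(\cdot|P^\star)=\Psi(P)-\Psi(P^\star)$ is correct and aligns with the paper's computations; but from there you appeal to the projection result as a black box without actually manufacturing the two-piece structure it presupposes.

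The genuine gap is in your step~(iii). You assert that the Nelson--Aalen score identities imply that ``the portion of $\IF^*(\cdot|P^\star)+\IF^\dagger(\cdot|\bar E,P)$ lying in $\mathbf{T}^*(G)$ contributes $o_p(n^{-1/2})$ when averaged against $\mathbb{P}_n-P$.'' That statement is false as written: $\Pi(\IF^*+\IF^\dagger\mid\mathbf{T}^*(G))$ is a fixed mean-zero square-integrable function of $O$, so $(\mathbb{P}_n-P)\Pi(\IF^*+\IF^\dagger\mid\mathbf{T}^*(G))$ is a genuine $O_p(n^{-1/2})$ Gaussian contribution, not negligible. What actually happens is a \emph{cancellation}, not a negligibility: the $\hat G_n$-dependent pieces of $R_n$, once linearized (that is, $\Phi(\hat G_n)-\Phi(G)$), produce an asymptotically linear term whose influence function lives in $\mathbf{T}^*(G)$ and exactly cancels the $\mathbf{T}^*(G)$-component of $(\mathbb{P}_n-P)(\IF^*+\IF^\dagger)$. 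Making this cancellation precise is the entire point of the paper's split and its appeal to Theorem~2.3, and of Lemmas~\ref{lemma:remainder}--\ref{lemma:if_SnG} which show the two pieces have the claimed influence functions. Separately, your step~(ii) asserts that the misspecification term $\IF^\dagger$ ``combines'' from the surviving pieces of $\Psi(\hat P_n)-\Psi(P^\star)$ and $P[\IF^{ipw}(\cdot|\hat P_n)-\IF^{ipw}(\cdot|P^\star)]$ — this is directionally right but hides the most delicate algebra of the proof (the computation in Lemma~\ref{lemma:if_SnG} that shows the $V$-term vanishes by the martingale property under independent censoring, the second-order $(Q_u-\Qn)$ products drop out, and only the variance-gradient piece survives as $\IF^\dagger$). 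Without that calculation, and without the correct cancellation mechanism, the proof is not complete.
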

The proof of this result is given in Appendix \ref{app-sec:proof_Thm_if_one_step}.

\begin{remark}
Suppose that $\hat{E}_n(u,s)$ consistently estimates the true conditional residual life function, that is, $\bar{E}(u,s) = E[\tilde{Y}|U=u, X \geq s]$  for all $(u,s)$, in which case $\IF^{\dagger}(\cdot|\bar{E},P)=0$. Recall that 
$\IF^*(\cdot|\bar{E},Q_u,G) = \Pi(\,\IF^{ipw}(\cdot|P)\,|\,\mathbf{T}^{car}(G)^\perp\,)$, and therefore $\IF^*(\cdot|\bar{E},Q_u,G)\in \mathbf{T}^{car}(G)^\perp$. Furthermore, $\mathbf{T}^*(G) \subseteq \mathbf{T}^{car}(G)$, and so $\mathbf{T}^*(G)^\perp \supseteq \mathbf{T}^{car}(G)^\perp$. Consequently, we know that $\IF^*(\cdot|\bar{E},Q_u,G)\in \mathbf{T}^*(G)^\perp$, which yields that
\[
 \Pi(\,\IF^{*}(\cdot|\bar{E},Q_u,G)\,|\,\mathbf{T}^*(G)^\perp\,) = \IF^{*}(\cdot|\bar{E},Q_u,G) \equiv \IF^*(\cdot|P).
\]
Hence, $S(\mathbb{P}_n, \hat{P}_n)$ is asymptotically linear with the influence function equal to the efficient influence function, that is, $S(\mathbb{P}_n, \hat{P}_n)$ is asymptotically efficient. Furthermore, under regularity conditions, the empirical variance 
of $\IF^*(O|\hat{P}_n)$ converges to the variance of $\IF^{*}(O|P)$ under $O\sim P$, that is, to the variance of the one-step estimator.
\end{remark}

\begin{remark}
If $\bar{E}(u) \neq E[\tilde{Y}|U=u]$ for some $u$, it is still possible that $\IF^{\dagger}(\cdot|\bar{E},P)=0$.
For instance, if we assume that $\Psi(P)$ does not depend on the limit of $\hat{E}_n(u)$ to ensure identifiability, then, recalling the definition of $\Psi(P)$ in \eqref{eq:def_parameter}, we have 
${\rm Cov}_{Q_u}(U,\bar{E}(U))={\rm Cov}_{Q_u}(U,E[\tilde{Y}|U])$ and thus $\IF^{\dagger}(\cdot|\bar{E},P)=0$.
\end{remark}

\begin{remark}
Regardless of whether $\IF^{\dagger}(\cdot|\bar{E},P)=0$ or not, the variance of the influence function of $S(\mathbb{P}_n, \hat{P}_n)$ is no larger than that of $\IF^*(\cdot|\bar{E}, Q_u, G) + \IF^{\dagger}(\cdot|\bar{E},P)$ because projections only decrease the variance.
For any real number $m$, define
\begin{align*} 
  \IF_m^{\dagger}(\cdot|\bar{E},P) : o \mapsto \frac{{\rm Cov}_{Q_u}(U, \bar{E}(U)) - m}{{\rm Var}^{2}_{Q_u}(U)}[(u-Q_u[U])^2-{\rm Var}_{Q_u}(U)].
\end{align*}
When $m=m^* \equiv {\rm Cov}_{Q_u}(U,E[\tilde{Y}|U])$, we have $\IF_{m^*}^{\dagger}(\cdot|\bar{E},P)=\IF^{\dagger}(\cdot|\bar{E},P)$.
Note also that $|m^*| \le c{\rm Var}^{1/2}_{Q_u}(E[\tilde{Y}|U]) \le M$ since ${\rm Var}^{1/2}_{Q_u}(U) \le c$,
where the upper bounds $c$ and $M$ exist according to \ref{assump:Covariates} and \ref{assump:Survival function}. Therefore, we see the variance of
$\IF^*(\cdot|\bar{E}, Q_u, G) + \IF^{\dagger}(\cdot|\bar{E},P)$ is upper-bounded by
\begin{align} \label{eq:var.ub}
  \sup_{-M \le m \le M} P \Big( \big[\IF^*(\cdot|\bar{E}, Q_u, G) + \IF_m^{\dagger}(\cdot|\bar{E},P)\big]^2 \Big).
\end{align}
For each given $m$, the variance of $\IF^*(\cdot|\bar{E}, Q_u, G) + \IF_m^{\dagger}(\cdot|\bar{E},P)$ can be estimated via the sample variance of the same quantity but with the unknown parameters replaced by the corresponding estimates. 
Therefore, the quantity in \eqref{eq:var.ub} can be estimated by taking the supremum of these estimates over $m \in [-M,M]$, and we denote this resulting estimate by $\sigma^{\dagger\,2}_n$. Then $\sigma^{\dagger\,2}_n$ is also a valid upper bound of the sample variance of the influence function of $S(\mathbb{P}_n, \hat{P}_n)$, so the Wald-type confidence intervals constructed using $\sigma^{\dagger\,2}_n$ will have conservative coverage asymptotically. An alternative is to construct a (conservative) confidence interval using the bootstrap percentile t-method.
\end{remark}

\begin{remark}[Estimation of the conditional residual life function] \label{remark:est_conditional_res_life}
Condition \ref{assump:Conditional mean} requires that $\hat{E}_n(u,s)$ has some stable limit $\bar{E}(u,s)$.
Along the lines of \cite{vanderLaan1998}, such an estimator can be constructed by regressing $Y$ on $U$ using only a sub-sample $\{O_i=(X_i,\delta_i, \bs{U}_i)\, ,\, i:X_i \geq s\}$ in the fashion of \cite{Koul1981}, leading to
\begin{equation} \label{eq:est_E}
  \hat{E}_n(u,s) = \mathbb{P}_{n}[Y1(X \geq s)] + \frac{{\rm Cov}_{\mathbb{P}_{n}}(U1(X \geq s), Y1(X \geq s))}{{\rm Var}_{\mathbb{P}_{n}}(U1(X \geq s))}(u-\mathbb{P}_{n}[U1(X \geq s)]).
\end{equation}
Empirical process theory can be used to show that $\hat{E}_n$ satisfies \ref{assump:Conditional mean}.
\end{remark}

\section{Stabilized one-step estimator}
\label{sec:stabilized_onestep}
In this section, we adapt the stabilization approach
of \cite{Luedtke2018} to obtain a confidence interval 
of  the target parameter $\Psi(P)$ defined in \eqref{eq:def_parameter}.
The idea is first to randomly order the data, and consider subsamples consisting of the first $j$ observations for $j=q_n,\ldots,n-1$, where $\{q_n\}$ is some positive integer sequence such that both $q_n$ and $n-q_n$ tend to infinity. 
Based on the subsample of size $j$, an estimator of the label of the most informative predictor is
\begin{align} \label{eq:predictor_selection}
  k_{j} = \arg \max_{k = 1,\ldots,p} |\Psi_{\hat{G}_j,k}(\mathbb{P}_j)|
  \equiv \arg \max_{k = 1,\ldots,p} \left| \frac{{\rm  
     Cov}_{\mathbb{P}_{j}}(U_k, \delta X/\hat G_j(X))}{{\rm Var}_{\mathbb{P}_{j}}(U_k)} \right|,
\end{align}
where $\Psi_{\hat{G}_j,k}(\mathbb{P}_j)$ is $\Psi_{\hat{G}_n}(\mathbb{P}_n)$ with $n=j$ and predictor $U_k$. A uniform version of the earlier condition \ref{assump:Covariates}, as well as an extended version $\hat{E}_n(u,s,k)$ of \eqref{eq:est_E}, are now understood to apply to each predictor $U_k$, $k=1,\ldots, p$, and $\hat{G}_j$ is the usual Kaplan--Meier estimator of $G$.
The stabilized one-step estimator of $\Psi(P)$ is then given by
\begin{equation} \label{eq:Sn_star}
  S^*_{n} = \frac{1}{n-q_n}\sum_{j=q_n}^{n-1}w_{nj}m_jS_{k_j}(\delta_{O_{j+1}}, \hat{P}_{nj})\, ,                                             
\end{equation}
where $m_j \in \{-1,1\}$ is the sign of $\Psi_{\hat{G}_j,k_j}(\mathbb{P}_{j})$, $S_{k_j}$ refers to \eqref{eq:if_onestep} with the predictor $U$ now being $U_{k_j}$, and $\hat{P}_{nj}\equiv(\hat{E}_j, \mathbb{Q}_j, \hat{G}_n)$ that refers to $\hat{P}_n$ based on only the first $j$ observations to estimate part of the parameters of $P$.
Here $\delta_{O_{j+1}}$ is the Dirac measure putting unit mass at $O_{j+1}$, $w_{nj} \equiv \bar{\sigma}_n/\hat{\sigma}_{nj}$ with $\bar{\sigma}_n=\{(n-q_n)^{-1}\sum_{j=q_n+1}^{n}(1/\hat{\sigma}_{nj})\}^{-1}$,  
\begin{align*}
  \hat{\sigma}^2_{nj} = \frac{1}{j}\sum_{i=1}^j\bigg\{m_j\IF^*_{k_j}(O_i|\hat{P}_{nj})
  -\frac{1}{j}\sum_{i=1}^jm_j\IF^*_{k_j}(O_i|\hat{P}_{nj})
  \bigg\}^2,                                
\end{align*}
and $\IF^*_{k_j}$ is $\IF^*$ with the predictor taken as $U_{k_j}$.

Note that $\hat{P}_{nj}$ in the stabilized-one-step estimator $S^*_{n}$ involves subsamples. As we will see from simulation studies, however, using the full-sample estimator $\hat{P}_{n}$ instead, considerably improves the  performance of $S^*_{n}$ in small samples.  

The following $95\%$ confidence interval for $\Psi(P)$ is justified by the asymptotic normality of $S_n^*$ given in Theorem \ref{Thm:stab_one_step} below:
\begin{align*} 
  [{\rm LB}_n, {\rm UB}_n]=\left[S^*_n \pm 1.96\frac{\bar{\sigma}_n}{\sqrt{n-q_n}}\right],
\end{align*} 
and the two-sided p-value is
\begin{align*}
   2(1-\Phi\left(\,\left|\sqrt{n-q_n}S_n^*/\bar{\sigma}_n\right|\,\right)),
\end{align*}
where $\Phi$ is the cumulative distribution function of $\mathcal{N}(0,1)$.

\begin{theorem}\label{Thm:stab_one_step}
Suppose the number of predictors $p=p_n$ satisfies
$\log(p_n)/n^{1/4} \to 0$, and the smallest subsample size $q_n$ used for stabilization satisfies $n-q_n\to \infty$, $n/q_n = O(1)$, and $q_n^{1/4}/\log( n \lor p_n) \to \infty$.
Assume  \ref{assump:Covariates} holds uniformly for all predictors, \ref{assump:Survival function}, \ref{assump:At-risk prob}, the asymptotic stability conditions \ref{assump:Conditional_mean_E0}-\ref{assump:Signal strength} that are stated just before the proof in Appendix \ref{app-sec:proof_Thm_stab_one_step},
and the non-degeneracy condition
\begin{assumpenum}[resume=assumptions]
  \item \label{assump:Variances} ${\rm Var}_{Q_u}(U_k)$, ${\rm Var}(U_k1(X \geq s))$ and ${\rm Var}(\IF^*_k(O|P))$ are bounded away from zero and infinity, as functions of $k\in \{ 1,\ldots, p_n\}$ and $s\in\mathcal{T}$.
\end{assumpenum}
Then $S_n^*$ is an asymptotically normal estimator of $\Psi(P) \colon$
\begin{align*}
\sqrt{n-q_n}\bar{\sigma}_n^{-1}[S^*_n-\Psi(P)] \lcrarrow{d} \mathcal{N}(0,1).
\end{align*}
\end{theorem}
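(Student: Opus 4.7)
The plan is to establish asymptotic normality of $S_n^*$ by recognizing it as (asymptotically) a martingale sum and applying a martingale central limit theorem. The key structural observation is that $\hat{P}_{nj}$, $k_j$, $m_j$, and $w_{nj}$ are all measurable with respect to $\mathcal{F}_j = \sigma(O_1,\ldots,O_j)$, while $O_{j+1}$ is independent of $\mathcal{F}_j$. Therefore, once we subtract the conditional mean $E[S_{k_j}(\delta_{O_{j+1}},\hat{P}_{nj})\mid \mathcal{F}_j] = \Psi_{k_j}(\hat{P}_{nj}) + P\,\IF^*_{k_j}(\cdot|\hat{P}_{nj})$ from each summand, what remains is a genuine martingale difference with respect to $\{\mathcal{F}_{j+1}\}$.

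Concretely, I would write $S_n^* - \Psi(P) = (A) + (B) + (C)$, where $(A) = (n-q_n)^{-1}\sum_{j=q_n}^{n-1} w_{nj}m_j\{\IF^*_{k_j}(O_{j+1}|\hat{P}_{nj}) - P\,\IF^*_{k_j}(\cdot|\hat{P}_{nj})\}$ is the martingale sum, $(B) = (n-q_n)^{-1}\sum_{j=q_n}^{n-1} w_{nj}m_j\{\Psi_{k_j}(\hat{P}_{nj}) + P\,\IF^*_{k_j}(\cdot|\hat{P}_{nj}) - \Psi_{k_j}(P)\}$ is the aggregated one-step bias, and $(C) = (n-q_n)^{-1}\sum_{j=q_n}^{n-1} w_{nj}m_j\Psi_{k_j}(P) - \Psi(P)$ is the selection drift. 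Piece $(C)$ is handled by the signal-strength condition \ref{assump:Signal strength}, which guarantees that, for $j\ge q_n$, the empirical label $k_j$ locks onto a true $\arg\max$ and the sign $m_j$ agrees with that of $\Psi_{k_j}(P)$ with overwhelming probability, so that $m_j\Psi_{k_j}(P) \to \Psi(P)$ at a rate fast enough for $(C) = o_p(n^{-1/2})$ after the $w_{nj}$-weighted Ces\`aro average.

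To handle $(A)$, I would invoke a martingale CLT (e.g., Hall--Heyde, Corollary~3.1) for $\sqrt{n-q_n}\,\bar{\sigma}_n^{-1}(A)$. The conditional variance of the $j$th summand given $\mathcal{F}_j$ equals $w_{nj}^2\,{\rm Var}_P(\IF^*_{k_j}(\cdot|\hat{P}_{nj}))$; by definition of $w_{nj} = \bar{\sigma}_n/\hat{\sigma}_{nj}$ and consistency of $\hat{\sigma}_{nj}^2$ as an estimator of that variance (uniformly in $j$, using the stability conditions and \ref{assump:Variances}), the sum of scaled conditional variances converges to $1$ in probability. The Lindeberg condition follows because $\IF^*_{k_j}(\cdot|\hat{P}_{nj})$ is uniformly bounded by \ref{assump:Covariates}, \ref{assump:Survival function}, and the boundedness imposed on $\hat{E}_n$, while $\hat{\sigma}_{nj}$ is bounded away from zero by \ref{assump:Variances}.

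The main obstacle is piece $(B)$: upgrading the pointwise $o_p(n^{-1/2})$ remainder control of Theorem~\ref{Thm:if_one_step} to a bound that is uniform across the $p_n$ predictor indices $k$ and the $n-q_n$ subsample sizes $j$, so that the weighted average retains the same rate. The remainders are second-order products of empirical-process differences and stochastic integrals against the censoring martingale $dM$, indexed by both the predictor $U_k$ and the nuisance functions $(\hat{E}_n, \hat{G}_n(\cdot|\cdot))$. Controlling these uniformly requires bracketing-entropy estimates for the relevant classes of bounded-variation functions and a uniform exponential tail bound for counting-process integrals, for which I would invoke the \citet{vandeGeer1995} inequality flagged in the introduction. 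The growth conditions $\log(p_n)/n^{1/4}\to 0$ and $q_n^{1/4}/\log(n\vee p_n)\to\infty$ are precisely calibrated so that, after a union bound over predictors and subsample sizes, the accumulated empirical-process error still collapses to $o_p(n^{-1/2})$.
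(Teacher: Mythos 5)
Your decomposition is in the right spirit, but it rests on a false measurability claim that breaks the core martingale argument. You assert that ``$\hat{P}_{nj}$, $k_j$, $m_j$, and $w_{nj}$ are all measurable with respect to $\mathcal{F}_j = \sigma(O_1,\ldots,O_j)$.'' This is not true: $\hat{P}_{nj}=(\hat{E}_j,\mathbb{Q}_j,\hat{G}_n)$ contains the \emph{full-sample} Kaplan--Meier estimator $\hat{G}_n$, which is a function of $O_1,\ldots,O_n$, not just of $O_1,\ldots,O_j$; and $w_{nj}=\bar\sigma_n/\hat{\sigma}_{nj}$ inherits the same problem because $\hat{\sigma}_{nj}$ is built from $\IF^*_{k_j}(\cdot|\hat{P}_{nj})$. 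Consequently, $\IF^*_{k_j}(O_{j+1}|\hat{P}_{nj})-P\,\IF^*_{k_j}(\cdot|\hat{P}_{nj})$ is \emph{not} a martingale difference with respect to $\{\mathcal{F}_{j+1}\}$: given $\mathcal{F}_j$, both the function being evaluated and its claimed centering still depend on the not-yet-revealed data through $\hat{G}_n$, and the conditional expectation of the summand given $\mathcal{F}_j$ is not zero. The same issue invalidates your conditional-variance computation --- $w_{nj}^2\,{\rm Var}_P(\IF^*_{k_j}(\cdot|\hat{P}_{nj}))$ is not the conditional variance of the $j$th summand given $\mathcal{F}_j$, because neither $w_{nj}$ nor $\hat{P}_{nj}$ is fixed by $\mathcal{F}_j$.

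The paper's proof is calibrated precisely to avoid this. Rather than treating $(A)$ as a martingale directly, the paper decomposes $\sqrt{n-q_n}\,\bar{\sigma}_n^{-1}[S_n^*-\Psi(P)]$ into five pieces $\mathrm{(I)}$--$\mathrm{(V)}$, where $\mathrm{(I)}$--$\mathrm{(III)}$ peel off, one feature at a time, the dependence of the influence-function argument on $(\hat{E}_j,\hat{G}_n,\mathbb{Q}_j)$ and replace it with $(E_0,G,Q_u)$. Only after this reduction does the martingale piece $\mathrm{(IV)} = (n-q_n)^{-1/2}\sum_j\hat{\sigma}_{nj}^{-1}m_j\IF^*_{k_j}(O_{j+1}|P)$ emerge, with the influence function evaluated at the true $P$ (and hence a fixed, $\sigma(O_1,\ldots,O_j)$-measurable function of $O_{j+1}$). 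The residual $\hat{\sigma}_{nj}$-dependence is then handled by a separate Chebyshev-type argument. Moreover, the term you call $(B)$ (the one-step bias at $\hat{P}_{nj}$) cannot be dispatched by appealing to Theorem~\ref{Thm:if_one_step}: that theorem establishes asymptotic linearity of $S(\mathbb{P}_n,\hat{P}_n)$ as a \emph{full-sample average}, not a uniform-in-$(j,k)$ rate for the second-order remainder $\Psi_{k}(\hat{P}_{nj})+P\IF^*_k(\cdot|\hat{P}_{nj})-\Psi_k(P)$. Finally, while your instinct to reach for van de Geer's inequality is correct, it is actually deployed for the censoring-martingale integrals appearing in pieces $\mathrm{(I)}$ and $\mathrm{(II)}$ --- the very $\hat{G}_n$-versus-$G$ and $\hat{E}_j$-versus-$E_0$ corrections your proposed martingale structure bypasses --- not for the one-step bias as you suggest. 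The gap is therefore not cosmetic: the mechanism you are relying on (conditioning on $\mathcal{F}_j$ to freeze $\hat{P}_{nj}$) simply does not hold, and the missing work is essentially the entirety of Lemmas~\ref{lemma:asymptotic_negligibility_(I)}--\ref{lemma:asymptotic_negligibility_(III)} and the machinery they invoke.
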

The proof is postponed to Appendix \ref{app-sec:proof_Thm_stab_one_step}. Note that condition \ref{assump:Conditional_mean_E0} in Appendix \ref{app-sec:proof_Thm_stab_one_step} removes the need to include $\IF^{\dagger}$ in $\IF^*$ when constructing $S_n^*$.
In practice, it is advisable to  pre-standardize each predictor (as is commonly recommended in the variable selection literature) to provide scale-invariance; the above result is given in terms of the unstandardized predictors for simplicity of presentation. 

The  stabilized one-step estimator is reminiscent of bagging, the aggregation of  multiple weak learners constructed from subsets of the data (in this case, $S_{k_j}$ for  $j\ge q_n$). The value of $q_n$ determines how many weak learners are collected ($n-q_n$ of them) and plays the role of a tuning parameter. Taking a smaller $q_n$ is expected to reduce  variability in the performance of $S_n^*$, but taking too small value of $q_n$ leads to overfitting.
In practice we recommend setting $q_n = n/2$ (which satisfies the conditions in Theorem \ref{Thm:stab_one_step}) as a reasonable trade-off, although in practice it is advisable to run the analysis for a few values of $q_n$ and compare the results.

\section{Competing methods} \label{sec:competing_methods}

\subsection*{Marginal one-step estimators with Bonferroni correction (Bonferroni One-Step)}
For each predictor $U_k, k=1,\ldots,p$, the marginal test statistic is
$B_k \equiv \sqrt{n}S_{k}(\mathbb{P}_n, \hat{P}_n)/\hat{\sigma}_{k}$, where $\hat{\sigma}_k^2$ is the sample second moment of $\IF^*_k(O|\hat{P}_{n})$. 
Marginal testing over all $k$ with Bonferroni correction controls the family-wise error rate of the global null hypothesis $\bs{\beta}_0 = \bs{0}$.
This method is theoretically supported by Theorem \ref{Thm:if_one_step}.

\subsection*{Oracle one-step estimator (Oracle One-Step)}
In this case, the label $k$ of the most correlated predictor $U_k$ is given, and the test statistic is simply $B_k$, which has an asymptotically standard normal null distribution.  Assuming knowledge of $k$ is of course unrealistic, but this estimator serves as a benchmark against which the other methods can be compared.

\section{Simulation results}
\label{sec:numerical}
In this section we report the results of simulation studies evaluating the  performance of the stabilized one-step estimator (with $q_n=n/2$) in comparison with the competing  methods in Section \ref{sec:competing_methods}. The log-transformed survival times are generated under one of the following AFT  scenarios:
\begin{itemize}
\itemsep=-1pt
\item[] {\bf Model N:} $\; T=\varepsilon$;
\item[] {\bf Model A1:} $\; T=U_1/4+\varepsilon$;
\item[] {\bf Model A2:} $\; T=\sum_{j=1}^p\beta_jU_j+\varepsilon$ with $\beta_1=\ldots=\beta_5=0.15,$ $\beta_6=\ldots=\beta_{10}=-0.1,$\\ $\beta_j=0$ for $j \geq 11$.
\end{itemize}
The noise $\varepsilon$ is distributed as $\mathcal{N}(0,1)$ (independently of $\bs{U}$) or $\mathcal{N}(0,0.7(|U_1|+0.7))$ (conditionally on $\bs{U}$). The predictors $\bs{U}$ have a $p$-dimensional multivariate normal distribution with an exchangeable correlation structure such that ${\rm Corr}(U_j, U_k)=0.75, j \neq k$. In {Model N} there is no active predictor, while there is only a single active predictor in {Model A1}. In {Model A2} there are ten active predictors, each having weaker influence than the single predictor in {Model A1}; the most correlated predictor is not unique in this model. The censoring time $C$ is taken to be the log of an exponential  random variable with rate parameters that give either light censoring ($10\%$) or heavy censoring ($30\%$). Here we just consider light censoring; results for the heavy censoring case are given in Appendix section \ref{sec:simulation_data_analysis_results}. For each data generating scenario, we fix the sample size at $n=500$, and consider 4 or 5 values of $p$ of the form $10^a$ (for $a=2,3, \ldots$). A nominal significance level of $5\%$ is used throughout. 
The Kaplan--Meier estimator $\hat{G}_n$ is used in $S_n^*$, as justified by the independent censoring assumption; although a more sophisticated conditional Kaplan--Meier estimator could be used instead, doing so would involve an additional computational cost.

Empirical rejection rates based on $1000$ Monte Carlo replications under  the various scenarious are  displayed in Figures \ref{fig:empirical_rejectionrate_lightcensoring_partialsample} and \ref{fig:empirical_rejectionrate_lightcensoring_wholesample}. The panels for Null show type I error rates under {Model N} for independent and dependent errors, respectively, with the nominal level of $5 \%$ shown by the horizontal dashed line. Similarly, the panels for (Alternative A1, Alternative A2) show the power under {Model A1} and {Model A2}, with independent and dependent errors, respectively.

The left panels of Figure \ref{fig:empirical_rejectionrate_lightcensoring_partialsample} show the results for independent errors.
The stabilized one-step estimator provides the closest-to-nominal type-I error (apart from the oracle one-step estimator). The right panels of Figure \ref{fig:empirical_rejectionrate_lightcensoring_partialsample} give the results in the case of dependent errors, and show that the stabilized one-step estimator outperforms the Bonferroni one-step method in power when $p$ is larger than $1000$. The Bonferroni one-step estimator is favored in terms of power over the stabilized one-step estimator in the case of independent errors, but not in the case of dependent errors.

In Figure \ref{fig:empirical_rejectionrate_lightcensoring_wholesample}, we see that using the full-sample-based $\hat{P}_n$ improves the performance of the stabilized one-step estimator, but has little effect on the Bonferroni one-step estimator. 
In the right panels, the power of the stabilized one-step estimator is much better than that of the Bonferroni one-step estimator, and it maintains good control of type-I error as well, even with one million predictors. The computational cost of the Bonferroni one-step estimator is prohibitive for $p=10^6$ and it is not included.

A general issue with real data is variation in the results due to the ordering of the data. Ordering the data in a different way can change the value of $S_n^*$ and the resulting p-value, making the result difficult to reproduce. One way to address this issue is to use a Bonferroni correction of the minimal $p$-value resulting from $R\ge 1$ random orderings of the data, taking into account the trade-off in terms of computational cost (which grows proportionally to $R$). 
Then the null is rejected if the minimum of the p-values obtained from the $R$ random orderings is less than $5 \%$ (after Bonferroni correction for $R$-fold multiple testing). 

Figure \ref{fig:empirical_rejectionrate_lightcensoring_randomordering_wholesample} examines the effect of multiple random orderings on the performance of the stabilized one-step estimator (taking $R=10$). 
The type-I error is now always below $5\%$, and in fact the test has become somewhat conservative. Comparing
Figures \ref{fig:empirical_rejectionrate_lightcensoring_wholesample}-\ref{fig:empirical_rejectionrate_lightcensoring_randomordering_wholesample} shows that the combination of multiple random orderings and the full-sample estimator $\hat P_n$ maintains the power of the stabilized one-step estimator at the cost of being slightly more conservative.
The R code used to conduct this simulation study will be furnished upon request.

\begin{figure}
\begin{center}
 \includegraphics[width=0.9\textwidth]{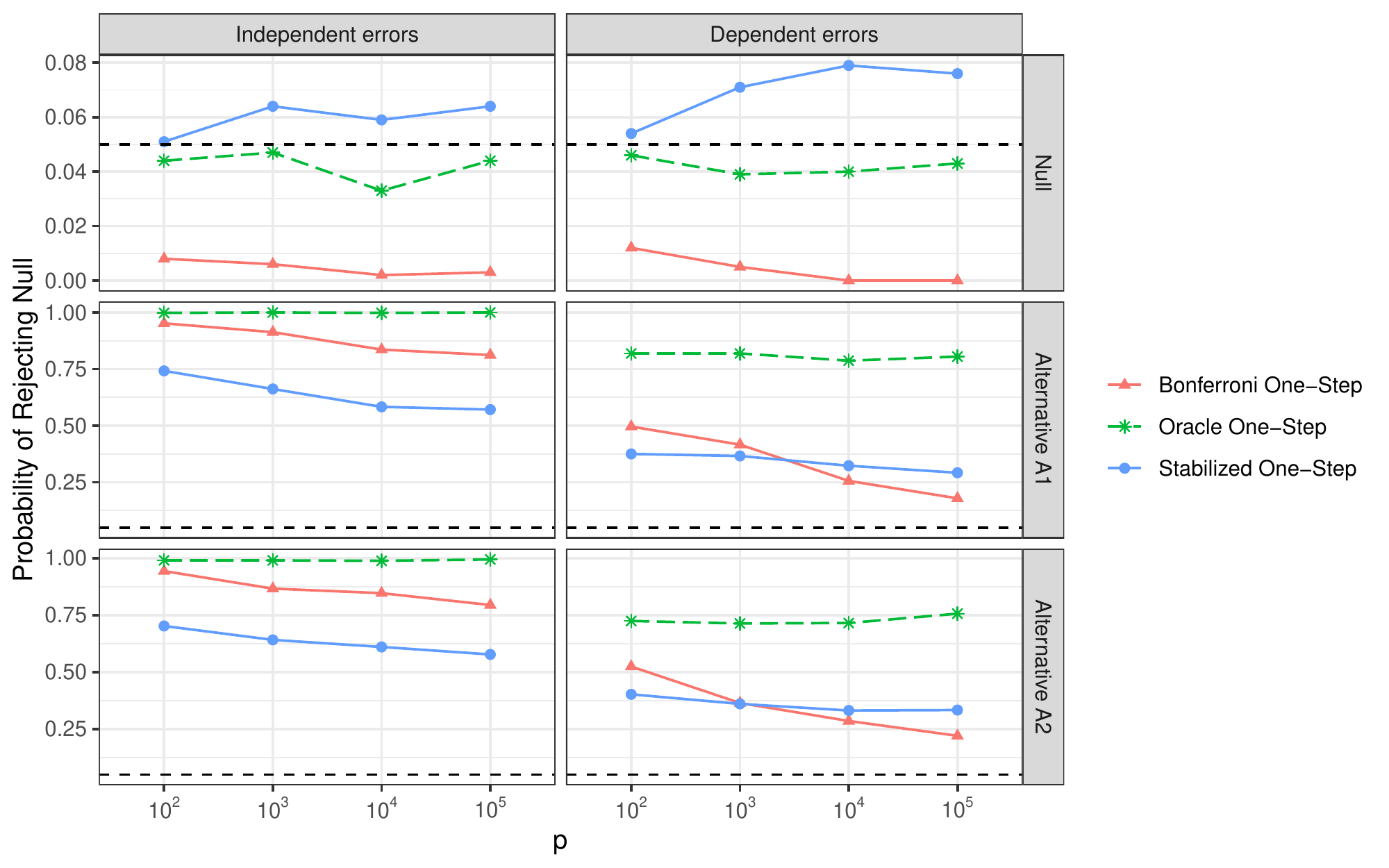}
 \caption{Empirical rejection rates based on $1000$ samples of $n=500$ generated from models with independent and dependent errors under light censoring ($10 \%$), for $p$ in the range $10^2$--$10^5$. The panels tagged by Null give the results under the null model, while those tagged by Alternative A1 and Alternative A2 display the results under alternative models.}
 \label{fig:empirical_rejectionrate_lightcensoring_partialsample}
 \end{center}
\end{figure}

\begin{figure}
\begin{center}
 \includegraphics[width=0.9\textwidth]{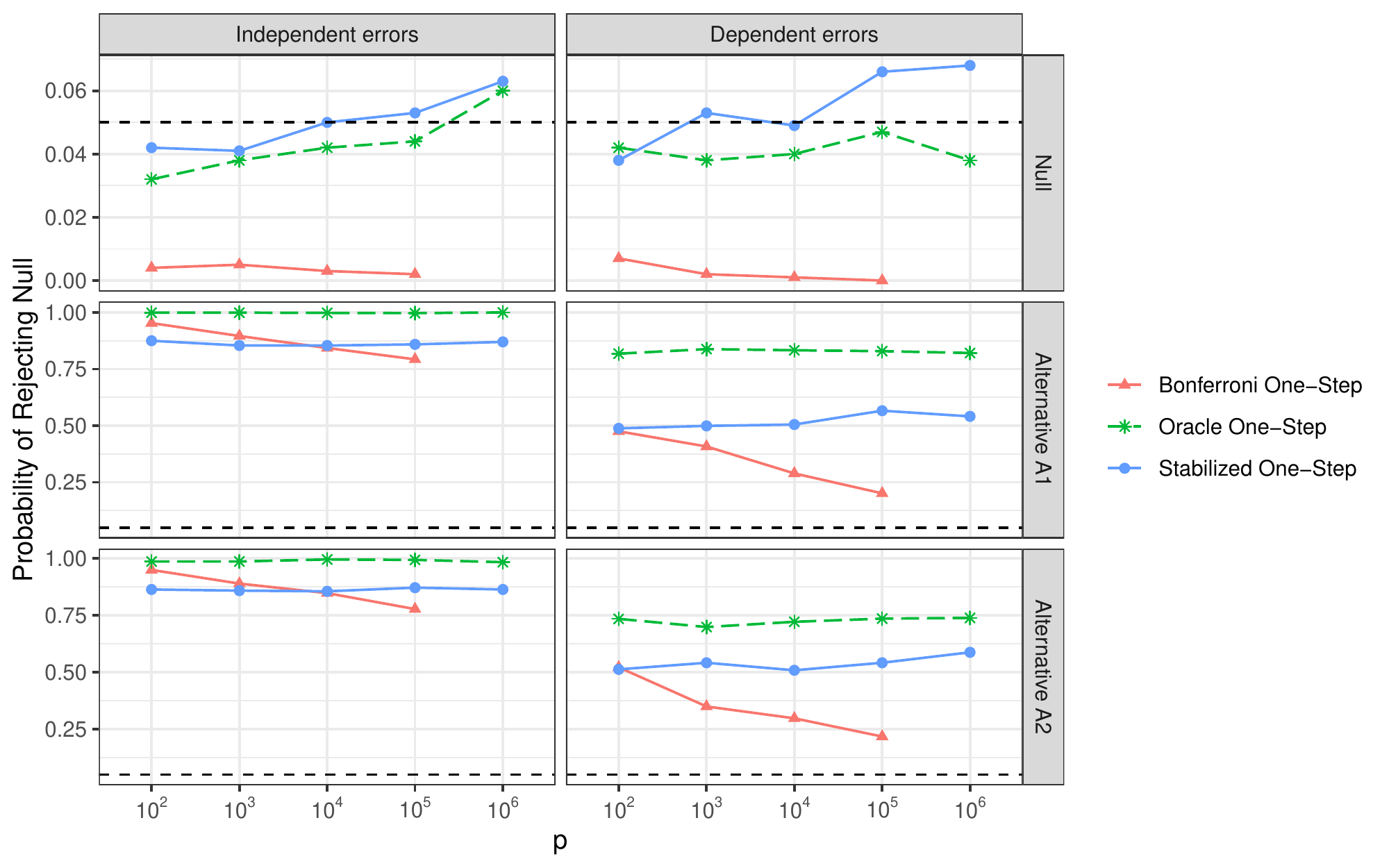}
 \caption{As in Figure \ref{fig:empirical_rejectionrate_lightcensoring_partialsample}, except for using the full sample to estimate the parameters of $P$ that are estimated by the partial sample $q_n=n/2$ in Fig \ref{fig:empirical_rejectionrate_lightcensoring_partialsample}.}
 \label{fig:empirical_rejectionrate_lightcensoring_wholesample}
 \end{center}
\end{figure}

\begin{figure}
\begin{center}
 \includegraphics[width=0.9\textwidth]{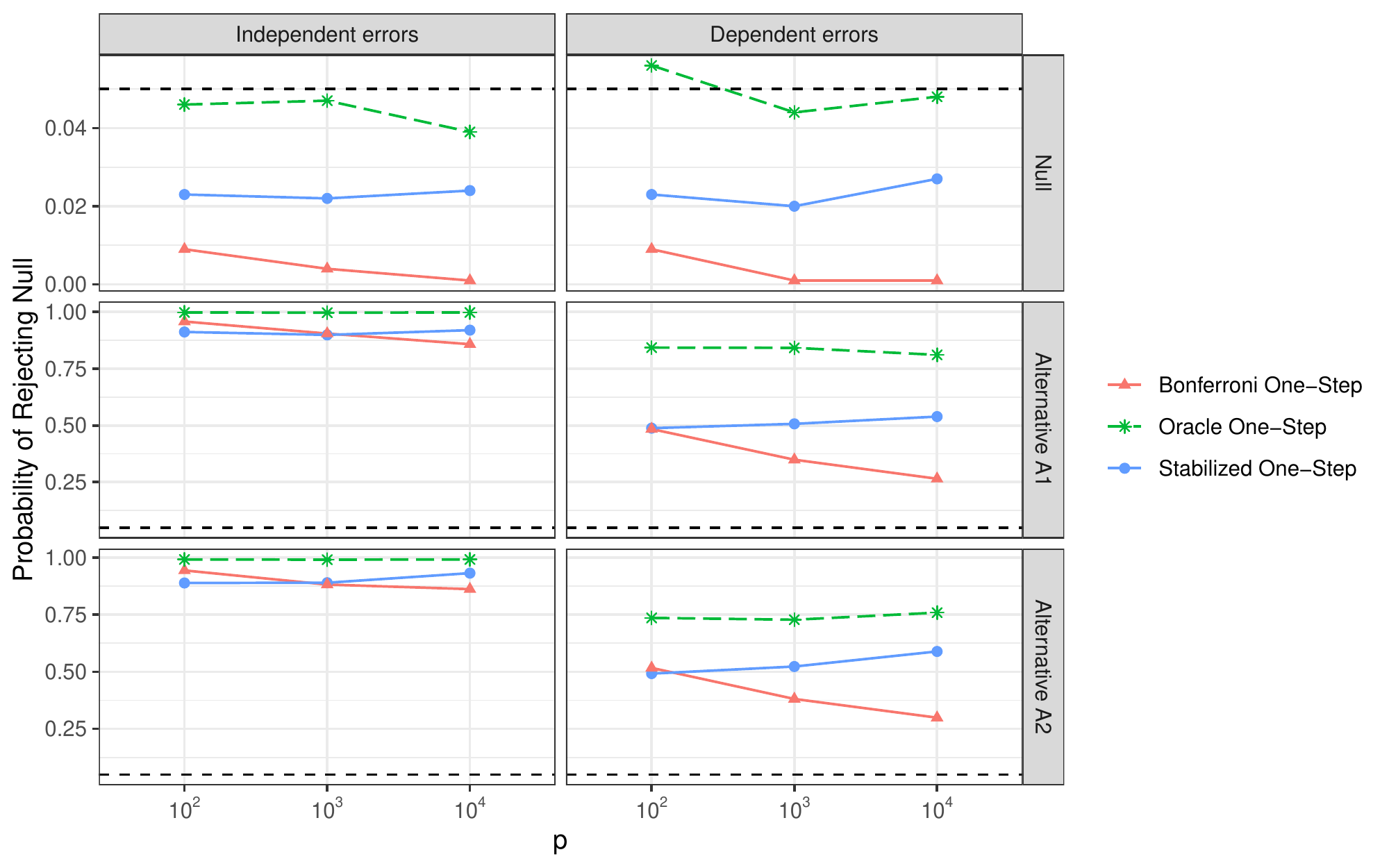}
 \caption{As in Figure \ref{fig:empirical_rejectionrate_lightcensoring_wholesample}, except that the decision for the stabilized one-step estimator uses the (Bonferroni-corrected) minimal p-value from $R=10$ random orderings of the data.}
 \label{fig:empirical_rejectionrate_lightcensoring_randomordering_wholesample}
 \end{center}
\end{figure}

\section{Application to viral replication data} \label{sec:real_data_application}
A widely used measure of the potency of an antiviral drug is the concentration needed  to achieve a $50 \%$ reduction of the (in vitro) rate of viral replication ($\mbox{IC}_{50}$, in units of ${\rm \mu g/mL}$). In this application we treat $\mbox{IC}_{50}$ as a survival time outcome of interest. If the virus is highly resistant, then a $50 \%$ reduction in viral replication rate may not be observed, resulting in a right-censored outcome. We consider the antiviral VRC01, an antibody for HIV-1 that is currently being evaluated in a Phase 2b trial for the prevention of HIV-1 infection \citep{Gilbert2017, Magaret2019}.
In this case, a reduction in viral replication is thought to be caused by a VRC01-mediated neutralization, and the lower $\mbox{IC}_{50}$, the more sensitive the virus is to VRC01-mediated neutralization. 

Data on a total of $624$ pseudoviruses were retrieved from the CATNAP database \citep{Yoon2015}. We restrict attention to a subgroup of size ($n=611$) after removing $13$ pseudoviruses with unreliable $\mbox{IC}_{50}$ measurements. The censoring rate of $\mbox{IC}_{50}$ is $16 \%$ in the analyzed data set. 
The $611$ pseudoviruses are of $24$ subtypes, where Subtype B and C are predominant over others: $293$ of them ($48.0 \%$) belong to Subtype C and $81$ ($13.3 \%$) are of Subtype B. In terms of geographic regions where the viruses originate, $126$ of $611$ pseudoviruses are from Asia ($20.6 \%$), $96$ from Europe or the Americas ($15.7 \%$), $170$ from Northern Africa ($27.8 \%$), and $219$ from Southern Africa ($35.9 \%$). Pseudoviruses of different subtypes may present varied gene expression, so we analyze the data for Subtypes B and C separately. We set the end of follow-up $\tau$ to the $90$th percentile of $\mbox{IC}_{50}$ in each data set with the corresponding sample size as indicated in Table \ref{tab:features_type_distribution}. 
We aim to investigate whether the potency of VRC01 depends on HIV-1 proteomic characteristics that are presented in lieu of the envelope (Env) amino acid (AA) sequence features. Data on $817$ features are available: 
\begin{enumerate}
\item Binary features:  indicating whether a particular AA sequence appears at a particular position, or whether a position is the starting site of some given enzymatic process. There are  $799$  features of this type.

\item Count features: representing  total numbers of enzyme-directed chemical reactions observed to take place within a given region, or the total length of the aligned sequences over a region. There are $18$ count features.
\end{enumerate}

To simplify interpretation of the effects of binary or count features, we carry out separate analyses for each type. Binary features and binary interactions are included when their incidence rates fall in the range $5$--$95 \%$. All count features are first standardized,  then all pairwise interactions of these predictors are also standardized. 
The total number of predictors included in the analysis varies by data type as well as viral subtype, as given in Table \ref{tab:features_type_distribution}.
Detailed feature names are provided in Appendix Section \ref{sec:simulation_data_analysis_results}.

\begin{table}[ht]
\centering
\caption{Numbers of  binary and count variables for Subtypes B and C; proportion included among all possible are given in parentheses.}
\begin{tabular}[t]{p{3.5cm}cccccc} %
\toprule
\multirow{2}{*}{\parbox{3.5cm}{}} &  &
\multicolumn{2}{c}{Binary effects} &  &
\multicolumn{2}{c}{Count effects} \\ 
\cmidrule{3-4} \cmidrule{6-7}
 & & {\centering Main} & {Interaction} &
 & {\centering Main} & {Interaction} \\
\midrule
\parbox{3.5cm}{Subtype B $(n=81)$}
 & & $220$ $(28 \%)$ & $11642$ $(48 \%)$ &  & $17$ $(94 \%)$ & 
     $136$ $(100 \%)$ \\
\midrule 
\parbox{3.5cm}{Subtype C $(n=293)$}
 & & $252$ $(32 \%)$ & $12698$ $(40 \%)$ &  & $17$ $(94 \%)$ & $136$ $(100 \%)$\\
\bottomrule
\end{tabular}
\label{tab:features_type_distribution}
\end{table}

As discussed at the end of Section \ref{sec:numerical}, in implementing the stabilized one-step estimator we recommend using the full sample to estimate the parameters of $P$, along with $R$ random orderings of the data. Histograms of p-values based on $1000$ random orderings of the data with respect to different virus subtypes are given in Figures \ref{fig:histogram_1000pvalues_subtypeB}-\ref{fig:histogram_1000pvalues_subtypeC} in the appendix. These histograms show the strong dependence of the p-value on the random ordering.
In Table \ref{tab:stratified_data_analysis_results}, we report the p-values of the stabilized one-step estimator and $95 \%$ confidence intervals using $q_n = n/2$ and  $R=200$ random orderings of the data (separated by virus subtype). 
The reported confidence interval corresponds to the minimal p-value.
This table also presents the results of applying the Bonferroni one-step estimator to the original data. Though the Bonferroni one-step estimator also returns significant results, except in the case of count features and Subtype C,
this method generally yields more conservative conclusions than the stabilized one-step estimator, as expected.

The confidence intervals based on the stabilized one-step estimator in Table 2  represent changes in $\mbox{IC}_{50}$ (in units of ${\rm \mu g/mL}$)  due to the presence of the identified  binary feature, or due to a unit increase in the identified count feature. 
Genetic descriptions of these identified  features  are provided in Table \ref{tab:most_correlated_features} of Appendix Section \ref{sec:simulation_data_analysis_results}. 

\begin{table}[ht]
\centering 
\caption{Results of applying the Bonferroni one-step estimator and the stabilized one-step estimator to data on Subtypes B and C; the numbers of binary predictors and count predictors are denoted $p_{bin}$ and  $p_{count}$, respectively.}
\begin{tabular}[t]{p{2.8cm}ccc} %
\toprule
\multirow{2}{*}{\parbox{2.8cm}{$(n, p_{bin}, p_{count})$}} & 
\multirow{2}{*}{Method} &
Binary effects & Count effects \\ 
\cmidrule{3-3} \cmidrule{4-4}
 &  & {\centering p-value ($95 \%$ CI)} & {p-value ($95 \%$ CI)}  \\
\midrule
\multirow{2}{*}{\parbox{2.8cm}{Subtype B \\ $(81, 11862, 153)$}}
 & Bonferroni One-Step & $0.01$  &  $0.04$ \\ 
 & Stabilized One-Step & $<0.001$ $(7.8, 9.5)$ & $<0.001$ $(12.7, 13.5)$ \\
\midrule 
\multirow{2}{*}{\parbox{2.8cm}{Subtype C \\ $(293, 12950, 153)$}}
 & Bonferroni One-Step & $<0.001$ &  $0.91$   \\
 & Stabilized One-Step & $<0.001$ $(10.4, 23.1)$ & $<0.001$ $(3.3, 5.0)$ \\
\bottomrule
\end{tabular}
\label{tab:stratified_data_analysis_results}
\end{table}


\section{Discussion} \label{sec:discussion}

Though we have focused on using the correlation as the marginal association measure, our results can be extended to a wide range of other measures. The key requirements on such a measure are that (a) it be pathwise differentiable in the full data model where censoring is not present, and (b) the efficient influence function be non-degenerate in this full data setting. Requirement (a) is sufficient for the association measure to be pathwise differentiable even in the presence of right censoring, thereby allowing the construction of a one-step estimator for each marginal association. Requirement (b) enables the use of first-order asymptotics to study the behavior of these one-step estimators; without this condition, the centered estimator, scaled by the square root of sample size, would converge weakly to zero rather than to a non-degenerate mean-zero normal distributed random variable. Requirement (b) also ensures that the variance estimates used for standardization in the stabilized one-step estimator do not converge to zero, a condition that is required by existing theory for this estimator. Examples of parameters satisfying (a) and (b) include the Spearman correlation, odds ratio, and model-agnostic hazard ratio \citep{whitney2019comment}. Examples of parameters satisfying (a), but not (b), include the nonparametric $R^2$, distance correlation \citep{szekely2007measuring}, and maximum mean discrepancy \citep{smola2006maximum}.


\newpage
\section*{Appendix}
\begin{appendix}
\section{Proof of Theorem \ref{Thm:if_slope}}
\label{app-sec:proof_Thm_if_slope}
This theorem implies $\Psi_{\hat{G}_n}(\Pn)$ is asymptotically linear with its influence function $\IF^{ipw}_G(\cdot|P)-\IF^{nu}_G(\cdot|P)$, where $\IF^{ipw}_G(\cdot|P)$ is the influence function of an inverse probability weighted estimator in the setting where $G$ is known, and $\IF^{nu}_G(\cdot|P)$ accounts for the fact that, in truth, $G$ was estimated by the Kaplan--Meier estimator $\hat{G}_n$. Below we follow Lemmas \ref{lemma:remainder_estG}-\ref{lemma:if_estG} to validate the substitution of $G$ for $\hat{G}_n$ in the asymptotically linear representation of $\Psi_{\hat{G}_n}(\Pn)-\Psi(P)$, while Lemma \ref{lemma:projection_of_if_ipw} further indicates that $\IF^{nu}_G(\cdot|P)$ is the projection of $\IF^{ipw}_G(\cdot|P)$ onto the nuisance tangent space $\mathbf{T}^{nu}(G)$.

To obtain the asymptotic linear representation of $\Psi_{\hat{G}_n}(\Pn)-\Psi(P)$, we first have that
\begin{equation}\label{eq:expression_Psi_hatG}
\begin{split}
  &\Psi_{\hat{G}_n}(\Pn)-\Psi(P)=[\Psi_{\hat{G}_n}(\Pn)-\Psi_{\hat{G}_n}(P)]+[\Psi_{\hat{G}_n}(P)-\Psi(P)]\\
  &\quad =\Pn\big[\IF^{ipw}_{\hat{G}_n}(O|P)\big]+[\Psi_{\hat{G}_n}(P)-\Psi(P)]-{\rm Rem}_{\hat{G}_n}(\Pn, P)\\
  &\quad =\Pn\big[\IF^{ipw}_G(O|P)\big]+[\Psi_{\hat{G}_n}(P)-\Psi(P)]-{\rm Rem}_{\hat{G}_n}(\Pn, P)+o_p(n^{-1/2}),  
\end{split}
\end{equation}
where the second equality follows by
replacing $G$ by $\hat{G}_n$ in \eqref{eq:if_G_fixed}, leading to
\begin{align*}
  &\Psi_{\hat{G}_n}(\Pn)-\Psi_{\hat{G}_n}(P)\\
  &\quad = \Pn \left[\left\{ \frac{(U-P[U])(Y-P[T])}{{\rm Var}(U)}
  -\frac{{\rm Cov}(U,T)}{{\rm Var}^2(U)}(U-P[U])^2 \right\} \right]-{\rm Rem}_{\hat{G}_n}(\Pn, P)\\
  &\quad \equiv \Pn\big[\IF^{ipw}_{\hat{G}_n}(O|P)\big]-{\rm Rem}_{\hat{G}_n}(\Pn, P),
\end{align*}
and the last equality in \eqref{eq:expression_Psi_hatG} holds by the asymptotic equivalence between $\Pn[\IF^{ipw}_{\hat{G}_n}(O|P)]$ and $\Pn[\IF^{ipw}_G(O|P)]$ that
is a consequence of Lemma 19.24 in \cite{Vaart1998} and stated in Lemma \ref{lemma:estG_by_G_for_if_ipw}.
Moreover, ${\rm Rem}_{\hat{G}_n}(\Pn, P)$ is shown asymptotically negligible in Lemma \ref{lemma:remainder_estG} in the appendix.

In addition, a first-order Taylor expansion of the middle term on the right-hand side of \eqref{eq:expression_Psi_hatG} around $G$ yields
the approximation
\begin{equation} \label{eq:expression_Psi_hatG.1}
\begin{split}
  &\Psi_{\hat{G}_n}(P)-\Psi(P)
  =\frac{-1}{{\rm Var}(U)}P\left[(U-P[U])\frac{\delta X}{G^2(X)}[\hat{G}_n(X)-G(X)]\right]+o_p(n^{-1/2})\\
  &\quad =\frac{-1}{{\rm Var}(U)}P\left[(U-P[U])\tilde{Y}\left[\frac{\hat{G}_n(X)}{G(X)}-1\right]\right]+o_p(n^{-1/2}),                                                                 
\end{split}
\end{equation}
which can be further expressed in terms of the influence function of $\hat{G}_n$. 
Lemma \ref{lemma:if_estG} shows that the influence function of $\hat{G}_n(t)$ is $-G(t)\IF_{G}(O)(t)$, where
\begin{align}
  \IF_{G}(O)(t) = \int_{-\infty}^t\left[\frac{1(X \in ds, \delta=0)}{P1(X \geq s)}-\frac{1(X \geq s)d\Lambda(s)}{{P1(X \geq s)}}\right]
  = \int_{-\infty}^{t} \frac{dM(s)}{P1(X \geq s)}                                                              \label{if_G}
\end{align}
for $ t\in \mathcal{T}$; $\Lambda$ is the cumulative hazard function of $G$, and
$dM(s) \equiv 1(X \in ds, \delta=0)-1(X \geq s)d\Lambda(s)$ is a martingale with respect to the filtration generated by the counting process
$1(X \leq s, \delta=0)$. This suggests that $\hat{G}_n(t)-G(t) = -G(t)[n^{-1}\sum_{i=1}^n \IF_{G}(O_i)(t)] + o_p(n^{-1/2})$, which further implies that
\[
  \frac{\hat{G}_n(t)}{G(t)}-1 = \frac{-1}{n} \sum_{i=1}^n \int_{-\infty}^{t} \frac{dM_i(s)}{P1(X \geq s)} = -\Pn\IF_G(O)(t)
\]
with $dM_i$ is the replication of $dM$ based on the observation of the $i$-th subject.
Let $\mathbb{L}_n \equiv \Pn\IF_{G}(O)$ and define $\varphi \colon \ell^{\infty}_{\tau} \rightarrow \mathbb{R}$
by $\varphi(g)=P[(U-P[U])\tilde{Y}g(X)]$,
where $ \ell^{\infty}_{\tau}$ is the space of uniformly bounded functions on $\mathcal{T}$.
Then \eqref{eq:expression_Psi_hatG.1} can be expressed in terms of $\mathbb{L}_n$ and $\varphi$ as follows:
\begin{align*}
  \Psi_{\hat{G}_n}(P)-\Psi(P)=\frac{-1}{{\rm Var}(U)}\varphi\left(\mathbb{L}_n\right)+o_p(n^{-1/2})
  =\Pn\left[\frac{-1}{{\rm Var}(U)}\varphi(\IF_{G}(O))\right]+o_p(n^{-1/2}),
\end{align*}
where the second equality holds by the linearity of $\varphi(g)$ in $g$.

Evaluating $\varphi$ at $\IF_{G}(O)$ in \eqref{if_G}, this leads to
\begin{align*}
  &\varphi\left(\IF_{G}(O)\right)=P\left[(U-P[U])\tilde{Y}\int_{-\infty}^{X} \frac{dM(s)}{P1(X \geq s)}\right]\\
  &\quad =P\left[(U-P[U])\int_{\mathcal{T}} P\left[\tilde{Y}|U,X \geq s\right]P1(X \geq s)\frac{dM(s)}{P1(X \geq s)}\right],
\end{align*}
in which
\begin{align*}
  &P\left[\tilde{Y}|U,X \geq s\right]
  =\int_{\mathcal{T}} \frac{x}{G(x)}P1(\delta=1,X \in dx|U, X \geq s)\\
  &\quad =\int_{\mathcal{T}} \frac{x1(x \geq s)}{G(x)}\frac{P1(U,C \geq x,T \in dx)}{P1(U,C \geq s, T \geq s)}
  =\int_{\mathcal{T}} \frac{x1(x \geq s)}{G(x)}\frac{G(x)P1(U,C \geq s,T \in dx)}{G(s)P1(U,C \geq s, T \geq s)}\\
  &\quad =\int_{\mathcal{T}} \frac{x}{G(s)}\frac{P1(U,X \geq s,T \in dx)}{P1(U,X \geq s)}=\frac{E[T|U,X \geq s]}{G(s)}.
\end{align*}
Therefore we have that
\begin{align*}
  \varphi\left(\IF_{G}(O)\right)
  =P\left[(U-P[U])\int_{\mathcal{T}} E[T|U,X \geq s]\frac{dM(s)}{G(s)}\right],
\end{align*}
where the expectation with respect to $P$ only applies to $(U,X)$, not to $M$, and it implies that
the expression in \eqref{eq:expression_Psi_hatG} becomes
\begin{align} \label{eq:expression_Psi_hatG.2}
  &\Psi_{\hat{G}_n}(\Pn)-\Psi(P)
  =\Pn\big[\IF^{ipw}_G(O|P)\big] \\
  & -\Pn\left[\frac{1}{{\rm Var}(U)} P\left[(U-P[U])\int_{\mathcal{T}} E[T|U,X \geq s]\frac{dM(s)}{G(s)}\right]\right]
  -{\rm Rem}_{\hat{G}_n}(\Pn, P) + o_p(n^{-1/2}) \nonumber \\
  & \equiv \Pn \big[\IF^{ipw}_G(O|P)\big]-\Pn\big[\IF^{nu}_G(O|P)\big]
  -{\rm Rem}_{\hat{G}_n}(\Pn, P)+o_p(n^{-1/2}). \nonumber  
\end{align}

\section{Proof of Theorem \ref{Thm:if_one_step}}
\label{app-sec:proof_Thm_if_one_step}
In this theorem, our objective is to show that, under regularity conditions, $S(\mathbb{P}_n,\hat{P}_n)$ is asymptotically linear with influence function
\vspace{-5pt}
\begin{align*}
  \Pi\left\{\IF^*(\cdot|\bar{E},Q_u,G)|\mathbf{T}^*(G)^\perp\right\}.
\end{align*}
Let $\hat{P}_n' = (\hat{E}_n, \mathbb{Q}_{n}, G)$ denote the estimate of $P$ but with $\hat{G}_n$ replaced by the true censoring distribution $G$. Note also that $S(\mathbb{P}_n, \hat{P}'_n)=\Psi(\hat{P}_n) + \mathbb{P}_n \IF^*(\cdot|\hat{E}_n, \mathbb{Q}_{n}, G)$, where Equations \eqref{eq:def_parameter}-\eqref{eq:def_parameter.2} imply that $\Psi$ does not depend on the censoring distribution so that $\Psi(\hat{P}_n)=\Psi(\hat{P}_n')$. We have that
\begin{align} \label{eq:Sn_decomp}
  &S(\mathbb{P}_n, \hat{P}_n) - \Psi(P) = S(\mathbb{P}_n, \hat{P}'_n) - \Psi(P) + S(\mathbb{P}_n, \hat{P}_n) - S(\mathbb{P}_n, \hat{P}'_n) \nonumber \\
  &\quad = S(\mathbb{P}_n, \hat{P}_n') - \Psi(P) + \mathbb{P}_n [\IF^*(\cdot|\hat{E}_n,\mathbb{Q}_{n},\hat{G}_n)-\IF^*(\cdot|\hat{E}_n,\mathbb{Q}_{n},G)] \nonumber \\
  &\quad = S(\mathbb{P}_n, \hat{P}'_n) - \Psi(P) + P[\IF^*(\cdot|\hat{E}_n,\mathbb{Q}_{n},\hat{G}_n)-\IF^*(\cdot|\hat{E}_n,\mathbb{Q}_{n},G)] \nonumber \\
  &\qquad + [\mathbb{P}_n-P] [\IF^*(\cdot|\hat{E}_n,\mathbb{Q}_{n},\hat{G}_n)-\IF^*(\cdot|\hat{E}_n,\mathbb{Q}_{n},G)].
\end{align}
The last term on the right-hand side of \eqref{eq:Sn_decomp} is $o_p(n^{-1/2})$  
by \ref{eq:process_conv.2} of Lemma \ref{lemma:convergence_processes}, with the required conditions verified in Lemmas \ref{lemma:conditions_for_Thm2.1}-- \ref{lemma:L2_conv.2}. Moreover, Lemma \ref{lemma:remainder} shows that 
\[
  P[\IF^*(\cdot|\hat{E}_n,\mathbb{Q}_{n},\hat{G}_n)- \IF^*(\cdot|\bar{E},Q_u,\hat{G}_n)  + \IF^*(\cdot|\bar{E},Q_u,G)-\IF^*(\cdot|\hat{E}_n,\mathbb{Q}_{n},G)] = o_p(n^{-1/2}),
\]
which would simplify the middle term in \eqref{eq:Sn_decomp} as follows. 
At last, Lemma \ref{lemma:if_SnG} gives the asymptotic linearity of the term $S(\mathbb{P}_n, \hat{P}'_n)-\Psi(P)$ in
\eqref{eq:Sn_decomp}.

To simplify the middle term on the right-hand side in \eqref{eq:Sn_decomp}, we further decompose it as
\begin{align*}
  P&[\IF^*(\cdot|\hat{E}_n,\mathbb{Q}_{n},\hat{G}_n)-\IF^*(\cdot|\hat{E}_n,\mathbb{Q}_{n},G)] \\
  &= P[\IF^*(\cdot|\bar{E},Q_u,\hat{G}_n)-\IF^*(\cdot|\bar{E},Q_u,G)] \\
  &\hspace{1em} + P[\IF^*(\cdot|\hat{E}_n,\mathbb{Q}_{n},\hat{G}_n)- \IF^*(\cdot|\bar{E},Q_u,\hat{G}_n)  + \IF^*(\cdot|\bar{E},Q_u,G)-\IF^*(\cdot|\hat{E}_n,\mathbb{Q}_{n},G)],
\end{align*}
with the last line of the above display shown as $o_p(n^{-1/2})$ by Lemma \ref{lemma:remainder} as mentioned earlier. 
Expressing the first term of the above display using the notation $\Phi(\tilde G)= P[\IF^*(\cdot|\bar{E},Q_u,\tilde G)]$, for $\tilde G$ equal to $\hat{G}_n$ or $G$, upon inserting them back into \eqref{eq:Sn_decomp} we have that
\begin{align*}
  S(\mathbb{P}_n, \hat{P}_n) - \Psi(P) = S(\mathbb{P}_n, \hat{P}'_n)- \Psi(P) + \Phi(\hat{G}_n) - \Phi(G)+ o_p(n^{-1/2}),        
\end{align*}
together with the previously-developed results. 

In Lemma \ref{lemma:if_SnG}, $S(\mathbb{P}_n, \hat{P}'_n)$ is shown to be a regular asymptotically linear estimator of $\Psi(P)$ with influence function ${\rm IF}^* + \IF^{\dagger}$ in the model ${\cal M}(G)$ with $G$ known.
Further, because we specified that  $\hat{G}_n$ is estimated via maximum likelihood, the delta method can be used to show that $\Phi(\hat{G}_n)$ is an asymptotically efficient estimator of $\Phi(G)$ in the model used for $G$ (with tangent space given by $\mathbf{T}^*(G)$). 
Combining the above results, we have verified all the required conditions of Theorem~2.3 in \cite{vanderLaan&Robins2003}, from which we conclude that
\begin{align*}
  S(\mathbb{P}_n, \hat{P}_n) - \Psi(P)&=
  [\mathbb{P}_n-P]\Pi\left\{\IF^*(\cdot|\bar{E},Q_u,G) + \IF^{\dagger}(\cdot|\bar{E},P)|\mathbf{T}^*(G)^\perp\right\} + o_p(n^{-1/2}),
\end{align*}
and the proof is complete.

\section{Proof of Theorem \ref{Thm:stab_one_step}}
\label{app-sec:proof_Thm_stab_one_step}
 Before proceeding to the proof we make the following asymptotic stability assumptions:
\begin{assumpenum}[resume=assumptions]
  \item \label{assump:Conditional_mean_E0}
  $\hat{E}_n$ defined in \eqref{eq:est_E} 
  for a given predictor $U_k$ consistently estimates (pointwise in its arguments) the true conditional mean residual life function $E_0(u,s,k) \equiv E[\tilde{Y}|U_k=u, X \ge s]$, which is assumed to be uniformly-bounded and left-continuous in $s$, and
   with $k_j$ defined in \eqref{eq:predictor_selection},
  \begin{align*}
   E\bigg[\sup_{(j,s) \in \{q_n,\ldots,n\}
  \times {\cal T}}\big|E_0(U_{k_{j}},s,k_{j})-E_0(U_{k_{j-1}},s,k_{j-1})\big|\bigg] = o(n^{-1/2}).
  \end{align*}
  
  \item \label{assump:Signal strength} There exist a sufficiently large $c > 0$ and a sequence of non-empty subsets $\mathcal{K}_n^* \subseteq \mathcal{K}_n=\{1,\ldots,p_n\}$ such that
  \begin{align*}
    \inf_{k \in \mathcal{K}_n^*}\bigg|\frac{{\rm Cov}(U_{k}, T)}{{\rm Var}(U_{k})}\bigg| - \sup_{l \in \mathcal{K}_n \setminus \mathcal{K}_n^*}\bigg|\frac{{\rm Cov}(U_{l}, T)}{{\rm Var}(U_{l})}\bigg| \ge c\sqrt{\frac{\log(n \lor p_n)}{q_n}},
  \end{align*}
  where the supremum over $l \in \mathcal{K}_n \setminus \mathcal{K}_n^*$ is  defined to be 0 if $\mathcal{K}_n^*=\mathcal{K}_n$, and
  \[
  {\rm Diam}(\mathcal{K}^*_n) \equiv \sup_{k, l \in \mathcal{K}^*_n}\left|\,\left|\frac{{\rm Cov}(U_{k},T)}{{\rm Var}(U_{k})}\right|-\left|\frac{{\rm Cov}(U_{l},T)}{{\rm Var}(U_{l})}\right|\,\right| = o(n^{-1/2}).
  \]
\end{assumpenum}
The last condition allows the most correlated predictors with $T$ to be non-unique under the alternative, but requires that they are contained in some subset $\mathcal{K}^*_n$ with stronger association with $T$ than the other predictors in $\mathcal{K}_n$. In addition, the variation in signal strength of the predictors in $\mathcal{K}^*_n$ is assumed to be of order $o(n^{-1/2})$.

The proof below is developed in a special case of taking a user-defined coarsening $c$ so that $c(U)$ is a degenerate random variable, which reasonably reduces $\hat{G}_n(\cdot|u)$ to $\hat{G}_n(\cdot)$, and this is supported by the independent censoring assumption.

\begin{proof}
We will show the asymptotic normality of $\sqrt{n-q_n}\bar{\sigma}_n^{-1}[S^*_n-\Psi(P)]$, where $\Psi(P)$ is defined in \eqref{eq:def_parameter}.
From the expression $S^*_n$ in \eqref{eq:Sn_star},
the desired result will follow from the limiting distribution of
\[
  \frac{1}{\sqrt{n-q_n}}\sum_{j=q_n}^{n-1}\hat{\sigma}_{nj}^{-1}m_j[S_{k_j}(\delta_{O_{j+1}}, P)-\Psi_{k_j}(P)],
\]
with  certain remainder terms shown to be asymptotically negligible in Lemmas \ref{lemma:asymptotic_negligibility_(I)}--\ref{lemma:asymptotic_negligibility_(V)}, based on concentration results and supportive preliminaries developed in Lemmas
\ref{lemma:preservation_BV}--\ref{lemma:convergence_sum_Dnj.2}

We start by introducing a decomposition of the stabilized one-step estimator.
The distribution of $P$ is identified by $(E_0, Q_u, G)$, where $E_0(u,s,k) \equiv E[\tilde Y|U_k=u, X \geq s]$.
Replacing in various ways each feature of $P$ by its estimator introduced in Section \ref{sec:one_step_estimator} gives $\hat{P}_{nj}=(\hat{E}_j, \mathbb{Q}_j, \hat{G}_n)$;
$\hat{P}^{''}_{nj}=(E_0, \mathbb{Q}_{j}, \hat{G}_n)$ and 
$\hat{P}^{'''}_{nj}=(E_0, \mathbb{Q}_{j}, G)$.
Therefore we are able to decompose the statistic of interest as
\begin{align} \label{eq:decomposition_rootn_Sn_star}
  &\sqrt{n-q_n}\bar{\sigma}_n^{-1}[S^*_n - \Psi(P)] \;=\; \frac{1}{\sqrt{n-q_n}}\sum_{j=q_n}^{n-1}\hat{\sigma}_{nj}^{-1}m_j[S_{k_j}(\delta_{O_{j+1}}, \hat{P}_{nj})-S_{k_j}(\delta_{O_{j+1}}, \hat{P}^{''}_{nj})] \\
  &\hspace{10pt} + \frac{1}{\sqrt{n-q_n}}\sum_{j=q_n}^{n-1}\hat{\sigma}_{nj}^{-1}m_j[S_{k_j}(\delta_{O_{j+1}}, \hat{P}^{''}_{nj})-S_{k_j}(\delta_{O_{j+1}}, \hat{P}^{'''}_{nj})] \nonumber\\
  &\hspace{10pt} + \frac{1}{\sqrt{n-q_n}}\sum_{j=q_n}^{n-1}\hat{\sigma}_{nj}^{-1}m_j[S_{k_j}(\delta_{O_{j+1}}, \hat{P}^{'''}_{nj})-S_{k_j}(\delta_{O_{j+1}}, P)] \nonumber\\
  &\hspace{10pt} + \frac{1}{\sqrt{n-q_n}}\sum_{j=q_n}^{n-1}\hat{\sigma}_{nj}^{-1}m_j[S_{k_j}(\delta_{O_{j+1}}, P)-\Psi_{k_j}(P)] \nonumber\\
  &\hspace{10pt} + \frac{1}{\sqrt{n-q_n}}\sum_{j=q_n}^{n-1}\hat{\sigma}_{nj}^{-1}m_j[\Psi_{k_j}(P)-\Psi(P)]
  \equiv \mbox{(I) + (II) + (III) + (IV) + (V)}. \nonumber
\end{align}

Using Lemmas \ref{lemma:asymptotic_negligibility_(I)}--\ref{lemma:asymptotic_negligibility_(V)} mentioned above, in conjunction with \ref{assump:Conditional_mean_E0}--\ref{assump:Signal strength}, gives the asymptotic negligibility of (I), (II), (III) and (V). Therefore the remaining task is to show the asymptotic normality of (IV). Modifying the expression in \eqref{eq:if_onestep}, with $(\mathbb{P}_n,\hat{P}_n)$ replaced by $(\delta_{O_{j+1}}, P)$ and with the predictor taken as $U_{k_j}$, gives that $S_{k_j}(\delta_{O_{j+1}}, P) = \Psi_{k_j}(P) + \IF^*_{k_j}(O_{j+1}|P)$. This further implies that
\begin{align*}
  \mbox{(IV)} = \frac{1}{\sqrt{n-q_n}}\sum_{j=q_n}^{n-1}\hat{\sigma}_{nj}^{-1}m_j\IF^*_{k_j}(O_{j+1}|P),    
\end{align*}
where $\IF_{k_j}^*(\cdot|P)$ is $\IF^*(\cdot|P)$ with the predictor taken as $U_{k_j}$. Decompose (IV) into two terms:
\begin{equation}
\begin{split}
  \frac{1}{\sqrt{n-q_n}}\sum_{j=q_n}^{n-1}\left[\frac{\sigma_{nj}}{\hat{\sigma}_{nj}}-1\right]\frac{m_j}{\sigma_{nj}}\IF^*_{k_j}(O_{j+1}|P)
  + \frac{1}{\sqrt{n-q_n}}\sum_{j=q_n}^{n-1}\frac{m_j}{\sigma_{nj}}\IF^*_{k_j}(O_{j+1}|P).
                             \label{IV_decomposition}
\end{split}
\end{equation}
Note that $\hat{\sigma}_{nj}$ in the first term above involves the partial-sample estimator $\hat{P}_{nj}$.

Let $\lesssim$ denote ``bounded above up to a universal multiplicative constant that does not depend on $(j,n)$,'' and  $\sigma^2_{nj} \equiv \int \IF^*_{k_j}(o|P)^2 dP(o) = {\rm Var}(\IF^*_{k_j}(O|P))$.
Note that $\sigma^2_{nj}$ is bounded away from zero by \ref{assump:Variances}, which implies that $\min_{k \in \mathbb{N}} {\rm Var}(\IF^*_{k}(O|P))$ is bounded away from zero; namely, there exists some constant $\epsilon > 0$ so that 
$\sigma^2_{nj} \geq \min_{k \in \mathbb{N}} {\rm Var}(\IF^*_{k}(O|P)) \geq \epsilon$. 
The  first term in the decomposition of (IV) in \eqref{IV_decomposition} is seen to be of order $o_p(1)$ as follows.
Let ${\cal O}_{nj} \equiv \sigma \big(\{O_1,\ldots,O_j\},\hat{G}_n\big)$, and 
\begin{align*}
 H_{nj} \equiv \frac{1}{\sqrt{n-q_n}}\left[\frac{\sigma_{n,\,j-1}}{\hat{\sigma}_{n,\,j-1}}-1\right]\frac{m_{j-1}}{\sigma_{n,\,j-1}}\IF^*_{k_{j-1}}(O_{j}|P);
\end{align*}
the first term in the decomposition of (IV) in \eqref{IV_decomposition} is equal to $\sum_{j=q_n}^{n-1}H_{n,\,j+1}$.
Note that $E|H_{nj}| < \infty$, using the fact that $\sigma^2_{nj}$ is bounded away from zero by \ref{assump:Variances} and so is $\hat{\sigma}^2_{nj}$ by \ref{eq:event_En_1} of Lemma \ref{lemma:An_Bn_Cn_Dn_En_prob_to_one},  
and also that $H_{nj}$ is ${\cal O}_{nj}$-measurable. Along with 
\begin{align*}
  E[H_{n,\,j+1}|{\cal O}_{nj}] = \frac{1}{\sqrt{n-q_n}}E\Big[\Big(\frac{\sigma_{nj}}{\hat{\sigma}_{nj}}-1\Big)\frac{m_j}{\sigma_{nj}}E\big[\IF^*_{k_j}(O_{j+1}|P)\big|{\cal O}_{nj}\big]\Big]=0,
\end{align*}
$\{(H_{nj}, {\cal O}_{nj}), j=q_n+1,\ldots,n\}$ is a martingale difference sequence. Moreover, we have that
\begin{align}
  &|H_{n,\,j+1}| \lesssim \sqrt{\frac{\log(n \lor p_n)}{q_n(n-q_n)}} \equiv B_n. \label{eq:Bn}
\end{align}
Then Chebyshev's inequality implies that for $\varepsilon > 0$,
\begin{align*}
  & {\rm P}\bigg(\,\bigg|\sum_{j=q_n}^{n-1}H_{n,\,j+1}\bigg| \ge \varepsilon \bigg) \le \varepsilon^{-2}E\bigg[\bigg(\sum_{j=q_n}^{n-1}H_{n,\,j+1}\bigg)^2\,\bigg] \\
  &= \varepsilon^{-2}\bigg(\sum_{j=q_n}^{n-1}E\big[H_{n,\,j+1}^2\big] + 2\sum_{q_n\le i< j\le n-1}E\Big[H_{n,\,i+1}E\Big[H_{n,\,j+1} \Big| \mathcal{O}_{nj}\Big]\Big]\bigg) \\
  &= \varepsilon^{-2}\sum_{j=q_n}^{n-1}E\big[ H_{n,\,j+1}^2 \big] \le \varepsilon^{-2}(n-q_n)B_n^2 \to 0;
\end{align*}
this result disposes of the first term in the decomposition of (IV) in \eqref{IV_decomposition}.

Observe that the second term in \eqref{IV_decomposition} is a sum of martingale differences because $$E[m_j\IF^*_{k_j}(O_{j+1}|P)|O_{1},\ldots,O_{j}]=0.$$ Therefore it converges in distribution to standard normal by the martingale central limit theorem for triangular arrays \cite[e.g., Theorem 2 in][]{Gaenssler1978}, under the following conditions 
\begin{align*}
  &\frac{1}{n-q_n}\sum_{j=q_n}^{n-1}E\bigg[\frac{[\IF^*_{k_j}(O_{j+1}|P)]^2}{\sigma^2_{nj}} \bigg| O_1,\ldots,O_j \bigg] \lcrarrow{p} 1;\\
  &\frac{1}{n-q_n}\sum_{j=q_n}^{n-1}E\bigg[\frac{[\IF^*_{k_j}(O_{j+1}|P)]^2}{\sigma^2_{nj}}1\bigg(\,\bigg|\frac{\IF^*_{k_j}(O_{j+1}|P)}{\sigma_{nj}}\bigg| > \epsilon_0\sqrt{n-q_n} \bigg) \bigg| O_1,\ldots,O_j \bigg] \lcrarrow{p} 0
\end{align*}
for every $\epsilon_0 > 0$. The first condition follows from the definition of $\sigma^2_{nj}$, which implies that each term in the summation is identically equal to 1. The second condition holds because $\IF^*_k(\cdot|P)$ is uniformly bounded over $k$ in view of \eqref{eq:if_star} and \eqref{eq:reexpressed_if} (giving the expression for $\IF^*_k$)
and \ref{assump:Covariates}--\ref{assump:At-risk prob} and \ref{assump:Variances}--\ref{assump:Conditional_mean_E0}; further,  $\sigma_{nj}^2$
is assumed to be uniformly bounded away from zero by \ref{assump:Variances}.
\end{proof}

\section{Notation} \label{sec:notation}
For convenient reference, below we collect the notation that appears in the main text and will be frequently used in the upcoming proofs.
\begin{itemize}\setlength\itemsep{1em}
  \item $M$ is the martingale part of the single-jump counting process for a censored observation, that is, $dM(s) = 1(X \in ds, \delta=0)-1(X \geq s)d\Lambda(s)$ with respect to the filtration
  \begin{align} \label{eq:filtration_generic}
    \mathcal{F}_{s} \equiv \sigma(\{1(X \leq s', \delta=0), 1(X \geq s'),\, U:\, s' \le s \in \mathcal{T}\}),
  \end{align}
  where $\Lambda$ is the cumulative hazard function corresponding to $G$.
  
  \item Based on a fixed and user-defined coarsening $c$ so that $c(U)$, $U \sim P$, is a finitely supported discrete random variable, $$M(u,ds) = 1(X \in ds, \delta=0, c(U)=c(u))-1(X \geq s, c(U)=c(u))d\Lambda(s)$$ is a martingale with respect to the filtration $\mathcal{F}_{s}$.
  
  \item Let $N_n(u,s)=\sum_{i=1}^n 1(X_i \leq s, \delta_i=0, c(U_i)=c(u))$ and $Y_n(u,s)=\sum_{i=1}^n 1(X_i \geq s, c(U_i)=c(u))$. The conditional cumulative hazard function is estimated by
  $$\hat{\Lambda}_n(\cdot|u) = \int_{-\infty}^{\cdot} \frac{1(Y_n(u,s)>0)}{Y_n(u,s)}N_n(u,ds).$$ With $\Lambda(s)$ estimated by
  $\hat{\Lambda}_n(s|u)$,
  $\hat{M}(u,ds)=1(X \in ds, \delta=0, c(U)=c(u))-1(X \geq s, c(U)=c(u))d\hat{\Lambda}_n(s|u)$.
  
  \item $\bar{M}$ is the martingale part of the counting process for censored observations with the predictor $U$ that is coarsened at $c(U)=c(u)$; namely,
  $\bar{M}(u,ds) = N_n(u,ds) - Y_n(u,s) d\Lambda(s)$ is a local square integrable martingale with respect to the aggregated filtration
  \begin{align} \label{eq:agg_filtration}
    \bar{\mathcal{F}}_{s} \equiv \sigma(\{N_n(\cdot,s'),\, Y_n(\cdot,s'),\, U_i\,:\, i=1,\ldots,n,\, s' \le s \in \mathcal{T}\}).
 \end{align}
 
  
  \item Starting from Section \ref{sec:one_step_estimator} in the main text and Section \ref{sec:proof_lemmas_of_Thm_if_one_step} in this supplementary, the distribution $P$ is characterized by three features, namely $P = (E_0, Q_u, G)$. For any given $k$, let $U \equiv U_k$.
  Based on observations $\{O_1,\ldots,O_j\}$, $\hat{E}_j(u,s)$ is an estimator of $E_0(u,s) \equiv E[\tilde{Y}|U=u, X\ge s]$, and $\mathbb{Q}_{j}$ is the empirical distribution of any given single predictor whose distribution is $Q_u$, for $j=q_n,\ldots,n$. Therefore $\hat{P}_{nj} \equiv (\hat{E}_j, \Qj, \hat{G}_n)$, and $\hat{P}'_j \equiv (\hat{E}_j, \Qj, G)$, where $\hat{G}_{n}(\cdot|u)$ is the Kaplan--Meier estimator of the censoring distribution $G(\cdot)$ conditional on $c(U)=c(u)$.
  Note that $\hat{E}_j(u) = \hat{E}_j(u, \infty)$ to estimate $E[\tilde{Y}|U=u]$, and $\hat{G}_{n}(\cdot|u)$ reduces to a standard Kaplan--Meier estimator $\hat{G}_{n}(\cdot)$ when $c(U)$ is a degenerate random variable.
  
  \item For a function $h$ mapping from a realization of $O$ to $\mathbb{R}^d$, we use $P[h] \equiv P[h(O)] \equiv \int h(o)dP(o)$. The expectation $\mathbb{E}$ applies to a function of $O$ and $O_1, \ldots, O_n$, regarding $O$ as fixed, in contrast to the expectation under $P$ that applies only to $O \sim P$ and not to any estimator based on $O_1, \ldots, O_n$. In addition, $E$ denotes the expectation of both $O$ and $O_1, \ldots, O_n$.
  To simplify the notation, moreover, we sometimes omit the subscript $P$ from the functions evaluated under $P$ unless otherwise stated, for instance, ${\rm Var}_P$ as ${\rm Var}$ and ${\rm Cov}_P$ as ${\rm Cov}$. Moreover, we use ${\rm P}(A)$ to denote the probability of event $A$ concerning the behaviors of the statistics based on $O$, and ${\rm P}(A_n)$ to denote the probability of event $A_n$ concerning those of the statistics derived from $O_1, \ldots, O_n$.
  
\end{itemize}

\section{Canonical gradient of the slope parameter} \label{sec:canonical_max_slope}

The influence function of the sample Pearson correlation coefficient was first reported by \cite{Devlin1975}, who attributed the result to
C. L. Mallows. This influence function corresponds to the canonical gradient of the correlation coefficient in a locally nonparametric model. Here we use similar arguments to find the canonical gradient of the slope parameter $\Gamma$ in this same model. As in Section~\ref*{sec:eff_est_fixed_d} in the main text, here we focus on the case that there is only a single predictor.

Define a path $\{\tilde{P}_{\epsilon}=(1-\epsilon)\tilde{P} + \epsilon \delta_{(u,\tilde{y})}, \epsilon \in [0,1] \}$, where $\delta_{(u,\tilde{y})}$ is the Dirac measure putting unit mass at $(u,\tilde{y})$. Then
\begin{align*}
  \Gamma(\tilde{P}_{\epsilon})= \frac{{\rm Cov}_{\tilde{P}_{\epsilon}}(U,\tilde Y)}{{\rm Var}_{\tilde{P}_{\epsilon}}(U)},
\end{align*}
and
\begin{align*}
  &{\rm Cov}_{\tilde{P}_{\epsilon}}(U,\tilde Y)=\tilde{P}_{\epsilon}[U \tilde Y]-\tilde{P}_{\epsilon}[U]\tilde{P}_{\epsilon}[\tilde Y]\\
  &\;=(1-\epsilon)\tilde{P}[U \tilde Y]+\epsilon[u \tilde y]-(1-\epsilon)^2\tilde{P}[U]\tilde{P}[\tilde Y]-\epsilon(1-\epsilon)\tilde{P}[U]\tilde y
  -\epsilon(1-\epsilon)\tilde{P}[\tilde Y]u-\epsilon^2u \tilde y;\\
  &{\rm Var}_{\tilde{P}_{\epsilon}}(U)=\tilde{P}_{\epsilon}[U^2]-(\tilde{P}_{\epsilon}[U])^2=
  (1-\epsilon)\tilde{P}[U^2]+\epsilon u^2 - (1-\epsilon)^2(\tilde{P}[U])^2\\
  &\hspace{2cm} -2\epsilon(1-\epsilon)u\tilde{P}[U]
   -\epsilon^2u^2,
\end{align*}
where $\tilde y=\delta x/G(x)$.
Let $D(\tilde{P})$ denote the canonical gradient of $\Gamma$ at $\tilde{P}$ in a locally nonparametric model. The evaluation of this function at the chosen $(u,\tilde{y})$ is equal to $\left.\frac{d}{d\epsilon} \Gamma(\tilde{P}_{\epsilon})\right|_{\epsilon=0}$, and so
\begin{align*}
  D(\tilde{P})(u,\tilde{y})= \left\{\frac{(u-\tilde{P}[U])(\tilde y-\tilde{P}[\tilde Y])}{{\rm Var}_{\tilde{P}}(U)}
  -\frac{{\rm Cov}_{\tilde{P}}(U,\tilde Y)}{{\rm Var}^2_{\tilde{P}}(U)}(u-\tilde{P}[U])^2\right\}.
\end{align*}
Straightforward calculations show that, for distributions $\tilde{P}_1$ and $\tilde{P}$ for the random variable $(U,\tilde{Y})$, it holds that
\begin{align*}
  \int D(\tilde{P})(u,\tilde{y})\, d\tilde{P}_1(u,\tilde{y})
  =&\left(\frac{{\rm Cov}_{\tilde{P}_1}(U,\tilde Y) + (\tilde{P}_1[U]-\tilde{P}[U])(\tilde{P}_1[\tilde Y]-\tilde{P}[\tilde{Y}])}{{\rm Var}_{\tilde{P}}(U)} \right.\\
   &\qquad\left.-\frac{{\rm Cov}_{\tilde{P}}(U,\tilde{Y})}{{\rm Var}_{\tilde{P}}^2(U)}
   \left[{\rm Var}_{\tilde{P}_1}(U)+(\tilde{P}_1[U]-\tilde{P}[U])^2\right]\right).
\end{align*}
Recall that $\tilde{P}\equiv P\circ f_G^{-1}$ for the distribution $P$ that generated each of the variates $O_1,\ldots,O_n$. Suppose also that $\tilde{P}_1=P_1\circ f_G^{-1}$ for a distribution $P_1$ of the random variable $O$. The above then shows that
\begin{align*}
  &{\rm Rem}_G(P_1,P)
  \equiv \Gamma(\tilde{P})-\Gamma(\tilde{P}_1)+\int D(\tilde{P})(u,\tilde{y})\, d\tilde{P}_1(u,\tilde{y})\\
  &=  \left\{\frac{{\rm Cov}_{\tilde{P}}(U,\tilde{Y})}{{\rm Var}_{\tilde{P}}(U)}-\frac{{\rm Cov}_{\tilde{P}_1}(U,\tilde Y)}{{\rm Var}_{\tilde{P}_1}(U)}
  +\frac{(\tilde{P}_1[U]-\tilde{P}[U])(\tilde{P}_1[\tilde Y]-\tilde{P}[\tilde{Y}])}{{\rm Var}_{\tilde{P}}(U)}\right.\\
  &\left.\qquad +\frac{{\rm Cov}_{\tilde{P}_1}(U,\tilde Y)}{{\rm Var}_{\tilde{P}}(U)}
  -\frac{{\rm Cov}_{\tilde{P}}(U,\tilde{Y})}{{\rm Var}_{\tilde{P}}^2(U)}[{\rm Var}_{\tilde{P}_1}(U)+(\tilde{P}_1[U]-\tilde{P}[U])^2]\right\} \\ 
  &= \left\{\left[\frac{{\rm Cov}_{\tilde{P}}(U,\tilde{Y})}{{\rm Var}_{\tilde{P}}^2(U)}-\frac{{\rm Cov}_{\tilde{P}_1}(U,\tilde Y)}{{\rm Var}_{\tilde{P}_1}(U){\rm Var}_{\tilde{P}}(U)}\right]
  ({\rm Var}_{\tilde{P}}(U)-{\rm Var}_{\tilde{P}_1}(U))
   \right.\\
  &\quad \left.\quad-\frac{{\rm Cov}_{\tilde{P}}(U,\tilde{Y})}{{\rm Var}_{\tilde{P}}^2(U)}(\tilde{P}_1[U]-\tilde{P}[U])^2 +\frac{1}{{\rm Var}_{\tilde{P}}(U)}(\tilde{P}_1[U]-\tilde{P}[U])(\tilde{P}_1[\tilde Y]-\tilde{P}[\tilde{Y}]) \right\}.
\end{align*}
Note that the remainder term in the linear expansion \eqref{eq:if_G_fixed_derivation} is equal to $-{\rm Rem}_G(\Pn,P)$. The following lemma shows that this
term is asymptotically negligible.
\begin{lemma} \label{lemma:remainder_G}
Under \ref{assump:Covariates} (so that ${\rm Var}_{\tilde{P}}(U)>0$), we have
${\rm Rem}_G(\Pn, P) = o_p(n^{-1/2})$.
\end{lemma}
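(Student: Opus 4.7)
The plan is to show that each of the three terms in the expression for $\mathrm{Rem}_G(\mathbb{P}_n,P)$ is $O_p(n^{-1})$ by bounding each factor using the central limit theorem for bounded-variance functionals of $\mathbb{P}_n$, together with the fact that the denominators $\mathrm{Var}_{\tilde{P}}(U)$ and $\mathrm{Var}_{\tilde{\mathbb{P}}_n}(U)$ stay bounded away from zero with probability tending to one.

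First, I would verify that the relevant moments are finite. By \ref{assump:Covariates}, $U$ is bounded, so all moments of $U$ are finite. Under the standing assumption $G(\tau)>0$ and $\mathcal{T}=(-\infty,\tau]$, the variable $\tilde Y=\delta X/G(X)$ is bounded (since $X\le\tau$ on the event $\delta=1$ and $G(X)\ge G(\tau)>0$ there), so $\tilde Y$ also has finite second moment. It follows from the ordinary CLT that
\[
\tilde{\mathbb{P}}_n[U]-\tilde{P}[U]=O_p(n^{-1/2}),\quad \tilde{\mathbb{P}}_n[\tilde Y]-\tilde{P}[\tilde Y]=O_p(n^{-1/2}),
\]
and analogously for $\tilde{\mathbb{P}}_n[U^2]-\tilde{P}[U^2]$ and $\tilde{\mathbb{P}}_n[U\tilde Y]-\tilde{P}[U\tilde Y]$. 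Combining these with the continuous-mapping identities for variance and covariance yields $\mathrm{Var}_{\tilde{\mathbb{P}}_n}(U)-\mathrm{Var}_{\tilde{P}}(U)=O_p(n^{-1/2})$ and $\mathrm{Cov}_{\tilde{\mathbb{P}}_n}(U,\tilde Y)-\mathrm{Cov}_{\tilde{P}}(U,\tilde Y)=O_p(n^{-1/2})$. Since $\mathrm{Var}_{\tilde{P}}(U)>0$ by non-degeneracy in \ref{assump:Covariates}, the event $\{\mathrm{Var}_{\tilde{\mathbb{P}}_n}(U)\ge \mathrm{Var}_{\tilde{P}}(U)/2\}$ occurs with probability tending to one, so I can work on this event without loss of generality.

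Next I would bound the three summands in the displayed expression for $\mathrm{Rem}_G$ (with $\tilde{P}_1=\tilde{\mathbb{P}}_n$) one at a time. The third summand,
\[
\frac{1}{\mathrm{Var}_{\tilde{P}}(U)}(\tilde{\mathbb{P}}_n[U]-\tilde{P}[U])(\tilde{\mathbb{P}}_n[\tilde Y]-\tilde{P}[\tilde Y]),
\]
is immediately $O_p(n^{-1})$ as the product of two $O_p(n^{-1/2})$ quantities divided by a positive constant. The second summand is a constant times $(\tilde{\mathbb{P}}_n[U]-\tilde{P}[U])^2=O_p(n^{-1})$. For the first summand, I would write
\[
\frac{\mathrm{Cov}_{\tilde{P}}(U,\tilde Y)}{\mathrm{Var}_{\tilde{P}}^2(U)}-\frac{\mathrm{Cov}_{\tilde{\mathbb{P}}_n}(U,\tilde Y)}{\mathrm{Var}_{\tilde{\mathbb{P}}_n}(U)\,\mathrm{Var}_{\tilde{P}}(U)}=O_p(n^{-1/2})
\]
using the delta-method-style expansion afforded by the bounds above and the uniform lower bound on the denominators, and then multiply by the factor $\mathrm{Var}_{\tilde{P}}(U)-\mathrm{Var}_{\tilde{\mathbb{P}}_n}(U)=O_p(n^{-1/2})$ to get another $O_p(n^{-1})$.

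Summing the three $O_p(n^{-1})$ contributions gives $\mathrm{Rem}_G(\mathbb{P}_n,P)=O_p(n^{-1})=o_p(n^{-1/2})$, as claimed. There is no real obstacle here; the only delicate point is handling the $1/\mathrm{Var}_{\tilde{\mathbb{P}}_n}(U)$ factor, which is resolved by restricting to the high-probability event on which this denominator is bounded below by $\mathrm{Var}_{\tilde{P}}(U)/2$.
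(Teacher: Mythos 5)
Your proof is correct and takes essentially the same approach as the paper: bound each of the three summands in $\mathrm{Rem}_G(\mathbb{P}_n,P)$ via the CLT for bounded-variance functionals and a uniform lower bound on the variance denominators, then conclude via the triangle inequality. The only cosmetic difference is that you establish the slightly sharper $O_p(n^{-1})$ rate by treating both factors in each term as $O_p(n^{-1/2})$, whereas the paper is content to pair one $O_p(1)$ factor (after $\sqrt{n}$-scaling) with one $o_p(1)$ factor per term, which of course gives the same conclusion.
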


The proof uses the CLT, the weak law of large numbers and the triangle inequality:
\begin{align*}
  &\sqrt{n}|{\rm Rem}_G(\Pn, P)|\\
  & \leq [{\rm Var}_{\tilde{\mathbb{P}}_n}(U) \wedge {\rm Var}_{\tilde P}(U)]^{-2}\left\{\left|\sqrt{n}({\rm Cov}_{\tilde{\mathbb{P}}_n}(U, \tilde Y)-{\rm Cov}_{\tilde P}(U,\tilde{Y}))\right|
  |{\rm Var}_{\tilde{\mathbb{P}}_n}(U)-{\rm Var}_{\tilde P}(U)| \right.\\
  &\hspace{0.12cm} \left.+\left|{\rm Cov}_{\tilde P}(U,\tilde{Y})\right|\sqrt{n}(\tilde{\mathbb{P}}_n[U]-\tilde{P}[U])^2+{\rm Var}_{\tilde P}(U)\left|\sqrt{n}(\tilde{\mathbb{P}}_n[U]-\tilde{P}[U])(\tilde{\mathbb{P}}_n[\tilde Y]-\tilde{P}[\tilde{Y}])\right|\right\}.
\end{align*}

\section{Proof of lemmas for Theorem \ref{Thm:if_slope}}
\label{sec:proof_lemmas_of_Thm_if_slope}

\begin{lemma} \label{lemma:remainder_estG}
  If \ref{assump:Covariates} and \ref{assump:Survival function} hold, then ${\rm Rem}_{\hat{G}_n}(\Pn, P) = o_p(n^{-1/2}).$
\end{lemma}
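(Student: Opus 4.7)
The plan is to carry out the derivation of ${\rm Rem}_{\hat{G}_n}(\Pn,P)$ in direct analogy with the one for ${\rm Rem}_G(\Pn,P)$ that precedes Lemma \ref{lemma:remainder_G}, treating the random $\hat G_n$ as a fixed survival function throughout. This yields an explicit decomposition of ${\rm Rem}_{\hat{G}_n}(\Pn,P)$ as a sum of three summands, each a product of two empirical fluctuations (of the forms $\Pn[U]-P[U]$, ${\rm Var}_{\Pn}(U)-{\rm Var}_P(U)$, ${\rm Cov}_{\Pn}(U,Y)-{\rm Cov}_P(U,T)$, and $\Pn[Y]-P[T]$), multiplied by coefficients of the form $[{\rm Var}_{\Pn}(U)\wedge {\rm Var}_P(U)]^{-2}$. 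Multiplying through by $\sqrt n$, the goal becomes to exhibit each fluctuation as $O_p(n^{-1/2})$ and each coefficient as $O_p(1)$.

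The purely $U$-dependent factors are handled by the central limit theorem and \ref{assump:Covariates} (bounded, non-degenerate $U$, so that ${\rm Var}_P(U)>0$ and ${\rm Var}_{\Pn}(U)\to{\rm Var}_P(U)$ in probability). The nontrivial step is the control of $Y$-dependent factors, for which I would use the pointwise identity
\begin{equation*}
Y-\tilde Y \;=\; \frac{\delta X\,[G(X)-\hat G_n(X)]}{G(X)\,\hat G_n(X)},
\end{equation*}
together with the classical uniform rate $\sup_{x\le\tau}|\hat G_n(x)-G(x)|=O_p(n^{-1/2})$ for the Kaplan--Meier estimator on $\mathcal{T}$. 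Under \ref{assump:Survival function}, $G(\tau)>0$, so on the high-probability event $\{\hat G_n(\tau)\ge G(\tau)/2\}$ one can bound $|Y-\tilde Y|\le (2\tau/G(\tau)^2)\sup_{x\le\tau}|\hat G_n(x)-G(x)|=O_p(n^{-1/2})$ uniformly over the $n$ sample points. Combined with the boundedness of $U$, this gives $|\Pn[Y]-\Pn[\tilde Y]|=O_p(n^{-1/2})$ and $|\Pn[UY]-\Pn[U\tilde Y]|=O_p(n^{-1/2})$. Since $P[\tilde Y]=P[T]$ and $P[U\tilde Y]=P[UT]$, the CLT applied to the i.i.d.\ variables $\tilde Y_i$ and $U_i\tilde Y_i$ (whose second moments are finite by \ref{assump:Covariates} and $G(\tau)>0$) then produces $\Pn[Y]-P[T]=O_p(n^{-1/2})$ and, after a short algebraic manipulation, ${\rm Cov}_{\Pn}(U,Y)-{\rm Cov}_P(U,T)=O_p(n^{-1/2})$.

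Inserting these rates into the decomposition, each summand becomes a product of two $O_p(n^{-1/2})$ factors times an $O_p(1)$ coefficient, hence is $O_p(n^{-1})=o_p(n^{-1/2})$; an application of Slutsky's theorem then delivers the conclusion. The main obstacle, though not severe, is ensuring that the uniform-in-$x$ approximation of $\hat G_n$ by $G$ propagates to the aggregated quantities $\Pn[Y-\tilde Y]$ and $\Pn[U(Y-\tilde Y)]$ on the same high-probability event; here the compactness of $\mathcal{T}$ combined with the strict positivity of $G(\tau)$ is precisely what keeps the denominator $\hat G_n(X)$ bounded away from zero at every observed $X$, so that the pointwise bound on $|Y-\tilde Y|$ transfers cleanly to the empirical averages.
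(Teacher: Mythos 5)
Your proof is correct and follows the same route as the paper: ${\rm Rem}_{\hat{G}_n}(\Pn,P)$ is decomposed into the same three products of fluctuations (with the $\hat G_n$-induced discrepancy split off by a triangle inequality), and each $Y$-dependent difference is controlled via $G(\tau)>0$ together with the boundedness of $U$ on $\mathcal{T}$. The only minor difference is that you invoke the classical $O_p(n^{-1/2})$ rate $\sup_{x\le\tau}|\hat G_n(x)-G(x)|=O_p(n^{-1/2})$ so that every $Y$-dependent factor is $O_p(n^{-1/2})$ and each summand becomes $O_p(n^{-1})$, whereas the paper gets by with mere uniform consistency $\sup_{x\le\tau}|\hat G_n(x)-G(x)|=o_p(1)$, because each summand already carries an independent $O_p(n^{-1/2})$ factor from the CLT applied to $U$-moments, and pairing that with an $o_p(1)$ factor already yields $o_p(n^{-1/2})$.
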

\begin{proof}
  Similar to Lemma \ref{lemma:remainder_G}, by the triangle inequality and the equivalence between $\Pn$ and $\tilde{\mathbb{P}}_n$ as they operate on $(U,\tilde Y)$ marginally or jointly,
  \begin{align*}
    &\sqrt{n}|{\rm Rem}_{\hat{G}_n}(\Pn, P)|
    \leq [{\rm Var}_{\Pn}(U) \wedge {\rm Var}_{\tilde P}(U)]^{-2}\Big\{\big|{\rm Cov}_{\Pn}(U, Y)-{\rm Cov}_{\Pn}(U,\tilde{Y})\big|\\
    &\quad + \big|{\rm Cov}_{\Pn}(U, \tilde Y)-{\rm Cov}_{\tilde P}(U,\tilde{Y})\big|\Big\}
    \big|\sqrt{n}({\rm Var}_{\Pn}(U)-{\rm Var}_{\tilde P}(U))\big|\\
    &\quad +\big|{\rm Cov}_{\tilde P}(U,\tilde{Y})\big|\sqrt{n}\big\{(\Pn-\tilde{P})[U]\big\}^2\\
    &\quad +{\rm Var}_{\tilde P}(U)\big|\sqrt{n}(\Pn-\tilde{P})[U]\big|
    \Big\{\big|\Pn[Y]-\Pn[\tilde{Y}]\big|+\big|(\Pn-\tilde{P})[\tilde{Y}]\big|\Big\}.
  \end{align*}
Each term above is now shown to be $o_p(1)$. For the  first term, note that ${\rm Var}_{\tilde{P}}(U)>0$, ${\rm Var}_{\Pn}(U) \crarrow{p} {\rm Var}_{\tilde{P}}(U) > 0$,
${\rm Cov}_{\Pn}(U, Y)-{\rm Cov}_{\Pn}(U,\tilde{Y}) = o_p(1)$ by \ref{assump:Survival function} and the uniform consistency of $\hat{G}_n$ to $G$,
the convergence of $|{\rm Cov}_{\Pn}(U, \tilde{Y})-{\rm Cov}_{\tilde{P}}(U,\tilde{Y})|$ to zero in probability by
the weak law of large numbers, and $\sqrt{n}({\rm Var}_{\Pn}(U)-{\rm Var}_{\tilde P}(U))=O_p(1)$ by the CLT.
The second term is $o_p(1)$ because
$\sqrt{n}(\Pn-\tilde{P})[U]=O_p(1)$ and $(\Pn-\tilde{P})[U] \crarrow{p} 0$. Similarly, the last term is $o_p(1)$, which is a consequence of
$\sqrt{n}(\Pn-\tilde{P})[U]=O_p(1)$,
$\Pn[\tilde Y] \crarrow{p} \tilde{P}[\tilde{Y}]$,
and $\Pn[Y]-\Pn[\tilde{Y}] = o_p(1)$ that follows by \ref{assump:Survival function} and the uniform consistency of $\hat{G}_n$ as an estimator of $G$
\end{proof}

\begin{lemma} \label{lemma:estG_by_G_for_if_ipw}
Suppose \ref{assump:Covariates} and \ref{assump:Survival function} hold, so $U$ has a finite fourth moment and  ${\rm Var}(U)>0$. Then 
\begin{align*}
  \Pn\big[\IF^{ipw}_{\hat{G}_n}(O|P)\big] = \Pn\big[\IF^{ipw}_G(O|P)\big] + o_p(n^{-1/2}).
\end{align*}
\end{lemma}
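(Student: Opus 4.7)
My plan rests on the observation that, with $\hat G_n$ held fixed as a function during the analysis, $\IF^{ipw}_{\hat G_n}(\,\cdot\,|P)$ should be read as the canonical gradient of $\Psi_{\hat G_n}(\cdot)$ at $P$ in a locally nonparametric model; consequently $P[\IF^{ipw}_{\hat G_n}(\,\cdot\,|P)]=0$, and likewise $P[\IF^{ipw}_G(\,\cdot\,|P)]=0$. It thus suffices to show
\[
(\Pn-P)\bigl[\IF^{ipw}_{\hat G_n} - \IF^{ipw}_G\bigr] = o_p(n^{-1/2}),
\]
which is the conclusion of Lemma~19.24 in \cite{Vaart1998}, once I verify its two hypotheses: that $\IF^{ipw}_{\hat G_n}(\,\cdot\,|P)$ lies in a fixed $P$-Donsker class with probability tending to one, and that it is $L_2(P)$-consistent to $\IF^{ipw}_G(\,\cdot\,|P)$.

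For the Donsker hypothesis, I would index candidate influence functions by
\[
\mathcal G = \bigl\{g:\mathcal T \to [G(\tau)/2,\,1],\ g \text{ monotone non-increasing}\bigr\},
\]
using that monotone uniformly bounded functions on $\mathcal T$ have polynomial bracketing entropy and therefore form a $P$-Donsker class (Theorem~2.7.5 of van der Vaart and Wellner, 1996). The reciprocal map $g\mapsto 1/g$ is Lipschitz on $[G(\tau)/2,1]$, multiplication by the bounded envelope $(u-P[U])\delta x$ (bounded under \ref{assump:Covariates} and the finite follow-up $\tau$) preserves Donsker-ness, and the scalar functionals $g\mapsto P[\delta X/g(X)]$ and $g\mapsto {\rm Cov}_P(U,\delta X/g(X))$ contribute only finite-dimensional pieces; hence $\{\IF^{ipw}_{g}(\,\cdot\,|P):g\in\mathcal G\}$ is $P$-Donsker. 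Uniform consistency of the Kaplan--Meier estimator, available under \ref{assump:Survival function} together with $G(\tau)>0$, places $\hat G_n\in\mathcal G$ with probability tending to one.

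For the $L_2(P)$-consistency, the same uniform consistency gives $\|\hat G_n^{-1}-G^{-1}\|_{\infty,\mathcal T}\to 0$ in probability, since both functions are eventually bounded below by $G(\tau)/2$ on $\mathcal T$. Combined with the uniform boundedness of $(u-P[U])\delta x$ under \ref{assump:Covariates}, bounded convergence yields $\|\IF^{ipw}_{\hat G_n}-\IF^{ipw}_G\|_{L_2(P)}\to 0$ in probability, and Lemma~19.24 then delivers the conclusion. The main obstacle I anticipate is the Donsker verification after the reciprocal transformation: this turns on keeping the envelope uniformly bounded away from zero, which \ref{assump:Survival function} and the uniform consistency of $\hat G_n$ supply eventually in probability, after which the usual preservation properties (Lipschitz composition, product with a bounded function, and addition of finite-dimensional pieces) reduce everything to routine bookkeeping.
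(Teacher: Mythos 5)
Your proposal is correct and follows essentially the same route as the paper's own proof: both reduce the claim to $(\Pn-P)\bigl[\IF^{ipw}_{\hat G_n}-\IF^{ipw}_G\bigr]=o_p(n^{-1/2})$ using the mean-zero property at $P$, then invoke Lemma~19.24 of \cite{Vaart1998} by placing $\hat G_n$ in a uniformly bounded monotone (hence $P$-Donsker) class and checking $L_2(P)$-consistency via uniform consistency of the Kaplan--Meier estimator. The paper cites Example~19.11 of \cite{Vaart1998} for the Donsker property and appeals to ``Donsker preservation properties'' where you spell out the Lipschitz-composition and bounded-product steps, but the substance of the argument is the same.
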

\begin{proof}
Following \ref{assump:Survival function}, fix an $\tilde{\epsilon}$ such that $0 < \tilde{\epsilon} < G(\tau)$.
Let $\mathcal{G}$ be a collection of monotone nonincreasing c\`{a}dl\`{a}g functions $\tilde{G}:\mathcal{T} \to [0,1]$ such that $\tilde{G}(\tau) > \tilde{\epsilon}$, which is seen to be P-Donsker \cite[Example 19.11 in][]{Vaart1998}. The influence function $\psi(\tilde{G})=\IF^{ipw}_{\tilde{G}}(\cdot|P)$, and it is continuous of $\tilde{G}$ because $\tilde{G}(\tau) > \tilde{\epsilon} > 0$. We further have that $\|\psi(\tilde{G})\|_{P,2} < \infty$ because $X$ is bounded by $\tau$;
${\rm Var}(U)>0$ and the finite fourth moment condition of $U$ given in \ref{assump:Covariates} and $\tilde{G}(\tau)>0$, implying that the class $\psi(\mathcal{G})$ is also P-Donsker by the Donsker preservation properties \cite[Section 9.4 in][]{Kosorok2008}.
Since $\psi(G) \in \psi(\mathcal{G})$, $\psi(\hat{G}_n) \in \psi(\mathcal{G})$ and the uniform consistency of Kaplan--Meier estimator implies that
\[
  \int (\psi(\hat{G}_n)-\psi(G))^2 dP \lcrarrow{p} 0,
\]
it follows that $\sqrt{n}(\Pn-P)(\psi(\hat{G}_n)-\psi(G)) \crarrow{p} 0$ by Lemma 19.24 of \cite{Vaart1998}.
Equivalently we have that
\begin{align*}
  \Pn\big[\IF^{ipw}_{\hat{G}_n}(O|P)\big] = \Pn\big[\IF^{ipw}_G(O|P)\big] + o_p(n^{-1/2})
\end{align*}
because $P[\psi(G)]=P\big[\IF^{ipw}_G(O|P)\big]=0$ and $P[\psi(\hat{G}_n)]=P\big[\IF^{ipw}_{\hat{G}_n}(O|P)\big]=0$, the latter implied by the expectation with respect to $P$ not operating on $\hat{G}_n$.
\end{proof}

Standard results concerning the Kaplan--Meier estimator give
\begin{lemma} \label{lemma:if_estG}
Given \ref{assump:At-risk prob} and for $ t\in \mathcal{T}$, the influence function of $\hat{G}_n(t)$ is 
\begin{align*}
  &\IF_{G}(O|t) = -G(t)\int_{-\infty}^t\left[\frac{1(X \in ds, \delta=0)}{P1(X \geq s)}-\frac{1(X \geq s)d\Lambda(s)}{{P1(X \geq s)}}\right]\\
  &\quad = -G(t)\int_{-\infty}^{t} \frac{dM(s)}{P1(X \geq s)}.
\end{align*}
\end{lemma}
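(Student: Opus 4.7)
\textbf{Proof proposal for Lemma \ref{lemma:if_estG}.}
The plan is to derive the influence function by combining (i) the Duhamel (Volterra) identity relating $\hat{G}_n-G$ to the Nelson--Aalen error $\hat{\Lambda}_n-\Lambda$, (ii) the martingale representation of the censoring counting process $N_n(s)=\sum_{i=1}^n 1(X_i\le s,\delta_i=0)$, and (iii) a uniform law-of-large-numbers replacement of the random at-risk process $Y_n(s)=\sum_{i=1}^n 1(X_i\ge s)$ by its deterministic limit $n\,P1(X\ge s)$.

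First I would invoke the product-integral / Duhamel identity for the Kaplan--Meier estimator,
\[
\frac{\hat{G}_n(t)-G(t)}{G(t)}\;=\;-\int_{(-\infty,t]}\frac{\hat{G}_n(s-)}{G(s)}\,d(\hat{\Lambda}_n-\Lambda)(s).
\]
By \ref{assump:Survival function} together with the uniform consistency of $\hat{G}_n$ on $\mathcal{T}$ (a consequence of \ref{assump:At-risk prob}), the ratio $\hat{G}_n(s-)/G(s)$ converges to $1$ uniformly in probability, so this identity linearizes to
\[
\hat{G}_n(t)-G(t)\;=\;-G(t)\int_{(-\infty,t]} d(\hat{\Lambda}_n-\Lambda)(s)\;+\;o_p(n^{-1/2}).
\]
Next, writing $dN_n(s)=Y_n(s)\,d\Lambda(s)+dM_n(s)$ with $M_n=\sum_{i=1}^n M_i$, the Nelson--Aalen error decomposes as
\[
d(\hat{\Lambda}_n-\Lambda)(s)\;=\;\frac{1(Y_n(s)>0)}{Y_n(s)}\,dM_n(s)\;-\;1(Y_n(s)=0)\,d\Lambda(s).
\]
Under \ref{assump:At-risk prob} the event $\{\inf_{s\in\mathcal{T}}Y_n(s)>0\}$ has probability tending to one, so the second term is asymptotically negligible; replacing $Y_n(s)$ by its probability limit $n\,P1(X\ge s)$ then yields
\[
\hat{G}_n(t)-G(t)\;=\;-\frac{G(t)}{n}\sum_{i=1}^n\int_{(-\infty,t]}\frac{dM_i(s)}{P1(X\ge s)}\;+\;o_p(n^{-1/2}),
\]
which identifies the influence function as the claimed $-G(t)\int_{-\infty}^t dM(s)/P1(X\ge s)$.

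The main technical step is the denominator replacement $Y_n(s)^{-1}\mapsto\{n\,P1(X\ge s)\}^{-1}$ uniformly in $s\in\mathcal{T}$; this is the only place care is required. Writing
\[
\frac{1}{Y_n(s)}-\frac{1}{n\,P1(X\ge s)}\;=\;-\frac{Y_n(s)/n-P1(X\ge s)}{Y_n(s)\cdot P1(X\ge s)},
\]
one bounds the resulting stochastic integral against $dM_n$ via its predictable variation process, which is $O_p(n)$; together with the uniform lower bound on $P1(X\ge s)$ from \ref{assump:At-risk prob} and the uniform $n^{-1/2}$-rate of $Y_n/n$ to $P1(X\ge\cdot)$, the contribution is $o_p(n^{-1/2})$ after the $n^{-1}$ normalization. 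This is the standard Gill / Fleming--Harrington linearization argument for the Kaplan--Meier process, and no additional structure beyond \ref{assump:Survival function}--\ref{assump:At-risk prob} is required.
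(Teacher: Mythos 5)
Your proposal is correct and is precisely the standard Gill/Fleming--Harrington linearization that the paper invokes without elaboration: the paper's ``proof'' of this lemma is the single sentence ``Standard results concerning the Kaplan--Meier estimator give,'' so you are supplying the derivation behind that citation rather than taking a different route. The chain Duhamel identity $\to$ martingale decomposition of the Nelson--Aalen error $\to$ uniform replacement of $Y_n(s)$ by $n\,P1(X\ge s)$ is exactly the textbook argument, and your handling of the two remainder terms (the $1(Y_n(s)=0)$ contribution and the $Y_n^{-1}-\{n\,P1(X\ge\cdot)\}^{-1}$ gap, controlled via predictable variation and the lower bound on $P1(X\ge s)$ from \ref{assump:At-risk prob}) is sound. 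One small bookkeeping point: the lemma as stated only cites \ref{assump:At-risk prob}, but you (correctly) also lean on \ref{assump:Survival function} for $G(\tau)>0$ and continuity; in the surrounding theorem all of \ref{assump:Covariates}--\ref{assump:At-risk prob} are in force, so this is consistent with the paper's intent even if the lemma header is written tersely.
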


\begin{lemma}  \label{lemma:projection_of_if_ipw}
Given \ref{assump:Covariates}--\ref{assump:At-risk prob},
the function $\IF^{nu}_G(O|P)$ is the projection of $\IF^{ipw}_G(O|P)$ onto the nuisance tangent space $\mathbf{T}^{nu}(G)$, where $\IF^{ipw}_G(O|P)$, $\IF^{nu}_G(O|P)$ and $\mathbf{T}^{nu}(G)$ are defined in the main text.
\end{lemma}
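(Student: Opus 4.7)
The plan is to verify directly the two defining properties of an orthogonal projection in the Hilbert space $L_0^2(P)$: (i) $\IF^{nu}_G(\cdot|P) \in \mathbf{T}^{nu}(G)$, and (ii) the residual $\IF^{ipw}_G(\cdot|P) - \IF^{nu}_G(\cdot|P)$ is orthogonal to every element of $\mathbf{T}^{nu}(G)$. For (i), because the outer $P$-expectation in the definition of $\IF^{nu}_G$ acts only on $U$ while $dM(s)$ depends on the input $o=(x,\delta)$ deterministically, I can exchange the expectation and the $s$-integral to write $\IF^{nu}_G(o|P) = \int_{\mathcal{T}} H^\star(s)\,dM(s)$ with
\[
H^\star(s) \;=\; \frac{\mathrm{Cov}_P\!\left(U,\,E[T\mid U, X\ge s]\right)}{\mathrm{Var}(U)\,G(s)},
\]
a deterministic function of $s$. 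Under \ref{assump:Covariates}--\ref{assump:At-risk prob}, $U$ has bounded support, $E[T\mid U, X\ge s]$ is bounded in $(u,s)$, and $G(\tau)>0$, so $H^\star$ is uniformly bounded and $\int H^{\star 2}(s)\,P(X\ge s)\,d\Lambda(s)<\infty$, giving $\IF^{nu}_G \in \mathbf{T}^{nu}(G)$.

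For (ii), fix an arbitrary $g=\int_{\mathcal{T}} H(s)\,dM(s)\in\mathbf{T}^{nu}(G)$ with $\int H^2 P(X\ge s)\,d\Lambda(s)<\infty$, and aim to show $E[\IF^{ipw}_G\cdot g]=E[\IF^{nu}_G\cdot g]$. Write $\IF^{ipw}_G = A(U)\tilde Y + B(U)$, where $A(U)=(U-P[U])/\mathrm{Var}(U)$ and $B(U)$ collects the $\sigma(U)$-measurable pieces. Since the filtration $\mathcal{F}_s$ contains $U$ at $s=-\infty$, $M$ is a martingale starting from $\sigma(U)$, so $E[g\mid U]=0$; hence $E[B(U) g]=E[B(U)\,E[g\mid U]]=0$ and the verification collapses to showing $E[A(U)\tilde Y\, g] = E[\IF^{nu}_G\cdot g]$. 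By the martingale isometry applied to the two deterministic integrands $H^\star$ and $H$, the right-hand side equals $\int_\mathcal{T} H^\star(s)H(s)\,P(X\ge s)\,d\Lambda(s)$.

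The substantive computation is thus to evaluate the left-hand side $E[A(U)\tilde Y\,g]$ and match it with the isometry form. Expanding $g = H(X)(1-\delta) - \int_0^X H(s)\,d\Lambda(s)$, the jump piece $\tilde Y\cdot H(X)(1-\delta)$ vanishes because $\tilde Y(1-\delta)\equiv 0$. For the compensator piece, I would condition on $(T,U)$ and invoke the IPCW identity $E[\delta/G(T)\mid T,U]=P(C\ge T\mid T,U)/G(T)=1$ coming from $C\perp(T,U)$; this yields $E[\tilde Y\int_0^X H\,d\Lambda\mid T,U] = T\int_0^T H(s)\,d\Lambda(s)$, and a Fubini exchange then gives
\[
E[A(U)\tilde Y\, g] \;=\; -\int_\mathcal{T} H(s)\,E\!\left[A(U)\,T\,1(T\ge s)\right] d\Lambda(s).
\]
Finally, I would use $E[T\,1(T\ge s)\mid U] = E[T\mid U, X\ge s]\,P(X\ge s\mid U)/G(s)$ (again from independent censoring) and the identity $E[f(U)\,1(X\ge s)] = E[f(U)\,P(X\ge s\mid U)]$ to recognize the right-hand side as $\int_\mathcal{T} H^\star(s) H(s) P(X\ge s)\,d\Lambda(s)$, thereby completing the orthogonality check.

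The hardest part will be the last Fubini-style rearrangement: turning the conditional-on-$T$ expression $E[A(U)\,T\,1(T\ge s)]$, which naturally carries a $P(T\ge s\mid U)$-weighting inside the $U$-average, into the unweighted covariance of $E[T\mid U, X\ge s]$ that defines $H^\star$. This reconciliation relies entirely on the independent censoring structure $P(X\ge s\mid U)=G(s)P(T\ge s\mid U)$, which is the only place in the argument where the independence of $C$ from $(T,U)$ is used essentially. Once that algebraic step is in place, the orthogonality identity closes, and together with (i) this characterizes $\IF^{nu}_G$ as the projection of $\IF^{ipw}_G$ onto $\mathbf{T}^{nu}(G)$.
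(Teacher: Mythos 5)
Your high-level strategy — verifying directly the two defining properties of an orthogonal projection via the martingale isometry, rather than applying the explicit projection-onto-$\mathbf{T}^{nu}(G)$ formula as the paper does — is a legitimate alternative route. Part (i), the reduction $E[\IF^{ipw}_G\,g]=E[A(U)\tilde Y\,g]$ (since $E[g\mid U]=0$), and the intermediate identity $E[A(U)\tilde Y\,g]=-\int_{\mathcal T}H(s)\,E[A(U)\,T\,1(T\ge s)]\,d\Lambda(s)$ are all carried out correctly.

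However, the step you flag as ``the hardest part'' is precisely where the argument does not close, and the reconciliation you sketch fails. Matching integrands against the isometry form $\int H^\star(s)H(s)P(X\ge s)\,d\Lambda(s)$ with $H^\star(s)=\mathrm{Cov}_P(U,E[T\mid U,X\ge s])/[\mathrm{Var}(U)\,G(s)]$ would require, for a.e.\ $s$,
\[
-E\!\left[A(U)\,E[T\mid U,T\ge s]\,P(T\ge s\mid U)\right]\;=\;E\!\left[A(U)\,E[T\mid U,T\ge s]\right]\,P(T\ge s),
\]
after simplifying both sides with $E[T\mid U,X\ge s]=E[T\mid U,T\ge s]$ and $P(X\ge s)=G(s)P(T\ge s)$. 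The identity $E[f(U)1(X\ge s)]=E[f(U)P(X\ge s\mid U)]$ that you propose to invoke is simply the tower property: it leaves the $U$-dependent weight $P(T\ge s\mid U)$ inside the $U$-average, and independent censoring, while relating $P(X\ge s\mid\cdot)$ to $P(T\ge s\mid\cdot)$, provides no mechanism for replacing that weight by the constant $P(T\ge s)$. There is also an overall sign mismatch, since both sides are of the same sign when $U$ and $T$ are positively associated. This is not a cosmetic gap: the conditional-versus-marginal $U$-weighting is exactly the accounting the lemma must get right, and it is the content that the paper's proof handles through the projection formula $\Pi(H'\mid\mathbf{T}^{nu}(G))=\int\{E[H'\mid X=s,\delta=0]-E[H'\mid X\ge s]\}\,dM(s)$ rather than through the isometry. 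As written, your orthogonality verification does not complete the proof.
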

\begin{proof}
 The projection onto $\mathbf{T}^{nu}(G)$ is given by 
\[
  \Pi(H'(O)|\mathbf{T}^{nu}(G))=\int_{\mathcal{T}} \left\{P[H'(O)|X=s]- P[H'(O)|X \geq s]\right\}dM(s),
\]
for any bounded measurable function $H': \mathcal{T} \times \{0,1\} \times \mathbb{R} \rightarrow \mathbb{R}$.
The projection of $\IF^{ipw}_G(O|P)$ onto $\mathbf{T}^{nu}(G)$ is therefore
\begin{equation} \label{eq:projection_if_ipw}
\begin{split}
  &\Pi(\IF^{ipw}_G(O|P)|\mathbf{T}^{nu}(G))
  = \int_{\mathcal{T}}  \left\{P[\IF^{ipw}_G(O|P)|X=s]-P[\IF^{ipw}_G(O|P)|X \geq s]\right\}dM(s)\\
  & \quad = \int_{\mathcal{T}} \bigg\{P\bigg[\frac{(U-P[U])\tilde{Y}}{{\rm Var}(U)}\bigg|X=s\bigg]-P\bigg[\frac{(U-P[U])\tilde{Y}}{{\rm Var}(U)}\bigg|X \geq s\bigg]\bigg\}dM(s).
\end{split}
\end{equation}
The first term on the right-hand-side of \eqref{eq:projection_if_ipw} is
\[
  \frac{1}{{\rm Var}(U)} \int_{\mathcal{T}} P[(U-P[U])\tilde{Y}|X=s]dM(s)=0,
\]
and it is easy to see that the second term is
\begin{align*}
  &\frac{1}{{\rm Var}(U)} \int_{\mathcal{T}} P[(U-P[U])\tilde{Y}|X \geq s]dM(s)\\
  &\quad = \frac{1}{{\rm Var}(U)} P\left[(U-P[U])\int_{\mathcal{T}} E[T|U,X \geq s]\frac{dM(s)}{G(s)}\right],
\end{align*}
applying arguments used for deriving $\IF_{G}(O)$.
Combining these two terms, we can see that the projection of $\IF^{ipw}_G(O|P)$ onto $\mathbf{T}^{nu}(G)$ is equal to
$\IF^{nu}_G(O|P)$ that is defined in \eqref{eq:expression_Psi_hatG.2} in the main text.
For further details see the proof of Theorem~2.3 of \cite{vanderLaan&Robins2003}.
\end{proof}

\section{Proof of lemmas for Theorem \ref{Thm:if_one_step}}
\label{sec:proof_lemmas_of_Thm_if_one_step}
The first two lemmas in this section are given in order to apply Theorem 2.1 of \cite{vanderVaart&Wellner2007}, as done in the third lemma.  First of all, we need the following notation.

By \ref{assump:Survival function}, we can fix an $\tilde{\epsilon}$ such that $0 < \tilde{\epsilon} < G(\tau)$. Let $\mathcal{G}$ be the collection of monotone nonincreasing c\`{a}dl\`{a}g functions $\tilde{G} \colon \mathcal{T} \rightarrow [0,1]$ such that $\tilde{G}(\tau) > \tilde{\epsilon}$, and define 
\begin{align*}
  \mathcal{G}_0 = \Big\{\tilde{G}: \tilde{G} \in {\cal G};\, \sup_{s\in\mathcal{T}}\Big|\frac{\tilde{G}(s)}{G(s)}-1\Big| \le 1 \Big\}.
\end{align*}
Let $\mathcal{Q}^*$ be the collection of monotone nondecreasing c\`{a}dl\`{a}g functions $\tilde{Q} \colon \mathbb{R} \rightarrow [0,1]$. Note that $\tilde{Q}$ could be the c.d.f. of $U \sim Q_u$. By \ref{assump:Covariates} that $U$ is non-degenerate, there exists $\nu$ such that $0 < \nu < {\rm Var}_{Q_u}(U)$.  Define $\mathcal{Q} = \{\tilde{Q}: \tilde{Q} \in \mathcal{Q}^*\,;\;  |\tilde{Q}[U]| < \infty\,;\; \nu < {\rm Var}_{\tilde{Q}}(U) < \infty \}$. Let $\mathcal{E}$ be the collection of functions $(u,s) \mapsto \tilde{E}(u,s)$ that are uniformly bounded and left-continuous in $s$, and
$\mathcal{H} = \{(u,s) \mapsto (\tilde{Q}(u),\tilde{E}(u,s)) : \tilde{Q} \in \mathcal{Q}; \tilde{E} \in \mathcal{E}\}$. Fix $\epsilon> 0$, and define
$\mathcal{H}_0 \equiv \mathcal{Q}_0 \times \mathcal{E}_0 \subset \mathcal{H}$,
where 
\begin{align*}
  \mathcal{Q}_0 = \big\{&\tilde{Q}1(|\tilde{Q}[U]-Q_u[U]| \le \epsilon, |{\rm Var}_{\tilde{Q}}(U)-{\rm Var}_{Q_u}(U)| \le \epsilon):\tilde{Q} \in \mathcal{Q}\big\}
\end{align*}
and $\mathcal{E}_0 = \{\tilde{E}1(|\tilde{E}-\bar{E}| \le \epsilon):\tilde{E} \in \mathcal{E}\}$.

\begin{lemma} \label{lemma:conditions_for_Thm2.1}
Given \ref{assump:Covariates}, \ref{assump:Survival function}, \ref{assump:At-risk prob_covariate} and \ref{assump:Conditional mean}, let $\mathcal{H}_0$ and  $\mathcal{G}_0$ be defined above, and let $\IF^*$ be as defined in the main text. Then
\begin{lemmaitemenum}
  \item \label{eq:estimators_in_classes} ${\rm P}((\hat{E}_n,\mathbb{Q}_n) \in \mathcal{H}_0) \to 1$ and ${\rm P}(\hat{G}_n \in \mathcal{G}_0) \to 1$;
  \item \label{eq:Donsker_class} $\{\IF^*(\cdot|\tilde{E},\tilde{Q},\tilde{G}):(\tilde{E},\tilde{Q}) \in \mathcal{H}_0; \tilde{G} \in \mathcal{G}_0\}$ is $P$-Donsker.
\end{lemmaitemenum}
\end{lemma}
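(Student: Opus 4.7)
The proof splits by part. For (i), the three assertions are handled separately. Since $U$ is bounded by \ref{assump:Covariates}, the strong law of large numbers yields $\mathbb{Q}_n[U] \to Q_u[U]$ and $\mathrm{Var}_{\mathbb{Q}_n}(U) \to \mathrm{Var}_{Q_u}(U)$ almost surely, so the event defining membership in $\mathcal{Q}_0$ holds with probability approaching one. Under \ref{assump:Survival function} and \ref{assump:At-risk prob_covariate}, the stratified Kaplan--Meier estimator is uniformly consistent on $\mathcal{T}$ within each stratum defined by $c(\cdot)$ (standard KM arguments applied within a stratum with positive at-risk probability), giving $\sup_s |\hat G_n(s|u)/G(s) - 1| = o_p(1)$ and hence $\hat G_n \in \mathcal{G}_0$ with probability tending to one. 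For $\hat E_n$, \ref{assump:Conditional mean} gives that $\sup_{(u,s)}|\hat E_n(u,s) - \bar E(u,s)|$ is bounded in probability; choosing $\epsilon$ large enough in the definition of $\mathcal{E}_0$ (or exploiting the common uniform bound on $\hat E_n$ and $\bar E$) gives $P(\hat E_n \in \mathcal{E}_0) \to 1$.

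For (ii), the plan is to decompose $\IF^*(\cdot|\tilde E, \tilde Q, \tilde G) = \IF^{ipw}(\cdot|\tilde E, \tilde Q, \tilde G) - \IF^{car}(\cdot|\tilde E, \tilde Q, \tilde G)$ using \eqref{eq:reexpressed_if}, show each component defines a $P$-Donsker class, and invoke the sum, product and Lipschitz-composition preservation results \cite[Section 2.10.6 of ][]{Vaart1998}. The class $\mathcal{G}_0$ sits inside the Donsker class of uniformly bounded monotone functions, and $\mathcal{Q}_0$ does too, with the two scalar parameters $\tilde Q[U]$ and $\mathrm{Var}_{\tilde Q}(U)$ lying in bounded sets and the latter bounded away from zero; the class $\mathcal{E}_0$ is Donsker by construction (item (iii) in the main text). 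Composition with the bounded Lipschitz maps $g \mapsto 1/g$ on $[\tilde\epsilon,1]$ and $g \mapsto -\log g$ then produces Donsker classes for $\{1/\tilde G(x)\}$ and for the cumulative hazards $\tilde \Lambda$, all of which have uniformly bounded total variation on $\mathcal{T}$.

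The IPW component is a finite sum of products of uniformly bounded functions drawn from these Donsker classes (with denominators uniformly bounded below by $\nu$), so it is Donsker by the product/sum preservation lemmas. The CAR component contains the Riemann--Stieltjes integral
\[
(x,\delta,u) \;\mapsto\; (1-\delta)\,\tilde E(u,x) - \int_{-\infty}^{x} \tilde E(u,s)\, d\tilde\Lambda(s),
\]
multiplied by the bounded linear factor $(u-\tilde Q[U])/\mathrm{Var}_{\tilde Q}(U)$. The evaluation piece $(1-\delta)\tilde E(u,x)$ is a Donsker-preserving transformation of $\mathcal E_0$, but the integral piece is the main technical obstacle.

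To handle the integral class, I would argue directly via bracketing: given $\eta$-$L_2(P)$ brackets $[\tilde E^L, \tilde E^U]$ for $\mathcal{E}_0$ and $\eta$-brackets for the hazards inherited from the monotone $\mathcal{G}_0$, one forms brackets for $\int_{-\infty}^{x} \tilde E(u,s)\, d\tilde\Lambda(s)$ by upper/lower combinations, exploiting the uniform $L_\infty$ bound on $\mathcal{E}_0$ and the uniform total-variation bound on $\tilde\Lambda$ to control the resulting $L_2(P)$ radius by a constant multiple of $\eta$. Since both input classes have finite bracketing integral $\int_0^1\sqrt{\log N_{[\,]}(\eta,\cdot,L_2(P))}\,d\eta$, so does the integral class, which is therefore $P$-Donsker. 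Multiplying by the bounded Donsker linear factor $(u - \tilde Q[U])/\mathrm{Var}_{\tilde Q}(U)$ preserves the Donsker property, giving the conclusion for $\IF^{car}$ and, after subtraction, for the full class $\{\IF^*(\cdot|\tilde E,\tilde Q,\tilde G)\}$.
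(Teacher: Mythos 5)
Your treatment of part \ref{eq:estimators_in_classes} is essentially the paper's, with one imprecision worth flagging: membership in $\mathcal{E}_0$ requires $|\hat E_n - \bar E|\le\epsilon$ for the fixed $\epsilon$ defining $\mathcal{E}_0$, so it is not enough that $\sup_{(u,s)}|\hat E_n(u,s)-\bar E(u,s)|$ be bounded in probability, and one cannot ``choose $\epsilon$ large enough'' since $\epsilon$ is a fixed constant built into the class. The correct invocation is the pointwise-in-$(u,s)$ convergence $\mathbb{E}\{|\hat E_n(u,s)-\bar E(u,s)|\}=o(n^{-1/4})$ from \ref{assump:Conditional mean}, which is what the paper uses (via a Markov bound) to drive ${\rm P}(|\hat E_n(u,s)-\bar E(u,s)|>\epsilon)\to 0$ for each $(u,s)$.

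For part \ref{eq:Donsker_class}, your high-level decomposition $\IF^*=\IF^{ipw}-\IF^{car}$ mirrors the paper's split into $\mathcal{F}_1-\mathcal{F}_2-\mathcal{F}_3$, and the Donsker-preservation bookkeeping for the non-integral pieces is fine. The genuine difference lies in the handling of the integral class $(x,\delta,u)\mapsto\int_{\mathcal T}\tilde E(u,s)\{1(x\in ds,\delta=0)-1(x\ge s)d\tilde\Lambda(s)\}$, and there your argument has a real gap. The paper shows each such function is (after rescaling) the pointwise limit of Riemann sums that are explicit convex combinations of the VC-subgraph class $\{(x,\delta)\mapsto\delta 1(x\ge s): s\in\mathcal T\}$, so the class sits inside the pointwise sequential closure of a symmetric convex hull of a VC-class and Theorem~2.10.3 of van der Vaart \& Wellner applies directly. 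Your proposal to ``form brackets by upper/lower combinations'' is not a complete argument: the integral is not monotone in $\tilde E$ because $\tilde E$ can change sign, so plugging in $\tilde E^L$ or $\tilde E^U$ does not produce one-sided bounds without first splitting $\tilde E$ into positive and negative parts; it is also not monotone in $\tilde\Lambda$ alone for the same reason; and the $L_2(P)$ brackets for $\mathcal{E}_0$ live on the $(u,s)$-space with the $s$-variable integrated out against $d\tilde\Lambda$, so controlling the $L_2(P)$ radius of the resulting bracket requires a Fubini-type bound that you assert but do not establish. A bracketing route may well be salvageable (van de Geer's treatment of counting-process integrals uses bracketing, for instance), but as written the key estimate ``the $L_2(P)$ radius is a constant multiple of $\eta$'' is unproven and nontrivial; the paper's convex-hull argument avoids this entirely.
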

\begin{proof}
First note that $\mathbb{Q}_n[U]$ is bounded almost surely, and
${\rm Var}_{\mathbb{Q}_n}(U)$ is finite and larger than $\nu$ almost surely, by the strong law of large numbers and \ref{assump:Covariates} that $U$ is bounded and non-degenerate.
In addition, $\hat{E}_n(u,s)$ is uniformly bounded over $(u,s)$ on $\mathbb{R} \times \mathcal{T}$ and left-continuous in $s$ with probability tending to one, by \ref{assump:Conditional mean} that $\mathbb{E}[|\hat{E}_n(u,s)-\bar{E}(u,s)|^2] \to 0$ uniformly and $\bar{E}$ as a left-continuous funciton in $s$.
Thus for sufficiently large $n$, 
\begin{align*}
  & {\rm P}((\hat{E}_n,\mathbb{Q}_n) \notin \mathcal{H}_0) 
  \le {\rm P}\big(\,\big|\mathbb{Q}_n[U]-Q_u[U]\big| > \epsilon\big) + {\rm P}\big(\,\big|{\rm Var}_{\mathbb{Q}_n}(U)-{\rm Var}_{Q_u}(U)\big| > \epsilon\big)\\ 
  & \hspace{3.4cm} {\rm P}\big(\big|\hat{E}_n(u,s)-\bar{E}(u,s)\big| > \epsilon \big)\\
  & \le \mathbb{E}\big[\big|\mathbb{Q}_n[U]-Q_u[U]\big|\big] + \mathbb{E}\big[\big|{\rm Var}_{\mathbb{Q}_n}(U)-{\rm Var}_{Q_u}(U)\big|\big] + \mathbb{E}\big[\big|\hat{E}_n(u,s)-\bar{E}(u,s)\big|^2\big] \to 0
\end{align*}
by the strong law of large numbers, \ref{assump:Conditional mean} along with the above arguments. Note also that ${\rm P}(\hat{G}_n \in \mathcal{G}_0) \to 1$ by the uniform consistency of the Kaplan--Meier estimator. This gives \ref{eq:estimators_in_classes}.

Moreover, we see that $\mathcal{G}$ and $\mathcal{Q}^*$ are $P$-Donsker \cite[Example 19.11 in][]{Vaart1998}, and so are $\mathcal{G}_0$ and $\mathcal{Q}$ because $\mathcal{G}_0 \subset \mathcal{G}$ and $\mathcal{Q} \subset \mathcal{Q}^*$ \cite[Theorem 2.10.1 in][]{Vaart1996}.
Moreover, $\mathcal{H}_0$ is $P$-Donsker because $\mathcal{Q}_0$ and $\mathcal{E}_0$ are the classes of indicator functions multiplied by some uniformly bounded functions and known as $P$-Donsker \cite[Corollary 9.32 $(i)$ and $(v)$ in][]{Kosorok2008}.

Now we start to show \ref{eq:Donsker_class}.
Based on the properties of the above classes, below
we show the class $\mathcal{F} = \{f_1-f_2-f_3: f_1 \in \mathcal{F}_1, f_2 \in \mathcal{F}_2, f_3 \in \mathcal{F}_3\}$
is a $P$-Donsker class of functions on $\mathcal{T} \times \{0,1\} \times \mathbb{R}$, where
\begin{align*}
  \mathcal{F}_1 &= \bigg\{(x,\delta,u) \mapsto
  \frac{(u-\tilde{Q}[U])}{{\rm Var}_{\tilde{Q}}(U)}\Big(\frac{\delta x}{\tilde{G}(x)}-\tilde{Q}[\tilde{E}(U)]\Big) :\,(\tilde{E},\tilde{Q}) \in \mathcal{H}_0;\,
  \bar{G} \in {\cal G}_0\bigg\};\\
  \mathcal{F}_2 &= \bigg\{(x,\delta,u) \mapsto \bigg[\frac{(u-\tilde{Q}[U])}{{\rm Var}_{\tilde{Q}}(U)}\bigg]^2{\rm Cov}_{\tilde{Q}}(U,\tilde{E}(U)):\,(\tilde{E},\tilde{Q}) \in \mathcal{H}_0 \bigg\};\\
  \mathcal{F}_3 &= \bigg\{(x,\delta,u) \mapsto \frac{(u-\tilde{Q}[U])}{{\rm Var}_{\tilde{Q}}(U)}\int_{\mathcal{T}}\tilde{E}(u, s)\big[1(x \in ds, \delta=0)-1(x \ge s)d\tilde{\Lambda}(s) \big] :\\
  &\hspace{3cm} \tilde{\Lambda}(s)=-\log(\tilde{G}(s));\,
  (\tilde{E},\tilde{Q}) \in \mathcal{H}_0;\,\tilde{G} \in \mathcal{G}_0 \bigg\}.
\end{align*}
We can see that $\IF^*(\cdot|\tilde{E},\tilde{Q},\tilde{G}) \in \mathcal{F}$, so $\{\IF^*(\cdot|\tilde{E},\tilde{Q},\tilde{G}):(\tilde{E},\tilde{Q}) \in \mathcal{H}_0; \tilde{G} \in \mathcal{G}_0\} \subseteq \mathcal{F}$. 
Note also that $\tilde{\Lambda}$ is a non-decreasing $c\grave{a}dl\grave{a}g$ function on $\mathcal{T}$, which could be a cumulative hazard function corresponding to $\tilde{G}$ if $\tilde{G}$ is a survival function.
To show \ref{eq:Donsker_class}, it suffices to show $\mathcal{F}_1$, $\mathcal{F}_2$ and $\mathcal{F}_3$ are $P$-Donsker. 

By Corollary 9.32 $(iv)$ in \cite{Kosorok2008}, the class 
$\{(x,\delta,u) \mapsto u-\tilde{Q}[U] : \tilde{Q} \in \mathcal{Q}_0\}$ is $P$-Donsker.
Together with \ref{assump:Covariates} that $U$ is non-degenerate,
the class $\mathcal{F}_{4} \equiv \{(x,\delta,u) \mapsto [(u-\tilde{Q}[U])/{\rm Var}_{\tilde{Q}}(U)]^r : \tilde{Q} \in \mathcal {Q}_0,\, r \in \{1,2\}\}$
is uniformly bounded and $P$-Donsker \cite[Corollary 9.32 $(iii)$ in][]{Kosorok2008}.
Note also that $\tilde{Q}[\tilde{E}(U)]$ and ${\rm Cov}_{\tilde{Q}}(U,\tilde{E}(U))$ are constant functions of $(x,\delta,u)$, so the classes $\mathcal{F}_{5} \equiv \{(x,\delta,u) \mapsto \tilde{Q}[\tilde{E}(U)] : (\tilde{E},\tilde{Q}) \in \mathcal{H}_0\}$ and 
$\mathcal{F}_{6} \equiv \{(x,\delta,u) \mapsto {\rm Cov}_{\tilde{Q}}(U,\tilde{E}(U)) : (\tilde{E},\tilde{Q}) \in \mathcal{H}_0\}$ are uniformly bounded and $P$-Donsker.
Moreover, because $\tilde{G}(\tau)>\tilde{\epsilon} > 0$ for all $\tilde{G} \in {\cal G}$ and $|X| \le \tau$, the aforementioned Donsker preservation properties imply that the class $\mathcal{F}_{7}\equiv\{(x,\delta,u) \mapsto \delta x/\tilde{G}(x), \tilde{G} \in \mathcal{G}_0\}$ is uniformly bounded and $P$-Donsker.
Therefore by Donsker preservation properties, we have the below classes are $P$-Donsker: 
\begin{align*}
  \mathcal{F}_1 &\subseteq \{f_{11} \cdot (f_{12}-f_{13}) : f_{11} \in \mathcal{F}_{4}, f_{12} \in \mathcal{F}_{7}, f_{13} \in \mathcal{F}_{5}\};\\
  \mathcal{F}_2 &\subseteq \{f_{21} \cdot f_{22} : f_{21} \in \mathcal{F}_{4}, f_{22} \in \mathcal{F}_{6}\}.
\end{align*}

Below we show that $\mathcal{F}_3$ is $P$-Donsker.
Let $\mathcal{F}_{8} = \{(x,\delta,u) \mapsto \int_{\mathcal{T}}\tilde{E}(u, s)[1(x \in ds, \delta=0)-1(x \ge s)d\tilde{\Lambda}(s)] :
\tilde{\Lambda}(s)=-\log(\tilde{G}(s));\,\tilde{G} \in {\cal G}_0;\,
\tilde{E} \in \mathcal{E}_0 \subset \mathcal{H}_0\}.$
Let $\{t_1 < t_2 < \ldots < t_m\}$ be an arbitrary partition of ${\cal T}$ with uniform increments $\Delta_t = t_{j+1}- t_{j}$ for all $j$.
Note that $\tilde{E} \in \mathcal{E}_0$ is uniformly bounded: $|\tilde{E}| \leq K_1$ for some positive constant $K_1$.
Let $K' = 4K_1\sum_{j=1}^{m}|\tilde{\Lambda}(t_j+\Delta_t)|$.
Following that $\tilde{E}$ is also left-continuous in $s$, any function in $\mathcal{F}_{8}$ is the scalar multiple (by $K'$) of the uniform limit of the sequence 
\begin{align*}
  &\sum_{j=1}^{m}\frac{\tilde{E}(u,t_{j+1})}{K'}\Big[1_{[t_j,t_{j+1})}(x)(1-\delta)-\tilde{\Lambda}(t_{j+1})1(x \geq t_{j+1}) + \tilde{\Lambda}(t_{j})1(x \geq t_{j})\Big]\\
  & = \sum_{j=1}^{m}\frac{\tilde{E}(u,t_j+\Delta_t)}{K'}\Big[1_{[t_j,t_{j}+\Delta_t)}(x)(1-\delta)-\tilde{\Lambda}(t_{j}+\Delta_t)1(x \geq t_{j}+\Delta_t)
  + \tilde{\Lambda}(t_{j})1(x \geq t_{j})\Big]\\
  & = \sum_{j=1}^m \sum_{r=1}^4 \bigg\{\frac{\tilde{E}(u,t_{j}+\Delta_t)}{K'}(-1)^{r}\tilde{\Lambda}(t_{j}+\Delta_{t}1_{r=3})^{1_{r \ge 3}}\bigg\}(1-\delta)^{1_{r \le 2}}1\big(x \ge t_j + \Delta_{t}1_{r \in \{2,3\}}\big)\\
  &\equiv \sum_{j=1}^m \sum_{r=1}^4 \alpha_{jr}(1-\delta)^{1_{r \le 2}}1\big(x \ge t_j + \Delta_{t}1_{r \in \{2,3\}}\big),      
\end{align*}
where $\sum_{j=1}^m \sum_{r=1}^4 |\alpha_{jr}| \le 1$ by
\begin{align*}
  &\sum_{j=1}^m \sum_{r=1}^4 |\alpha_{jr}| \le \sum_{j=1}^{m}\sum_{r=1}^{4}\bigg|\frac{\tilde{E}(u,t_j+\Delta_t)}{K'}(-1)^{r}\tilde{\Lambda}(t_{j}+\Delta_{t}1_{r=3})^{1_{r \ge 3}}\bigg|\\  
  &\le \sum_{j=1}^{m}\sum_{r=1}^{4}\,\bigg|\frac{\tilde{E}(u,t_j+\Delta_t)}{K'}\tilde{\Lambda}(t_j+\Delta_t)\bigg| \le \frac{4K_1}{K'}\sum_{j=1}^{m}\big|\tilde{\Lambda}(t_j+\Delta_t)\big|
  = 1.
\end{align*}
Then any function in $\mathcal{F}_{8}$ is in a class contained in the scalar-multiplied symmetric convex hull of the VC-subgraph class $\left\{(x, \delta) \mapsto \delta1(x\ge s) : s\in\mathcal{T} \right\}$, which is a class of indicator functions. Therefore $\mathcal{F}_{8}$ is $P$-Donsker \cite[Theorem 2.10.3 in][]{Vaart1996}, and then $\mathcal{F}_3 \subseteq \{f_{31} \cdot f_{32} : f_{31} \in \mathcal{F}_{4}, f_{32} \in \mathcal{F}_{8}\}$ is also $P$-Donsker,  \cite[Corollary 9.32 $(iii)$ in][]{Kosorok2008}.
Hence, $\mathcal{F}$ is $P$-Donsker by Corollary 9.32 $(i)$ of \cite{Kosorok2008}, which gives \ref{eq:Donsker_class}.
\end{proof}

\begin{lemma} \label{lemma:L2_conv.1}
Given \ref{assump:Covariates}, \ref{assump:Survival function}, \ref{assump:At-risk prob_covariate} and \ref{assump:Conditional mean}, and let $\mathcal{G}_0$ be defined above and $\IF^*$ be as defined in the main text,
$\sup_{\tilde{G} \in \mathcal{G}_0} P[(\IF^*(\cdot|\hat{E}_n,\mathbb{Q}_n,\tilde{G})-\IF^*(\cdot|\bar{E},Q_u,\tilde{G}))^2] \lcrarrow{p} 0.$
\end{lemma}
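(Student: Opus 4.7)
My plan is to bound the $L^2(P)$-norm of the difference by two telescoping substitutions, treating the two components of $\IF^* = \IF^{ipw} - \IF^{car}$ separately. I would write
\[
\IF^*(\cdot|\hat{E}_n, \Qn, \tilde G) - \IF^*(\cdot|\bar{E}, Q_u, \tilde G) = D_1 + D_2,
\]
where $D_1 \equiv \IF^*(\cdot|\hat{E}_n, \Qn, \tilde G) - \IF^*(\cdot|\bar{E}, \Qn, \tilde G)$ isolates the replacement of $\hat{E}_n$ by $\bar{E}$, and $D_2 \equiv \IF^*(\cdot|\bar{E}, \Qn, \tilde G) - \IF^*(\cdot|\bar{E}, Q_u, \tilde G)$ isolates the replacement of $\Qn$ by $Q_u$. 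Since $\tilde G$ enters $\IF^*$ only through $\delta x/\tilde G(x)$ in $\IF^{ipw}$ and through $dM_{\tilde G}$ in $\IF^{car}$, and both admit controls uniform over $\mathcal{G}_0$---namely $\|\delta x/\tilde G(x)\|_\infty \le \tau/\tilde\epsilon$ and $\tilde\Lambda(\tau) \le \log(1/\tilde\epsilon)$---every bound below is uniform in $\tilde G \in \mathcal{G}_0$.

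For $D_2$, the switch $\Qn \to Q_u$ amounts to replacing the four scalars $\Qn[U]$, ${\rm Var}_{\Qn}(U)$, $\Qn[\bar{E}(U)]$, and ${\rm Cov}_{\Qn}(U, \bar{E}(U))$ by their $Q_u$ counterparts. Each pair differs by $O_p(n^{-1/2})$ by the CLT, using \ref{assump:Covariates} and the uniform boundedness of $\bar{E}$ from \ref{assump:Conditional mean}, and ${\rm Var}_{\Qn}(U)$ stays bounded away from zero with probability tending to one. Grouping shows that $D_2$ is a sum of such $O_p(n^{-1/2})$ scalars multiplied by $o$-dependent factors ($(U-\Qn[U])^j$ for $j\le 2$, $\delta X/\tilde G(X)$, $\bar{E}(U)$, and $\int \bar{E}(U,s)\, dM_{\tilde G}(s)$) whose $L^2(P)$-norms are uniformly bounded over $\tilde G \in \mathcal{G}_0$ by the preceding paragraph's bounds. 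Hence $\sup_{\tilde G \in \mathcal{G}_0} P[D_2^2] = O_p(n^{-1}) = o_p(1)$.

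For $D_1$, the $\IF^{ipw}$ portion does not involve $\tilde G$ (the $\delta x/\tilde G(x)$ term cancels) and simplifies to a linear combination of the scalars $\Qn[(\hat{E}_n - \bar{E})(U)]$ and ${\rm Cov}_{\Qn}(U, (\hat{E}_n - \bar{E})(U))$ times the uniformly $L^2(P)$-bounded multipliers $(U-\Qn[U])/{\rm Var}_{\Qn}(U)$ and $(U-\Qn[U])^2/{\rm Var}^2_{\Qn}(U)$. I would split each scalar by the triangle inequality into an empirical-process term $(\Qn - Q_u)[\,\cdot\,]$, which is $o_p(1)$ by Glivenko--Cantelli applied to the uniformly bounded $P$-Donsker class containing $\hat{E}_n - \bar{E}$ (requirement (iii) of Section~\ref{sec:one_step_estimator}), and a $P$-integrated term $Q_u[\,\cdot\,]$, which is $o_p(1)$ via Fubini applied to the pointwise-in-$u$ rate from \ref{assump:Conditional mean} together with bounded convergence in $u$. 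The covariance splits analogously. Consequently the $\IF^{ipw}$ portion of $D_1$ has $L^2(P)$-norm $o_p(1)$.

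The main obstacle is the $\IF^{car}$ piece of $D_1$,
\[
\frac{U - \Qn[U]}{{\rm Var}_{\Qn}(U)}\int_{\mathcal T}[\hat{E}_n(U,s) - \bar{E}(U,s)]\, dM_{\tilde G}(s).
\]
The finite total variation of $M_{\tilde G}$ (bounded by $1 + \log(1/\tilde\epsilon)$ uniformly over $\mathcal{G}_0$) reduces the squared $L^2(P)$-norm of this piece, up to a multiplicative constant, to $P[\sup_{s \in \mathcal T}(\hat{E}_n(U,s) - \bar{E}(U,s))^2]$. Left-continuity of $\hat{E}_n$ and $\bar{E}$ in $s$ (from \ref{assump:Conditional mean} and the chosen function class) lets me replace $\sup_s$ by a countable supremum over a dense subset of $\mathcal T$. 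At each such $s$, the pointwise-in-$u$ rate $\mathbb{E}|\hat{E}_n(u,s) - \bar{E}(u,s)| = o(n^{-1/4})$ from \ref{assump:Conditional mean}, combined with the uniform boundedness of $\hat{E}_n - \bar{E}$ and dominated convergence in $u$, yields $E_P[(\hat{E}_n(U,s) - \bar{E}(U,s))^2] \crarrow{p} 0$. The upgrade from pointwise-in-$s$ to uniform-in-$s$ convergence exploits the $P$-Donsker restriction on $\hat{E}_n$ in requirement (iii) of Section~\ref{sec:one_step_estimator}: finite bracketing numbers in $L^2(P)$ supply the asymptotic equicontinuity in $s$ needed to swap $\sup_s$ with $L^2(P)$-integration in the limit. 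Combining the controls on $D_2$, the $\IF^{ipw}$ part of $D_1$, and the $\IF^{car}$ part of $D_1$ then yields the stated uniform-in-$\tilde G$ convergence.
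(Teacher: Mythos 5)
Your telescoping decomposition into $D_1 + D_2$ mirrors the paper's five-term split, and your treatment of $D_2$ and of the $\IF^{ipw}$ part of $D_1$ follows the same $O_p(n^{-1/2})$-scalar-times-bounded-multiplier logic the paper uses (its terms $(i)$--$(iii)$). The difficulty is concentrated exactly where you say it is, in the $\IF^{car}$ piece of $D_1$, but your proposed resolution there has a genuine gap.

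You reduce that piece to $P\big[\sup_{s}(\hat{E}_n(U,s)-\bar{E}(U,s))^2\big]$, establish pointwise-in-$s$ convergence $P\big[(\hat{E}_n(U,s)-\bar{E}(U,s))^2\big]\to_p 0$, and then propose to ``swap $\sup_s$ with $L^2(P)$-integration'' using the $P$-Donsker restriction and equicontinuity in $s$. This swap is not justified by $L^2(P)$-equicontinuity: one always has $P[\sup_s f_s]\ge \sup_s P[f_s]$, so even showing $\sup_s P[(\hat{E}_n(U,s)-\bar{E}(U,s))^2]\to_p 0$ (uniform convergence with the sup \emph{outside} the integral) would not yield what you need, which has the sup \emph{inside}. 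Bracketing entropy controls suprema of centered empirical processes, not this commutation of a pathwise supremum with an $L^2(P)$ norm.

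The paper avoids this by using the cruder pointwise bound
\[
\Big|\int_{\mathcal{T}}\big(\hat{E}_n(U,s)-\bar{E}(U,s)\big)\,dM_{\tilde G}(s)\Big|^2 \;\lesssim\; \sup_{(u,s)\in\mathbb{R}\times\mathcal{T}}\big|\hat{E}_n(u,s)-\bar{E}(u,s)\big|^2,
\]
so that the $L^2(P)$-norm factors as a constant (in $O$) times a bounded $P$-integral of $(U-Q_u[U])^2/\mathrm{Var}^2(U)$. Then $\mathbb{E}\{\sup_{(u,s)}|\hat{E}_n(u,s)-\bar{E}(u,s)|^2\}\to 0$ is obtained directly from \ref{assump:Conditional mean} (the sup is bounded in probability and the pointwise rate holds) together with dominated convergence, with no interchange of supremum and integral required. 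Your argument would be repaired by replacing the $\sup_s$ bound with this $\sup_{(u,s)}$ bound, at which point the Donsker appeal becomes unnecessary; as written, the chaining/equicontinuity step does not close.
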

\begin{proof}
To give the proof, we apply Chebyshev's inequality, so that it suffices to show the first moments of the relevant mean-squared quantities converge to zero. Recall that $P$ denotes the expectation of a generic variable, one of the conventional notations for empirical processes. 
For any $\tilde{G} \in \mathcal{G}_0$,
\begin{align} \label{eq:target_convergence.1}
  &P[\IF^*(\cdot|\hat{E}_n,\mathbb{Q}_n,\tilde{G})-\IF^*(\cdot|\bar{E},Q_u,\tilde{G})]^2 \lesssim (i) + (ii) + (iii) + (iv) + (v), \mbox{ where}\\
  &\quad (i) \equiv P\bigg[ \bigg\{\bigg(\frac{(U-\mathbb{Q}_n[U])}{{\rm Var}_{\mathbb{Q}_n}(U)}-\frac{(U-Q_u[U])}{{\rm Var}_{Q_u}(U)}\bigg)\frac{\delta X}{\tilde{G}(X)} \bigg\}^2\bigg]; \nonumber \\
  &\quad (ii) \equiv P\bigg[\bigg\{\frac{(U-\mathbb{Q}_n[U])}{{\rm Var}_{\mathbb{Q}_n}(U)}\mathbb{Q}_n[\hat{E}_n(U)]-\frac{(U-Q_u[U])}{{\rm Var}_{Q_u}(U)}Q_u[\bar{E}(U)] \bigg\}^2\bigg]; \nonumber \\
  &\quad (iii) \equiv P\bigg[\bigg\{\frac{(U-\mathbb{Q}_n[U])^2}{{\rm Var}^2_{\mathbb{Q}_n}(U)}{\rm Cov}_{\mathbb{Q}_{n}}(U, \hat{E}_n(U))-\frac{(U-Q_u[U])^2}{{\rm Var}_{Q_u}^2(U)}{\rm Cov}_{Q_u}(U, \bar{E}(U))\bigg\}^2\bigg]; \nonumber \\
  &\quad (iv) \equiv P\bigg[\bigg\{\bigg(\frac{(U-\mathbb{Q}_n[U])}{{\rm Var}_{\mathbb{Q}_n}(U)} - \frac{(U-Q_u[U])}{{\rm Var}_{Q_u}(U)}\bigg) \int_{\mathcal{T}}\hat{E}_n(U,s)\tilde{M}(U,ds) \bigg\}^2\bigg]; \nonumber \\ 
  &\quad (v) \equiv P\bigg[\bigg\{\frac{(U-Q_u[U])}{{\rm Var}_{Q_u}(U)} \int_{\mathcal{T}}\Big(\hat{E}_n(U,s)-\bar{E}(U,s)\Big)\tilde{M}(U,ds)\bigg\}^2 \bigg]. \nonumber
\end{align}
Then showing \ref{lemma:L2_conv.1} is equivalent to showing that the quantities $(i)$--$(v)$ on the right-hand-side of \eqref{eq:target_convergence.1} converge to zero in probability. First along with \ref{assump:Covariates} that $U$ is non-degenerate, the strong law of large numbers and the continuous mapping theorem give that
\begin{align} \label{eq:convergence_U_func}
  \left|\frac{(U-\mathbb{Q}_n[U])}{{\rm Var}_{\mathbb{Q}_n}(U)}-\frac{(U-Q_u[U])}{{\rm Var}_{Q_u}(U)}\right| \to 0 \:\mbox{ a.s.}    
\end{align}
By \eqref{eq:convergence_U_func}, together with $\tilde{G}(\tau) > \tilde{\epsilon} > 0$ for $\tilde{G} \in \mathcal{G}_0$ and $|X| \le \tau$ such that $|\delta X/\tilde{G}(X)| \le \tau/\tilde{\epsilon}$, we have the quantity $(i)$ converging to zero in probability. 

The quantity $(ii)$ of \eqref{eq:target_convergence.1} is
\begin{align} \label{eq:term2.upper_bound}
  \lesssim & \; P\bigg[\left\{\left[\frac{(U-\mathbb{Q}_n[U])}{{\rm Var}_{\mathbb{Q}_n}(U)}-\frac{(U-Q_u[U])}{{\rm Var}_{Q_u}(U)}\right]Q_u[\bar{E}(U)]\right\}^2\bigg] \\
  & + P\bigg[\left\{\frac{(U-\mathbb{Q}_n[U])}{{\rm Var}_{\mathbb{Q}_n}(U)}\Qn\big[\hat{E}_n(U)-\bar{E}(U)\big]\right\}^2\bigg] \nonumber \\
  & + P\bigg[\left\{\frac{(U-\mathbb{Q}_n[U])}{{\rm Var}_{\mathbb{Q}_n}(U)}\left(\big(\Qn-Q_u\big)[\bar{E}(U)]\right)\right\}^2\bigg]. \nonumber
\end{align}
The first quantity of \eqref{eq:term2.upper_bound} converges to zero in probability by \eqref{eq:convergence_U_func} and $Q_u[\bar{E}(U)]$ is bounded. The second quantity of \eqref{eq:term2.upper_bound} is
\begin{align*}
  & \lesssim P\bigg[\left\{\left[\frac{(U-\mathbb{Q}_n[U])}{{\rm Var}_{\mathbb{Q}_n}(U)}-\frac{(U-Q_u[U])}{{\rm Var}_{Q_u}(U)}\right]\Qn\big[\hat{E}_n(U)-\bar{E}(U)\big]\right\}^2\bigg]\\
  &\quad + P\bigg[\left\{\frac{(U-Q_u[U])}{{\rm Var}_{Q_u}(U)}\Qn\big[\hat{E}_n(U)-\bar{E}(U)\big]\right\}^2\bigg],    
\end{align*}
where by \ref{assump:Covariates} that $U$ is bounded and non-degenerate, the dominated convergence theorem implies that the first moment of
the first part converges to zero, using \eqref{eq:convergence_U_func} and $\sup_{u \in \mathbb{R}}|\hat{E}_n(u)-\bar{E}(u)|$ is bounded in probability implied by \ref{assump:Conditional mean}. Therefore applying Chebyshev's inequality, the first part converges to zero in probability,
and so does the second part by
\begin{align*}
  &\mathbb{E}\bigg\{P\bigg[\left\{\frac{(U-Q_u[U])}{{\rm Var}_{Q_u}(U)}\Qn\big[\hat{E}_n(U)-\bar{E}(U)\big]\right\}^2\bigg]\bigg\}\\
  & = P\bigg[\frac{(U-Q_u[U])^2}{{\rm Var}^2_{Q_u}(U)}\frac{1}{n^2}\bigg\{\sum_{i=1}^n\mathbb{E}\Big\{\big[\hat{E}_n(U_i)-\bar{E}(U_i)\big]^2\Big\}\\
  & \hspace{2cm} + \sum_{\{i \not= j: i=1,\ldots,n\} }\sum_{j=1}^n\mathbb{E}\Big\{\big[\hat{E}_n(U_i)-\bar{E}(U_i)\big]\big[\hat{E}_n(U_j)-\bar{E}(U_j)\big]\Big\}\bigg\}\bigg]\\
  & \lesssim P\bigg[\frac{1}{n}\mathbb{E}\big\{\big[\hat{E}_n(U_1)-\bar{E}(U_1)\big]^2\big\} + \frac{(n-1)}{n}\mathbb{E}\big\{\big[\hat{E}_n(U_1)-\bar{E}(U_1)\big]\big[\hat{E}_n(U_2)-\bar{E}(U_2)\big]\big\}\bigg]\\
  & \to 0,
\end{align*}
where the penultimate line follows \ref{assump:Covariates} that implies that $(U-Q_u[U])^2/{\rm Var}^2_{Q_u}(U)$ is bounded almost surely, and
the convergence to zero holds by the dominated convergence theorem, along with
\ref{assump:Conditional mean} that implies that $\sup_{(u,s)}|\hat{E}_n(u,s)-\bar{E}(u,s)|$ is bounded in probability and
$\mathbb{E}\{|\hat{E}_n(u)-\bar{E}(u)|\} = o(n^{-1/4})$ for each $u$. Therefore the second quantity of \eqref{eq:term2.upper_bound} converges to zero in probability. The last quantity of \eqref{eq:term2.upper_bound} converges to zero in probability, following \ref{assump:Covariates} so that $(U-Q_u[U])/{\rm Var}_{Q_u}(U)$ is bounded almost surely and using $(\Qn-Q_u)[\bar{E}(U)] \crarrow{p} 0$.

The quantity $(iii)$ of \eqref{eq:target_convergence.1} is
\begin{align} \label{eq:term3.upper_bound}
  & \lesssim P\bigg[\bigg\{\bigg[\frac{(U-\mathbb{Q}_n[U])^2}{{\rm Var}^2_{\mathbb{Q}_n}(U)}-\frac{(U-Q_u[U])^2}{{\rm Var}_{Q_u}^2(U)}\bigg]{\rm Cov}_{Q_u}(U, \bar{E}(U))\bigg\}^2\bigg] \\
  &\quad + P\bigg[\bigg\{\frac{(U-\mathbb{Q}_n[U])^2}{{\rm Var}^2_{\mathbb{Q}_n}(U)}\Big[ {\rm Cov}_{\mathbb{Q}_{n}}(U, \hat{E}_n(U))-{\rm Cov}_{Q_u}(U, \bar{E}(U))\Big]\bigg\}^2\bigg]. \nonumber
\end{align}
The first quantity of \eqref{eq:term3.upper_bound} converges to zero in probability by \eqref{eq:convergence_U_func} and
that ${\rm Cov}_{Q_u}(U, \bar{E}(U))$ is bounded (implied by \ref{assump:Covariates} and \ref{assump:Conditional mean}). The second quantity of \eqref{eq:term3.upper_bound}
goes to zero in probability, following \ref{assump:Covariates} that implies $(U-\mathbb{Q}_n[U])/{\rm Var}_{\mathbb{Q}_n}(U)$ is bounded almost surely and
\begin{align*} 
  &\big|{\rm Cov}_{\mathbb{Q}_{n}}(U, \hat{E}_n(U)) - {\rm Cov}_{Q_u}(U, \bar{E}(U))\big|
  \leq \big|{\rm Cov}_{\mathbb{Q}_{n}}(U, \hat{E}_n(U))-{\rm Cov}_{\mathbb{Q}_{n}}(U, \bar{E}(U))\big|\\
  &\quad + \big|{\rm Cov}_{\mathbb{Q}_{n}}(U, \bar{E}(U)) - {\rm Cov}_{Q_u}(U, \bar{E}(U))\big| = o_p(1).  
\end{align*}
The above display results from the first term on the right-hand-side converging to zero in probability by $\mathbb{E}\{|\hat{E}_n(u) - \bar{E}(u)|\} = o(n^{-1/4})$ for each $u$ in \ref{assump:Conditional mean} along with \ref{assump:Covariates},
and the second term converging to zero in probability by the law of large numbers.

The quantity $(iv)$ of \eqref{eq:target_convergence.1} is
\begin{align} \label{eq:equiv_convergence_condition.11}
   \lesssim & \; P\bigg[\left\{ \left(\frac{(U-\mathbb{Q}_n[U])}{{\rm Var}_{\mathbb{Q}_n}(U)}
  - \frac{(U-Q_u[U])}{{\rm Var}_{Q_u}(U)}\right)\int_{\mathcal{T}}\Big(\hat{E}_n(U,s)-\bar{E}(U,s)\Big)\tilde{M}(U,ds) \right\}^2 \bigg] \\   
  & + P\bigg[\left\{ \left(\frac{(U-\mathbb{Q}_n[U])}{{\rm Var}_{\mathbb{Q}_n}(U)}
  - \frac{(U-Q_u[U])}{{\rm Var}_{Q_u}(U)}\right)\int_{\mathcal{T}}\bar{E}(U,s)\tilde{M}(U,ds) \right\}^2 \bigg]. \nonumber
\end{align}
To show that the quantity $(iv)$ of \eqref{eq:target_convergence.1} converges to zero in probability, it suffices to give the convergence to zero in probability of this upper bound in \eqref{eq:equiv_convergence_condition.11}. Note that $\tilde{M}(U,ds) = 1(X \in ds, \delta=0)-1(X \ge s)d\tilde{\Lambda}(s|U)$ and
\begin{align} \label{eq:upper_bound_martint}
  &\Big|\int_{\mathcal{T}}\Big(\hat{E}_n(U,s)-\bar{E}(U,s)\Big)\tilde{M}(U,ds)\Big|^2
  \le \sup_{(u,s)}\Big|\hat{E}_n(u,s)-\bar{E}(u,s)\Big|^2\sup_u\big(1+\tilde{\Lambda}(\tau|u)\big)^2 \\
  &\lesssim \sup_{(u,s)}\big|\hat{E}_n(u,s)-\bar{E}(u,s)\big|^2, \nonumber
\end{align}
so that
\begin{align*}
  &E\bigg\{P\bigg[\left\{\left(\frac{(U-\mathbb{Q}_n[U])}{{\rm Var}_{\mathbb{Q}_n}(U)}
  - \frac{(U-Q_u[U])}{{\rm Var}_{Q_u}(U)}\right)\int_{\mathcal{T}}\Big(\hat{E}_n(U,s)-\bar{E}(U,s)\Big)\tilde{M}(U,ds)\right\}^2\bigg]\bigg\} \\
  &= \mathbb{E}\bigg\{P\bigg[\left\{\left(\frac{(U-\mathbb{Q}_n[U])}{{\rm Var}_{\mathbb{Q}_n}(U)}
  - \frac{(U-Q_u[U])}{{\rm Var}_{Q_u}(U)}\right)\int_{\mathcal{T}}\Big(\hat{E}_n(U,s)-\bar{E}(U,s)\Big)\tilde{M}(U,ds)\right\}^2\bigg]\bigg\} \\
  &\lesssim P\bigg[\mathbb{E}\bigg\{\left(\frac{(U-\mathbb{Q}_n[U])}{{\rm Var}_{\mathbb{Q}_n}(U)} - \frac{(U-Q_u[U])}{{\rm Var}_{Q_u}(U)}\right)^2\sup_{(u,s)}\big|\hat{E}_n(u,s)-\bar{E}(u,s)\big|^2\bigg\}\bigg] \to 0 \; \mbox{ a.s.},
\end{align*}
following the dominated convergence theorem, along with \eqref{eq:convergence_U_func}, \ref{assump:Covariates} that $U$ is bounded and non-degenerate, and \ref{assump:Conditional mean} that $\sup_{(u,s)}|\hat{E}_n(u,s)-\bar{E}(u,s)|$ is bounded in probability.
Therefore, we show that the first term in \eqref{eq:equiv_convergence_condition.11} converges to zero in probability. We continue dealing with the second term in \eqref{eq:equiv_convergence_condition.11}. Similarly, we first upper-bounds 
$|\int_{\mathcal{T}}\bar{E}(U,s)\tilde{M}(U,ds)|^2$ by $\sup_{(u,s)}|\bar{E}(u,s)|^2\sup_u(1+\tilde{\Lambda}(\tau|u))^2$, which further gives that
\begin{align*}
  &E\bigg\{P\bigg[\left\{\left(\frac{(U-\mathbb{Q}_n[U])}{{\rm Var}_{\mathbb{Q}_n}(U)}
  - \frac{(U-Q_u[U])}{{\rm Var}_{Q_u}(U)}\right)\int_{\mathcal{T}}\bar{E}(U,s)\tilde{M}(U,ds)\right\}^2\bigg]\bigg\} \\
  &\quad =\mathbb{E}\bigg\{P\bigg[\left\{\left(\frac{(U-\mathbb{Q}_n[U])}{{\rm Var}_{\mathbb{Q}_n}(U)}
  - \frac{(U-Q_u[U])}{{\rm Var}_{Q_u}(U)}\right)\int_{\mathcal{T}}\bar{E}(U,s)\tilde{M}(U,ds)\right\}^2\bigg]\bigg\} \\
  &\quad \lesssim P\bigg[\mathbb{E}\bigg\{\left(\frac{(U-\mathbb{Q}_n[U])}{{\rm Var}_{\mathbb{Q}_n}(U)} - \frac{(U-Q_u[U])}{{\rm Var}_{Q_u}(U)}\right)^2\sup_{(u,s)}\big|\bar{E}(u,s)\big|^2\bigg\}\bigg] \to 0 \; \mbox{ a.s.},
\end{align*}
following the analogous arguments to those for the first term in \eqref{eq:equiv_convergence_condition.11}, along with \ref{assump:Conditional mean} that $\bar{E}$ is uniformly bounded.
Therefore, we show that the quantity $(iv)$ of \eqref{eq:target_convergence.1} converges to zero in probability.

Now we deal with the last quantity $(v)$ of \eqref{eq:target_convergence.1}.
Applying \eqref{eq:upper_bound_martint} and the above arguments, 
\begin{align*}
  &E\bigg\{P\bigg[\left\{\frac{(U-Q_u[U])}{{\rm Var}_{Q_u}(U)}\int_{\mathcal{T}}\Big(\hat{E}_n(U,s)-\bar{E}(U,s)\Big)\tilde{M}(U,ds)\right\}^2\bigg]\bigg\} \\
  &\quad = \mathbb{E}\bigg\{P\bigg[\left\{\frac{(U-Q_u[U])}{{\rm Var}_{Q_u}(U)}\int_{\mathcal{T}}\Big(\hat{E}_n(U,s)-\bar{E}(U,s)\Big)\tilde{M}(U,ds)\right\}^2\bigg]\bigg\} \\
  &\quad \lesssim P\bigg[\left(\frac{(U-Q_u[U])}{{\rm Var}_{Q_u}(U)}\right)^2\mathbb{E}\Big\{\sup_{(u,s)}\big|\hat{E}_n(u,s)-\bar{E}(u,s)\big|^2\Big\}\bigg] \to 0,
\end{align*}
where the convergence follows the dominated convergence theorem, \ref{assump:Covariates} that $U$ is bounded and non-degenerate, and
\ref{assump:Conditional mean} that $\sup_{(u,s)}|\hat{E}_n(u,s)-\bar{E}(u,s)|$ is bounded in probability and $\mathbb{E}\{|\hat{E}_n(u,s)-\bar{E}(u,s)|\}= o(n^{-1/4})$ for each $(u,s)$. Hence, we conclude the proof. 
\end{proof}

\begin{lemma} \label{lemma:L2_conv.2}
Suppose that \ref{assump:Covariates}, \ref{assump:Survival function}, \ref{assump:At-risk prob_covariate} and \ref{assump:Conditional mean} hold. Let $\mathcal{H}_0$ be as defined above and $\IF^*$ be as defined in the main text,
$\sup_{(\tilde{E},\tilde{Q}) \in \mathcal{H}_0} P[(\IF^*(\cdot|\tilde{E},\tilde{Q},\hat{G}_n)-\IF^*(\cdot|\tilde{E},\tilde{Q},G))^2] \lcrarrow{p} 0.$
\end{lemma}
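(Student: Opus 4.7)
The idea is to decompose $\IF^*(\cdot|\tilde E,\tilde Q,\hat G_n)-\IF^*(\cdot|\tilde E,\tilde Q,G)$ into its $\IF^{ipw}$ and $\IF^{car}$ contributions and to bound each in $L^2(P)$ uniformly over $(\tilde E,\tilde Q)\in\mathcal{H}_0$. Using the re-expressions in \eqref{eq:reexpressed_if}, $\IF^{ipw}(\cdot|\tilde E,\tilde Q,\tilde G)$ depends on $\tilde G$ only through the linear factor $\delta x/\tilde G(x)$, while $\IF^{car}(\cdot|\tilde E,\tilde Q,\tilde G)$ depends on $\tilde G$ only through the cumulative hazard $\tilde\Lambda$. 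Hence the target difference equals $\mathcal{A}(o)-\mathcal{B}(o)$, where
\begin{align*}
\mathcal{A}(o) &= \frac{u-\tilde Q[U]}{{\rm Var}_{\tilde Q}(U)}\,\delta x\Big(\frac{1}{\hat G_n(x)}-\frac{1}{G(x)}\Big), \\
\mathcal{B}(o) &= \frac{u-\tilde Q[U]}{{\rm Var}_{\tilde Q}(U)}\int_{-\infty}^{x}\tilde E(u,s)\,d[\Lambda-\hat\Lambda_n](s),
\end{align*}
so that it suffices to prove $\sup_{\mathcal{H}_0}P[\mathcal{A}^2]=o_p(1)$ and $\sup_{\mathcal{H}_0}P[\mathcal{B}^2]=o_p(1)$.

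For $\mathcal{A}$, the factor $(u-\tilde Q[U])/{\rm Var}_{\tilde Q}(U)$ is bounded uniformly over $\mathcal{Q}_0$ and $u$ in the support of $U$: by \ref{assump:Covariates} $U$ is bounded, and by construction of $\mathcal{Q}_0$, ${\rm Var}_{\tilde Q}(U)\ge{\rm Var}_{Q_u}(U)-\epsilon>0$ for $\epsilon$ small. By \ref{assump:Survival function} and uniform consistency of $\hat G_n$, there is an event of probability approaching one on which $\inf_{s\in\mathcal{T}}\hat G_n(s)\ge G(\tau)/2$; on this event $|\mathcal{A}(o)|\lesssim\|\hat G_n-G\|_\infty$ uniformly in $o$ and in $(\tilde E,\tilde Q)\in\mathcal{H}_0$, so $\sup_{\mathcal{H}_0}P[\mathcal{A}^2]\le C\|\hat G_n-G\|_\infty^{2}=o_p(1)$.

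For $\mathcal{B}$, condition on $\{O_i\}_{i=1}^n$ and treat $(u,x)$ as a fresh draw from $P$ independent of the data. Using the Doob--Meyer representation
\[
\hat\Lambda_n(t)-\Lambda(t)=\int_{-\infty}^{t}\frac{1(Y_n(s)>0)}{Y_n(s)}\,dM_n(s)-\int_{-\infty}^{t}1(Y_n(s)=0)\,d\Lambda(s),
\]
with $M_n=N_n-\int Y_n\,d\Lambda$ a sum of orthogonal individual martingales, and invoking \ref{assump:At-risk prob_covariate} to guarantee $Y_n(\tau)\ge cn$ (hence $Y_n(s)\ge cn$ for every $s\in\mathcal{T}$ and the second term vanishes) on an event of probability approaching one, the predictable variation formula gives for each fixed $(u,x)$
\[
\mathbb{E}\bigg[\Big(\int_{-\infty}^{x}\tilde E(u,s)\,d[\hat\Lambda_n-\Lambda](s)\Big)^{2}\bigg]=\mathbb{E}\bigg[\int_{-\infty}^{x}\frac{\tilde E(u,s)^{2}\,1(Y_n(s)>0)}{Y_n(s)}\,d\Lambda(s)\bigg]\lesssim\frac{\|\tilde E\|_\infty^{2}\Lambda(\tau)}{n},
\]
uniformly in $(u,x)\in\mathbb{R}\times\mathcal{T}$ and in $\tilde E\in\mathcal{E}_0$, since $\|\tilde E\|_\infty\le\|\bar E\|_\infty+\epsilon$ and $\Lambda(\tau)=-\log G(\tau)<\infty$. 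Averaging over $(u,x)$ and combining with the uniform bound on $(u-\tilde Q[U])/{\rm Var}_{\tilde Q}(U)$ from the $\mathcal{A}$-step yields $\sup_{\mathcal{H}_0}\mathbb{E}\{P[\mathcal{B}^2]\}=O(n^{-1})$, and Markov's inequality gives $\sup_{\mathcal{H}_0}P[\mathcal{B}^2]=o_p(1)$.

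The principal difficulty is the CAR piece: a generic $\tilde E\in\mathcal{E}_0$ is only assumed bounded and left-continuous in $s$, with no bounded-variation control, so an ordinary integration by parts against $d[\hat\Lambda_n-\Lambda]$ is unavailable. Routing the argument through the martingale representation of $\hat\Lambda_n-\Lambda$ circumvents this, since the resulting predictable variation bound depends on $\tilde E$ only through $\|\tilde E\|_\infty$, which is uniformly controlled on $\mathcal{E}_0$.
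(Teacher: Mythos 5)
Your proposal is correct and takes essentially the same route as the paper's proof. Both arguments decompose $\IF^*(\cdot|\tilde E,\tilde Q,\hat G_n)-\IF^*(\cdot|\tilde E,\tilde Q,G)$ into the IPW piece, where $\tilde G$ enters only through $\delta x/\tilde G(x)$, and the CAR piece, where $\tilde G$ enters only through the cumulative hazard; both control the IPW piece via uniform consistency of $\hat G_n$ and control the CAR piece by substituting the Doob--Meyer representation of $\hat\Lambda_n-\Lambda$ and invoking the predictable-variation (isometry) bound. Your closing remark about the lack of bounded-variation control on $\tilde E\in\mathcal{E}_0$ is exactly the reason the paper also routes the argument through the martingale representation rather than integration by parts.

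The only differences are technical and do not change the conclusion. First, you work on the high-probability event $\{Y_n(s)\ge cn \text{ for all } s\}$ obtained from \ref{assump:At-risk prob_covariate} via concentration, which makes the $1(Y_n=0)$ term in the decomposition of $\hat\Lambda_n-\Lambda$ vanish outright and gives an $O(n^{-1})$ rate; the paper instead retains the $1(Y_n=0)$ term and kills it by dominated convergence, using only $\inf_s Y_n(s)\to\infty$. Second, you bound the IPW term directly from $\|\hat G_n-G\|_\infty$ on an event where $\hat G_n$ is bounded below, while the paper writes the same thing through a first-order expansion of $1/\hat G_n$. Third, since the lemma is stated with the stratified conditional estimator $\hat G_n(\cdot|u)$ and the stratified martingale $\bar M(u,\cdot)$, your argument should formally be repeated within each coarsening stratum of $c(U)$; this is straightforward because the coarsening is finite and \ref{assump:At-risk prob_covariate} is exactly what guarantees the stratum-level at-risk counts grow linearly. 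Finally, the last step in which you pass from $\sup_{\mathcal{H}_0}\mathbb{E}\{P[\mathcal{B}^2]\}=O(n^{-1})$ to $\sup_{\mathcal{H}_0}P[\mathcal{B}^2]=o_p(1)$ by Markov's inequality deserves a word of care (the expectation bound is pointwise in $(\tilde E,\tilde Q)$, and the interchange of supremum and expectation is being taken implicitly), but the paper's own proof finishes with the same implicit step, so this is a shared feature rather than a gap specific to your argument.
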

\begin{proof}
For any $(\tilde{E},\tilde{Q}) \in \mathcal{H}_0$,
\begin{align} \label{eq:target_convergence.2}
  & P[\IF^*(\cdot|\tilde{E},\tilde{Q},\hat{G}_n)-\IF^*(\cdot|\tilde{E},\tilde{Q},G)]^2\\
  & \leq 2\, \bigg\{ P\bigg[ \bigg\{\frac{(U-\tilde{Q}[U])\delta X}{{\rm Var}_{\tilde{Q}}(U)}\left(\frac{1}{\hat{G}_n(X|U)}-\frac{1}{G(X)}\right)\bigg\}^2 \bigg] \nonumber \\
  & \hspace{1cm} + P\bigg[ \bigg\{\frac{(U-\tilde{Q}[U])}{{\rm Var}_{\tilde{Q}}(U)} \int_{\mathcal{T}}\tilde{E}(U,s)\Big(\hat{M}(U,ds)-M(U,ds)\Big) \bigg\}^2\bigg]  \bigg\}, \nonumber 
\end{align}
where $\hat{M}(U,\cdot)$ is the martingale residual with $\hat{\Lambda}_n(\cdot|U)$ corresponding to $\hat{G}_n(\cdot|U)$.
Applying Taylor expansion on 
$(1/\hat{G}_n-1/G)$, the first term on the right-hand-side of \eqref{eq:target_convergence.2} is dominated by 
\begin{align*}
  &P\bigg[\bigg\{\frac{(U-\tilde{Q}[U])\delta X}{{\rm Var}_{\tilde{Q}}(U)}\left(\frac{1}{G^2(X)}\Big(\hat{G}_n(X|U)-G(X)\Big)\right)\bigg\}^2\bigg]\\
  &\quad \leq \sup_{(u,s) \in \mathbb{R} \times \mathcal{T}}\Big(\hat{G}_n(s|u)-G(s)\Big)^2 P\bigg[\bigg\{\frac{(U-\tilde{Q}[U])\delta X}{{G^2(X)\rm Var}_{\tilde{Q}}(U)} \bigg\}^2\bigg] = o_p(1)
\end{align*}
because $(U-\tilde{Q}[U])\delta X / {\rm Var}_{\tilde{Q}}(U) = O_p(1)$ that is implied by \ref{assump:Covariates} and $|X| \le \tau$; $G(\tau) > 0$ as stated in \ref{assump:Survival function} and the uniform convergence of $\hat{G}_n$.

Then we show that the second term on the right-hand-side of \eqref{eq:target_convergence.2} converges to zero in probability.
As we have seen that $(U-\tilde{Q}[U])/{\rm Var}_{\tilde{Q}}(U) = O_p(1)$ by \ref{assump:Covariates}, it suffices to show that 
\begin{align}\label{eq:equiv_convergence_condition.21}  
  P\bigg[\bigg\{\int_{\mathcal{T}}\tilde{E}(U,s)\Big(\hat{M}(U,ds)-M(U,ds)\Big)\bigg\}^2\bigg] \lcrarrow{p} 0.
\end{align} 
The decomposition $\hat{M}(u,ds) = M(u,ds) + 1(X \geq s)(d\Lambda(s) - d\hat{\Lambda}_n(s|u))$ further reduces proving \eqref{eq:equiv_convergence_condition.21} to showing that
\begin{align}\label{eq:equiv_convergence_condition.22}
  P\bigg[\bigg\{\int_{\mathcal{T}} \tilde{E}(U,s)1(X \geq s)\Big(d\Lambda(s) - d\hat{\Lambda}_n(s|U)\Big) \bigg\}^2\bigg] \lcrarrow{p} 0.
\end{align}
Following $N_n(u,s)$ and $Y_n(u,s)$ as defined in Section \ref{sec:notation}, we easily see that $\bar{M}(u,ds) = N_n(u,ds) - Y_n(u,s) d\Lambda(s)$ is a local martingale with respect to the aggregated filtration that is defined in \eqref{eq:agg_filtration} from Section \ref{sec:notation}.
Note also that
\begin{align} \label{eq:martingale_est_CHF}
  &\hat{\Lambda}_n(t|u)-\Lambda(t) = \int_{-\infty}^{t} \frac{1(Y_n(u,s)>0)}{Y_n(u,s)}\bar{M}(u,ds) + \int_{-\infty}^{t} [1(Y_n(u,s)>0)-1]d\Lambda(s)\\ 
  &\quad = \int_{-\infty}^{t} \frac{1(Y_n(u,s)>0)}{Y_n(u,s)}\bar{M}(u,ds) - \int_{-\infty}^{t} 1(Y_n(u,s)=0)\,d\Lambda(s). \nonumber      
\end{align}
Inserting the decomposition in \eqref{eq:martingale_est_CHF} back to \eqref{eq:equiv_convergence_condition.22}, along with $(a+b)^2 \leq 2\, (a^2+b^2)$, gives that showing \eqref{eq:equiv_convergence_condition.22} is equivalent to showing 
\begin{align} 
  &P\bigg[ \bigg\{\int_{\mathcal{T}}\tilde{E}(U,s)1(X \geq s)\frac{1(Y_n(U,s)>0)}{Y_n(U,s)}\bar{M}(U,ds) \bigg\}^2\bigg] \lcrarrow{p} 0\;;   \label{eq:equiv_convergence_condition.23}\\
  &P\bigg[ \bigg\{\int_{\mathcal{T}}\tilde{E}(U,s)1(X \geq s)1(Y_n(U,s)=0)d\Lambda(s) \bigg\}^2\bigg] \lcrarrow{p} 0.  \label{eq:equiv_convergence_condition.24}
\end{align}
Recall that $\mathbb{E}$ denotes the expectation over $O_1, \ldots, O_n$, regarding $O$ as fixed, in contrast to the expectation $P$ that applies to $O$. Note also that $\mathbb{E}[(\bar{M}(u,ds))^2|\bar{{\cal F}}_{s}]=Y_n(u,s)d\Lambda(s)$.
Then for each $u$, $\tilde{E}(u,s)$ is left-continuous in $s$ and adapted to the filtration $\bar{\mathcal{F}}_{s}$, so
we have the display in \eqref{eq:equiv_convergence_condition.23} by
\begin{align*}
  &E\bigg\{ P\bigg[ \bigg\{\int_{\mathcal{T}} \tilde{E}(U,s)1(X \geq s)\frac{1(Y_n(U,s)>0)}{Y_n(U,s)}\bar{M}(U,ds) \bigg\}^2 \bigg] \bigg\}\\
  &\quad = \mathbb{E} \bigg\{ P\bigg[ \bigg\{\int_{\mathcal{T}} \tilde{E}(U,s)1(X \geq s)\frac{1(Y_n(U,s)>0)}{Y_n(U,s)}\bar{M}(U,ds) \bigg\}^2 \bigg] \bigg\}\\
  &\quad = P\bigg[ \mathbb{E} \bigg\{ \bigg[\int_{\mathcal{T}} \tilde{E}(U,s)1(X \geq s)\frac{1(Y_n(U,s)>0)}{Y_n(U,s)}\bar{M}(U,ds) \bigg]^2 \bigg\} \bigg] \\
  &\quad = P \bigg[ \int_{\mathcal{T}} \tilde{E}^2(U,s)1(X \geq s)\mathbb{E}\bigg\{\frac{1(Y_n(U,s)>0)}{Y_n(U,s)} \bigg\}d\Lambda(s) \bigg] \to 0,
\end{align*}
where the convergence to zero in the last line follows that $\tilde{E} \in \mathcal{E}_0$ is uniformly bounded, that $\inf_{(u,s) \,\in \,\mathbb{R} \times \mathcal{T}} Y_n(u,s) \to \infty$, and the dominated convergence theorem. Similarly, the display in \eqref{eq:equiv_convergence_condition.24} is an immediate consequence of the uniform boundedness of
$\tilde{E}$, $1(Y_n(u,s)=0)=o_p(1)$ for each $(u,s)$ and the dominated convergence theorem.
Hence, we conclude this proof.  
\end{proof}

\begin{lemma} \label{lemma:convergence_processes}
Given \ref{assump:Covariates}, \ref{assump:Survival function}, \ref{assump:At-risk prob_covariate} and \ref{assump:Conditional mean} and $\IF^*$ defined in the main text, we have that
\begin{lemmaitemenum}
  \item \label{eq:process_conv.1} $[\mathbb{P}_n-P] [\IF^*(\cdot|\hat{E}_n,\mathbb{Q}_n,G)-\IF^*(\cdot|\bar{E},Q_u,G)] = o_p(n^{-1/2})$;
  \item \label{eq:process_conv.2} $[\mathbb{P}_n-P] [\IF^*(\cdot|\hat{E}_n,\mathbb{Q}_n,\hat{G}_n)-\IF^*(\cdot|\hat{E}_n,\mathbb{Q}_n,G)] = o_p(n^{-1/2})$.
\end{lemmaitemenum}
\end{lemma}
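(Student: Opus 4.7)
The plan is to deduce both assertions from the asymptotic equicontinuity of empirical processes, in the form of Theorem~2.1 of \cite{vanderVaart&Wellner2007} (equivalently, Lemma~19.24 of \cite{Vaart1998}): if $\mathcal{F}$ is a $P$-Donsker class, $\hat{f}_n \in \mathcal{F}$ with probability tending to one, and $\|\hat{f}_n - f_0\|_{L^2(P)} \to 0$ in probability for some fixed $f_0 \in \mathcal{F}$, then $(\mathbb{P}_n - P)(\hat{f}_n - f_0) = o_p(n^{-1/2})$. The key observation is that Lemmas~\ref{lemma:conditions_for_Thm2.1}--\ref{lemma:L2_conv.2} have already assembled the two ingredients required by this result, so our task is just to verify that the two terms being differenced both lie in the advertised Donsker class with probability tending to one, and then to read off the $L^2(P)$ convergence.

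For \ref{eq:process_conv.1}, since $G\in\mathcal{G}_0$ and, by Lemma~\ref{lemma:conditions_for_Thm2.1}\ref{eq:estimators_in_classes}, $(\hat{E}_n,\mathbb{Q}_n)\in\mathcal{H}_0$ with probability tending to one, both $\IF^*(\cdot|\hat{E}_n,\mathbb{Q}_n,G)$ and $\IF^*(\cdot|\bar{E},Q_u,G)$ lie (with probability tending to one) in the class
\[
  \mathcal{F} \equiv \left\{\IF^*(\cdot|\tilde{E},\tilde{Q},\tilde{G}) : (\tilde{E},\tilde{Q}) \in \mathcal{H}_0,\; \tilde{G}\in\mathcal{G}_0\right\},
\]
which is $P$-Donsker by Lemma~\ref{lemma:conditions_for_Thm2.1}\ref{eq:Donsker_class}. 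Specializing $\tilde{G}=G$ in Lemma~\ref{lemma:L2_conv.1} yields $P[(\IF^*(\cdot|\hat{E}_n,\mathbb{Q}_n,G)-\IF^*(\cdot|\bar{E},Q_u,G))^2]\lcrarrow{p} 0$. Applying Theorem~2.1 of \cite{vanderVaart&Wellner2007} to the centered class $\{f-\IF^*(\cdot|\bar{E},Q_u,G) : f\in\mathcal{F}\}$ with $f_0 = \IF^*(\cdot|\bar{E},Q_u,G)$ delivers the claim.

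For \ref{eq:process_conv.2}, the argument is entirely parallel. By Lemma~\ref{lemma:conditions_for_Thm2.1}\ref{eq:estimators_in_classes}, both $\hat{G}_n\in\mathcal{G}_0$ and $G\in\mathcal{G}_0$ hold with probability tending to one, and $(\hat{E}_n,\mathbb{Q}_n)\in\mathcal{H}_0$ with probability tending to one, so both $\IF^*(\cdot|\hat{E}_n,\mathbb{Q}_n,\hat{G}_n)$ and $\IF^*(\cdot|\hat{E}_n,\mathbb{Q}_n,G)$ again belong to the Donsker class $\mathcal{F}$ from part (i). Here the relevant $L^2(P)$ bound is supplied by Lemma~\ref{lemma:L2_conv.2}, which implies (evaluated at the random $(\hat{E}_n,\mathbb{Q}_n)$ that lies in $\mathcal{H}_0$ with probability tending to one) that $P[(\IF^*(\cdot|\hat{E}_n,\mathbb{Q}_n,\hat{G}_n)-\IF^*(\cdot|\hat{E}_n,\mathbb{Q}_n,G))^2]\lcrarrow{p} 0$. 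One more invocation of Theorem~2.1 of \cite{vanderVaart&Wellner2007} then closes the proof.

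In both parts, the subtlety that needs care is that the ``fixed'' anchor inside Theorem~2.1's hypothesis does not literally match the setup here: in \ref{eq:process_conv.1} the anchor $\IF^*(\cdot|\bar{E},Q_u,G)$ is genuinely deterministic, while in \ref{eq:process_conv.2} the anchor $\IF^*(\cdot|\hat{E}_n,\mathbb{Q}_n,G)$ itself depends on the sample through $(\hat{E}_n,\mathbb{Q}_n)$. The standard workaround is to note that Theorem~2.1 of \cite{vanderVaart&Wellner2007} actually only requires that both arguments of the difference lie in a common Donsker class with probability tending to one, together with $L^2(P)$ closeness; so in part (ii) we apply the theorem to the pair $(\hat{G}_n,G)$ conditionally on $(\hat{E}_n,\mathbb{Q}_n)$, uniformizing over $(\tilde{E},\tilde{Q})\in\mathcal{H}_0$ via the supremum in Lemma~\ref{lemma:L2_conv.2}. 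All the genuinely heavy lifting, the verification of the Donsker property for the stochastic-integral term $\int\tilde{E}(u,s)\tilde{M}(u,ds)$ via the bounded-variation symmetric convex-hull argument in Lemma~\ref{lemma:conditions_for_Thm2.1} and the dominated-convergence bounds in Lemmas~\ref{lemma:L2_conv.1}--\ref{lemma:L2_conv.2}, has already been carried out.
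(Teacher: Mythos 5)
Your proposal is essentially the paper's proof: the paper also reduces both claims to Theorem~2.1 of \cite{vanderVaart&Wellner2007}, with the Donsker property and the containment ${\rm P}(\eta_n\in H_0)\to 1$ supplied by Lemma~\ref{lemma:conditions_for_Thm2.1}, and the uniform $L^2(P)$ convergence (their condition~(3)) supplied by Lemmas~\ref{lemma:L2_conv.1} and~\ref{lemma:L2_conv.2}. One piece of your exposition is misleading, however, and it is worth sharpening because it is exactly where the parametrized theorem earns its keep. You write that in part~\ref{eq:process_conv.2} one handles the random anchor $\IF^*(\cdot\,|\,\hat E_n,\mathbb Q_n,G)$ by ``applying the theorem to $(\hat G_n,G)$ conditionally on $(\hat E_n,\mathbb Q_n)$.'' There is no conditioning argument here, and none would be legitimate: the data driving $(\hat E_n,\mathbb Q_n)$ and the empirical process $\mathbb P_n-P$ are the same $n$ observations, so conditioning would not decouple anything. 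The actual mechanism of Theorem~2.1 of \cite{vanderVaart&Wellner2007} is a role assignment: for part~\ref{eq:process_conv.2} one takes $\theta\equiv(\tilde E,\tilde Q)$ ranging over $\Theta=\mathcal H_0$ and $\eta\equiv\tilde G$ with $\eta_n=\hat G_n$, $\eta_0=G$, $H_0=\mathcal G_0$; the theorem then requires the condition $\sup_{(\tilde E,\tilde Q)\in\mathcal H_0} P[(\IF^*(\cdot|\tilde E,\tilde Q,\hat G_n)-\IF^*(\cdot|\tilde E,\tilde Q,G))^2]\to_p 0$, which is precisely what Lemma~\ref{lemma:L2_conv.2} delivers, and the conclusion is then stated uniformly over $\theta\in\Theta$, so it automatically covers the random choice $\theta=(\hat E_n,\mathbb Q_n)$. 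Your final appeal to ``the supremum in Lemma~\ref{lemma:L2_conv.2}'' is the right instinct; just drop the conditioning language and state it as the uniform-over-$\Theta$ hypothesis of the theorem, as the paper does by naming $\theta$, $\eta$, $\Theta$, $H_0$ explicitly in both parts. Likewise, describing Lemma~19.24 of \cite{Vaart1998} as equivalent to Theorem~2.1 of \cite{vanderVaart&Wellner2007} is not accurate: the former requires a fixed deterministic anchor $f_0$ and would suffice for part~\ref{eq:process_conv.1} (where your specialization $\tilde G=G$ works), but would not cover part~\ref{eq:process_conv.2}, which is the whole reason the paper uses the parametrized version for both.
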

\begin{proof}
As the core of this proof relies on applying Theorem 2.1 of \cite{vanderVaart&Wellner2007}, we first relate our notation to  theirs. We take the functional $\IF^*(\cdot|\tilde{E}, \tilde{Q}, \tilde{G})$ to be  $f_{\theta, \eta}$ in their notation, where $(\theta, \eta)$ could be either $(\tilde{G}, (\tilde{E}, \tilde{Q}))$ or $((\tilde{E}, \tilde{Q}), \tilde{G})$. Further, if we take $\theta \equiv \tilde{G}$ and $\eta \equiv (\tilde{E}, \tilde{Q})$, which corresponds to the situation in \ref{eq:process_conv.1}, we have in their notation $\eta_n \equiv (\hat{E}_n, \mathbb{Q}_n)$, $\eta_0 \equiv (\bar{E}, Q_u)$, $H_0 \equiv {\cal H}_0$, and $\Theta \equiv {\cal G}_0$.
Alternatively, if we take $\theta \equiv (\tilde{E}, \tilde{Q})$ and $\eta \equiv\tilde{G}$, as in \ref{eq:process_conv.2}, then
$\eta_n \equiv \hat{G}_n$, $\eta_0 \equiv G$, $H_0 \equiv {\cal G}_0$, and $\Theta \equiv {\cal H}_0$.  Note that the condition ${\rm P}(\eta_n \in H_0)\to 1$ and the  $P$-Donsker condition of their theorem are satisfied by our Lemma \ref{lemma:conditions_for_Thm2.1}.  The main step is to check their condition (3), namely that $\sup_{\theta \in \Theta} {\rm P}(f_{\theta, \eta_n}- f_{\theta, \eta_0})^2\to_p 0$, in the instances arising here. 

We first show \ref{eq:process_conv.1}.
For any $\epsilon > 0$,
\begin{align*}
  &{\rm P}\Big(\,\Big|\sqrt{n}[\mathbb{P}_n-P] [\IF^*(\cdot|\hat{E}_n,\mathbb{Q}_n,G)-\IF^*(\cdot|\bar{E},Q_u,G)]\Big| > \epsilon \Big)
  \le {\rm P}\Big((\hat{E}_n,\mathbb{Q}_n) \notin \mathcal{H}_0\Big) \\   
  &\quad + {\rm P}\Big(G \notin \mathcal{G}_0\Big) + {\rm P}\Big((\bar{E},Q_u) \notin \mathcal{H}_0\Big) \\
  &\quad + {\rm P}\bigg(\,\sup_{\tilde{G} \in \mathcal{G}_0}\Big|\sqrt{n}[\mathbb{P}_n-P] [\IF^*(\cdot|\hat{E}_n,\mathbb{Q}_n, \tilde{G})-\IF^*(\cdot|\bar{E},Q_u,\tilde{G})]\Big| > \epsilon \bigg) \to 0,
\end{align*}
where the first probability on the right-hand-side goes to zero by \ref{eq:estimators_in_classes} of Lemma \ref{lemma:conditions_for_Thm2.1}, the second and third probability are trivially zero by the definitions of ${\cal G}_0$ and ${\cal H}_0$, and the last probability converges to zero by checking their condition (3) using Lemma \ref{lemma:L2_conv.1}.

Similarly, \ref{eq:process_conv.2} holds by
\begin{align*}
  &{\rm P}\Big(\,\Big|\sqrt{n}[\mathbb{P}_n-P] [\IF^*(\cdot|\hat{E}_n,\mathbb{Q}_n,\hat{G}_n)-\IF^*(\cdot|\hat{E}_n,\mathbb{Q}_n,G)]\Big| > \epsilon \Big)\\
  & \le {\rm P}\Big((\hat{E}_n,\mathbb{Q}_n) \notin \mathcal{H}_0\Big) + {\rm P}\Big( \hat{G}_n \notin \mathcal{G}_0 \Big) + {\rm P}\Big(G \notin \mathcal{G}_0\Big)\\
  & \quad + {\rm P}\bigg(\,\sup_{(\tilde{E},\tilde{Q}) \in \mathcal{H}_0}\Big|\sqrt{n}[\mathbb{P}_n-P] [\IF^*(\cdot|\tilde{E},\tilde{Q},\hat{G}_n)-\IF^*(\cdot|\tilde{E},\tilde{Q},G)]\Big| > \epsilon \bigg) \to 0,
\end{align*}
where the first two probabilities on the right-hand-side converge to zero by \ref{eq:estimators_in_classes} of Lemma \ref{lemma:conditions_for_Thm2.1}, the third probability is obviously zero by the definition of $\mathcal{G}_0$, and the last probability converges to zero by checking the condition (3) in this instance using Lemma \ref{lemma:L2_conv.2}
\end{proof}

Before we proceed with the next lemma, we list some properties that will be repeatedly used later:
\begin{align}
  & \label{eq:prelim.1} \frac{(U-\Qn[U])}{{\rm Var}_{\Qn}(U)}-\frac{(U-Q_u[U])}{{\rm Var}_{Q_u}(U)}
  = \frac{(Q_u-\Qn)[U]}{{\rm Var}_{\Qn}(U)} + (U-Q_u[U])\Big[\frac{1}{{\rm Var}_{\Qn}(U)}-\frac{1}{{\rm Var}_{Q_u}(U)}\Big] \\
  & \quad = o_p(1); \nonumber \\
  & \label{eq:prelim.2} \sqrt{n}\bigg\{\frac{(U-\Qn[U])}{{\rm Var}_{\Qn}(U)}-\frac{(U-Q_u[U])}{{\rm Var}_{Q_u}(U)}\bigg\} \\
  &\quad = \frac{\sqrt{n}(Q_u-\Qn)[U]}{{\rm Var}_{\Qn}(U)} + (U-Q_u[U])\sqrt{n}\Big[\frac{1}{{\rm Var}_{\Qn}(U)}-\frac{1}{{\rm Var}_{Q_u}(U)}\Big]
  = O_p(1), \nonumber  
\end{align}
which follow empirical process theories along with \ref{assump:Covariates} that $U$ is bounded and non-degenerate. In addition, we observe that $Y_n(u,\tau) \sim \mbox{Binomial}(n, p_*)$ with $p_*={\rm P}(X \geq \tau, c(U)=c(u))$ that is positive by \ref{assump:At-risk prob_covariate}.
Along with the monotonicity of $Y_n(u,s)$ in $s \in \mathcal{T}$, Hoeffding's inequality gives that as $n \to \infty$, ${\rm P}(\inf_{s \in \mathcal{T}}Y_n(u,s) \leq \sqrt{n}) = {\rm P}(Y_n(u,\tau) \leq \sqrt{n}) \leq \exp(-2(\sqrt{n}p_*-1)^2) \rightarrow 0$.
Therefore, we have that as $n \to \infty$,
\begin{align}
 &\label{eq:Yn_larger_than_rootn} {\rm P}(\inf_{s \in \mathcal{T}}Y_n(u,s) > \sqrt{n}) \rightarrow 1 \mbox{ for each } u\,;\\
 &\label{eq:conv_prob_Yn} \sqrt{n}{\rm P}\big(Y_n(u,s)=0 \big) \le 
 \sqrt{n}{\rm P}\big(Y_n(u,\tau)=0\big) = \sqrt{n}(1-p_*)^n \rightarrow 0,
 \mbox{ for each } (u,s).
\end{align}

\begin{lemma} \label{lemma:remainder}
Given \ref{assump:Covariates}, \ref{assump:Survival function}, \ref{assump:At-risk prob_covariate}, \ref{assump:Conditional mean} and $\IF^*$ as defined in the main text,
\begin{align*}
  &P[\IF^*(\cdot|\hat{E}_n,\mathbb{Q}_{n},\hat{G}_n)- \IF^*(\cdot|\bar{E},Q_u,\hat{G}_n)  + \IF^*(\cdot|\bar{E},Q_u,G)-\IF^*(\cdot|\hat{E}_n,\mathbb{Q}_{n},G)]\\
  &\quad =o_p(n^{-1/2}).
\end{align*}
\end{lemma}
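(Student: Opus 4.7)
The plan is to decompose $\IF^* = \IF^{ipw} - \IF^{car}$ and analyze the two resulting ``mixed second-difference'' quantities separately. Writing the target expression abstractly as $F((\hat E_n,\mathbb{Q}_n),\hat G_n) - F((\bar E,Q_u),\hat G_n) - F((\hat E_n,\mathbb{Q}_n),G) + F((\bar E,Q_u),G)$, the expression vanishes whenever $F$ is additively separable in $(E,Q)$ and $G$. Hence only the pieces of $\IF^*$ that genuinely couple $(E,Q)$ with $G$ can contribute, and each such piece should be bounded by a product of first-order errors in $\hat G_n - G$ and in $(\hat E_n,\mathbb{Q}_n) - (\bar E,Q_u)$.

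For the $\IF^{ipw}$ part, the two summands of $\IF^{ipw}(E,Q,G)$ that do not involve $G$ cancel in the double difference, leaving
\begin{equation*}
\Delta^{\mathrm{ipw}}(o) = \delta x \left[\tfrac{1}{\hat G_n(x)} - \tfrac{1}{G(x)}\right]\left[\tfrac{u - \mathbb{Q}_n[U]}{\mathrm{Var}_{\mathbb{Q}_n}(U)} - \tfrac{u - Q_u[U]}{\mathrm{Var}_{Q_u}(U)}\right].
\end{equation*}
Under \ref{assump:Covariates}--\ref{assump:Survival function} and the uniform Kaplan--Meier rate, the $G$-bracket is $O_p(n^{-1/2})$ uniformly on $\mathcal{T}$; by the CLT applied to $\mathbb{Q}_n[U]$ and $\mathrm{Var}_{\mathbb{Q}_n}(U)$ (compare \eqref{eq:prelim.2}), the $(E,Q)$-bracket is $O_p(n^{-1/2})$ uniformly in bounded $u$. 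Since $\delta x$ is bounded by $\tau$, $\sup_o |\Delta^{\mathrm{ipw}}(o)| = O_p(n^{-1})$, so $P[\Delta^{\mathrm{ipw}}] = O_p(n^{-1}) = o_p(n^{-1/2})$.

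For the $\IF^{car}$ part, factor $\IF^{car}(E,Q,G)(o) = a(Q,u)\,B(E,G)(o)$ with $a(Q,u) = (u-Q[U])/\mathrm{Var}_Q(U)$ and $B(E,G)(o) = (1-\delta)E(u,x) - \int E(u,s)1(x \ge s)\,d\Lambda(s)$. Since $a$ does not involve $G$ and $B(E,\hat G_n) - B(E,G) = -\int E(u,s) 1(x \ge s)\,d(\hat\Lambda_n-\Lambda)(s)$ is linear in $E$, the double difference reduces to
\begin{equation*}
\Delta^{\mathrm{car}}(o) = -\!\int_{\mathcal{T}}\!\Big\{a(\mathbb{Q}_n,u)\big[\hat E_n(u,s) - \bar E(u,s)\big] + \big[a(\mathbb{Q}_n,u) - a(Q_u,u)\big]\bar E(u,s)\Big\} 1(x \ge s)\,d(\hat\Lambda_n-\Lambda)(s).
\end{equation*}
Interchanging the $P$-expectation with the Stieltjes integral (Fubini) turns the second summand into $\int H_n(s)\,d(\hat\Lambda_n-\Lambda)(s)$, where $H_n$ is a linear combination of the scalars $\mathbb{Q}_n[U] - Q_u[U]$ and $\mathrm{Var}_{\mathbb{Q}_n}^{-1}(U) - \mathrm{Var}_{Q_u}^{-1}(U)$ (both $O_p(n^{-1/2})$) with deterministic bounded $s$-coefficients. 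Using the standard rate $\int f\,d(\hat\Lambda_n - \Lambda) = O_p(n^{-1/2})$ for bounded deterministic $f$, via the martingale expansion \eqref{eq:martingale_est_CHF}, yields $O_p(n^{-1})$ for this summand.

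The hard part will be the first summand, whose $P$-integral equals $\int G_n(s)\,d(\hat\Lambda_n-\Lambda)(s)$ with $G_n(s) = P[a(\mathbb{Q}_n,U)(\hat E_n(U,s) - \bar E(U,s))1(X \ge s)]$. A crude sup--TV bound yields only $o_p(n^{-1/4})$ since the total variation of $\hat\Lambda_n - \Lambda$ is merely $O_p(1)$. My plan is to (i) combine the pointwise $L^1$ rate $\mathbb{E}|\hat E_n(u,s) - \bar E(u,s)| = o(n^{-1/4})$ from \ref{assump:Conditional mean} with the uniform-in-$(u,s)$ boundedness in probability of $\hat E_n - \bar E$ and dominated convergence to establish $\sup_s |G_n(s)| = o_p(n^{-1/4})$, and (ii) integrate by parts: the boundary term $G_n(\tau)(\hat\Lambda_n(\tau) - \Lambda(\tau))$ is $o_p(n^{-1/4})\cdot O_p(n^{-1/2}) = o_p(n^{-3/4})$, while the interior term $\int (\hat\Lambda_n - \Lambda)(s^-)\,dG_n(s)$ is bounded by $\|\hat\Lambda_n - \Lambda\|_\infty$ times the $s$-total-variation of $G_n$, which should itself be $o_p(n^{-1/4})$ because the $s$-dependence of $G_n$ enters only through the uniformly $o_p(n^{-1/4})$ deviation $\hat E_n(u,s) - \bar E(u,s)$ and the left-continuous envelope $\bar E(u,s)$. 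Together these give $o_p(n^{-3/4}) = o_p(n^{-1/2})$, completing the argument.
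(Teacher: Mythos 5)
Your decomposition is effectively the paper's: splitting $\IF^*=\IF^{ipw}-\IF^{car}$, the double difference in $\IF^{ipw}$ reduces to a product of $O_p(n^{-1/2})$ factors, and the $\IF^{car}$ double difference reduces to two pieces, one weighted by $a(\mathbb{Q}_n,U)-a(Q_u,U)$ (which you handle correctly, giving $O_p(n^{-1})$) and one weighted by $a(\mathbb{Q}_n,U)$ against $\hat E_n-\bar E$. This all matches the paper's decomposition \eqref{eq:decomp_four_terms}.

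The gap is in your ``hard part.'' The integration-by-parts plan requires both $\sup_s|G_n(s)|=o_p(n^{-1/4})$ and $\mathrm{TV}_s(G_n)=o_p(n^{-1/4})$ (or at minimum $o_p(1)$), and neither follows from \ref{assump:Conditional mean}. That assumption gives only a \emph{pointwise}-in-$(u,s)$ rate $\mathbb{E}|\hat E_n(u,s)-\bar E(u,s)|=o(n^{-1/4})$ and a bounded-in-probability supremum; it says nothing about the $s$-total-variation of $\hat E_n-\bar E$. Worse, the indicator $1(X\ge s)$ inside $G_n(s)=P[a(\mathbb{Q}_n,U)(\hat E_n(U,s)-\bar E(U,s))1(X\ge s)]$ by itself contributes a term of size $P\bigl[|a(\mathbb{Q}_n,U)|\sup_s|\hat E_n(U,s)-\bar E(U,s)|\bigr]$ to $\mathrm{TV}_s(G_n)$, which under \ref{assump:Conditional mean} is only $O_p(1)$, not $o_p$. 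So the interior IBP term is only $O_p(n^{-1/2})\cdot O_p(1)=O_p(n^{-1/2})$, which does not close the proof. Establishing $\sup_s|G_n(s)|=o_p(n^{-1/4})$ also requires a uniform-in-$s$ rate, and ``pointwise + sup bounded + DCT'' does not upgrade to a uniform $o_p(n^{-1/4})$.

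The paper avoids this by never integrating by parts. It substitutes the martingale representation $\hat\Lambda_n-\Lambda=\int\frac{1(Y_n>0)}{Y_n}\,d\bar M-\int 1(Y_n=0)\,d\Lambda$ directly and bounds the second moment of the resulting counting-process integral via its predictable quadratic variation. The quadratic variation carries a factor $1/Y_n$, which on the event $\inf_s Y_n(s)\ge\sqrt{n}$ is $\lesssim n^{-1/2}$; this supplies an extra $n^{-1/4}$ (in standard-deviation scale) that multiplies the $o(n^{-1/4})$ from \ref{assump:Conditional mean} to give the needed $o(n^{-1/2})$. The predictability of $s\mapsto\hat E_n(u,s)$ with respect to the counting-process filtration --- a requirement built into the estimator's definition in (iii) of Section~\ref{sec:one_step_estimator} --- is precisely what makes this quadratic-variation computation legitimate, and it is the structural feature your argument does not exploit.
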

\begin{proof}
Observe that 
\begin{align*}
  P[\IF^*(\cdot|\hat{E}_n,\mathbb{Q}_{n},\hat{G}_n)] = &\frac{P[(U-\mathbb{Q}_{n}[U])(Y-\Qn[\hat{E}_n(U)])]}{{\rm Var}_{\mathbb{Q}_{n}}(U)}-\frac{{\rm Cov}_{\mathbb{Q}_{n}}(U, \hat{E}_n(U))}{ {\rm Var}_{\mathbb{Q}_{n}}(U)}\\
  & - P[\IF^{CAR}(\cdot|\hat{E}_n,\mathbb{Q}_{n},\hat{G}_n)];\\
  P[\IF^*(\cdot|\hat{E}_n,\mathbb{Q}_{n},G)] = &\frac{P[(U-\mathbb{Q}_{n}[U])(\tilde{Y}-\Qn[\hat{E}_n(U)])]}{{\rm Var}_{\mathbb{Q}_{n}}(U)}-\frac{{\rm Cov}_{\mathbb{Q}_{n}}(U, \hat{E}_n(U))}{ {\rm Var}_{\mathbb{Q}_{n}}(U)}\\
  &-P[\IF^{CAR}(\cdot|\hat{E}_n,\mathbb{Q}_{n},G)]; \\
  P[\IF^*(\cdot|\bar{E},Q_{u},\hat{G}_n)]=&\frac{P[(U-Q_u[U])(Y-Q_u[\bar{E}(U)])]}{{\rm Var}_{Q_u}(U)}-\frac{{\rm Cov}_{Q_u}(U, \bar{E}(U))}{ {\rm Var}_{Q_u}(U)}\\
  &-P[\IF^{CAR}(\cdot|\bar{E},Q_{u},\hat{G}_n)];\\
  P[\IF^*(\cdot|\bar{E},Q_{u},G)]
   =&\frac{P[(U-Q_u[U])(\tilde{Y}-Q_u[\bar{E}(U)])]}{{\rm Var}_{Q_u}(U)}-\frac{{\rm Cov}_{Q_u}(U, \bar{E}(U))}{ {\rm Var}_{Q_u}(U)}\\
   &-P[\IF^{CAR}(\cdot|\bar{E},Q_{u},G)],
\end{align*}
which implies that
\begin{align} \label{eq:decomp_four_terms}
  &P[\IF^*(\cdot|\hat{E}_n,\mathbb{Q}_{n},\hat{G}_n)- \IF^*(\cdot|\bar{E},Q_u,\hat{G}_n)  + \IF^*(\cdot|\bar{E},Q_u,G)-\IF^*(\cdot|\hat{E}_n,\mathbb{Q}_{n},G)] \\
  & = P\bigg[\bigg\{\frac{(U-\mathbb{Q}_{n}[U])}{{\rm Var}_{\mathbb{Q}_{n}}(U)}-\frac{(U-Q_u[U])}{{\rm Var}_{Q_u}(U)}\bigg\}(Y-\tilde{Y})\bigg] \nonumber\\
  &\quad - P\left[\frac{(U-Q_u[U])}{{\rm Var}_{Q_u}(U)}\int_{\mathcal{T}}\big(\hat{E}_n(U,s)-\bar{E}(U,s)\big)\big[\hat{M}(U,ds)-M(U,ds)\big]\right] \nonumber \\
  &\quad - P\bigg[\bigg[\frac{(U-\Qn[U])}{{\rm Var}_{\Qn}(U)}-\frac{(U-Q_u[U])}{{\rm Var}_{Q_u}(U)}\bigg]\int_{\mathcal{T}}\hat{E}_n(U,s)\big[\hat{M}(U,ds)-M(U,ds)\big]\bigg]. \nonumber
\end{align}

By the decomposition on the left-hand-side of \eqref{eq:prelim.1}, along with $Y = \delta X/\hat{G}_n(X)$ and $\tilde{Y} = \delta X/G(X)$,
we have the first quantity on the right-hand-side of \eqref{eq:decomp_four_terms} as
\begin{align*}
  P\bigg[\bigg\{\frac{(Q_u-\Qn)[U]}{{\rm Var}_{\Qn}(U)} + (U-Q_u[U])\bigg[\frac{1}{{\rm Var}_{\Qn}(U)}-\frac{1}{{\rm Var}_{Q_u}(U)}\bigg]\bigg\}\bigg\{\frac{\delta X}{\hat{G}_n(X)}-\frac{\delta X}{G(X)}\bigg\}\bigg],
\end{align*}
which is $o_p(n^{-1/2})$ by \eqref{eq:prelim.2} and the uniform consistency of $\hat{G}_n$. 

Then we deal with the last two terms on the right-hand-side of \eqref{eq:decomp_four_terms}.
By the decomposition $\hat{M}(U,ds)-M(U,ds) = 1(X \geq s)(d\Lambda(s) - d\hat{\Lambda}_n(s|U))$, the two terms turn into
\begin{align} \label{eq:decomp_four_terms.1}
  &-P\left[\frac{(U-Q_u[U])}{{\rm Var}_{Q_u}(U)}\int_{\mathcal{T}} \big(\hat{E}_n(U,s)-\bar{E}(U,s)\big)1(X \geq s)(d\Lambda(s) - d\hat{\Lambda}_n(s|U))\right] \\
  & - P\left[\bigg[\frac{(U-\Qn[U])}{{\rm Var}_{\Qn}(U)}-\frac{(U-Q_u[U])}{{\rm Var}_{Q_u}(U)}\bigg]\int_{\mathcal{T}} \hat{E}_n(U,s)1(X \geq s)(d\Lambda(s) - d\hat{\Lambda}_n(s|U))\right]. \nonumber
\end{align}
Now we tackle the first term of \eqref{eq:decomp_four_terms.1}, and apply similar techniques to the second term. According to the decomposition in \eqref{eq:martingale_est_CHF}, the first term of \eqref{eq:decomp_four_terms.1} is further expressed as
\begin{align}
  &P\left[\frac{(U-Q_u[U])}{{\rm Var}_{Q_u}(U)}\int_{\mathcal{T}} \big(\hat{E}_n(U,s)-\bar{E}(U,s)\big)1(X \geq s)\frac{1(Y_n(U,s)>0)}{Y_n(U,s)}\bar{M}(U,ds)\right] \label{eq:equiv_convergence_conditions_11}\\
  & - P\left[\frac{(U-Q_u[U])}{{\rm Var}_{Q_u}(U)}\int_{\mathcal{T}}\big(\hat{E}_n(U,s)-\bar{E}(U,s)\big)1(X \geq s)1(Y_n(U,s)=0)\,d\Lambda(s) \right].   \label{eq:equiv_convergence_conditions_12}
\end{align}
The quantity \eqref{eq:equiv_convergence_conditions_11} is $o_p(n^{-1/2})$ as shown in what follows. First by Jensen's inequality, the second moment of the quantity \eqref{eq:equiv_convergence_conditions_11} (multiplied by $\sqrt{n}$) is bounded by
\begin{align} \label{eq:second_moments_upper_bound}
  &E\bigg\{P\bigg[\bigg(\frac{(U-Q_u[U])}{{\rm Var}_{Q_u}(U)}\int_{\mathcal{T}} \sqrt{n}\big(\hat{E}_n(U,s)-\bar{E}(U,s)\big)\frac{1(Y_n(U,s)>0)}{Y_n(U,s)}\bar{M}(U,ds)\bigg)^2\bigg]\bigg\}\\
  & = P\bigg[\frac{(U-Q_u[U])^2}{{\rm Var}_{Q_u}^2(U)}\int_{\mathcal{T}} n\mathbb{E}\Big\{\big(\hat{E}_n(U,s)-\bar{E}(U,s)\big)^2\frac{1(Y_n(U,s)>0)}{Y_n^2(U,s)}\mathbb{E}\big[\big(\bar{M}(U,ds)\big)^2\big|\bar{{\cal F}}_s\big]\Big\}\bigg] \nonumber \\
  & = P\bigg[\frac{(U-Q_u[U])^2}{{\rm Var}_{Q_u}^2(U)}\int_{\mathcal{T}} n\mathbb{E}\Big\{\big(\hat{E}_n(U,s)-\bar{E}(U,s)\big)^2\frac{1(Y_n(U,s)>0)}{Y_n(U,s)}\Big\}d\Lambda(s)\bigg] \nonumber \\
  & \le P\bigg[\frac{(U-Q_u[U])^2}{{\rm Var}_{Q_u}^2(U)}\int_{\mathcal{T}} n\mathbb{E}\Big\{\big(\hat{E}_n(U,s)-\bar{E}(U,s)\big)^2\frac{1}{\sqrt{n}}\Big\}d\Lambda(s)\bigg] \to 0. \nonumber
\end{align}
The second line of \eqref{eq:second_moments_upper_bound} holds by the fact that $\bar{M}(u,ds)$ is a local martingale with respect to the aggregated filtration $\bar{\mathcal{F}}_s$ (defined in \eqref{eq:agg_filtration} from Section \ref{sec:notation}), and that $\hat{E}_n$ and $Y_n$ are predictable with respect to $\bar{\mathcal{F}}_s$ (see in the main text for the details of $\hat{E}_n$). Moreover, the inequality in \eqref{eq:second_moments_upper_bound} holds by $\inf_{s \in {\cal T}}Y_n(u,s) > \sqrt{n}$ with probability tending to one for each $u$ as given in \eqref{eq:Yn_larger_than_rootn}, 
while the final convergence to zero follows \ref{assump:Conditional mean} that $\mathbb{E}\{|\hat{E}_n(u,s)-\bar{E}(u,s)|\}=o(n^{-1/4})$ for each $(u,s)$, along with using the dominated convergence theorem.
Therefore by Chebyshev's inequality, \eqref{eq:second_moments_upper_bound} implies that the quantity \eqref{eq:equiv_convergence_conditions_11} is $o_p(n^{-1/2})$.  

The quantity \eqref{eq:equiv_convergence_conditions_12} is $o_p(n^{-1/2})$ because
\begin{align*}
  &\sqrt{n}E\bigg|-P\bigg[\frac{(U-Q_u[U])}{{\rm Var}_{Q_u}(U)}\int_{\mathcal{T}}\big(\hat{E}_n(U,s)-\bar{E}(U,s)\big)1(X \geq s)1(Y_n(U,s)=0)\,d\Lambda(s) \bigg]\bigg|\\
  & \le \sqrt{n}\mathbb{E}\bigg\{P\left[\frac{|U-Q_u[U]|}{{\rm Var}_{Q_u}(U)}\int_{\mathcal{T}} \big|\hat{E}_n(U,s)-\bar{E}(U,s)\big|1(X \geq s)1(Y_n(U,s)=0)d\Lambda(s) \right]\bigg\}\\
  & = \sqrt{n}P\left[\frac{|U-Q_u[U]|}{{\rm Var}_{Q_u}(U)}\int_{\mathcal{T}} 1(X \geq s)\mathbb{E}\Big\{\big|\hat{E}_n(U,s)-\bar{E}(U,s)\big|1(Y_n(U,s)=0)\Big\}d\Lambda(s) \right]\\
  & \le \sqrt{n}P\left[\frac{|U-Q_u[U]|}{{\rm Var}_{Q_u}(U)}\int_{\mathcal{T}} \sqrt{\mathbb{E}\Big\{\big|\hat{E}_n(U,s)-\bar{E}(U,s)\big|^2\Big\}\mathbb{E}\big\{1(Y_n(U,s)=0)\big\}}d\Lambda(s) \right] \to 0,
\end{align*}
where the last inequality holds by the fact that $\Lambda$ is nondecreasing and $1(X \ge s) \le 1$, and the Cauchy--Schwarz inequality.
The final convergence to zero follows because \ref{assump:Conditional mean} that $\mathbb{E}\{|\hat{E}_n(u,s)-\bar{E}(u,s)|\}=o(n^{-1/4})$ for each $(u,s)$, and by \eqref{eq:conv_prob_Yn} that $$\sqrt{n}\mathbb{E}\{1(Y_n(U,s)=0)\} = \sqrt{n}{\rm P}(Y_n(U,s)=0) \le \sqrt{n}{\rm P}(Y_n(U,\tau)=0) \to 0,$$ together with \ref{assump:Covariates} that $U$ is bounded and non-degenerate and using the dominated convergence theorem. 
By Chebyshev's inequality, along with the above displays, the quantity \eqref{eq:equiv_convergence_conditions_12} is $o_p(n^{-1/2})$.  

Similarly, the second term of \eqref{eq:decomp_four_terms.1} is expressed as
\begin{align}
  &P\left[\bigg[\frac{(U-\Qn[U])}{{\rm Var}_{\Qn}(U)}-\frac{(U-Q_u[U])}{{\rm Var}_{Q_u}(U)}\bigg]\int_{\mathcal{T}} \hat{E}_n(U,s)1(X \geq s)\frac{1(Y_n(U,s)>0)}{Y_n(U,s)}\bar{M}(U,ds)\right] \label{eq:equiv_convergence_conditions_21}\\
  & - P\left[\bigg[\frac{(U-\Qn[U])}{{\rm Var}_{\Qn}(U)}-\frac{(U-Q_u[U])}{{\rm Var}_{Q_u}(U)}\bigg]\int_{\mathcal{T}}\hat{E}_n(U,s)1(X \geq s)1(Y_n(U,s)=0)\,d\Lambda(s) \right].   \label{eq:equiv_convergence_conditions_22}
\end{align}
The quantity \eqref{eq:equiv_convergence_conditions_21} is $o_p(n^{-1/2})$, applying
similar arguments to those used for \eqref{eq:equiv_convergence_conditions_11}. 
It therefore suffices to show that the second moment of the quantity \eqref{eq:equiv_convergence_conditions_21} (multiplied by $\sqrt{n}$) converges to zero in probability. 
As in \eqref{eq:second_moments_upper_bound}, we see the second moment of the quantity \eqref{eq:equiv_convergence_conditions_21} (multiplied by $\sqrt{n}$) is bounded by
\begin{align*}
  &E\bigg\{nP\bigg[\left(\bigg[\frac{(U-\Qn[U])}{{\rm Var}_{\Qn}(U)}-\frac{(U-Q_u[U])}{{\rm Var}_{Q_u}(U)}\bigg]\int_{\mathcal{T}} \hat{E}_n(U,s)\frac{1(Y_n(U,s)>0)}{Y_n(U,s)}\bar{M}(U,ds)\right)^2\bigg]\bigg\}\\
  & = P\bigg[n\mathbb{E}\bigg\{\bigg[\frac{(U-\Qn[U])}{{\rm Var}_{\Qn}(U)}-\frac{(U-Q_u[U])}{{\rm Var}_{Q_u}(U)}\bigg]^2\int_{\mathcal{T}} \hat{E}^2_n(U,s)\frac{1(Y_n(U,s)>0)}{Y_n(U,s)}d\Lambda(s)\bigg\}\bigg] \to 0,
\end{align*}
where the convergence to zero follows \eqref{eq:prelim.2}, that $\hat{E}_n(u,s)$ is bounded in probability for each $(u,s)$ (see in the main text for the details of $\hat{E}_n$), and $\inf_{(u,s) \,\in \,\mathbb{R} \times \mathcal{T}} Y_n(u,s) \to \infty$. Thus, we have that the quantity \eqref{eq:equiv_convergence_conditions_21} is $o_p(n^{-1/2})$, by Chebyshev's inequality.  
Analogously, we see the quantity \eqref{eq:equiv_convergence_conditions_22} is $o_p(n^{-1/2})$ as follows. The expectation of the absolute value of this quantity (multiplied by $\sqrt{n}$) is bounded by
\begin{align*}
  \sqrt{n}E\bigg\{P\bigg[\,\bigg|\frac{(U-\Qn[U])}{{\rm Var}_{\Qn}(U)}-\frac{(U-Q_u[U])}{{\rm Var}_{Q_u}(U)}\bigg|\int_{\mathcal{T}}\big|\hat{E}_n(U,s)\big|1(Y_n(U,s)=0)\,d\Lambda(s) \bigg]\bigg\} \to 0,
\end{align*}
using \eqref{eq:prelim.2}, that $\hat{E}_n(u,s)$ is bounded in probability and as implied by \eqref{eq:conv_prob_Yn} that $1(Y_n(u,s)=0) = o_p(n^{-1/2})$, and the dominated convergence theorem. Along with Chebyshev's inequality, this gives the quantity \eqref{eq:equiv_convergence_conditions_22} is $o_p(n^{-1/2})$.
Hence, we complete the proof.
\end{proof}

\begin{lemma}  \label{lemma:if_SnG}
Let $\hat{P}_n' = (\hat{E}_n, \mathbb{Q}_{n}, G)$, $\IF^*$ and $\IF^{\dagger}$ be as respectively defined in the main text. Given \ref{assump:Covariates}, \ref{assump:Survival function}, \ref{assump:At-risk prob_covariate} and \ref{assump:Conditional mean},
\begin{align*}
    S(\Pn, \hat{P}'_n)-\Psi(P) = [\Pn-P]\big\{\IF^*(\cdot|\bar{E},Q_u,G)+\IF^{\dagger}(\cdot|\bar{E},P)\big\} + o_p(n^{-1/2}).
\end{align*}
\end{lemma}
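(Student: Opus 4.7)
The plan is to split the error into a plug-in bias piece and an empirical-process piece:
\[
  S(\Pn, \hat P'_n) - \Psi(P) = \bigl\{\Psi(\hat P_n) - \Psi(P) + P\,\IF^*(\cdot|\hat P'_n)\bigr\} + [\Pn-P]\IF^*(\cdot|\hat P'_n),
\]
and handle each side separately. For the empirical-process piece, Lemma~\ref{lemma:convergence_processes}(\ref{eq:process_conv.1}) immediately yields $[\Pn-P]\IF^*(\cdot|\hat P'_n) = [\Pn-P]\IF^*(\cdot|\bar E, Q_u, G) + o_p(n^{-1/2})$, since both $(\hat E_n,\mathbb{Q}_n)$ and $(\bar E, Q_u)$ satisfy the Donsker hypothesis there with $G$ held fixed.

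For the bias piece, I will first argue that $P\,\IF^{car}(\cdot|\hat P'_n)=0$: conditional on $U$ the integrand $s\mapsto \hat E_n(U,s)$ is $\mathcal{F}_{s-}$-predictable, and $M$ is a $P$-martingale (the censoring $C$ is independent of $(T,U)$), so the inner integral has conditional mean zero. Hence the bias reduces to $\Psi(\hat P_n) - \Psi(P) + P\,\IF^{ipw}(\cdot|\hat P'_n)$. The key algebraic identity, obtained exactly as in the derivation leading to \eqref{eq:if_onestep}, is
\[
  \Psi(\hat P_n) + \Pn\IF^{ipw}(\cdot|\hat P'_n) \;=\; \Psi_G(\Pn) \;=\; \mathrm{Cov}_{\Pn}(U,\tilde Y)/\mathrm{Var}_{\Pn}(U),
\]
because $-\hat\beta\,\Pn[(U-\hat\mu)^2]/\hat\sigma^2=-\hat\beta$ and the $\Qn[\hat E_n(U)]\Pn[U-\hat\mu]$ piece vanishes. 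Substituting this back gives
\[
  S(\Pn, \hat P'_n) - \Psi(P) \;=\; \bigl[\Psi_G(\Pn) - \Psi(P)\bigr] \;-\; \Pn\IF^{car}(\cdot|\hat P'_n) \;+\; o_p(n^{-1/2}).
\]

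Next I would apply \eqref{eq:if_G_fixed} together with Lemma~\ref{lemma:remainder_G} to linearize the KSV-with-known-$G$ piece as $\Psi_G(\Pn) - \Psi(P) = [\Pn-P]\IF^{ipw}(\cdot|E_0,Q_u,G) + o_p(n^{-1/2})$, and invoke a variant of Lemma~\ref{lemma:convergence_processes}(\ref{eq:process_conv.1}) restricted to the $\IF^{car}$ sub-class (which still satisfies the Donsker and $L^2$-consistency conditions proved in Lemmas~\ref{lemma:conditions_for_Thm2.1}--\ref{lemma:L2_conv.1} for the subclass $\mathcal{F}_3$) to deduce $\Pn\IF^{car}(\cdot|\hat P'_n) = [\Pn-P]\IF^{car}(\cdot|\bar E,Q_u,G) + o_p(n^{-1/2})$. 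Regrouping around $(\bar E,Q_u)$ gives
\[
  S(\Pn,\hat P'_n) - \Psi(P) = [\Pn-P]\IF^*(\cdot|\bar E,Q_u,G) + [\Pn-P]\bigl\{\IF^{ipw}(\cdot|E_0,Q_u,G) - \IF^{ipw}(\cdot|\bar E,Q_u,G)\bigr\} + o_p(n^{-1/2}).
\]
Using ${\rm Cov}_P(U,T)={\rm Cov}_{Q_u}(U,E_0(U))$, the bracketed difference carries a quadratic piece $(\bar\beta-\beta_P)(U-\mu)^2/\sigma^2$ that matches $\IF^{\dagger}(o|\bar E,P)=\tfrac{{\rm Cov}_{Q_u}(U,\bar E(U)-E_0(U))}{\sigma^4}[(u-\mu)^2-\sigma^2]$ modulo an additive constant killed by $[\Pn-P]$, giving the claimed representation.

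The main obstacle will be the careful bookkeeping of the lower-order cross terms in the bias decomposition, particularly showing that products involving $\hat E_n-\bar E$ and $\mathbb{Q}_n-Q_u$ are genuinely $o_p(n^{-1/2})$. The $o(n^{-1/4})$ pointwise rate in \ref{assump:Conditional mean}, combined with Cauchy--Schwarz and the uniform boundedness of $\hat E_n$ and $\bar E$, will absorb them into the remainder; the leading surviving cross term is $(\bar\beta-\beta_P)(\hat\sigma^2-\sigma^2)/\sigma^2$, whose identification as $[\Pn-P]\IF^{\dagger}$ via $\hat\sigma^2-\sigma^2=[\Pn-P](U-\mu)^2 + O_p(n^{-1})$ is the crux of the argument.
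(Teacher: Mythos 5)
Your proof takes a genuinely different, and arguably cleaner, route than the paper's. The paper opens with the decomposition $S(\Pn,\hat P'_n)-\Psi(P)=[\Pn-P]\IF^*(\cdot|\hat P'_n)+\{\Psi(\hat P'_n)-\Psi(P)+P\IF^*(\cdot|\hat P'_n)\}$ and then directly expands the plug-in bias term into five pieces $(i)$--$(v)$, algebraically rearranging to identify the leading $O_p(n^{-1/2})$ contribution as $[\Pn-P]\IF^{\dagger}$. You instead exploit the exact identity $S(\Pn,\hat P'_n)=\Psi_G(\Pn)-\Pn\IF^{car}(\cdot|\hat P'_n)$ (a direct consequence of \eqref{eq:if_onestep} with $G$ known), linearize $\Psi_G(\Pn)$ via Mallows/\eqref{eq:if_G_fixed} and Lemma~\ref{lemma:remainder_G}, and linearize the $\IF^{car}$ piece via $P\,\IF^{car}(\cdot|\hat P'_n)=0$ together with a Donsker/$L^2$-consistency argument on the $\mathcal{F}_3$ sub-class from Lemma~\ref{lemma:conditions_for_Thm2.1}. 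These intermediate steps are sound, and the exact-identity route avoids the somewhat fragile term-by-term bookkeeping that the paper carries out.

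The gap is at the final matching step. After regrouping you are left with $[\Pn-P]\{\IF^{ipw}(\cdot|E_0,Q_u,G)-\IF^{ipw}(\cdot|\bar E,Q_u,G)\}$, which you assert is a quadratic-in-$(u-Q_u[U])$ piece matching $\IF^{\dagger}$ modulo an additive constant. From \eqref{eq:reexpressed_if} this difference actually equals
\[
  \frac{(u-Q_u[U])\bigl(Q_u[\bar E(U)]-Q_u[E_0(U)]\bigr)}{{\rm Var}_{Q_u}(U)}
  +\frac{{\rm Cov}_{Q_u}(U,\bar E(U)-E_0(U))}{{\rm Var}^2_{Q_u}(U)}(u-Q_u[U])^2,
\]
so besides the quadratic piece there is a linear piece whose contribution under $[\Pn-P]$ is
\[
  \frac{Q_u[\bar E(U)]-Q_u[E_0(U)]}{{\rm Var}_{Q_u}(U)}\bigl(\Qn[U]-Q_u[U]\bigr),
\]
which is $O_p(n^{-1/2})$, not $o_p(n^{-1/2})$, unless $Q_u[\bar E(U)]=Q_u[E_0(U)]$. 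Assumption~\ref{assump:Conditional mean} on its own does not force this mean-matching. It does hold for the least-squares estimator of Remark~\ref{remark:est_conditional_res_life}/\eqref{eq:est_E}, whose limit $\bar E$ satisfies $Q_u[\bar E(U)]=P[\tilde Y]=Q_u[E_0(U)]$ automatically because the intercept is calibrated at $\Pn[Y]$; the paper's own handling of the cross term $P[(U-\Qn[U])(\tilde Y-\Qn[\hat E_n(U)])]$ relies on the same cancellation. You should either invoke this intercept calibration explicitly, or restrict $\bar E$ in \ref{assump:Conditional mean} to the class with $Q_u[\bar E(U)]=Q_u[E_0(U)]$ before asserting the final equality.
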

\begin{proof}
\begin{align*}
  &S(\Pn, \hat{P}'_n)-\Psi(P) = \Psi(\hat{P}'_n) + \Pn \IF^*(\cdot|\hat{P}'_n) - \Psi(P) \\
  &\hspace{0.1cm} = [\Pn-P] \IF^*(\cdot|\hat{P}'_n) + \left[\Psi(\hat{P}'_n)-\Psi(P) + P\IF^*(\cdot|\hat{P}'_n)\right] \\
  &\hspace{0.1cm} = [\Pn-P] \IF^*(\cdot|\hat{P}'_n) + [\Pn-P]\IF^{\dagger}(\cdot|\bar{E},P) + o_p(n^{-1/2})\\
  &\hspace{0.1cm} = [\Pn-P]\big\{\IF^*(\cdot|\bar{E},Q_u,G)+\IF^{\dagger}(\cdot|\bar{E},P)\big\} + [\Pn-P] [\IF^*(\cdot|\hat{P}'_n)-\IF^*(\cdot|\bar{E},Q_u,G)]\\
  &\hspace{0.5cm} + o_p(n^{-1/2})\\
  &\hspace{0.1cm} = [\Pn-P]\big\{\IF^*(\cdot|\bar{E},Q_u,G)+\IF^{\dagger}(\cdot|\bar{E},P)\big\} + o_p(n^{-1/2}), 
\end{align*}
where the last equality follows by \ref{eq:process_conv.1} of Lemma \ref{lemma:convergence_processes}. The third equality is shown below.

Recalling the definition in \eqref{eq:reexpressed_if} of the main text,
\begin{align*} 
  &\Psi(\hat{P}'_n)-\Psi(P)+P\IF^*(\cdot|\hat{P}'_n)=\frac{{\rm Cov}_{\mathbb{Q}_{n}}(U, \hat E_n(U))}{{\rm Var}_{\mathbb{Q}_{n}}(U)}-\frac{{\rm Cov}_{Q_u}(U,E[\tilde{Y}|U])}{{\rm Var}_{Q_u}(U)}\\
  & +\frac{1}{{\rm Var}_{\mathbb{Q}_{n}}(U)}P(U-\mathbb{Q}_{n}[U])\big(\delta X/G(X)-\mathbb{Q}_{n}[\hat{E}_n(U)]\big) \nonumber \\
  & -\frac{1}{{\rm Var}_{\mathbb{Q}_{n}}^2(U)}
  P(U-\mathbb{Q}_{n}[U])^2{\rm Cov}_{\mathbb{Q}_{n}}(U, \hat{E}_n(U))
  -P\left[\frac{(U-\mathbb{Q}_{n}[U])}{{\rm Var}_{\mathbb{Q}_{n}}(U)}\int_{\mathcal{T}} \hat{E}_n(U,s)dM(s)\right].\nonumber
\end{align*}
Inserting $\tilde{Y}=\delta X/G(X)$, $\mathbb{Q}_{n}[U] = n^{-1}\sum_{i=1}^nU_i$ and $\mathbb{Q}_{n}[\hat{E}_n(U)]=n^{-1}\sum_{i=1}^n\hat{E}_n(U_i)$ back into the third term of the above display implies that 
\begin{align*}
  & P(U-\mathbb{Q}_{n}[U])\big(\delta X/G(X)-\mathbb{Q}_{n}[\hat{E}_n(U)]\big)\\
  & = P\Big[\Big\{\Big(U-Q_u[U]\Big)+\Big(Q_u[U]-\frac{1}{n}\sum_{i=1}^nU_i\Big)\Big\}\tilde{Y}\Big]-\frac{1}{n}\sum_{i=1}^nP\Big[\Big(U-\frac{1}{n}\sum_{i=1}^nU_i\Big)\hat{E}_n(U_i)\Big].
\end{align*}
Recall also that $P$ denotes the expectation that applies only to $O \sim P$ and not to any estimator composed by $\{O_1,\ldots,O_n\}$ and let $\tilde{Y} = \tilde{Y}_1$ and $U = U_1$ without loss of generality, so we therefore see that
\begin{align*}
  &P\Big[\Big(Q_u[U]-\frac{1}{n}\sum_{i=1}^nU_i\Big)\tilde{Y}\Big] = Q_u[U]P[\tilde{Y}] - \frac{1}{n}P[U_1\tilde{Y}_1]  - \frac{1}{n}\sum_{i\not=1}^n Q_u[U_i]P[\tilde{Y}_1]\\ 
  & = \frac{1}{n}Q_u[U]P[\tilde{Y}] - \frac{1}{n}P[U\tilde{Y}],
\end{align*}
and
\begin{align*}
  &\frac{1}{n}\sum_{i=1}^nP\Big[\Big(U-\frac{1}{n}\sum_{i=1}^nU_i\Big)\hat{E}_n(U_i)\Big]  
  = \frac{1}{n}P[U_1\hat{E}_n(U_1)] + \frac{1}{n}\sum_{i\not=1}^n Q_u[U_1]P[\hat{E}_n(U_i)]\\
  & \quad - \frac{1}{n^2}\sum_{i=1}^nP[U_i\hat{E}_n(U_i)] - \frac{1}{n^2}\sum_{\{i:\,i=1,\ldots,n;\, i\not=j\}} \sum_{j=1}^nQ_u[U_i]P[\hat{E}_n(U_j)]=0.
\end{align*}
Hence combining the results in the above three displays, we have that
\begin{align*}
  & \frac{1}{{\rm Var}_{\mathbb{Q}_{n}}(U)}P(U-\mathbb{Q}_{n}[U])\big(\delta X/G(X)-\mathbb{Q}_{n}[\hat{E}_n(U)]\big)\\
  & = \frac{1}{{\rm Var}_{\mathbb{Q}_{n}}(U)}\bigg\{P\big[(U-Q_u[U])\tilde{Y}\big] + \frac{1}{n}Q_u[U]P[\tilde{Y}] - \frac{1}{n}P[U\tilde{Y}]\bigg\}\\
  &=\frac{{\rm Cov}_{Q_u}
  (U,E[\tilde{Y}|U])}{{\rm Var}_{\mathbb{Q}_{n}}(U)} + o_p(n^{-1/2}),
\end{align*}
following \ref{assump:Covariates} and \ref{assump:Survival function} so that
$Q_u[U]$, $P[\tilde{Y}]$ and $P[U\tilde{Y}]$ are bounded, and ${\rm Var}_{\mathbb{Q}_{n}}(U)$ is bounded away from zero almost surely.

Let
\begin{align*}  
  (i) & \equiv \frac{{\rm Cov}_{\mathbb{Q}_{n}}(U, \hat E_n(U))}{{\rm Var}_{\mathbb{Q}_{n}}(U)};\; 
  (ii) \equiv -\frac{{\rm Cov}_{Q_u}(U,E[\tilde{Y}|U])}{{\rm Var}_{Q_u}(U)}; \; (iii) \equiv \frac{{\rm Cov}_{Q_u}
  (U,E[\tilde{Y}|U])}{{\rm Var}_{\mathbb{Q}_{n}}(U)}; \\ 
  (iv) & \equiv -\frac{1}{{\rm Var}_{\mathbb{Q}_{n}}^2(U)}
  P(U-\mathbb{Q}_{n}[U])^2{\rm Cov}_{\mathbb{Q}_{n}}(U, \hat{E}_n(U)); \nonumber \\
  (v) &\equiv -P\left[\frac{(U-\mathbb{Q}_{n}[U])}{{\rm Var}_{\mathbb{Q}_{n}}(U)}\int_{\mathcal{T}} \hat{E}_n(U,s)dM(s)\right]. \nonumber
\end{align*}
so we have 
\begin{align} \label{eq:remainder_decomp}
  \Psi(\hat{P}'_n)-\Psi(P)+P\IF^*(\cdot|\hat{P}'_n) = (i) + (ii) + (iii) + (iv) + (v).    
\end{align}
The quantity $(iv)$ could be simplified as
\begin{align*}
  (iv) &= -\frac{1}{{\rm Var}_{\mathbb{Q}_{n}}^2(U)}
  P(U-Q_u[U]+Q_u[U]-\mathbb{Q}_{n}[U])^2{\rm Cov}_{\mathbb{Q}_{n}}(U, \hat{E}_n(U))\\
  & = -\frac{{\rm Cov}_{\mathbb{Q}_{n}}(U, \hat{E}_n(U))}{{\rm Var}_{\mathbb{Q}_{n}}^2(U)}\bigg[{\rm Var}_{Q_u}(U)+
  P(Q_u[U]-\mathbb{Q}_{n}[U])^2 - \frac{2}{n}{\rm Var}_{Q_u}(U)\bigg]\\ 
  & = -\frac{{\rm Cov}_{\mathbb{Q}_{n}}(U, \hat{E}_n(U))}{{\rm Var}_{\mathbb{Q}_{n}}^2(U)}{\rm Var}_{Q_u}(U) + o_p(n^{-1/2}),
\end{align*}
following the fact that $\sqrt{n}(Q_u[U]-\mathbb{Q}_{n}[U])^2 = o_p(1)$, ${\rm Cov}_{\mathbb{Q}_{n}}(U, \hat{E}_n(U))/{\rm Var}_{\mathbb{Q}_{n}}^2(U) = O_p(1)$ and ${\rm Var}_{Q_u}(U)$ is bounded,
by assumptions \ref{assump:Covariates} and the details of $\hat{E}_n$ in the main text.
Moreover, 
\begin{align*}
  (v) = -P\left[\frac{(U-\Qn[U])}{{\rm Var}_{\Qn}(U)}\int_{\mathcal{T}} \hat{E}_n(U,s)P[dM(s)|U]\right] = 0, 
\end{align*}
following $dM(s) = 1(X \in ds,\delta=0)-1(X \ge s)d\Lambda(s)$ and $d\Lambda(s)={\rm P}(C \in ds)/{\rm P}(C \ge s)$, so that $P[dM(s)|U]=P[1(T \ge s)|U]P[1(C \in ds)-1(C \ge s)d\Lambda(s)] = 0$, together with the independent censoring assumption.

Inserting the above results into \eqref{eq:remainder_decomp}, we have that
\begin{align} \label{eq:remainder_decomp.1}
  &\Psi(\hat{P}'_n)-\Psi(P)+P\IF^*(\cdot|\hat{P}'_n)\\
  & = \frac{{\rm Cov}_{\mathbb{Q}_{n}}(U, \hat E_n(U))}{{\rm Var}_{\mathbb{Q}_{n}}(U)}-\frac{{\rm Cov}_{Q_u}(U,E[\tilde{Y}|U])}{{\rm Var}_{Q_u}(U)} + \frac{{\rm Cov}_{Q_u}
  (U,E[\tilde{Y}|U])}{{\rm Var}_{\mathbb{Q}_{n}}(U)} \nonumber \\
  &\quad - \frac{{\rm Cov}_{\mathbb{Q}_{n}}(U, \hat{E}_n(U))}{{\rm Var}_{\mathbb{Q}_{n}}^2(U)}{\rm Var}_{Q_u}(U) + o_p(n^{-1/2}) \nonumber \\ 
  & = \big[{\rm Var}_{\mathbb{Q}_{n}}(U)-{\rm Var}_{Q_u}(U)\big]\frac{1}{{\rm Var}_{\mathbb{Q}_{n}}(U)}
  {\rm Cov}_{Q_u}(U, E[\tilde{Y}|U])\Big[\frac{1}{{{\rm Var}_{\mathbb{Q}_{n}}(U)}}-\frac{1}{{{\rm Var}_{Q_u}(U)}}\Big] \nonumber \\
  &\quad + \big[{\rm Var}_{\mathbb{Q}_{n}}(U)-{\rm Var}_{Q_u}(U)\big]\bigg\{\frac{1}{{\rm Var}^2_{\mathbb{Q}_{n}}(U)}-\frac{1}{{\rm Var}^2_{Q_u}(U)}\bigg\} \nonumber \\
  & \qquad \times \Big[{\rm Cov}_{\mathbb{Q}_{n}}(U, \hat E_n(U))-{\rm Cov}_{Q_u}(U, E[\tilde{Y}|U])\Big] \nonumber \\
  &\quad + \big[{\rm Var}_{\mathbb{Q}_{n}}(U)-{\rm Var}_{Q_u}(U)\big]\frac{1}{{\rm Var}^2_{Q_u}(U)}\Big[{\rm Cov}_{\mathbb{Q}_{n}}(U, \hat E_n(U))-{\rm Cov}_{Q_u}(U, E[\tilde{Y}|U])\Big] \nonumber \\
  &\quad + o_p(n^{-1/2}). \nonumber
\end{align}
Because ${\rm Var}_{\mathbb{Q}_{n}}(U)$ is bounded away from zero almost surely by \ref{assump:Covariates} that $U$ is non-degenerate, along with ${\rm Cov}_{Q_u}(U, E[\tilde{Y}|U])$ is bounded by \ref{assump:Covariates} and the fact that $E[\tilde{Y}|U=u,X \ge s]$ is uniformly bounded over $(u,s)$, the first quantity on the right-hand-side of \eqref{eq:remainder_decomp.1} is $o_p(n^{-1/2})$, following
$\sqrt{n}[{\rm Var}_{\mathbb{Q}_{n}}(U)-{\rm Var}_{Q_u}(U)] = O_p(1)$ and $1/{\rm Var}_{\mathbb{Q}_{n}}(U)-1/{\rm Var}_{Q_u}(U)=o_p(1)$. 

Note also that using the general properties of $\hat{E}_n$ in \ref{assump:Conditional mean} and the law of large numbers gives
\begin{align} \label{eq:decomp_cov}
  & {\rm Cov}_{\mathbb{Q}_{n}}(U, \hat E_n(U))-{\rm Cov}_{Q_u}(U, E[\tilde{Y}|U]) = {\rm Cov}_{Q_u}(U, \bar{E}(U)-E[\tilde{Y}|U]) \nonumber \\
  & \quad + {\rm Cov}_{\mathbb{Q}_{n}}(U, \hat E_n(U)-\bar{E}(U))
  + \big[{\rm Cov}_{\mathbb{Q}_{n}}(U, \bar{E}(U)) - {\rm Cov}_{Q_u}(U, \bar{E}(U))\big] \\
  & = {\rm Cov}_{Q_u}(U, \bar{E}(U)-E[\tilde{Y}|U]) + o_p(1). \nonumber
\end{align}
By the facts that $\sqrt{n}[{\rm Var}_{\mathbb{Q}_{n}}(U)-{\rm Var}_{Q_u}(U)] = O_p(1)$ and $1/{\rm Var}^2_{\mathbb{Q}_{n}}(U)-1/{\rm Var}^2_{Q_u}(U)=o_p(1)$, the second quantity on the right-hand-side of \eqref{eq:remainder_decomp.1} is also $o_p(n^{-1/2})$.
In addition, we observe that ${\rm Var}_{\mathbb{Q}_{n}}(U)$ is regular asymptotically linear estimator of ${\rm Var}_{Q_u}(U)$ with influence function $o \mapsto (u-Q_u[U])^2$. Combining this fact with
$\sqrt{n}[{\rm Var}_{\mathbb{Q}_{n}}(U)-{\rm Var}_{Q_u}(U)] = O_p(1)$ and the display in \eqref{eq:decomp_cov}, the third quantity on the right-hand-side of \eqref{eq:remainder_decomp.1} turns into
\begin{align*}
  & \big[{\rm Var}_{\mathbb{Q}_{n}}(U)-{\rm Var}_{Q_u}(U)\big]\frac{1}{{\rm Var}^2_{Q_u}(U)}{\rm Cov}_{Q_u}(U, \bar{E}(U)-E[\tilde{Y}|U]) + o_p(n^{-1/2})\\
  & = \frac{{\rm Cov}_{Q_u}(U, \bar{E}(U)-E[\tilde{Y}|U])}{{\rm Var}^2_{Q_u}(U)} \Big\{[\mathbb{P}_n - P](U-Q_u[U])^2\Big\} + o_p(n^{-1/2})\\
  & \equiv \mathbb{P}_n\IF^{\dagger}(\cdot|\bar{E},P) + o_p(n^{-1/2}).
\end{align*}
Referring to \eqref{eq:remainder_decomp.1} and noting that $P\IF^{\dagger}(\cdot|\bar{E},P)=0$, we have completed the proof that $\Psi(\hat{P}'_n)-\Psi(P)+P\IF^*(\cdot|\hat{P}'_n)=[\mathbb{P}_n-P] \IF^{\dagger}(\cdot|\bar{E},P) + o_p(n^{-1/2})$.
\end{proof}

\section{Proof of lemmas for Theorem \ref{Thm:stab_one_step}}
\label{sec:proof_lemmas_AsympNormal_stabilized_onestep}
For any $m_0, m_1 >0$, let ${\cal BV}({\cal T},m_0,m_1)$ be the collection of functions $f \colon {\cal T} \rightarrow [-m_0,m_0]$ with total variation bounded by $m_1$. The lemma below gives preservation properties of these classes.
\begin{lemma}\label{lemma:preservation_BV}
Fix $m_0, \bar{m}_0, m_1, \bar{m}_1>0$ and $\varepsilon\in (0,1)$. For any $f\in {\cal BV}({\cal T},m_0,m_1)$ and $g \in {\cal BV}({\cal T},\bar{m}_0,\bar{m}_1)$, $f+g$ and $f-g$ are contained in ${\cal BV}({\cal T},m_0+\bar{m}_0,m_1+\bar{m}_1)$; $fg$ belongs to ${\cal BV}({\cal T},m_0\bar{m}_0,m_0\bar{m}_1+\bar{m}_0m_1)$; moreover, if $g$ is such that $\inf_{t\in\mathcal{T}} |g(t)|>\varepsilon$, then $f/g$ is contained in ${\cal BV}({\cal T},m_0/\varepsilon, (m_0\bar{m}_1+\bar{m}_0m_1)/\varepsilon^2).$
\end{lemma}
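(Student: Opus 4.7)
The plan is to verify each of the three claims by combining a pointwise supremum bound (which controls the $L^\infty$ size of the BV class) with a telescoping argument against an arbitrary partition $t_0 < t_1 < \cdots < t_N$ of $\mathcal{T}$ (which controls the total variation). Taking the supremum over all such partitions then yields the stated TV bounds. Throughout, I use ${\rm TV}(h)$ to denote the total variation of $h$ over $\mathcal{T}$.

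First, for the sum and difference, the triangle inequality gives $|(f \pm g)(t)| \le m_0 + \bar{m}_0$ and $\sum_i |(f \pm g)(t_{i+1}) - (f \pm g)(t_i)| \le \sum_i |f(t_{i+1}) - f(t_i)| + \sum_i |g(t_{i+1}) - g(t_i)|$, so ${\rm TV}(f \pm g) \le m_1 + \bar{m}_1$. For the product, $|(fg)(t)| \le m_0 \bar{m}_0$ is immediate, and the splitting identity
\[
(fg)(t_{i+1}) - (fg)(t_i) = f(t_{i+1})[g(t_{i+1}) - g(t_i)] + g(t_i)[f(t_{i+1}) - f(t_i)]
\]
together with $|f| \le m_0$ and $|g| \le \bar{m}_0$ produces ${\rm TV}(fg) \le m_0 \bar{m}_1 + \bar{m}_0 m_1$ after summing over the partition.

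For the quotient, the key intermediate step is a BV bound on $1/g$ under the hypothesis $\inf_t |g(t)| > \varepsilon$. The pointwise bound is $|1/g(t)| \le 1/\varepsilon$, and the telescoping estimate
\[
\left|\frac{1}{g(t_{i+1})} - \frac{1}{g(t_i)}\right| = \frac{|g(t_i) - g(t_{i+1})|}{|g(t_i)\, g(t_{i+1})|} \le \frac{|g(t_{i+1}) - g(t_i)|}{\varepsilon^2}
\]
gives ${\rm TV}(1/g) \le \bar{m}_1/\varepsilon^2$. Applying the already-established product bound to $f \cdot (1/g)$ then yields $|f/g(t)| \le m_0/\varepsilon$ and ${\rm TV}(f/g) \le m_0 \cdot \bar{m}_1/\varepsilon^2 + (1/\varepsilon) \cdot m_1$.

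The only mildly delicate point is reconciling this last estimate with the stated bound $(m_0 \bar{m}_1 + \bar{m}_0 m_1)/\varepsilon^2$; the compatibility of the assumptions $|g| \le \bar{m}_0$ and $\inf_t |g(t)| > \varepsilon$ forces $\bar{m}_0 \ge \varepsilon$ (otherwise no admissible $g$ exists), so $(1/\varepsilon)\, m_1 = \varepsilon m_1/\varepsilon^2 \le \bar{m}_0 m_1/\varepsilon^2$, and the lemma follows. I do not anticipate any real obstacle in this proof: every step is either the triangle inequality, the Leibniz-type splitting identity, or the explicit algebraic identity for differences of reciprocals, and no analytic machinery beyond these is invoked.
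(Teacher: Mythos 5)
Your proof is correct and follows essentially the same route as the paper: pointwise bounds for the $\sup$-norm plus telescoping-partition estimates with a Leibniz-type splitting for the total variation. The one small departure is in the quotient step, where the paper estimates $|f(t_{j+1})/g(t_{j+1})-f(t_j)/g(t_j)|$ directly (splitting the numerator $g(t_j)f(t_{j+1})-g(t_{j+1})f(t_j)$), whereas you first bound $1/g$ in $\mathcal{BV}(\mathcal{T},1/\varepsilon,\bar{m}_1/\varepsilon^2)$ and then invoke the already-proved product bound, arriving at ${\rm TV}(f/g)\le m_0\bar{m}_1/\varepsilon^2 + m_1/\varepsilon$; your closing observation that $\bar{m}_0\ge\varepsilon$ (forced by non-vacuity of the hypotheses on $g$) correctly shows this is dominated by the stated constant, so the reconciliation is sound.
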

\begin{proof}
Let $\|\tilde{f}\|_{\nu}$ denote the total variation of $s \mapsto \tilde{f}(s), s \in {\cal T}$, and let $\|f\|_\infty$ denote $\sup_{t\in\mathcal{T}} |f(t)|$. As the notation indicates, both $\|\cdot\|_\nu$ and $\|\cdot\|_\infty$ are norms, and therefore satisfy the triangle inequality.

As $f \in {\cal BV}({\cal T},m_0,m_1)$ and $g \in {\cal BV}({\cal T},\bar{m}_0,\bar{m}_1)$, we have $\|f\|_\infty < m_0$, $\|f\|_{\nu} < m_1$, $\|g\|_\infty < \bar{m}_0$ and $\|g\|_{\nu} < \bar{m}_1$.

To see that $f+g\in {\cal BV}({\cal T},m_0+\bar{m}_0,m_1+\bar{m}_1)$, note that $\|f+g\|_\infty\le \|f\|_\infty+\|g\|_\infty< m_0 + \bar{m}_0$ and $\|f+g\|_\nu\le \|f\|_\nu+\|g\|_\nu< m_1 + \bar{m}_1$. The same argument shows that $f-g\in {\cal BV}({\cal T},m_0+\bar{m}_0,m_1+\bar{m}_1)$.

To see that $fg\in {\cal BV}({\cal T},m_0\bar{m}_0,m_0\bar{m}_1+\bar{m}_0m_1)$, note that $\|fg\|_\infty\le \|f\|_\infty\|g\|_\infty<m_0\bar{m}_0$, and, for an arbitrary partition $\{t_1 < t_2 < \ldots < t_m < t_{m+1}\}$ of ${\cal T}$,
\begin{align*}
    &\sum_{j=1}^m |(fg)(t_{j+1})-(fg)(t_j)| \le \sum_{j=1}^m \Big\{|g(t_{j+1})||f(t_{j+1})-f(t_j)|+|f(t_{j})||g(t_{j+1})-g(t_j)|\Big\} \\
    &\quad \le \bar{m}_0\sum_{j=1}^m |f(t_{j+1})-f(t_j)| + m_0\sum_{j=1}^m |g(t_{j+1})-g(t_j)| \le \bar{m}_0\|f\|_{\nu} + m_0\|g\|_{\nu}\\
    &\quad < m_0\bar{m}_1+\bar{m}_0m_1.
\end{align*}
For the ratio of two functions, $\|f/g\|_\infty < m_0/\varepsilon$, and, for an arbitrary partition $\{t_1 < t_2 < \ldots < t_m < t_{m+1}\}$ of ${\cal T}$,
\begin{align*}
  &\sum_{j=1}^m |f(t_{j+1})/g(t_{j+1})-f(t_j)/g(t_j)| 
  = \sum_{j=1}^m \bigg|\frac{g(t_{j})f(t_{j+1})}{g(t_{j})g(t_{j+1})}
  -\frac{g(t_{j+1})f(t_j)}{g(t_{j+1})g(t_j)}\bigg|\\
  &\quad \le \frac{1}{\varepsilon^2} \sum_{j=1}^m |g(t_{j})f(t_{j+1})
  -g(t_{j+1})f(t_j)|\\
  &\quad \le \frac{1}{\varepsilon^2} \sum_{j=1}^m \Big\{|g(t_{j})||f(t_{j+1})-f(t_j)| + |f(t_j)||g(t_{j+1})-g(t_j)|\Big\}\\
  &\quad \le \big(\bar{m}_0\|f\|_v + m_0\|g\|_v\big)/\varepsilon^2
  < (m_0\bar{m}_1+\bar{m}_0m_1)/\varepsilon^2.
\end{align*}
\end{proof}

Note that the present theorem refers to $\hat{E}_n$ given in Remark \ref{remark:est_conditional_res_life} with $n$ replaced by $j$ for $j=q_n,\ldots,n-1$, and $U=U_k$ for a given $k$. That is, with $k$ included and the sample size of $j$ considered, we are now dealing with
\begin{align}\label{eq:Ejlinear}
  \hat{E}_j(u,s,k) \equiv \mathbb{P}_{j}[Y1(X \geq s)] + \frac{{\rm Cov}_{\mathbb{P}_{j}}(U_k1(X \geq s), Y1(X \geq s))}{{\rm Var}_{\mathbb{P}_{j}}(U_k1(X \geq s))}(u-\mathbb{P}_{j}[U_k1(X \geq s)]).
\end{align}
By the weak law of large numbers and the continuous mapping theorem, together with the uniform consistency of $\hat{G}_n$ on $\mathcal{T}$, $\hat{E}_n$ is a pointwise consistent estimator of
\begin{align*}
  E_0(u,s,k) \equiv P[\tilde{Y}1(X \geq s)] + \frac{{\rm Cov}(U_k1(X \geq s),\tilde{Y}1(X \geq s))}{{\rm Var}(U_k1(X \geq s))}(u-P[U_k1(X \geq s)]).
\end{align*}
Note that we suppress the argument $s$ if $s=-\infty$.
The following lemma shows that all of $\hat{E}_j$, $j=\{q_n,\ldots,n\}$, and $E_0$ are asymptotically contained in a class of uniformly bounded functions with uniformly bounded total variation.

\begin{lemma}\label{lemma:inclusion_E0_Ehat}
  Under the assumptions \ref{assump:Covariates}, \ref{assump:Survival function} and \ref{assump:Variances}, there exist positive constants $\tilde{M}_0$ and $\tilde{M}_1$ such that for each $k$, the function $E_0(\cdot,\cdot,k)$ is contained in the class
  \begin{align*}
    \big\{(u,s)\: \mapsto\: a(s) + b(s)u:\: a, b \in {\cal BV}({\cal T},\tilde{M}_0,\tilde{M}_1) \big\},
  \end{align*}
  and moreover, $\hat{E}_j(\cdot,\cdot,k)$ as defined in \eqref{eq:Ejlinear} is contained in this class with probability tending to one, for $j=q_n,\ldots,n$.
\end{lemma}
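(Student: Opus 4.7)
\medskip

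The plan is to exhibit $E_0(\cdot,\cdot,k)$ explicitly in the required affine-in-$u$ form, verify by direct computation that the coefficients lie in ${\cal BV}({\cal T}, \tilde M_0, \tilde M_1)$, and then mimic the argument for $\hat E_j(\cdot,\cdot,k)$ on a sequence of high-probability events. Reading off the definition, $E_0(u,s,k) = a_k(s) + b_k(s)\,u$ with
\begin{align*}
b_k(s) = \frac{{\rm Cov}(U_k1(X\ge s), \tilde Y 1(X\ge s))}{{\rm Var}(U_k 1(X\ge s))}, \qquad a_k(s) = P[\tilde Y 1(X\ge s)] - b_k(s)\,P[U_k 1(X\ge s)],
\end{align*}
and similarly $\hat E_j(u,s,k) = \hat a_{jk}(s) + \hat b_{jk}(s)\, u$ with the analogous empirical coefficients and $Y = \delta X / \hat G_n(X)$ in place of $\tilde Y$.

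The central building block will be the observation that, for any bounded measurable function $h \colon \mathcal{X}\to\mathbb{R}$, the map $s \mapsto P[h(O)\,1(X\ge s)]$ belongs to ${\cal BV}({\cal T}, \|h\|_\infty, 2\|h\|_\infty)$: decomposing $h = h^+ - h^-$, each of $s\mapsto P[h^\pm 1(X\ge s)]$ is a non-negative non-increasing function bounded by $\|h\|_\infty$, hence has total variation at most $\|h\|_\infty$. I will apply this to $h\in\{\tilde Y, U_k, U_k \tilde Y, U_k^2\}$, which are uniformly bounded in $k$ by \ref{assump:Covariates}, \ref{assump:Survival function} (giving $|\tilde Y|\le \tau/G(\tau)$), to place each of $P[\tilde Y 1(X\ge s)]$, $P[U_k 1(X\ge s)]$, $P[U_k\tilde Y 1(X\ge s)]$, $P[U_k^2 1(X\ge s)]$ in ${\cal BV}$ with constants independent of $k$. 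Writing ${\rm Cov}(U_k 1(X\ge s),\tilde Y 1(X\ge s)) = P[U_k \tilde Y 1(X\ge s)] - P[U_k 1(X\ge s)]\,P[\tilde Y 1(X\ge s)]$ and ${\rm Var}(U_k 1(X\ge s)) = P[U_k^2 1(X\ge s)] - P[U_k 1(X\ge s)]^2$, Lemma~\ref{lemma:preservation_BV} (products, sums) yields uniform ${\cal BV}$ bounds for both. Condition \ref{assump:Variances} supplies a positive lower bound $\varepsilon$ for the denominator uniformly in $(k,s)$, so the ratio preservation part of Lemma~\ref{lemma:preservation_BV} delivers uniform ${\cal BV}$ bounds for $b_k$, and then one more application gives them for $a_k$. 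This produces universal constants $\tilde M_0, \tilde M_1$ that work for $E_0(\cdot,\cdot,k)$ for every $k$.

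For $\hat E_j(\cdot,\cdot,k)$, the same algebra goes through once I restrict to a high-probability event $\Omega_n$ on which (a) $\hat G_n(\tau)\ge G(\tau)/2$ (so $|Y|\le 2\tau/G(\tau)$ with $s\mapsto \hat G_n(s)^{-1}$ itself non-decreasing, hence monotone), and (b) ${\rm Var}_{\mathbb{P}_j}(U_k 1(X\ge s))\ge \varepsilon/2$ uniformly in $(k,s,j)$ with $j\in\{q_n,\ldots,n\}$. Part (a) holds with probability tending to one by uniform consistency of the Kaplan--Meier estimator under \ref{assump:Survival function}. For (b), I will use the fact that $\{U_k^r 1(X\ge s): r\in\{1,2\},\, s\in{\cal T}\}$ is a uniformly bounded VC-type class for each $k$, apply a standard Bernstein/Talagrand bound uniformly in $s$ to get a $k$-wise exponential concentration at rate $\sqrt{\log(n)/q_n}$, then union-bound over $k\le p_n$ and $j\in\{q_n,\ldots,n\}$; the rate conditions $\log p_n/n^{1/4}\to 0$ and $q_n^{1/4}/\log(n\vee p_n)\to\infty$ comfortably give ${\rm P}(\Omega_n^c)\to 0$. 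On $\Omega_n$, the same ${\cal BV}$ bookkeeping as in the population case applies to $\hat a_{jk}, \hat b_{jk}$, with the constants inflated by an absolute factor reflecting the factor-of-two slack in (a)--(b); choosing $\tilde M_0, \tilde M_1$ large enough to absorb this slack completes the argument.

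The main obstacle is the uniform-in-$(k,j)$ lower bound in (b): the class of predictors is high-dimensional and the subsample size $j$ ranges down to $q_n$, so one must be careful that the concentration rate beats both $\log p_n$ (via the union bound) and the smallest subsample size $q_n$. The rate assumptions on $(p_n, q_n)$ in Theorem~\ref{Thm:stab_one_step} are precisely calibrated for this; once this event is constructed, everything else reduces to the elementary ${\cal BV}$-calculus of Lemma~\ref{lemma:preservation_BV}.
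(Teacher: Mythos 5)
Your proposal is correct and follows essentially the same route as the paper: exhibit $E_0$ and $\hat E_j$ in the affine form $a_k(s)+b_k(s)u$, bound each population/empirical moment $s\mapsto P[h\,1(X\ge s)]$ in ${\cal BV}$ via boundedness and monotonicity of the indicator (the paper bounds the total variation directly by $P|h|$ rather than via the Jordan decomposition you use, but the resulting constants differ only by a factor of two), propagate through Lemma~\ref{lemma:preservation_BV}, and invoke \ref{assump:Variances} for the denominator. The only substantive difference is that you are more explicit on the empirical side: you spell out a concentration-plus-union-bound argument (over $k\le p_n$, $j\in\{q_n,\ldots,n\}$, and a VC bound in $s\in\mathcal{T}$) to obtain the uniform lower bound on ${\rm Var}_{\mathbb{P}_j}(U_k1(X\ge s))$, whereas the paper fixes $j$ and merely asserts that this stays above $\varepsilon$ ``with high probability,'' deferring the uniformity to the separate event machinery of Lemmas~\ref{lemma:replace_Qn_Q}--\ref{lemma:An_Bn_prob_to_one}. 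Your rate bookkeeping is compatible with the conditions in Theorem~\ref{Thm:stab_one_step}, so this is a valid and arguably more self-contained filling of a small expository gap, not a genuinely different proof.
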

\begin{proof}
Let the upper bound of all the $|U_k|$ be $\tilde{M}_u \in (0,\infty)$, which is ensured by the assumption \ref{assump:Covariates}. And let a constant $\tilde{M}_{y} > \tau/G(\tau) > 0$, following $G(\tau) > 0$ in \ref{assump:Survival function} such that $\tilde{M}_y \in (0,\infty)$. 
For some $\varepsilon \in (0,1)$, take $\tilde{M}_0 = \tilde{M}_{y}+\sum_{q=1}^2 2\varepsilon^{-1}\tilde{M}_{u}^q\tilde{M}_y$ and $\tilde{M}_1 = \tilde{M}_{y}+\sum_{q=1}^4 6\varepsilon^{-2}\tilde{M}_{u}^q\tilde{M}_y$. Note that $\tilde{M}_u$
and $\tilde{M}_y$ do not depend on $(j,k,n)$, so $\tilde{M}_0$
and $\tilde{M}_1$ are independent of $(j,k,n)$.

To have $E_0(\cdot,\cdot,k)$ in the above-defined class, it suffices to have both
$$P[\tilde{Y}1(X \geq s)] - \frac{{\rm Cov}(U_k1(X \geq s),\tilde{Y}1(X \geq s))}{{\rm Var}(U_k1(X \geq s))}P[U_k1(X \geq s)]$$ and
${\rm Cov}(U_k1(X \geq s),\tilde{Y}1(X \geq s))/{\rm Var}(U_k1(X \geq s))$
belonging to ${\cal BV}({\cal T}, \tilde{M}_0, \tilde{M}_1)$. We start by showing that each of the following functions belongs to an appropriate class of uniformly bounded functions with uniformly bounded total variation:  $P[\tilde{Y}1(X \geq s)]$, $P[U_k1(X \geq s)]$, $P[(U_k-P[U_k1(X \geq s)])\tilde{Y}1(X \geq s)]$ and $1/P[U_k1(X \geq s)-P[U_k1(X \geq s)]]^2$. Specifically, for each of these functions, we will exhibit an $m_0,m_1$ such that the function belongs to ${\cal BV}(\mathcal{T},m_0,m_1)$. 

We see that $s \mapsto P[\tilde{Y}1(X \geq s)]$ is uniformly bounded by $P|\tilde{Y}| = P|\delta X/G(X)| \le \tau/G(\tau) < \tilde{M}_y$ and has total variation bounded by $\tilde{M}_{y}$; therefore $s \mapsto P[\tilde{Y}1(X \geq s)] \in {\cal BV}({\cal T}, \tilde{M}_{y}, \tilde{M}_{y})$. Similarly, $s \mapsto P[U_k1(X \geq s)]$ is uniformly bounded by $\tilde{M}_u$ and has total variation bounded by $\tilde{M}_u$; thus, this function is in
${\cal BV}({\cal T}, \tilde{M}_{u}, \tilde{M}_{u})$.
Moreover, Lemma \ref{lemma:preservation_BV} gives that $s \mapsto P[(U_k-P[U_k1(X \geq s)])\tilde{Y}1(X \geq s)] \in {\cal BV}({\cal T}, 2\tilde{M}_{u}\tilde{M}_{y}, 3\tilde{M}_{u}\tilde{M}_{y})$. Also, $s \mapsto 1/P[U_k1(X \geq s)-P[U_k1(X \geq s)]]^2$ belonging to ${\cal BV}({\cal T},1/\varepsilon, 3\tilde{M}_{u}^2/\varepsilon^2)$, using Lemma \ref{lemma:preservation_BV}, that 
$s \mapsto P[U_k1(X \geq s)-P[U_k1(X \geq s)]]^2$ in ${\cal BV}({\cal T},2\tilde{M}_{u}^2, 3\tilde{M}_{u}^2)$ and that $P[U_k1(X \geq s)-P[U_k1(X \geq s)]]^2={\rm Var}(U_k1(X \ge s)) > \varepsilon > 0$ by \ref{assump:Variances}. Provided the sufficiently large $\tilde{M}_0$ and $\tilde{M}_1$, the above results and Lemma \ref{lemma:preservation_BV} implies that 
\begin{align*}
&P[\tilde{Y}1(X \geq s)] - \frac{{\rm Cov}(U_k1(X \geq s),\tilde{Y}1(X \geq s))}{{\rm Var}(U_k1(X \geq s))}P[U_k1(X \geq s)]\\ 
&\quad \in {\cal BV}\left({\cal T}, \tilde{M}_{y}+\frac{2\tilde{M}_{u}^2\tilde{M}_y}{\varepsilon}, \tilde{M}_{y}+\frac{6\tilde{M}_{u}^4\tilde{M}_y}{\varepsilon^2}+\frac{5\tilde{M}_{u}^2\tilde{M}_y}{\varepsilon}\right) \subset {\cal BV}({\cal T},\tilde{M}_0,\tilde{M}_1);\\
&\frac{{\rm Cov}(U_k1(X \geq s),\tilde{Y}1(X \geq s))}{{\rm Var}(U_k1(X \geq s))}
\in {\cal BV}\left({\cal T}, \frac{2\tilde{M}_{u}\tilde{M}_y}{\varepsilon}, \frac{6\tilde{M}_{u}^3\tilde{M}_y}{\varepsilon^2}+\frac{3\tilde{M}_{u}\tilde{M}_y}{\varepsilon}\right)\\
&\quad \subset {\cal BV}({\cal T},\tilde{M}_0,\tilde{M}_1).
\end{align*} 

Fix $j \in \{q_n,\ldots,n\}$. For $\hat{E}_j(\cdot, \cdot,k)$ to belong to the function class given in the lemma, we would need
$$\Pj[Y1(X \geq s)] - \frac{{\rm Cov}_{\Pj}(U_k1(X \geq s),Y1(X \geq s))}{{\rm Var}_{\Pj}(U_k1(X \geq s))}\Pj[U_k1(X \geq s)]$$ and
${\rm Cov}_{\Pj}(U_k1(X \geq s),Y1(X \geq s))/{\rm Var}_{\Pj}(U_k1(X \geq s))$
belonging to ${\cal BV}({\cal T}, \tilde{M}_0, \tilde{M}_1)$. It thus suffices to show that, with probability tending to one, the following functions $$\mathbb{P}_{j}[Y1(X \geq s)],\  \mathbb{P}_{j}[(U_k-\mathbb{P}_{j}[U_k1(X \geq s)])Y1(X \geq s)],\  \mathbb{P}_{j}[U_k1(X \geq s)],$$
and $1/\mathbb{P}_{j}[U_k1(X \geq s)-\mathbb{P}_{j}[U_k1(X \geq s)]]^2$ all belong to ${\cal BV}(\mathcal{T},m_0,m_1)$, for suitable $m_0$ and $m_1$ that does not depend on $j$. This will be done by appealing to Lemma \ref{lemma:preservation_BV}.

Below we show that  
$|\mathbb{P}_{j}[Y1(X \geq s)]| \leq \tilde{M}_{y}$ for any $s\in {\cal T}$ and  $s \mapsto \mathbb{P}_{j}[Y1(X \geq s)]$ has total variation bounded by $\tilde{M}_{y}$ with probability tending to one.
\begin{align*}
  & |\Pj[Y1(X \ge s)]| \leq \Pj|Y1(X \ge s)| \leq \Pj|Y| \leq \sup_{t \in {\cal T}}\bigg|\frac{1}{\hat{G}_n(t)}-\frac{1}{G(t)}\bigg|\Pj|\delta X| + \Pj \Big|\frac{\delta X}{G(X)}\Big|\\
  &\quad \leq \sup_{t \in {\cal T}}\bigg|\frac{1}{\hat{G}_n(t)}-\frac{1}{G(t)}\bigg|\big\{(\Pj-P)|\delta X| + P|\delta X| \big\} + (\Pj-P)|\tilde{Y}| + P|\tilde{Y}| < \tilde{M}_{y}
\end{align*}
with probability tending to one, using the weak law of large numbers. We also use $|\delta X| \leq \tau$, the uniform consistency of $\hat{G}_n$, $G(\tau)>0$ by \ref{assump:Survival function} and $P|\tilde{Y}| \le \tau/G(\tau)$.
Also, for an arbitrary partition of ${\cal T}$, say $\{t_1 < t_2 < \ldots < t_m < t_{m+1}\}$, 
\begin{align*}
  &\sum_{j=1}^m \Big|\Pj[Y1( t_j \le X < t_{j+1})]\Big|  
  \le \Pj \sum_{j=1}^m\Big|Y1( t_j \le X < t_{j+1})\Big| = \Pj|Y| \\
  &\quad \leq \sup_{t \in {\cal T}}\bigg|\frac{1}{\hat{G}_n(t)}-\frac{1}{G(t)}\bigg|\Pj|\delta X| + \Pj \Big|\frac{\delta X}{G(X)}\Big|\\
  &\quad \leq \sup_{t \in {\cal T}}\bigg|\frac{1}{\hat{G}_n(t)}-\frac{1}{G(t)}\bigg|\big\{(\Pj-P)|\delta X| + P|\delta X| \big\} + (\Pj-P)|\tilde{Y}| + P|\tilde{Y}| < \tilde{M}_{y}
\end{align*}
with probability tending to one, using the same arguments as the above. Taking a supremum over all partitions of ${\cal T}$ shows that the total variation of $s\mapsto \mathbb{P}_{j}[Y1(X \geq s)]$ is bounded by $\tilde{M}_{y}$ with probability tending to one. Thus $\mathbb{P}_{j}[Y1(X \geq s)]$ belongs to ${\cal BV}({\cal T}, \tilde{M}_{y}, \tilde{M}_{y})$ with probability tending to one.

Lemma \ref{lemma:preservation_BV} implies that, with probability tending to one, $s \mapsto \mathbb{P}_{j}[U_k1(X \geq s)] \in {\cal BV}({\cal T}, \tilde{M}_{u}, \tilde{M}_{u})$;
$s \mapsto \mathbb{P}_{j}[U_k1(X \geq s)-\mathbb{P}_{j}[U_k1(X \geq s)]]^2 \in {\cal BV}({\cal T},2\tilde{M}_u^2,3\tilde{M}_u^2)$, and $s \mapsto \mathbb{P}_{j}[(U_k-\mathbb{P}_{j}[U_k1(X \geq s)])Y1(X \geq s)] \in {\cal BV}({\cal T},2\tilde{M}_u\tilde{M}_y,3\tilde{M}_u\tilde{M}_y)$.

Recall that \ref{assump:Variances} assumes ${\rm Var}(U_k1(X \ge s))$ to be uniformly (over $k,s$) bounded away from zero, that is, $\min_{k,s}{\rm Var}(U_k1(X \ge s)) > \varepsilon$ for some $\varepsilon \in (0,1)$. Then we see that with high probability,
$\mathbb{P}_{j}[U_k1(X \geq s)-\mathbb{P}_{j}[U_k1(X \geq s)]]^2 = {\rm Var}_{\mathbb{P}_{j}}(U_k1(X \geq s))$ is larger than or equal to $\varepsilon$ and then bounded away from zero on ${\cal T}$. By Lemma \ref{lemma:preservation_BV}, we have, with probability tending to one, $1/\mathbb{P}_{j}[U_k1(X \geq s)-\mathbb{P}_{j}[U_k1(X \geq s)]]^2 \in {\cal BV}({\cal T},1/\varepsilon,3\tilde{M}_u^2/\varepsilon^2)$.

Together with all the above results, Lemma \ref{lemma:preservation_BV} gives that with probability tending to one,
\begin{align*}
&\Pj[Y1(X \geq s)] - \frac{{\rm Cov}_{\Pj}(U_k1(X \geq s),Y1(X \geq s))}{{\rm Var}_{\Pj}(U_k1(X \geq s))}\Pj[U_k1(X \geq s)]\\ 
&\quad \in {\cal BV}({\cal T}, \tilde{M}_{y}+\frac{2\tilde{M}_{u}^2\tilde{M}_y}{\varepsilon}, \tilde{M}_{y}+\frac{6\tilde{M}_{u}^4\tilde{M}_y}{\varepsilon^2}+\frac{5\tilde{M}_{u}^2\tilde{M}_y}{\varepsilon}) \subset {\cal BV}({\cal T},\tilde{M}_0,\tilde{M}_1);\\
&\frac{{\rm Cov}_{\Pj}(U_k1(X \geq s),Y1(X \geq s))}{{\rm Var}_{\Pj}(U_k1(X \geq s))}
\in {\cal BV}({\cal T}, \frac{2\tilde{M}_{u}\tilde{M}_y}{\varepsilon}, \frac{6\tilde{M}_{u}^3\tilde{M}_y}{\varepsilon^2}+\frac{3\tilde{M}_{u}\tilde{M}_y}{\varepsilon})\\
&\quad \subset {\cal BV}({\cal T},\tilde{M}_0,\tilde{M}_1).
\end{align*} 
\end{proof}

Let $\tilde{\epsilon} > 0$; define $\mathcal{G}$ to be the collection of monotone nonincreasing c\`{a}dl\`{a}g functions $\tilde{G} \colon \mathcal{T} \rightarrow [0,1]$ such that $\tilde{G}(\tau) > \tilde{\epsilon}$, and let $\tilde{M}_0, \tilde{M}_1$ be the constants shown to exist in Lemma~\ref{lemma:inclusion_E0_Ehat}. To simplify the notation, we let ${\cal BV}({\cal T})\equiv{\cal BV}({\cal T},\tilde{M}_0,\tilde{M}_1)$.
Below we use the notation $\bs{u} =(u_1, \ldots,u_p)$. For $k\in\mathbb{N}$, $(q,v,w)\in\{0,1,2\}^3$, and $r\in\{0,1,2,3,4\}$, define the function classes
\begin{equation}\label{eq:class_tilde_F}
\begin{split}
  &\tilde{\mathcal{F}}_{1}(k,q,r) = \bigg\{(x, \delta, \bs{u})\: \mapsto\: u_k^{r}\Big(\frac{\delta x}{\tilde{G}(x)}\Big)^q1(x \geq s):\: \tilde{G}\in\mathcal{G},\: s \in {\cal T} \bigg\}; \\
  &\tilde{{\cal E}}(k,w) = \bigg\{(\bs{u},s)\: \mapsto\: 
  \Big[a(s) + b(s)u_k\Big]^w:\: a, b \in {\cal BV}({\cal T}) \bigg\};\\
  &\tilde{\mathcal{F}}_{2}(k,v) = \bigg\{(x, \delta, \bs{u})\: \mapsto
  \Big[\int_{\mathcal{T}}\big[a(s) + b(s)u_k\big]1(x \geq s)\big\{1(x \in ds, \delta=0) - \tilde{\Lambda}(s)\big\}\Big]^v :\\
  &\hspace{5cm} \tilde{\Lambda}(s) = -\log(\tilde{G}(s));\:
  \tilde{G}\in\mathcal{G};\: a,\, b \in {\cal BV}({\cal T}) \bigg\};\\
  &\tilde{\mathcal{F}}(k,q,r,v,w) =
  \Big[\tilde{\mathcal{F}}_{1}(k,q,r)\tilde{{\cal E}}(k,w)\Big] \cup 
  \Big[\tilde{\mathcal{F}}_{1}(k,q,r)\tilde{\mathcal{F}}_{2}(k,v)\Big],
\end{split}
\end{equation}  
where for two function classes $\mathcal{H}_1$ and $\mathcal{H}_2$, we let $\mathcal{H}_1\mathcal{H}_2=\{h_1(\cdot)h_2(\cdot) : h_1\in\mathcal{H}_1,h_2\in\mathcal{H}_2\}$.
Also, for $k\in\mathbb{N}$, let $\tilde{\mathcal{F}}(k)=\cup_{q=0}^{2}\cup_{r=0}^{4} \cup_{v=0}^{2}\cup_{w=0}^{2}\tilde{\mathcal{F}}(k,q,r,v,w)$.
Let $\mathcal{K}_n = \{1, \ldots, p_n\}$. Henceforth we consider a large class 
$\tilde{\mathcal{F}}_n = \cup_{k\in\mathcal{K}_n}\tilde{\mathcal{F}}(k)$. In view of Lemma \ref{lemma:inclusion_E0_Ehat}, we have $E_0(\cdot,\cdot,k) \in \tilde{{\cal E}}(k,1)$ and with probability tending to one, $\hat{E}_j(\cdot,\cdot,k) \in \tilde{{\cal E}}(k,1)$, $j= q_n,\ldots,n$. 

\begin{lemma}\label{lemma:UB_funcs_tildecalE}
  Suppose that \ref{assump:Covariates} holds and let $\tilde{M}_u$ be a finite constant such that for all $k\in\mathbb{N}$, $|U_k|\le \tilde{M}_u$ with $P$-probability one.
  For any $k\in\mathbb{N}$ and $\tilde{f} \in \tilde{{\cal E}}(k,1)$, we have that $|\tilde{f}| \leq \tilde{M}_0(1+\tilde{M}_u)$.
\end{lemma}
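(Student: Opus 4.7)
The plan is essentially a one-line triangle-inequality bound, so the ``proposal'' is really just to unpack the definitions and invoke the bounds built into the function class $\mathcal{BV}(\mathcal{T}) = \mathcal{BV}(\mathcal{T},\tilde{M}_0,\tilde{M}_1)$.

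First I would fix an arbitrary $k\in\mathbb{N}$ and take any $\tilde{f}\in\tilde{\mathcal{E}}(k,1)$. By the definition of $\tilde{\mathcal{E}}(k,w)$ in \eqref{eq:class_tilde_F} with $w=1$, such an $\tilde{f}$ has the explicit form $\tilde{f}(\bs{u},s) = a(s)+b(s)u_k$ for some $a,b\in\mathcal{BV}(\mathcal{T})$. Because membership in $\mathcal{BV}(\mathcal{T},\tilde{M}_0,\tilde{M}_1)$ imposes a uniform sup-norm bound of $\tilde{M}_0$ on the function, I have $\|a\|_\infty\le \tilde{M}_0$ and $\|b\|_\infty\le \tilde{M}_0$. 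Under \ref{assump:Covariates}, the chosen constant $\tilde{M}_u$ satisfies $|u_k|\le \tilde{M}_u$ for every coordinate $k$.

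Then by the triangle inequality,
\begin{align*}
  |\tilde{f}(\bs{u},s)| \;\le\; |a(s)|+|b(s)|\,|u_k| \;\le\; \tilde{M}_0+\tilde{M}_0\tilde{M}_u \;=\; \tilde{M}_0(1+\tilde{M}_u),
\end{align*}
which is the claimed bound. Since $k$ and $\tilde{f}$ were arbitrary, the conclusion follows for every $k\in\mathbb{N}$ and every element of $\tilde{\mathcal{E}}(k,1)$.

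There is no real obstacle here: the only content of the lemma is the observation that the total-variation constraint in $\mathcal{BV}(\mathcal{T},\tilde{M}_0,\tilde{M}_1)$ carries with it a uniform supremum bound of $\tilde{M}_0$, which combines cleanly with the uniform covariate bound from \ref{assump:Covariates}. The lemma is being recorded in this form because the constant $\tilde{M}_0(1+\tilde{M}_u)$ (independent of $k$) will be invoked later as a uniform envelope when controlling bracketing entropy and empirical-process suprema over the class $\tilde{\mathcal{F}}_n$.
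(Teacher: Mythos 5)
Your proof is correct and matches the paper's, which simply invokes the definitions of $\tilde{\mathcal{E}}(k,1)$ and $\mathcal{BV}(\mathcal{T},\tilde{M}_0,\tilde{M}_1)$ together with \ref{assump:Covariates} as ``immediate''; you have just spelled out the triangle-inequality step explicitly. No gap.
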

\begin{proof}
  This is an immediate consequence of the definition of $\tilde{{\cal E}}(k,1)$ and $ {\cal BV}({\cal T})$, along with the assumption \ref{assump:Covariates}.
\end{proof}

\begin{lemma}\label{lemma:tildeE_VC_hull}
  Given $k \in {\cal K}_n$, $w \in \{1,2\}$ and the definition of $\tilde{{\cal E}}(k,w)$ above, we have $\tilde{{\cal E}}(k,w)$ is a Vapnik-\v{C}ervonenkis (VC)-hull class for sets.
\end{lemma}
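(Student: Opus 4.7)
My plan is to realize $\tilde{{\cal E}}(k,w)$ as a bounded rescaling of the symmetric absolute convex hull of a single VC class of indicators, which is the defining property of a VC-hull class for sets. The natural base class is
\[
\mathcal{I} = \bigl\{ (\bs{u}, s) \mapsto \mathbf{1}(s \ge t,\, \epsilon u_k \ge c) : t \in \mathcal{T},\; c \in \mathbb{R},\; \epsilon \in \{-1, 0, 1\} \bigr\},
\]
which is a VC class of indicators, being indicators of pairwise intersections drawn from two one-parameter families of half-spaces.

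For $w = 1$, I would use the standard signed-measure representation $a(s) = a(\tau) - \int_{\mathcal{T}} \mathbf{1}(s < t)\, da(t)$, valid for any $a \in {\cal BV}(\mathcal{T})$ with $|a(\tau)| \le \tilde{M}_0$ and $|da|(\mathcal{T}) \le \tilde{M}_1$. Combining this representation (applied to both $a$ and $b$) with the layer-cake identity
\[
u_k \mathbf{1}(s \ge t) = \int_0^{\tilde{M}_u} \mathbf{1}(s \ge t,\, u_k \ge c)\, dc - \int_{-\tilde{M}_u}^0 \mathbf{1}(s \ge t,\, u_k \le c)\, dc
\]
realizes every $a(s) + b(s) u_k$ as an integral mixture of elements of $\mathcal{I}$ (plus the constant and $u_k$ terms from the $a(\tau), b(\tau)$ pieces) whose total $\ell_1$-weight is bounded by a constant $C = C(\tilde{M}_0, \tilde{M}_1, \tilde{M}_u)$ independent of $k$ and of the choice of $(a,b)$. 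Hence $\tilde{{\cal E}}(k,1) \subseteq C \cdot \overline{\mathrm{absconv}}(\mathcal{I})$, as required.

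For $w = 2$, I plan to expand the square,
\[
[a(s) + b(s) u_k]^2 = a(s)^2 + 2 a(s) b(s) u_k + b(s)^2 u_k^2,
\]
invoke Lemma \ref{lemma:preservation_BV} to place $a^2, ab, b^2$ in ${\cal BV}(\mathcal{T})$ with enlarged but still universal constants, and then apply the same BV-plus-layer-cake device to each summand. The new term $u_k^2 \mathbf{1}(s \ge t)$ is handled by a one-sided layer cake on $[0, \tilde{M}_u^2]$ using indicators of $\{|u_k| \ge \sqrt{c},\, s \ge t\}$, which are indicators of sets from a VC family; so the enlarged base class remains a VC class of indicators, and the $\ell_1$-bookkeeping again returns a universal constant.

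The main obstacle I anticipate is purely bookkeeping: verifying that the $\ell_1$-weight in the two nested representations (BV signed measure on the outside, layer cake on the inside for the $u_k$ and $u_k^2$ factors) stays uniformly bounded independently of $k$ and of the specific $a, b$. Once the constants from Lemma \ref{lemma:preservation_BV} and the envelope $\tilde{M}_u$ supplied by \ref{assump:Covariates} are carried through, the VC-hull conclusion follows immediately from the definition, with no further appeal to empirical-process permanence.
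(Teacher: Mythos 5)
Your proposal is correct but takes a genuinely more concrete route than the paper. The paper first shows ${\cal BV}(\mathcal{T})$ is a VC-hull class for sets via the Jordan decomposition into monotone parts, noting that the superlevel sets of a monotone c\`adl\`ag function on $\mathcal{T}$ are half-lines and therefore lie in a fixed VC class; it then asserts, ``following similar arguments,'' that the same conclusion holds for $\tilde{\cal E}(k,1)$, taking as base class the level sets $\{1(\tilde f^{\pm}(\cdot)>m')\}$. Since $\tilde f(u,s)=a(s)+b(s)u_k$ is not monotone in any single coordinate, identifying the level sets of all such $\tilde f^{\pm}$ with a single fixed VC class is left implicit in the paper. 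Your construction makes this identification explicit: you decompose $a$ and $b$ separately via their signed-measure (BV) representations in $s$, and $u_k$ via a layer-cake representation on its compact support, producing an integral mixture over the \emph{fixed} VC class $\{1(s\ge t,\,\epsilon u_k\ge c)\}$ of half-space intersections, with $\ell_1$-weight bounded by a constant depending only on $(\tilde M_0,\tilde M_1,\tilde M_u)$. This is more transparent and makes the uniformity in $k$ --- which the downstream bracketing-entropy bound in Lemma~\ref{tildeFn_VC_hull} requires --- visible from the bookkeeping alone. For $w=2$, the paper simply invokes the closure of VC-hull classes under products (Lemma 2.6.20 of van der Vaart and Wellner); you instead expand $[a+bu_k]^2$, push each of $a^2, ab, b^2$ back into ${\cal BV}(\mathcal{T})$ via Lemma~\ref{lemma:preservation_BV}, and re-apply the BV-plus-layer-cake device, handling the new $u_k^2$ factor by a one-sided layer cake with indicators $\{|u_k|\ge\sqrt c,\,s\ge t\}$. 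The paper's $w=2$ step is shorter; yours is more elementary and keeps the argument entirely self-contained within the base-class construction you built for $w=1$.
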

\begin{proof}
Fix $w=1$. We start with showing that ${\cal BV}({\cal T})$ is a VC-hull class for sets as follows. For any $f \in {\cal BV}({\cal T})$ with $|f| \leq \tilde{M}_0$, the Jordan decomposition indicates that $f = f^+-f^-$, where $(f^+, f^-)$ are the positive and negative parts of $f$, and both of them are positive, c\`{a}dl\`{a}g and monotonic increasing on ${\cal T}$. Therefore, we can see $f$ as the scalar multiple (by $\tilde{M}_0$) of the limit of the sequence 
\begin{align*}
  f_{m} = \sum_{j=1}^{m}\frac{1}{m}\Big[1\Big(f^+ > \frac{j}{m}\tilde{M}_0\Big)-1\Big(f^- > \frac{j}{m}\tilde{M}_0\Big)\Big].  
\end{align*}
Then $f(\cdot)/\tilde{M}_0$ is in a class contained in the pointwise sequential closure of the symmetric convex hull of a class of indicator functions $\{1(f^+ > m'), m' \in \mathbb{R}^+\} \cup \{1(f^- > m'), m' \in \mathbb{R}^+\}$, which is a VC-subgraph class because
\begin{align*}
  &\{1(f^+(\cdot) > m'), m' \in \mathbb{R}^+\} \cup \{1(f^-(\cdot) > m'), m' \in \mathbb{R}^+\}\\
  &\quad \subset \{1(f^+(\cdot) > m'), m' \in \mathbb{R}\} \cup \{1(f^-(\cdot) > m'), m' \in \mathbb{R}\}
\end{align*}
and the union of the two classes on the right-hand-side forms a VC-subgraph class. Hence, ${\cal BV}({\cal T})$ is (a $\tilde{M}_0$-fold rescaling of) a VC-hull class for sets. 

For any given $k$, Lemma \ref{lemma:UB_funcs_tildecalE} indicates that any function $\tilde{f} \in \tilde{{\cal E}}(k,1)$ is uniformly bounded: $|\tilde{f}| \leq \tilde{M}_0(1+\tilde{M}_u)$. Following similar arguments to the above implies that $\tilde{f}/(\tilde{M}_0(1+\tilde{M}_u))$ is in a class contained in the pointwise sequential closure of the symmetric convex hull of
\begin{align*}
  \{1(\tilde{f}^+(\cdot) > m'), m' \in \mathbb{R}\} \cup \{1(\tilde{f}^-(\cdot) > m'), m' \in \mathbb{R}\},
\end{align*}
and therefore $\tilde{{\cal E}}(k,1)$ is (a $\tilde{M}_0(1+\tilde{M}_u)$-fold rescaling of) a VC-hull class for sets.
When $w=2$, $\tilde{{\cal E}}(k,2)$ is the square of $\tilde{{\cal E}}(k,1)$, so it is also a VC-hull class for sets \cite[Lemmas 2.6.20 in][]{Vaart1996}.
\end{proof}

As Lemma \ref{lemma:inclusion_E0_Ehat} indicates that for all $j$, $\hat{E}_j(\cdot,\cdot,k) \in \tilde{{\cal E}}(k,1)$ with probability tending to one and $E_0(\cdot,\cdot,k) \in \tilde{{\cal E}}(k,1)$, we could include $\IF_k^{CAR}(\cdot|P)$ in $\tilde{\mathcal{F}}_{2}(k,1)$.
The following lemma shows that $\tilde{\mathcal{F}}_{1}(k,q,r)$,
and $\tilde{\mathcal{F}}_{2}(k,v)$ are VC-hull classes for sets, for any $(k,q,r,v)$.
\begin{lemma} \label{tildeFn_VC_hull}
  For all $k\in\mathcal{K}_n$, $r\in \{0,1,2,3,4\}$ and $(q,v,w)\in \{0,1,2\}^3$, all of the following are VC-hull classes for sets: $$ \tilde{\mathcal{F}}_{1}(k,q,r),\, \tilde{\mathcal{F}}_{2}(k,v),\, \tilde{\mathcal{F}}_{1}(k,q,r)\tilde{{\cal E}}(k,w),\,
  \tilde{\mathcal{F}}_{1}(k,q,r)\tilde{\mathcal{F}}_{2}(k,v)\; \mbox{and}\; \tilde{\mathcal{F}}(k,q,r,v,w).$$ Moreover, $\tilde{\mathcal{F}}(k)$ is a VC-hull class for sets.
\end{lemma}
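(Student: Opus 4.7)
The plan is to verify the VC-hull-for-sets property of each building block in the definitions \eqref{eq:class_tilde_F} and then to close under products and finite unions. The core technique is the Jordan-decomposition argument already used in Lemma \ref{lemma:tildeE_VC_hull}: any uniformly bounded monotone (resp.\ BV) function on $\mathcal{T}$ can be displayed as a scalar multiple of a uniform limit of finite $\ell^1$-bounded combinations of indicators of the VC-subgraph class $\{1(\cdot > c) : c\in\mathbb{R}\}$. I will combine this with three standard closure properties: (i) every VC-subgraph class is a fortiori VC-hull for sets; (ii) finite unions of VC-hull classes are VC-hull; and (iii) products by fixed uniformly bounded measurable functions, as well as positive integer powers and products of two uniformly bounded VC-hull classes, preserve the VC-hull-for-sets property (cf.\ Lemma 2.6.20 and the examples in Section 2.6.3 of \cite{Vaart1996}).

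For $\tilde{\mathcal{F}}_1(k,q,r)$, factor each element as
\[
  u_k^r (\delta x)^q\, 1(x \geq s)\cdot \tilde{G}(x)^{-q}.
\]
The first factor is the product of the fixed uniformly bounded function $(x,\delta,\bs{u})\mapsto u_k^r(\delta x)^q$, bounded by \ref{assump:Covariates}, $|X|\le\tau$, and $q\le 2$, with the VC-subgraph family $\{1(x \geq s) : s \in \mathcal{T}\}$. The second factor ranges over $\{\tilde{G}(\cdot)^{-q}:\tilde{G}\in\mathcal{G}\}$: for $q=0$ this is a singleton, while for $q\in\{1,2\}$ it is a uniformly bounded family of monotone nondecreasing functions taking values in $[1,\tilde{\epsilon}^{-q}]$, which by the Jordan argument used for ${\cal BV}({\cal T})$ in Lemma \ref{lemma:tildeE_VC_hull} sits inside the pointwise sequential closure of the symmetric convex hull of a VC-subgraph family. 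Multiplying these two uniformly bounded VC-hull classes yields $\tilde{\mathcal{F}}_1(k,q,r)$ as VC-hull for sets.

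For $\tilde{\mathcal{F}}_2(k,v)$, I would treat $v=1$ via the Riemann--Stieltjes approximation already deployed on $\mathcal{F}_{8}$ in the proof of Lemma \ref{lemma:conditions_for_Thm2.1}. Since $a,b\in{\cal BV}({\cal T})$ are uniformly bounded and (chosen to be) left-continuous, so is $s\mapsto a(s)+b(s)u_k$ for each $\bs{u}$, and a uniform grid $\{t_1<\cdots<t_{m+1}\}$ of $\mathcal{T}$ with mesh $\Delta\to 0$ yields piecewise-constant sums converging uniformly in $(x,\delta,\bs{u})$ to the integral; after division by a constant depending only on $\tilde{M}_0$, $\tilde{M}_1$, $\tilde{M}_u$ and $\sup_{\tilde{G}\in\mathcal{G}}\tilde{\Lambda}(\tau)\le|\log\tilde{\epsilon}|$, these sums are finite convex combinations of indicators lying in the VC-subgraph classes $\{(x,\delta)\mapsto 1(x\geq t):t\in\mathcal{T}\}$ and $\{(x,\delta)\mapsto(1-\delta)1(x\geq t):t\in\mathcal{T}\}$. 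Hence $\tilde{\mathcal{F}}_2(k,1)$ is VC-hull for sets; the cases $v=0$ (a singleton) and $v=2$ (square of a uniformly bounded VC-hull class) follow from the closure properties above.

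The product closure applied to $\tilde{\mathcal{F}}_1(k,q,r)\tilde{\mathcal{E}}(k,w)$ and $\tilde{\mathcal{F}}_1(k,q,r)\tilde{\mathcal{F}}_2(k,v)$, each a product of two uniformly bounded VC-hull classes by Lemma \ref{lemma:tildeE_VC_hull} and the preceding two arguments, gives both product classes as VC-hull; finite-union closure then delivers $\tilde{\mathcal{F}}(k,q,r,v,w)$ and, iterating once more over the $(q,r,v,w)$ indices, the whole class $\tilde{\mathcal{F}}(k)=\cup_{q,r,v,w}\tilde{\mathcal{F}}(k,q,r,v,w)$. I expect the main obstacle to be the $\tilde{\mathcal{F}}_2$ step: the integrand depends jointly on $(a,b,\tilde{G})$, so I must show that the Riemann--Stieltjes approximants can be expressed as convex combinations of indicators drawn from a \emph{single} VC family, with the $\ell^1$-norm of the coefficients uniformly bounded across all admissible $(a,b,\tilde{G})$. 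This uniformity, which is what licenses passing to the pointwise closed symmetric convex hull of a common VC class of sets, is exactly what the built-in bounds $(\tilde{M}_0,\tilde{M}_1)$ on ${\cal BV}({\cal T})$ together with $\sup_{\tilde{G}\in\mathcal{G}}\tilde{\Lambda}(\tau)\le|\log\tilde{\epsilon}|$ provide.
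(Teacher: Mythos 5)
Your proposal is correct and takes essentially the same route as the paper: verify each building block is VC-hull for sets (using a Jordan/monotone-approximation argument for $\tilde{G}^{-q}$, the already-established Lemma \ref{lemma:tildeE_VC_hull} for $\tilde{\mathcal{E}}(k,w)$, and the Riemann--Stieltjes approximation with uniformly $\ell^1$-bounded coefficients for $\tilde{\mathcal{F}}_2$), then close under products via Lemma 2.6.20 of van der Vaart and Wellner and under finite unions. The only differences are cosmetic: the paper factors $\tilde{\mathcal{F}}_1(k,q,r)$ into three pieces $\tilde{\mathcal{F}}_{11}\tilde{\mathcal{F}}_{12}\tilde{\mathcal{F}}_{13}$ and handles $\{1/\tilde{G}^q\}$ via VC-major preservation (Lemmas 2.6.19 and 2.6.13) rather than your direct Jordan argument, and it uses a single VC-subgraph class where you use a union of two, but these yield the same conclusion.
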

\begin{proof}
As observed, $\mathcal{G}$ is a VC-hull class for sets because any $\tilde{G} \in \mathcal{G}$ is the pointwise limit of the sequence $\tilde{G}_m = m^{-1}\sum_{j=1}^m 1(\tilde{G} \geq m^{-1}j)$, and a bounded VC-major class as well. 
By Lemma 2.6.19 of \cite{Vaart1996}, $\{x \mapsto (1/\tilde{G}(x))^q : \tilde{G} \in \mathcal{G}\}$ is bounded VC-major. This equivalently implies that $\tilde{\mathcal{F}}_{11}(k,q) = \{(x,\delta,\bs{u}) \mapsto (1/\tilde{G}(x))^q : \tilde{G} \in \mathcal{G}\}$ is bounded VC-major.
Note that $\tilde{\mathcal{F}}_{12}(k,q,r) = \{(x,\delta,\bs{u}) \mapsto u_k^r (\delta x)^q\}$ is bounded VC-major. Moreover, Lemma 2.6.13 of \cite{Vaart1996} implies that
$\tilde{\mathcal{F}}_{11}(k,q)$ and $\tilde{\mathcal{F}}_{12}(k,q,r)$ are VC-hull classes for sets. Let $\tilde{\mathcal{F}}_{13}(k) = \{(x,\delta,\bs{u}) \mapsto 1(x \geq s): s \in \mathcal{T}\},$ which is also a VC-hull class for sets.
Therefore, we have $\tilde{\mathcal{F}}_{1}(k,q,r) = \tilde{\mathcal{F}}_{11}(k,q)\tilde{\mathcal{F}}_{12}(k,q,r)\tilde{\mathcal{F}}_{13}(k)$ as a VC-hull class for sets \cite[Lemma 2.6.20 in][]{Vaart1996}.

Below we show that $\tilde{\mathcal{F}}_{2}(k,v)$ is a VC-hull class for sets.
Let $\{t_1 < t_2 < \ldots < t_m\}$ be an arbitrary partition of ${\cal T}$ with uniform increments $\Delta_t = t_{j+1}- t_{j}$ for all $j$. By Lemma \ref{lemma:UB_funcs_tildecalE}, any function $\tilde{f} \in \tilde{{\cal E}}(k,1)$ is uniformly bounded: $|\tilde{f}| \leq \tilde{M}_0(1+\tilde{M}_u)$.
Let $\tilde{M} = 4\tilde{M}_0(1+\tilde{M}_u)\sum_{j=1}^{m}|\tilde{\Lambda}(t_j+\Delta_t)|$. The integral in $\tilde{\mathcal{F}}_{2}(k,1)$ is the scalar multiple (by $\tilde{M}$) of the limit of the sequence 
\begin{align*}
  &\sum_{j=1}^{m}\frac{\tilde{f}(t_j)}{\tilde{M}}\Big[1_{[t_j,t_{j+1})}(x)(1-\delta)-\tilde{\Lambda}(t_{j+1})1(x \geq t_{j+1}) + \tilde{\Lambda}(t_{j})1(x \geq t_{j})\Big]\\
  & = \sum_{j=1}^{m}\frac{\tilde{f}(t_j)}{\tilde{M}}\Big[1_{[t_j,t_{j}+\Delta_t)}(x)(1-\delta)-\tilde{\Lambda}(t_{j}+\Delta_t)1(x \geq t_{j}+\Delta_t) + \tilde{\Lambda}(t_{j})1(x \geq t_{j})\Big]\\
  & = \sum_{j=1}^m \sum_{r=1}^4 \bigg\{\frac{\tilde{f}(t_j)}{\tilde{M}}(-1)^{r}\tilde{\Lambda}(t_{j}+\Delta_{t}1_{r=3})^{1_{r \ge 3}}\bigg\}(1-\delta)^{1_{r \le 2}}1\big(x \ge t_j + \Delta_{t}1_{r \in \{2,3\}}\big)\\
  &\equiv \sum_{j=1}^m \sum_{r=1}^4 \alpha_{jr}(1-\delta)^{1_{r \le 2}}1\big(x \ge t_j + \Delta_{t}1_{r \in \{2,3\}}\big),      
\end{align*}
where $\sum_{j=1}^m \sum_{r=1}^4 |\alpha_{jr}| \le 1$ by
\begin{align*}
  &\sum_{j=1}^m \sum_{r=1}^4 |\alpha_{jr}| \le \sum_{j=1}^{m}\sum_{r=1}^{4}\bigg|\frac{\tilde{f}(t_j)}{\tilde{M}}(-1)^{r}\tilde{\Lambda}(t_{j}+\Delta_{t}1_{r=3})^{1_{r \ge 3}}\bigg|  
  \le \sum_{j=1}^{m}\sum_{r=1}^{4}\,\bigg|\frac{\tilde{f}(t_j)}{\tilde{M}}\tilde{\Lambda}(t_j+\Delta_t)\bigg|\\
  &\quad \le \frac{4\tilde{M}_0(1+\tilde{M}_u)}{\tilde{M}}\sum_{j=1}^{m}\big|\tilde{\Lambda}(t_j+\Delta_t)\big|
  = 1.
\end{align*}
Then any integral function in $\tilde{\mathcal{F}}_{2}(k,1)$ is in a class contained in the scalar-multiplied pointwise sequential closure of the symmetric convex hull of the VC-subgraph class $$\left\{(x, \delta) \mapsto \delta1(x\ge s) : s\in\mathcal{T} \right\},$$ which is a class of indicator functions. Hence $\tilde{\mathcal{F}}_{2}(k,1)$ is a VC-hull class for sets, and so is $\tilde{\mathcal{F}}_{2}(k,2)$ because it is the square of $\tilde{\mathcal{F}}_{2}(k,1)$ \cite[Lemma 2.6.20 in][]{Vaart1996}.

Together with the fact that $\tilde{{\cal E}}(k,w)$ is shown VC-hull for sets as in Lemma \ref{lemma:tildeE_VC_hull}, repetitively applying Lemma 2.6.20 of \cite{Vaart1996} further indicates that $\tilde{\mathcal{F}}_{1}(k,q,r)\tilde{{\cal E}}(k,w)$ and $\tilde{\mathcal{F}}_{1}(k,q,r)\tilde{\mathcal{F}}_{2}(k,v)$
are VC-hull classes for sets.
Thanks to the preservation properties of VC-hull classes for sets, $\tilde{\mathcal{F}}(k,q,r,v,w)$, the union of 
$\tilde{\mathcal{F}}_{1}(k,q,r)\tilde{{\cal E}}(k,w)$ and $\tilde{\mathcal{F}}_{1}(k,q,r)\tilde{\mathcal{F}}_{2}(k,v)$, is a VC-hull class for sets.
Analogously, $\tilde{\mathcal{F}}(k)$ is also a VC-hull class for sets.
\end{proof}

By the assumption \ref{assump:Covariates} that the $U_k$ are uniformly bounded, there exists a uniform upper bound 
 $\tilde{M}_2 >0$ for $\sum_{r=0}^4|U_k|^r$. As we will now show, the following is an  envelope function for $\tilde{\mathcal{F}}(k)$ for all $k$:
\begin{align*}
  F: (x, \delta, \bs{u}) \mapsto
  \tilde{M}_2\bigg\{\sum_{q=0}^2\Big|\frac{\delta x}{\tilde{\epsilon}}\Big|^q\bigg\}
  \bigg\{\sum_{w=0}^2(\tilde{M}_0(1+\tilde{M}_u))^w\Big[1 + \big[1-\delta -\log(\tilde{\epsilon})\big]^2\Big]\bigg\}.     
\end{align*}
 First we show that $F$ is an envelope function for $\tilde{\mathcal{F}}(k,2,4,2,2)$;  similar arguments  apply to the other classes involving different values of $(q,r,v,w)$. For any function $f \in \tilde{\mathcal{F}}(k,2,4,2,2)$ depending on $\tilde{G} \in {\cal G}$ with $\tilde{\Lambda} = -\log(\tilde{G})$, we see that, for each $(\delta,x,u_k)$, $|f(\delta, x, u_k)|$ is bounded by
\begin{align*}
  &\bigg|\frac{u_k^4(\delta x)^2}{\tilde{G}^2(\tau)}\bigg|\bigg\{[a(\cdot)+b(\cdot)u_k]^2+\Big[\int_{\mathcal{T}}\big[a(s) + b(s)u_k\big]\big\{1(x \in ds, \delta=0) - 1(x \ge s)\tilde{\Lambda}(s)\big\}\Big]^2\bigg\} \\
  &\quad \leq \tilde{M}_2\bigg|\frac{(\delta x)^2}{\tilde{G}^2(\tau)}\bigg|\Big\{(\tilde{M}_0(1+\tilde{M}_u))^2\Big[1+\big[1-\delta + \tilde{\Lambda}(\tau)\big]^2\Big]\Big\} \\
  &\quad \leq \tilde{M}_2\bigg|\frac{\delta x}{\tilde{\epsilon}}\bigg|^2\Big\{(\tilde{M}_0(1+\tilde{M}_u))^2\Big[1+\big[1-\delta - \log(\tilde{\epsilon})\big]^2\Big]\Big\} \leq F,
\end{align*}
where the first inequality holds by seeing that $[a(\cdot)+b(\cdot)u_k] \in \tilde{{\cal E}}(k,1)$ so that $|a(\cdot)+b(\cdot)u_k| \le \tilde{M}_0(1+\tilde{M}_u)$; that $[1-\delta + \tilde{\Lambda}(\tau)]$ is the total variation of the signed measure $1(x \in ds, \delta=0) - 1(x \ge s)\tilde{\Lambda}(ds)$, and that
\begin{align*}
&\Big|\int_{\mathcal{T}}\big[a(s) + b(s)u_k\big]\big\{1(x \in ds, \delta=0) - 1(x \ge s)\tilde{\Lambda}(s)\big\}\Big| \leq \tilde{M}_0(1+\tilde{M}_u)\big[1-\delta + \tilde{\Lambda}(\tau)\big].
\end{align*}
The second inequality follows by $\tilde{G}(\tau) > \tilde{\epsilon} > 0$.

Because the $U_k$ is uniformly bounded in $P$-probability, $|\delta X| \leq \tau$ $P$-almost surely and $\tilde{G}(\tau) > \tilde{\epsilon} > 0$, $F$ is square-integrable: $\|F\|_{Q,2}^2 = \int F^2 dQ < \infty$ for any probability measure $Q$ on the sample space ${\cal X}$. 
Therefore, for all $k \ge 1$, Theorem 2.6.9 of \cite{Vaart1996} indicates
there exists a universal constant $K$ that does not depend on $k$ and $a\in (0,2)$ so that
\begin{align*} 
  \sup_Q \log N(\epsilon\|F\|_{Q,2},\; \tilde{\mathcal{F}}(k),\; L_2(Q)) \leq K \epsilon^{-a}.
\end{align*}
Moreover, the above display implies that
\begin{align*}
  &N(\epsilon\|F\|_{Q,2},\;\tilde{\mathcal{F}}_n,\;L_2(Q)) \leq \sum_{k \in \mathcal{K}_n} N(\epsilon\|F\|_{Q,2},\;\tilde{\mathcal{F}}(k),\;L_2(Q)) \leq p_n\exp(K \epsilon^{-a}),
\end{align*}
giving that
\begin{align}\label{eq:bounded_uniform_entropy_goal}
  \sup_Q \log N(\epsilon\|F\|_{Q,2},\; \tilde{\mathcal{F}}_n,\; L_2(Q)) \leq \log(p_n) + K \epsilon^{-a}.  
\end{align}

For $j =1,\ldots, n$,
define the empirical process $\{\mathbb{G}_{j}(\tilde{f}) : \tilde{f} \in \tilde{\mathcal{F}}_n\}$ pointwise as follows
\[
  \mathbb{G}_{j}(\tilde{f}) = \frac{1}{\sqrt{j}}\sum_{i=1}^j\Big[\tilde{f}(O_i)-P(\tilde f)\Big] = \sqrt{j}(\mathbb{P}_j-P)\tilde f,
\]
where $\mathbb{P}_j$ denotes the empirical distribution of $O_1, \ldots, O_j$.
Let $\|\mathbb{G}_{j}\|_{\tilde{\mathcal{F}}_n}=\sup_{\tilde{f} \in \tilde{\mathcal{F}}_n}|\mathbb{G}_{j}(\tilde{f})|.$
Following \eqref{eq:bounded_uniform_entropy_goal},
Theorem 2.14.1 of \cite{Vaart1996} gives $\|\mathbb{G}_{j}\|_{\tilde{\mathcal{F}}_n} \lesssim \sqrt{\log(p_n)}$, so that 
\begin{align*}
  \sup_{\tilde{f} \in \tilde{\mathcal{F}}_n} \Big|(\mathbb{P}_j-P)\tilde{f}\Big| \lesssim \sqrt{\log(p_n)/j},
\end{align*}
where $\lesssim$ means ``bounded above up to a universal multiplicative constant that does not depend on $(j,n)$.''
We also need the following lemmas.

\begin{lemma} \label{lemma:replace_Qn_Q}
For any sample size $n$, the event $\mathcal{A}_n$ occurs with probability at least $1-1/n$, where $\mathcal{A}_n=\cap_{j=1}^n\mathcal{A}_{nj}$, $\mathcal{A}_{nj}=\{\sup_{\tilde{f} \in \tilde{\mathcal{F}}_n} \big|(\mathbb{P}_j-P)\tilde{f}\big| \lesssim K_{nj}\}$ with $K_{nj} \equiv \sqrt{\log(n \lor p_n)/j}$ and $n \lor p_n =\sup(n, p_n)$.
\end{lemma}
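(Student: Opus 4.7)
My plan is to combine the already-derived bound on the expected uniform deviation over $\tilde{\mathcal{F}}_n$ with a concentration inequality around that expectation, and then union-bound over $j \in \{1,\ldots,n\}$. For a fixed $j$, let $Z_j \equiv \sup_{\tilde f \in \tilde{\mathcal{F}}_n} |(\mathbb{P}_j-P)\tilde f|$. The entropy bound in \eqref{eq:bounded_uniform_entropy_goal} feeds into Theorem 2.14.1 of \cite{Vaart1996}: because $a<2$, the uniform entropy integral $\int_0^1\sqrt{\log p_n + K\epsilon^{-a}}\,d\epsilon$ is finite and of order $\sqrt{\log p_n}$, yielding $E\|\mathbb{G}_j\|_{\tilde{\mathcal{F}}_n}\lesssim \sqrt{\log p_n}$ and hence $E[Z_j]\lesssim \sqrt{\log p_n/j}$. (Closing $\tilde{\mathcal{F}}_n$ under negation only doubles covering numbers and therefore does not change this rate.)

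For the deviation from the mean, I would invoke the bounded-differences (McDiarmid) inequality: the envelope $F$ is $P$-a.s.\ uniformly bounded by some constant $\tilde M_F<\infty$ (since $|\delta X|\le \tau$, $|U_k|\le\tilde M_u$, $\tilde G(\tau)>\tilde\epsilon$, and all remaining quantities defining $F$ are constants), so replacing one of the $j$ observations changes $Z_j$ by at most $2\tilde M_F/j$. McDiarmid then gives
\begin{align*}
{\rm P}\big(Z_j \ge E[Z_j] + t\big)\le \exp\!\big(-j t^2/(2\tilde M_F^2)\big).
\end{align*}
Setting $t=c_1\sqrt{\log(n\lor p_n)/j}$ with $c_1$ chosen large enough (depending only on $\tilde M_F$) makes the right-hand side at most $1/(n\cdot n)$, i.e., $1/n^2$. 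Combining this with $E[Z_j]\lesssim \sqrt{\log p_n/j}\le\sqrt{\log(n\lor p_n)/j}$ yields $Z_j\lesssim K_{nj}$ on an event of probability at least $1-1/n^2$, which is exactly $\mathcal{A}_{nj}$ for an appropriate universal constant hidden in $\lesssim$.

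Finally, a union bound over $j=1,\ldots,n$ gives
\begin{align*}
{\rm P}(\mathcal{A}_n^c) \;\le\; \sum_{j=1}^{n} {\rm P}(\mathcal{A}_{nj}^c) \;\le\; n\cdot \frac{1}{n^2} \;=\; \frac{1}{n},
\end{align*}
which is the claim. The only nontrivial steps are (i) controlling the entropy integral with the $\log p_n$ term inside the square root (handled by $a<2$ and a straightforward computation) and (ii) verifying uniform boundedness of the envelope, which has already been done when $F$ was introduced; the rest is a clean application of McDiarmid followed by the union bound, and I do not anticipate any real obstacle beyond keeping track of universal constants.
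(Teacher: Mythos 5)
Your argument is sound and, as far as the paper goes, it is the natural and expected route: the paper's own ``proof'' is a one-line deferral (``follow the same argument as in the proof of Lemma A.4 of Luedtke (2018), except based on the class $\tilde{\mathcal{F}}_n$''), and what you have written is exactly the concrete instantiation one would expect that cited lemma to rest on: control $E\big[\sup_{\tilde f}|(\mathbb{P}_j-P)\tilde f|\big]$ via the uniform-entropy bound \eqref{eq:bounded_uniform_entropy_goal} and Theorem~2.14.1 of van~der~Vaart and Wellner, then upgrade to a high-probability statement by a bounded-differences (McDiarmid) concentration step, and finally union-bound over $j$.

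A few small points worth being explicit about, all of which your argument handles implicitly but which are worth saying aloud. First, the envelope $F$ is indeed $P$-a.s.\ bounded by a constant $\tilde M_F$ independent of $(j,n)$ (the paper only states square-integrability explicitly, but boundedness follows from the same ingredients: uniformly bounded $U_k$, $|\delta X|\le\tau$, $\tilde G(\tau)>\tilde\epsilon$); this is precisely what licenses McDiarmid with bounded-difference constants $c_i=2\tilde M_F/j$, giving $\exp(-jt^2/(2\tilde M_F^2))$. Second, with $t=c_1 K_{nj}$ the exponent $-c_1^2\log(n\lor p_n)/(2\tilde M_F^2)$ is independent of $j$, and since $\log(n\lor p_n)\ge\log n$ one can take $c_1=2\tilde M_F$ (purely universal) to make the tail $\le 1/n^2$, so the union bound over the $n$ indices delivers exactly $1-1/n$. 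Third, the universal constant hidden in $\lesssim$ within $\mathcal{A}_{nj}$ must be at least the sum of the expectation constant from Theorem~2.14.1 and $c_1$; and for very small $j$ (e.g.\ $j=1$) the deterministic bound $Z_j\le 2\tilde M_F$ is already $\lesssim K_{nj}$ once $n\lor p_n\ge 2$, so those terms in the union bound are trivially satisfied. None of these are gaps in what you wrote, just the bookkeeping that makes the hidden constants genuinely universal. Your parenthetical about negation-closure is unnecessary since $\|\cdot\|_{\tilde{\mathcal{F}}_n}$ in Theorem~2.14.1 is already defined as a supremum of absolute values, but it is harmless. Overall I see no gap.
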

\begin{proof}
Follow the same argument as in the proof of  Lemma A.4 of \cite{Luedtke2018}, except based on the class $\tilde{\mathcal{F}}_n$.
\end{proof}
Note that  Lemma \ref{lemma:replace_Qn_Q} reduces in the special case of $\cup_{k\in\mathcal{K}_n}\cup_{r=0}^4\tilde{\mathcal{F}}_{1}(k, 0, r, 0, 0) \in \tilde{\mathcal{F}}_n$, to $\mathbb{P}_j$ and $P$ being replaced by $\mathbb{Q}_j$ (the empirical distribution of $\bs{U}_1, \ldots, \bs{U}_j$) and $Q_u$, respectively.

Let $K_{nj} = \sqrt{\log(n \lor p_n)/j}$, for $j=q_n,\ldots,n$.
For $\tilde{K}\in(0,\infty)$, define
\begin{align*}
  \mathcal{B}_n(\tilde{K}) = &\bigg\{\sup_{t \in \mathcal{T}}\Big|\frac{\hat{G}_n(t)}{G(t)}-1\Big| \leq \sqrt{\frac{\log n}{n}},\:\: \sup_{t \in \mathcal{T}}\big|\hat{\Lambda}_n(t)-\Lambda(t)\big| \le \sqrt{\frac{\log n}{n}},\:\:\inf_{s \in \mathcal{T}}Y_n(s) \ge \sqrt{n},\\
  &\quad \sup_{(k,s) \in \mathcal{K}_n \times \mathcal{T}}\big|\hat{E}_j(U_k,s,k)-E_0(U_k,s,k)\big| \leq \tilde{K}K_{nj},\; j= q_n,\ldots,n \bigg\},
\end{align*}
where $Y_n(s) = \sum_{i=1}^n1(X_i \geq s)$ and $\hat{\Lambda}_n(\cdot)=\int_{-\infty}^{\cdot}\big[1(Y_n(s)>0)/Y_n(s)\big]dN_n(s)$ is the estimator of $\Lambda(\cdot)$.
The following lemma concerns the probability of the event $\mathcal{A}_n\,\cap\,\mathcal{B}_n(\tilde{K})$. 

\begin{lemma} \label{lemma:An_Bn_prob_to_one}
Under the conditions of Theorem \ref{Thm:stab_one_step}, there exists a $\tilde{K} \in (0, \infty)$ such that ${\rm P}(\mathcal{A}_n\,\cap\, \mathcal{B}_n(\tilde{K}))\to 1$. 
\end{lemma}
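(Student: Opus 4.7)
}
The plan is to note that ${\rm P}(\mathcal{A}_n) \to 1$ is already immediate from Lemma \ref{lemma:replace_Qn_Q} (since ${\rm P}(\mathcal{A}_n) \geq 1 - 1/n$), and then to verify ${\rm P}(\mathcal{B}_n(\tilde{K}))\to 1$ for a sufficiently large constant $\tilde{K}<\infty$ by bounding each of the four events defining $\mathcal{B}_n(\tilde{K})$ separately and applying the union bound.

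First, for the Kaplan--Meier and Nelson--Aalen rates, I would write $\hat{\Lambda}_n - \Lambda$ as the martingale integral appearing in \eqref{eq:martingale_est_CHF} (with the coarsening $c$ trivial, so that the stratification drops out) and apply an exponential Bernstein-type inequality for counting process martingale integrals — for instance, the van de Geer (1995) tail bound already invoked elsewhere in the paper. This delivers the $\sqrt{\log n / n}$ uniform rate for $\hat{\Lambda}_n - \Lambda$, and then the functional delta method applied to $\tilde{G}\mapsto -\log\tilde{G}$ on ${\cal G}_0$ transfers the same rate to $\hat{G}_n/G - 1$. For the at-risk event $\inf_{s\in\mathcal{T}}Y_n(s)\ge \sqrt{n}$, monotonicity reduces the claim to $\{Y_n(\tau)\ge \sqrt{n}\}$; since $Y_n(\tau)$ is binomial with mean $n\,{\rm P}(X\ge \tau)$ and the success probability is strictly positive by \ref{assump:At-risk prob}, a single application of Hoeffding's inequality gives ${\rm P}(Y_n(\tau)<\sqrt{n})\le \exp(-2(n\,{\rm P}(X\ge\tau)-\sqrt{n})^2/n)\to 0$.

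The main step (and main obstacle) is the uniform rate for the conditional residual life estimator $\hat E_j$. Working on the intersection of $\mathcal{A}_n$ with the three events established above — an event of probability tending to one — I would subtract the formulae \eqref{eq:Ejlinear} and the analogous one for $E_0$ and decompose each piece (the intercept $\mathbb{P}_j[Y 1(X\ge s)] - P[\tilde Y 1(X\ge s)]$, the slope numerator ${\rm Cov}_{\mathbb{P}_j}(U_k 1(X\ge s),Y1(X\ge s)) - {\rm Cov}(U_k 1(X\ge s),\tilde Y 1(X\ge s))$, the slope denominator, and $(u-\mathbb{P}_j[U_k 1(X\ge s)]) - (u-P[U_k 1(X\ge s)])$) into an ``estimation of $G$'' contribution and a ``pure empirical process'' contribution. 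The former can be written, via $Y-\tilde Y = \tilde Y[G(X)/\hat G_n(X)-1]$, in a form whose uniform absolute value is $O(\sqrt{\log n/n})$ by the Kaplan--Meier rate from the previous step. The latter consists of functionals all lying in $\tilde{\mathcal{F}}_n$ by construction (since the factors $U_k^r$, $1(X\ge s)$, and $(\delta X/\tilde G(X))^q$ for $(q,r)\in\{0,1,2\}\times\{0,1,2\}$ span each of the terms needed, cf.\ \eqref{eq:class_tilde_F} and Lemma \ref{lemma:inclusion_E0_Ehat}), so on $\mathcal{A}_n$ they are uniformly $O(K_{nj})$.

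Combining the pieces requires converting products and the ratio in \eqref{eq:Ejlinear} into a linear-in-errors bound. Here \ref{assump:Variances} is the crucial ingredient: it guarantees that ${\rm Var}_{\mathbb{P}_j}(U_k 1(X\ge s))$ stays bounded away from zero uniformly in $(k,s)$ on $\mathcal{A}_n$ for $j\ge q_n$, so the denominator and its estimator are both bounded below by a positive constant and a first-order Taylor expansion of $x\mapsto 1/x$ around the true denominator has uniformly bounded derivative. Since $n/q_n=O(1)$, the rate $K_{nj}$ dominates $\sqrt{\log n/n}$ uniformly over $j\in\{q_n,\ldots,n\}$, and after collecting all contributions the overall uniform error is bounded by a constant multiple of $K_{nj}$, yielding the desired inequality for some finite $\tilde{K}$. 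The main bookkeeping difficulty lies in propagating rates through this ratio while maintaining uniformity simultaneously in $k\in\mathcal{K}_n$, $s\in\mathcal{T}$, and $j\in\{q_n,\ldots,n\}$; absorbing these three suprema is the reason the bound $K_{nj}$ involves $\log(n\vee p_n)$ rather than $\log p_n$ alone.
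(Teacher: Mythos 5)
Your plan matches the paper's proof structurally: handle ${\rm P}(\mathcal{A}_n)\to 1$ via Lemma~\ref{lemma:replace_Qn_Q}, bound each of the four events in $\mathcal{B}_n(\tilde K)$ separately, and — crucially — treat the $\hat E_j$ uniform-rate event conditionally on $\mathcal{A}_n$ together with the Kaplan--Meier rate, decomposing \eqref{eq:Ejlinear} into an ``estimation of $G$'' piece controlled by $\sup_t|\hat G_n(t)/G(t)-1|$ and a ``pure empirical process'' piece controlled by membership in $\tilde{\mathcal F}_n$, with \ref{assump:Variances} supplying the lower bound on ${\rm Var}_{\mathbb{P}_j}(U_k 1(X\ge s))$ so the ratio can be linearized. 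The one place you deviate is the Kaplan--Meier/Nelson--Aalen rate: you propose a martingale exponential inequality for $\hat\Lambda_n-\Lambda$ and then a delta-method transfer through $\tilde G\mapsto -\log\tilde G$. The paper instead uses a much lighter argument — weak convergence of $\{\sqrt n[\hat G_n(t)/G(t)-1]\}$ and of $\{\sqrt n[\hat\Lambda_n(t)-\Lambda(t)]\}$ to tight Gaussian processes, plus the elementary fact that tightness implies ${\rm P}(\sup_t|X_n(t)|>\varepsilon_n)\to 0$ whenever $\varepsilon_n\to\infty$, applied with $\varepsilon_n=\sqrt{\log n}$ — which gets to the same $\sqrt{\log n/n}$ rate without any exponential inequality. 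Your exponential route is valid but heavier than needed, and your delta-method transfer is slightly imprecise since the product-limit estimator $\hat G_n$ is not exactly $\exp(-\hat\Lambda_n)$; the paper sidesteps this by establishing both rates directly. Everything else — Hoeffding for the at-risk count, the bookkeeping through the ratio, the domination of $\sqrt{\log n/n}$ by $K_{nj}$ over $j\ge q_n$ — is the same.
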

\begin{proof}
Fix $\tilde{K}$. It holds that ${\rm P}\left(\mathcal{A}_n\cap \mathcal{B}_n(\tilde{K})\right)= {\rm P}\left(\mathcal{A}_n\right) - {\rm P}\left( \mathcal{A}_n\cap \mathcal{B}_n(\tilde{K})^c\right)$. Hence, by Lemma~\ref{lemma:replace_Qn_Q}, we have that:
\begin{align}
  {\rm P}\left(\mathcal{A}_n\cap \mathcal{B}_n(\tilde{K})\right)&\ge 1-1/n - {\rm P}\left( \mathcal{A}_n\cap \mathcal{B}_n(\tilde{K})^c\right). \label{eq:initialDecomp}
\end{align}
It therefore remains to show that, for any appropriate choice of $\tilde{K}$, ${\rm P}\left( \mathcal{A}_n\cap \mathcal{B}_n(\tilde{K})^c\right)=o(1)$. To show this, we use that $\mathcal{A}_n\cap \mathcal{B}_n(\tilde{K})^c\subseteq \cup_{j=1}^3 \mathcal{R}_{nj}$, where
\begin{align*}
  \mathcal{R}_{n1} &\equiv  \left\{\sup_{t \in \mathcal{T}}\Big|\frac{\hat{G}_n(t)}{G(t)}-1\Big| > \sqrt{\frac{\log n}{n}}\right\} \cup \left\{\sup_{t \in \mathcal{T}}\big|\hat{\Lambda}_n(t)-\Lambda(t)\big| > \sqrt{\frac{\log n}{n}}\right\}, \\
  \mathcal{R}_{n2} &\equiv \left\{\inf_{s \in \mathcal{T}}Y_n(s) < \sqrt{n}\right\}, \\
  \mathcal{R}_{n3} &\equiv \mathcal{A}_n \cap \Bigg\{\sup_{t \in \mathcal{T}}\Big|\frac{\hat{G}_n(t)}{G(t)}-1\Big| \leq \sqrt{\frac{\log n}{n}} \Bigg\}\\
  &\hspace{2.5em} \cap \bigcup_{j=q_n}^n\bigg\{\sup_{(k,s) \in \mathcal{K}_n \times \mathcal{T}}\big|\hat{E}_j(U_k,s,k)-E_0(U_k,s,k)\big| > \tilde{K}K_{nj} \Bigg\}.
\end{align*}
By a union bound, this yields that ${\rm P}(\mathcal{A}_n\cap \mathcal{B}_n(\tilde{K})^c)\le \sum_{h=1}^3 P(\mathcal{R}_{nh})$. In the remainder of this proof, we will establish the following three facts
\begin{align}\label{eq:event_prob_convergence}
  {\rm P}(\mathcal{R}_{n1}) = o(1),\:\: 
  {\rm P}(\mathcal{R}_{n2}) = o(1)\:\: \mbox{and}\:\:
  {\rm P}(\mathcal{R}_{n3}) = 0.
\end{align}
Combining these facts with \eqref{eq:initialDecomp} will then yield the result.

We first show that ${\rm P}(\mathcal{R}_{n1}) \rightarrow 0$. Observing that $\{\sqrt{n}[\hat{G}_n(t)/G(t)-1] : t\in\mathcal{T}\}$ converges to a tight Gaussian process, and then the continuous mapping theorem gives that $\sqrt{n}\sup_{t\in\mathcal{T}}|\hat{G}_n(t)/G(t)-1|$ converges to the supremum of the absolute value of this Gaussian process. Therefore for any sequence $\varepsilon_n \rightarrow \infty$, ${\rm P}(\sup_{t \in \mathcal{T}}\sqrt{n}|\hat{G}_n(t)/G(t)-1| > \varepsilon_n \big) \rightarrow 0$, in particular, $\varepsilon_n=\sqrt{\log n}$. The identical argument applies to 
yield that ${\rm P}(\sup_{t \in \mathcal{T}}\sqrt{n}|\hat{\Lambda}_n(t)-\Lambda(t)| > \varepsilon_n \big) \rightarrow 0$.
Consequently, ${\rm P}(\mathcal{R}_{n1})=o(1)$.

To obtain ${\rm P}(\mathcal{R}_{n2}) \rightarrow 0$,
observe that $Y_n(\tau) \sim \mbox{Binomial}(n, p_*)$ with $p_*={\rm P}(X \geq \tau)$ and ${\rm P}(X \geq \tau) > 0$ by \ref{assump:At-risk prob}. Along with the monotonicity of $Y_n(s)$ in $s \in \mathcal{T}$, Hoeffding's inequality gives
\begin{align*}
  {\rm P}(\mathcal{R}_{n2}) & \le {\rm P}\bigg( \inf_{s \in \mathcal{T}}Y_n(s) \leq \sqrt{n} \bigg)
  = {\rm P}\bigg( Y_n(\tau) \leq \sqrt{n} \bigg)
  \leq \exp(-2(\sqrt{n}p_*-1)^2) \rightarrow 0.
\end{align*}
In the remainder of the proof, we show that ${\rm P}(\mathcal{R}_{n3}) =0$. It suffices to show that
\begin{align}
    &\mathcal{A}_n \cap \left\{\sup_t\Big|\frac{\hat{G}_n(t)}{G(t)}-1\Big| \le \sqrt{\frac{\log n}{n}}\right\} \nonumber \\
    &\quad \subseteq \bigcap_{j=q_n}^n\left\{\sup_{(k,s) \in \mathcal{K}_n \times \mathcal{T}}\big|\hat{E}_j(U_k,s,k)-E_0(U_k,s,k)\big| \le \tilde{K}K_{nj}\right\}, \label{eq:finalDecomp}
\end{align}
giving that $\mathcal{R}_{n3} = \emptyset$ and then in turn ${\rm P}(\mathcal{R}_{n3}) =0$.

For the rest of the proof, suppose that $\mathcal{A}_n$ occurs and  $\sup_t|\hat{G}_n(t)/G(t)-1| \le \sqrt{\log(n)/n}$. To simplify notation, let $Y^s = Y1(X \geq s)$; $\tilde{Y}^s = \tilde{Y}1(X \geq s)$ and $U_{k}^s = U_k1(X \geq s)$. Taylor expanding $Y^s$ with respect to $\hat{G}_n$ around $G$ gives 
\begin{align}\label{eq:conditions_for_Bn_1}
    & \sup_{s}\Big\{P|Y^s-\tilde{Y}^s|\Big\} \leq  \sum_{r=1}^{\infty}\Big[\sup_{t}\Big|\frac{\hat{G}_n(t)}{G(t)}-1\Big|\,\Big]^r
    \sup_{s}\Big\{P\big|\tilde{Y}^s\big|\Big\} \nonumber \\
    & \quad \leq \sup_{s}\Big\{P\big|\tilde{Y}^s\big|\Big\}\sum_{r=1}^{\infty}\bigg[\sqrt{\frac{\log n}{n}}\,\bigg]^r
    \leq P\big|\tilde{Y}\big|\sum_{r=1}^{\infty}\bigg[\sqrt{\frac{\log n}{n}}\,\bigg]^r \nonumber \\
    & \quad = \sqrt{\frac{\log n}{n}} P\big|\tilde{Y}\big|\sum_{r=1}^{\infty}\bigg[\sqrt{\frac{\log n}{n}}\,\bigg]^{r-1} \lesssim \sqrt{\frac{\log n}{n}}, 
\end{align}
where the last steps holds by $\sqrt{\log n/n} < 1$ such that
$\sum_{r=1}^{\infty}[\sqrt{\log n/n}\;]^{r-1} < \infty$,
and $|\tilde{Y}| = |\delta X/G(X)|$ is a bounded random variable.
Similarly, the triangle inequality gives 
\begin{align}
  &\sup_{(k,s)}\big|{\rm Cov}(U_k^s,Y^s-\tilde{Y}^s) \big| \leq
  \sup_{(k,s)}\Big\{P|U_k^s(Y^s-\tilde{Y}^s)|+P|U_k^s|P|Y^s-\tilde{Y}^s|\Big\} \nonumber\\
  &\quad \leq \sup_{(k,s)}\Big\{P\big|U_k^s\tilde{Y}^s\big| + P|U_k^s|P\big|\tilde{Y}^s\big| \Big\}\sum_{r=1}^{\infty}\Big[\sup_{t}\Big|\frac{\hat{G}_n(t)}{G(t)}-1\Big|\,\Big]^r\nonumber\\
  &\quad \leq \sup_{(k,s)}\Big\{P\big|U_k^s\tilde{Y}^s\big| + P|U_k^s|P\big|\tilde{Y}^s\big| \Big\}\sum_{r=1}^{\infty}\bigg[\sqrt{\frac{\log n}{n}}\,\bigg]^r \nonumber\\
  &\quad \leq \sqrt{\frac{\log n}{n}} \max_{k}\Big\{P\big|U_k\tilde{Y}\big| + P|U_k|P\big|\tilde{Y}\big|\Big\} \sum_{r=1}^{\infty}\bigg[\sqrt{\frac{\log n}{n}}\,\bigg]^{r-1}
  \lesssim \sqrt{\frac{\log n}{n}}. \label{eq:conditions_for_Bn_1point5}
\end{align}
By the triangle inequality,
\begin{align*}
  &\sup_{(k,s)}\big|{\rm Cov}_{\Pj}(U_k^s, Y^s)-{\rm Cov}(U_k^s, Y^s)\big|\\ 
  &\quad = \sup_{(k,s)}\Big\{\big|\Pj[U_k^sY^s]-\Pj[U_k^s]\Pj[Y^s]-P[U_k^sY^s]+P[U_k^s]P[Y^s]\big|\Big\}\\
  &\quad \leq \sup_{(k,s)}\Big\{\big|(\Pj-P)[U_k^sY^s]\big| + \big|(\Pj-P)[U_k^s]\big|\big|(\Pj-P)[Y^s]\big|\\  
  &\hspace{2cm} + \big|P[U_k^s]\big|\big|(\Pj-P)[Y^s]\big| + \big|P[Y^s]\big|\big|(\Pj-P)[U_k^s]\big|\Big\},
\end{align*}
and based on the above display, we have
\begin{align}
  \sup_{(k,s)}\big|{\rm Cov}_{\Pj}(U_k^s, Y^s)-{\rm Cov}(U_k^s, Y^s)\big|  
  \lesssim K_{nj},
                                \label{eq:conditions_for_Bn_2}
\end{align}
when $\mathcal{A}_n$ occurs, since the $U_k$ is assumed to be uniformly bounded by \ref{assump:Covariates} and  $|Y^s|$ is bounded as implied by \ref{assump:Survival function} and $|X| \leq \tau$, along with
$\log(n \lor p_n)/j \leq 1$ so that $\log(n \lor p_n)/j \leq K_{nj}$, for $j=q_n,\ldots,n$. 

Regarding the assumption in \ref{assump:Variances} that ${\rm Var}(U_k^s)$ is uniformly bounded away from zero, there exists $\tilde{\zeta} \in (0, \infty)$ so that
$\min_{(k,s)}{\rm Var}(U_k^s) > \tilde{\zeta} > 0$. Meanwhile, let $\tilde{\eta}$ be the smallest universal positive constant to maintain $\sup_{(k,s)}|{\rm Var}_{\Pj}(U_k^s)-{\rm Var}(U_k^s)| \leq \tilde{\eta} K_{nj}$ that is implied by
the occurrence of $\mathcal{A}_n$. Let $q_n \geq \lceil 4\tilde{\eta}^2\log(n \lor p_n)/\tilde{\zeta}^{2} \rceil$ and
\begin{align*}
  &{\rm Var}_{\Pj}(U_k^s)={\rm Var}(U_k^s)+{\rm Var}_{\Pj}(U_k^s)-{\rm Var}(U_k^s) \\
  &\quad \geq {\rm Var}(U_k^s) - \sup_{(k,s)}|{\rm Var}_{\Pj}(U_k^s)-{\rm Var}(U_k^s)|
  \geq {\rm Var}(U_k^s) - \tilde{\eta} K_{nj}\\
  &\quad \geq \min_{(k,s)}{\rm Var}(U_k^s) - \tilde{\zeta}/2 > \tilde{\zeta}/2,\;\; \forall\; k,\,s.
\end{align*}
This yields 
\begin{equation}
  \sup_{(k,s)}\Big|\frac{1}{{\rm Var}_{\Pj}(U_k^s)}-\frac{1}{{\rm Var}(U_k^s)}\Big| \leq
  \frac{\sup_{(k,s)}|{\rm Var}_{\Pj}(U_k^s)-{\rm Var}(U_k^s)|}{[\min_{(k,s)}{\rm Var}_{\Pj}(U_k^s)][\min_{(k,s)}{\rm Var}(U_k^s)]}
  \lesssim K_{nj}.
                                    \label{eq:conditions_for_Bn_3}
\end{equation}
Using the results in \eqref{eq:conditions_for_Bn_1}-\eqref{eq:conditions_for_Bn_3} gives
\begin{align*}
  &\sup_{(k,s)}\Big|\big(U_k^s-\mathbb{P}_{j}[U_k^s]\big){\rm Cov}_{\mathbb{P}_{j}}(U_k^s, Y^s)
  - \big(U_k^s-P[U_k^s]\big){\rm Cov}(U_k^s, \tilde{Y}^s)\Big| \\
  &\quad \leq \sup_{(k,s)}\bigg\{ \big|U_k^s-P[U_k^s]\big|\big|{\rm Cov}_{\mathbb{P}_{j}}(U_k^s, Y^s)-{\rm Cov}(U_k^s, Y^s)\big|\\
  & \hspace{2cm} + \big|(\mathbb{P}_{j}-P)[U_k^s]\big|\big|{\rm Cov}_{\mathbb{P}_{j}}(U_k^s, Y^s)-{\rm Cov}(U_k^s, Y^s)\big|\\
  & \hspace{2cm} +\big|U_k^s-P[U_k^s]\big|\big|{\rm Cov}(U_k^s, Y^s-\tilde{Y}^s)\big|
  + \big|(\mathbb{P}_{j}-P)[U_k^s]\big|\big|{\rm Cov}(U_k^s, Y^s-\tilde{Y}^s)\big|\\
  &\hspace{2cm} + \big|(\Pj-P)[U_k^s]\big|\big|{\rm Cov}(U_k^s, \tilde{Y}^s)\big| \bigg\}\\
  &\quad \lesssim K_{nj},
\end{align*}
following the bounded values of $|\tilde{Y}^s|$ and that of $|U_k^s|$ over $k$, which are implied by \ref{assump:Covariates}--\ref{assump:Survival function} and $|X| \leq \tau$.
As stated, suppose that $\hat{E}_j$ is fitted via the linear regression approach described in Remark 3.5 and as defined in \eqref{eq:Ejlinear}. Applying similar arguments and the above results yields
\begin{align*}
  &\sup_{(k,s)}\big|\hat{E}_j(U_k,s,k)-E_0(U_k,s,k)\big| \leq 
  \sup_{(k,s)}\bigg\{\big|(\Pj-P)[Y_s]\big| + P|Y_s-\tilde{Y}_s|\\
  & \qquad + \frac{1}{{\rm Var}(U_{k,s})}\Big|\big(U_{k,s}-\mathbb{P}_{j}[U_{k,s}]\big){\rm Cov}_{\mathbb{P}_{j}}(U_{k,s}, Y_s)
  - \big(U_{k,s}-P[U_{k,s}]\big){\rm Cov}(U_{k,s}, \tilde{Y}_s)\Big|\\
  & \qquad + \Big|\frac{1}{{\rm Var}_{\mathbb{P}_{j}}(U_{k,s})}-\frac{1}{{\rm Var}(U_{k,s})}\Big|\Big|\big(U_{k,s}-\mathbb{P}_{j}[U_{k,s}]\big){\rm Cov}_{\mathbb{P}_{j}}(U_{k,s}, Y_s)\\
  &\hspace{6cm} - \big(U_{k,s}-P[U_{k,s}]\big){\rm Cov}(U_{k,s}, \tilde{Y}_s)\Big|\\
  & \qquad + \Big|\big(U_{k,s}-P[U_{k,s}]\big){\rm Cov}(U_{k,s}, \tilde{Y}_s)\Big|\Big|\frac{1}{{\rm Var}_{\mathbb{P}_{j}}(U_{k,s})}-\frac{1}{{\rm Var}(U_{k,s})}\Big|\bigg\}\\
  & \quad \lesssim K_{nj} + \sqrt{\log n/j}
  \leq \tilde{K}K_{nj},
\end{align*}
for some constant $\tilde{K} \in (1, \infty)$ that does not depend on $(j,n)$, leading to 
\begin{align}
  \sup_{(k,s)}\big|\hat{E}_j(U_k,s,k)-E_0(U_k,s,k)\big| \leq 
  \tilde{K}K_{nj}.
                        \label{eq:upper_bound_for_Ehat_deviation}
\end{align}
Hence we have shown that \eqref{eq:finalDecomp} holds and conclude the proof.
\end{proof}

\begin{lemma} \label{lemma:An_Bn_Cn_prob_to_one}
Let $\tilde{K}$ be given in Lemma \ref{lemma:An_Bn_prob_to_one} and $\mathcal{I}_n\equiv \{(j,k) : j \in \{q_n,\ldots,n\}, k \in {\cal K}_n\}$. Suppose the conditions of Theorem \ref{Thm:stab_one_step} hold, and introduce the event $\mathcal{C}_n$ that occurs when
\begin{lemmaitemenum}
  \item \label{eq:event_Cn_1} $\max_{k \in \mathcal{K}_n}|{\rm Var}_{\mathbb{Q}_j}(U_k)-{\rm Var}_{Q_u}(U_k)| \lesssim K_{nj}$,
  \item \label{eq:event_Cn_2} $\min_{(j,k) \in \mathcal{I}_n}{\rm Var}_{\mathbb{Q}_j}(U_k)$ is bounded away from zero,
  \item \label{eq:event_Cn_3} $\max_{k \in \mathcal{K}_n}\big|(\Qj-Q_u)\big[U^r_{k} \{E_0(U_{k},k)\}^w\big]\big| \lesssim K_{nj}$, $0 \leq r, w \leq 1$, 
  \item \label{eq:event_Cn_4} $\max_{k \in \mathcal{K}_n}\big|1/{\rm Var}_{\Qj}(U_{k})-1/{\rm Var}_{Q_u}(U_{k})\big| \lesssim K_{nj}$,
  \item \label{eq:event_Cn_5}
  $\sup_{(k,s) \in {\cal K}_n \times {\cal T}}\big|\hat{E}_j(U_k,s,k)\big| \lesssim K_{nj} + \sup_{(k,s) \in {\cal K}_n \times {\cal T}}|E_0(U_k,s,k)|$,
  \item \label{eq:event_Cn_6}
  $\max_{k \in {\cal K}_n}|U_k-\Qj[U_k]| \lesssim K_{nj} + \max_{k \in {\cal K}_n}|U_k-Q_u[U_k]|$,
\end{lemmaitemenum}
where each of \ref{eq:event_Cn_1}--\ref{eq:event_Cn_6} relies on appropriately specified constants that do not depend on $(j,n)$. Then such constants exist such that
${\rm P}(\mathcal{A}_n \cap \mathcal{B}_n(\tilde{K})\cap {\cal C}_n) \to 1$.
\end{lemma}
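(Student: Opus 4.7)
The plan is to show that each of the six statements \ref{eq:event_Cn_1}--\ref{eq:event_Cn_6} holds deterministically on the event $\mathcal{A}_n \cap \mathcal{B}_n(\tilde{K})$, for appropriately chosen universal constants. Since ${\rm P}(\mathcal{A}_n \cap \mathcal{B}_n(\tilde{K})) \to 1$ by Lemma \ref{lemma:An_Bn_prob_to_one}, this will immediately yield ${\rm P}(\mathcal{A}_n \cap \mathcal{B}_n(\tilde{K}) \cap \mathcal{C}_n) \to 1$. The workhorse is the uniform empirical-process bound $\sup_{\tilde{f} \in \tilde{\mathcal{F}}_n}|(\mathbb{P}_j-P)\tilde{f}| \lesssim K_{nj}$ available on $\mathcal{A}_n$, which transfers to all polynomial-in-$U_k$ quantities that sit inside $\tilde{\mathcal{F}}_n$, together with the uniform consistency estimates $\sup_{(k,s)}|\hat{E}_j(U_k,s,k)-E_0(U_k,s,k)| \le \tilde{K}K_{nj}$ and $\sup_t|\hat{G}_n(t)/G(t)-1| \le \sqrt{\log n/n}$ from $\mathcal{B}_n(\tilde{K})$.

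For \ref{eq:event_Cn_1}, I would write ${\rm Var}_{\Qj}(U_k)-{\rm Var}_{Q_u}(U_k) = (\Qj-Q_u)[U_k^2] - (\Qj[U_k]-Q_u[U_k])(\Qj[U_k]+Q_u[U_k])$, note that the coordinate functions $u\mapsto u_k$ and $u\mapsto u_k^2$ lie in $\tilde{\mathcal{F}}_1(k,0,1)$ and $\tilde{\mathcal{F}}_1(k,0,2)$ respectively (both subclasses of $\tilde{\mathcal{F}}_n$), and bound $|\Qj[U_k]|, |Q_u[U_k]|\le \tilde{M}_u$ from \ref{assump:Covariates}; the bound is then $\lesssim K_{nj}$ uniformly in $k$. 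Statement \ref{eq:event_Cn_2} is immediate from \ref{eq:event_Cn_1}, \ref{assump:Variances}, and $K_{nj}\to 0$ (which uses $q_n \to \infty$). For \ref{eq:event_Cn_4}, I would factor $1/{\rm Var}_{\Qj}(U_k) - 1/{\rm Var}_{Q_u}(U_k) = -({\rm Var}_{\Qj}(U_k)-{\rm Var}_{Q_u}(U_k))/\{{\rm Var}_{\Qj}(U_k){\rm Var}_{Q_u}(U_k)\}$ and combine \ref{eq:event_Cn_1}, \ref{eq:event_Cn_2}, and \ref{assump:Variances}.

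For \ref{eq:event_Cn_3}, the key observation is that $E_0(U_k,k) = E_0(U_k,-\infty,k)$ is linear in $U_k$ with bounded coefficients by Lemma \ref{lemma:inclusion_E0_Ehat}, so $U_k^r\{E_0(U_k,k)\}^w$ belongs, for each $(r,w)\in\{0,1\}^2$, to one of $\tilde{\mathcal{F}}_1(k,0,r)\tilde{\mathcal{E}}(k,w) \subset \tilde{\mathcal{F}}_n$; the empirical process bound on $\mathcal{A}_n$ then gives the claim uniformly in $k$. Statement \ref{eq:event_Cn_6} is a triangle-inequality consequence of \ref{eq:event_Cn_3} at $(r,w)=(1,0)$. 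Finally, \ref{eq:event_Cn_5} follows directly from the triangle inequality together with \eqref{eq:upper_bound_for_Ehat_deviation}, which was already established inside the proof of Lemma \ref{lemma:An_Bn_prob_to_one} on the event $\mathcal{B}_n(\tilde{K})$.

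The only subtlety I anticipate is bookkeeping: each of the universal constants hidden inside $\lesssim$ must be shown to be independent of $(j,k,n)$, which requires that every quantity estimated through the empirical process is presented as an element of $\tilde{\mathcal{F}}_n$ whose envelope $F$ and uniform entropy bound \eqref{eq:bounded_uniform_entropy_goal} do not depend on $k$. The rest reduces to a sequential application of the triangle inequality and the bounds already packaged in $\mathcal{A}_n$ and $\mathcal{B}_n(\tilde{K})$.
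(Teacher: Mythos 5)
Your proposal is correct and mirrors the paper's proof almost exactly: both establish that $\mathcal{A}_n \cap \mathcal{B}_n(\tilde{K})$ (eventually) implies each sub-event of $\mathcal{C}_n$ by decomposing the relevant quantities into empirical-process averages over members of $\tilde{\mathcal{F}}_n$ and then invoking the bounds packaged in $\mathcal{A}_n$, $\mathcal{B}_n(\tilde{K})$, \ref{assump:Covariates}, and \ref{assump:Variances}. The only cosmetic caveat is that the inclusion into $\mathcal{C}_{n2}$ holds only for $n$ large enough (so that $\zeta - \eta K_{nj} \ge \zeta/2$), which you already flag via the ``$K_{nj}\to 0$'' remark, matching how the paper handles it.
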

\begin{proof}
When $\mathcal{A}_n$ occurs, the triangle inequality gives that
\begin{align*}
  &\max_{k \in \mathcal{K}_n}|{\rm Var}_{\mathbb{Q}_j}(U_k)-{\rm Var}_{Q_u}(U_k)|\\
  &\quad =\max_{k \in \mathcal{K}_n}\Big|(\mathbb{Q}_j[U_k]^2-Q_u[U_k]^2)-(\mathbb{Q}_j[U_k]+Q_u[U_k])(\mathbb{Q}_j[U_k]-Q_u[U_k])\Big|\\
  &\quad \lesssim \max_{k \in \mathcal{K}_n}\Big|\mathbb{Q}_j[U_k]^2-Q_u[U_k]^2 \Big|+ \max_{k \in \mathcal{K}_n}\Big|\mathbb{Q}_j[U_k]-Q_u[U_k]\Big| \lesssim K_{nj}.
\end{align*}
Let ${\cal C}_{n1}$ correspond to the event \ref{eq:event_Cn_1} using the $(j,n)$-independent constant implied by the above display, which gives that
${\rm P}(\mathcal{A}_n \cap \mathcal{B}_n(\tilde{K}) \cap {\cal C}_{n1}^c) \to 0.$

By \ref{assump:Variances}, there exists $\zeta>0$ such that
$\min_{k \in {\cal K}_n}{\rm Var}_{Q_u}(U_k)\geq \zeta$. Moreover, if ${\cal C}_{n1}$ holds, then there exists $\eta>0$ such that for all $n \in \mathbb{N}$ and $(j,k)\in\mathcal{I}_n$, $|{\rm Var}_{\mathbb{Q}_j}(U_k)-{\rm Var}_{Q_u}(U_k)| \leq \eta K_{nj}$. This yields, for all $n$ and all $(j,k)\in\mathcal{I}_n$, 
\begin{align*}
  {\rm Var}_{\mathbb{Q}_j}(U_k)&={\rm Var}_{Q_u}(U_k)+{\rm Var}_{\mathbb{Q}_j}(U_k)-{\rm Var}_{Q_u}(U_k) \\
  &\geq {\rm Var}_{Q_u}(U_k) - |{\rm Var}_{\mathbb{Q}_j}(U_k)-{\rm Var}_{Q_u}(U_k)| \geq \zeta - \eta K_{nj}.
\end{align*}
By the conditions of Theorem \ref{Thm:stab_one_step}, $\sqrt{q_n}/\log(n \lor p_n)\to\infty$ and let $q_n/\log(n \lor p_n) \geq 4\eta^2/\zeta^2$ for all $n$ sufficiently large. Hence, the above shows that, for all such $n$, ${\rm Var}_{\mathbb{Q}_j}(U_k)\geq \zeta/2$ when $\mathcal{A}_n \cap \mathcal{B}_n(\tilde{K}) \cap {\cal C}_{n1}$ occurs. Letting $\mathcal{C}_{n2}$ denote the event that $\min_{(j,k) \in \mathcal{I}_n}{\rm Var}_{\mathbb{Q}_j}(U_k)\geq\zeta/2$, and we show that 
${\rm P}(\mathcal{A}_n \cap \mathcal{B}_n(\tilde{K}) \cap {\cal C}_{n2}^c) \to 0$.

We could regard \ref{eq:event_Cn_3} as a consequence of the occurrence of $\mathcal{A}_n$ because the function $u_k \mapsto
u^r_{k}\{E_0(u_{k},k)\}^w$ belongs to $\tilde{\mathcal{F}}_n$. 
Let $\mathcal{C}_{n3}$ correspond to the event \ref{eq:event_Cn_3}, and ${\rm P}(\mathcal{A}_n \cap \mathcal{B}_n(\tilde{K}) \cap {\cal C}_{n3}^c) \to 0.$
Along with the fact that $\min_{k \in {\cal K}_n}{\rm Var}_{Q_u}(U_k)$ is bounded away from zero that is implied by \ref{assump:Variances}, using \ref{eq:event_Cn_1} and \ref{eq:event_Cn_2} of Lemma \ref{lemma:An_Bn_Cn_prob_to_one} gives
\begin{align*}
  \max_{k \in \mathcal{K}_n}\bigg|\frac{1}{{\rm Var}_{\Qj}(U_{k})}-\frac{1}{{\rm Var}_{Q_u}(U_{k})}\bigg| = \frac{\max_{k \in \mathcal{K}_n}|{\rm Var}_{\Qj}(U_{k})-{\rm Var}_{Q_u}(U_{k})|}{[\min_{k \in \mathcal{K}_n}{\rm Var}_{\Qj}(U_{k})][\min_{k \in \mathcal{K}_n}{\rm Var}_{Q_u}(U_{k})]} \lesssim K_{nj}.
\end{align*}
Let $\mathcal{C}_{n4}$ correspond to the event \ref{eq:event_Cn_4},
and $\mathcal{C}_{n1} \cap \mathcal{C}_{n2} \subseteq \mathcal{C}_{n4}$, which gives
${\rm P}(\mathcal{A}_n \cap \mathcal{B}_n(\tilde{K}) \cap {\cal C}_{n4}^c) \to 0.$

According to \ref{assump:Conditional_mean_E0}, $\sup_{(k,s)}|E_0(U_k,s,k)|$ is uniformly bounded by some $n$-independent constant. Therefore when ${\cal B}_n(\tilde{K})$ occurs, we have $\sup_{(k,s)}\big|\hat{E}_j(U_k,s,k) - E_0(U_k,s,k)\big| \leq \tilde{K}K_{nj}$ for all $j$, implying that
\begin{align*}
  \sup_{(k,s)}\big|\hat{E}_j(U_k,s,k)\big| \leq
  \tilde{K}K_{nj} + \sup_{(k,s)}\big|E_0(U_k,s,k)\big| \lesssim K_{nj} + \sup_{(k,s)}\big|E_0(U_k,s,k)\big|.
\end{align*}
Thus, letting $\mathcal{C}_{n5}$ denote the event that $$\sup_{(k,s)}|\hat{E}_j(U_k,s,k)| \lesssim K_{nj} + \sup_{(k,s)}\big|E_0(U_k,s,k)\big|,\; j=q_n,\ldots,n,$$ we have 
${\rm P}(\mathcal{A}_n \cap \mathcal{B}_n(\tilde{K}) \cap {\cal C}_{n5}^c) \to 0$.
Similarly, we have that $\max_k|U_k-Q_u[U_k]|$ is uniformly bounded by some finite constant that does not depend on $(k,n)$, by \ref{assump:Covariates}.
Therefore when ${\cal A}_n$ occurs, 
\begin{align*}
  &\max_k|U_k-\Qj[U_k]| \leq \max_k|(\Qj-Q_u)[U_k]| + \max_k|U_k-Q_u[U_k]|\\
  &\quad \lesssim K_{nj} + \max_k|U_k-Q_u[U_k]|.
\end{align*}
Let $\mathcal{C}_{n6}$ denote the event that for $j=q_n,\ldots,n$, $\max_k|U_k-\Qj[U_k]| \lesssim K_{nj} + \max_k|U_k-Q_u[U_k]|$ and
${\rm P}(\mathcal{A}_n \cap \mathcal{B}_n(\tilde{K}) \cap {\cal C}_{n6}^c) \to 0$.
Letting $\mathcal{C}_n\equiv\cap_{q=1}^6\mathcal{C}_{nq}$, we have shown
${\rm P}(\mathcal{A}_n \cap \mathcal{B}_n(\tilde{K}) \cap {\cal C}_n) \geq 1 - \sum_{q=1}^6{\rm P}(\mathcal{A}_n \cap \mathcal{B}_n(\tilde{K}) \cap {\cal C}_{nq}^c) \to 1$.
\end{proof}

For some constant $\tilde{K}$ as given in Lemma \ref{lemma:An_Bn_prob_to_one}, $$d_n(P_1,P_2) \equiv P\Big[\max_{k \in \mathcal{K}_n}|\IF^*_{k}(O|P_1)-\IF^*_{k}(O|P_2)|1_{\mathcal{A}_n \cap \mathcal{B}_n(\tilde{K}) \cap {\cal C}_n}\,\Big].$$
Recall that $\hat{P}_{nj} = (\hat{E}_j, \mathbb{Q}_j, \hat{G}_n)$ and $\hat{P}'_j = (\hat{E}_j, \mathbb{Q}_j, G)$ as defined in Section \ref{sec:notation}.
\begin{lemma} \label{lemma:An_Bn_Cn_Dn_prob_to_one}
Suppose the conditions of Theorem \ref{Thm:stab_one_step} hold. Then with $\tilde{K} \in (0, \infty)$ given in Lemma \ref{lemma:An_Bn_prob_to_one}, there exists $K'\in (1,\infty)$ such that ${\rm P}(\mathcal{A}_n \cap \mathcal{B}_n(\tilde{K}) \cap {\cal C}_n \cap
\mathcal{D}_{n}(K')) \to 1$, where
\begin{align*}
  \mathcal{D}_n(K') = \big\{d_n(\hat{P}_{nj}, \hat{P}_{j}') \lor d_n(\hat{P}_{j}',P) \leq K'K_{nj},\,j=q_n,\ldots,n \big\}.
\end{align*}
\end{lemma}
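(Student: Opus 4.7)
The plan is to reduce the claim to a purely deterministic bound on the event $\mathcal{A}_n \cap \mathcal{B}_n(\tilde K) \cap \mathcal{C}_n$. Indeed, the indicator $1_{\mathcal{A}_n \cap \mathcal{B}_n(\tilde K) \cap \mathcal{C}_n}$ inside the definition of $d_n$ is measurable with respect to the training sample only, so on that event it equals one, giving $d_n(P_1,P_2) = P[\max_{k}|\IF^*_k(O|P_1) - \IF^*_k(O|P_2)|]$ where $P$ integrates over the fresh $O$. If I can exhibit a single finite constant $K'$ such that $d_n(\hat P_{nj},\hat P'_j) \vee d_n(\hat P'_j,P) \le K' K_{nj}$ for every $j \in \{q_n,\ldots,n\}$ whenever $\mathcal{A}_n \cap \mathcal{B}_n(\tilde K) \cap \mathcal{C}_n$ holds, then $\mathcal{A}_n \cap \mathcal{B}_n(\tilde K) \cap \mathcal{C}_n \subseteq \mathcal{D}_n(K')$, and the conclusion follows from Lemma \ref{lemma:An_Bn_Cn_prob_to_one}.

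For $d_n(\hat P'_j,P)$, the plan is to write $\IF^*_k = \IF^{ipw}_k - \IF^{car}_k$ using \eqref{eq:reexpressed_if} and telescope the difference into a small number of terms in which exactly one of the building blocks is perturbed at a time (namely $\mathbb{Q}_j[U_k]$ versus $Q_u[U_k]$, $1/{\rm Var}_{\mathbb{Q}_j}(U_k)$ versus $1/{\rm Var}_{Q_u}(U_k)$, ${\rm Cov}_{\mathbb{Q}_j}(U_k,\hat E_j(U_k))$ versus ${\rm Cov}_{Q_u}(U_k,E_0(U_k))$, or $\hat E_j(U_k,\cdot,k)$ versus $E_0(U_k,\cdot,k)$). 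Each perturbation has modulus at most a $k$-uniform multiple of $K_{nj}$ by items \ref{eq:event_Cn_1}, \ref{eq:event_Cn_3} and \ref{eq:event_Cn_4} of $\mathcal{C}_n$ together with the envelope $\sup_{(k,s)}|\hat E_j(U_k,s,k) - E_0(U_k,s,k)| \le \tilde K K_{nj}$ supplied by $\mathcal{B}_n(\tilde K)$; the factors held fixed in each term are bounded by a $P$-integrable envelope uniformly in $k$ using \ref{assump:Covariates}, \ref{assump:Survival function}, \ref{assump:Variances}, and items \ref{eq:event_Cn_2}, \ref{eq:event_Cn_5}, \ref{eq:event_Cn_6} of $\mathcal{C}_n$. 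Taking the $P$-expectation and the max over $k$ preserves these constants, giving a bound of the desired form.

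For $d_n(\hat P_{nj},\hat P'_j)$ the only perturbation is $G \mapsto \hat G_n$. For the IPW piece, $\delta X / \hat G_n(X) - \delta X / G(X) = \delta X (G - \hat G_n)/(G \hat G_n)$, whose modulus is uniformly controlled by $(\tau/G(\tau))\cdot\sup_t|\hat G_n(t)/G(t) - 1| \lesssim \sqrt{\log n/n} \lesssim K_{nj}$ on $\mathcal{B}_n(\tilde K)$, using $n/q_n = O(1)$. For the CAR piece, the relevant difference is $\int \hat E_j(U,s)\,1(X\ge s)\,d(\Lambda - \hat\Lambda_n)(s)$, multiplied by the $\mathcal{C}_n$-controlled factor $(U - \mathbb{Q}_j[U_k])/{\rm Var}_{\mathbb{Q}_j}(U_k)$. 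Integration by parts in $s$ gives a bound of the form $(\|\hat E_j(U,\cdot,k)\|_\infty + \|\hat E_j(U,\cdot,k)\|_\nu)\cdot\sup_t|\hat\Lambda_n(t) - \Lambda(t)|$, and Lemma \ref{lemma:inclusion_E0_Ehat} places $s \mapsto \hat E_j(U,s,k)$ in $\mathcal{BV}(\mathcal{T},\tilde M_0,\tilde M_1)$ with constants independent of $(j,k,n)$, so this is $\lesssim \sqrt{\log n/n} \lesssim K_{nj}$ uniformly in $k$ on $\mathcal{B}_n(\tilde K)$.

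The main obstacle I anticipate is this last step: the CAR martingale integral must be controlled with constants genuinely uniform in $k \in \mathcal{K}_n$, even though $|\mathcal{K}_n| = p_n$ may grow superpolynomially. The leverage comes precisely from Lemma \ref{lemma:inclusion_E0_Ehat}, which confines the whole family $\{\hat E_j(\cdot,\cdot,k)\}_{(j,k)}$ to a single bounded-variation class with $(j,k,n)$-free constants, so that integration-by-parts yields a bound involving only the suprema supplied by $\mathcal{B}_n$. Collecting all multiplicative constants from the two decompositions and choosing $K'$ at least as large as their sum (and $K' \in (1,\infty)$) completes the argument.
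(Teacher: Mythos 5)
Your proposal is correct in spirit and takes a genuinely different route from the paper for the key step. For the term $d_n(\hat P'_j, P)$ both you and the paper do the same triangle-inequality telescoping, controlled by $\mathcal{C}_n$, \ref{assump:Covariates}--\ref{assump:Survival function}, and \ref{assump:Variances}--\ref{assump:Conditional_mean_E0}. The real divergence is $d_n(\hat P_{nj},\hat P'_j)$, and specifically the CAR piece $L_{njk}\propto\int \hat E_j(U_k,s,k)1(X\ge s)\,d(\Lambda-\hat\Lambda_n)(s)$. The paper does \emph{not} bound $L_{njk}$ deterministically on $\mathcal{A}_n\cap\mathcal{B}_n(\tilde K)\cap\mathcal{C}_n$; it instead applies Markov's inequality to $\max_j P[\max_k L_{njk}]$, further decomposes $\hat\Lambda_n-\Lambda$ via \eqref{eq:martingale_est_CHF.1} into a martingale integral plus an indicator-of-empty-risk-set correction, and controls the resulting second moment through the predictable quadratic variation of $\bar M$. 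This yields a bound of order $1/\sqrt{\log(n\lor p_n)}$ in probability, which is then folded into $K'K_{nj}$ when choosing $K'$. You instead bound $L_{njk}$ \emph{deterministically} on the event, by integration by parts: since $\mathcal{B}_n(\tilde K)$ supplies $\sup_t|\hat\Lambda_n(t)-\Lambda(t)|\le\sqrt{\log n/n}$ and Lemma~\ref{lemma:inclusion_E0_Ehat} confines $s\mapsto\hat E_j(u,s,k)$ to $\mathcal{BV}(\mathcal{T},\tilde M_0,\tilde M_1)$ uniformly in $(j,k,n)$, you get $|L_{njk}|\lesssim(\|h\|_\infty+\mathrm{TV}(h))\sup_t|\hat\Lambda_n(t)-\Lambda(t)|\lesssim\sqrt{\log n/n}\le K_{nj}$ for every $j\in\{q_n,\dots,n\}$, with constants free of $(j,k,n)$. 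This is more elementary (no martingale variance calculation, no Markov inequality) and in fact gives a cleaner rate directly at the level of $K_{nj}$.

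One small but real gap in your write-up: the inclusion $\mathcal{A}_n\cap\mathcal{B}_n(\tilde K)\cap\mathcal{C}_n\subseteq\mathcal{D}_n(K')$ is not literally established, because the bounded-variation envelope you lean on is supplied by Lemma~\ref{lemma:inclusion_E0_Ehat} only ``with probability tending to one''; it is not part of the event $\mathcal{A}_n\cap\mathcal{B}_n(\tilde K)\cap\mathcal{C}_n$. You should intersect with the additional high-probability event from Lemma~\ref{lemma:inclusion_E0_Ehat} (call it $\mathcal{E}'_n$), establish $\mathcal{A}_n\cap\mathcal{B}_n(\tilde K)\cap\mathcal{C}_n\cap\mathcal{E}'_n\subseteq\mathcal{D}_n(K')$, and then conclude from ${\rm P}(\mathcal{A}_n\cap\mathcal{B}_n(\tilde K)\cap\mathcal{C}_n)\to 1$ (Lemma~\ref{lemma:An_Bn_Cn_prob_to_one}) together with ${\rm P}(\mathcal{E}'_n)\to 1$. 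With that fix the argument stands.
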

\begin{proof}
Let
\begin{align} \label{eq:Lnk}
  L_{njk} = \bigg| \frac{(U_k-\Qj[U_k])}{{\rm Var}_{\Qj}(U_k)}\int_{\mathcal{T}}\hat{E}_j(U_k,s,k)(d\hat{M}(s)-dM(s)) \bigg|1_{\mathcal{A}_n \cap\,\mathcal{B}_n(\tilde{K}) \cap\,\mathcal{C}_n}. 
\end{align}
The triangle inequality first gives the decomposition
\begin{align} \label{eq:upper_bound_distance1}
  &|\IF^*_k(O|\hat{P}_{nj})-\IF^*_k(O|\hat{P}'_j)|1_{\mathcal{A}_n \cap \mathcal{B}_n(\tilde{K}) \cap\,\mathcal{C}_n} = \bigg| \frac{(U_k-\Qj[U_k])\delta X}{{\rm Var}_{\Qj}(U_k)}\left(\frac{1}{\hat{G}_n(X)}-\frac{1}{G(X)}\right) \\
  & \hspace{1cm} + \frac{(U_k-\Qj[U_k])}{{\rm Var}_{\Qj}(U_k)} \int_{\mathcal{T}}\hat{E}_j(U_k,s,k)(d\hat{M}(s)-dM(s)) \bigg|1_{\mathcal{A}_n \cap \mathcal{B}_n(\tilde{K}) \cap\,\mathcal{C}_n} \nonumber \\
  & \le \bigg| \frac{(U_k-\Qj[U_k])\delta X}{{\rm Var}_{\Qj}(U_k)}\left(\frac{1}{\hat{G}_n(X)}-\frac{1}{G(X)}\right)
  \bigg|1_{\mathcal{A}_n \cap \mathcal{B}_n(\tilde{K}) \cap\,\mathcal{C}_n} 
  + L_{njk}. \nonumber
\end{align}  
Moreover,
\begin{align} \label{eq:upper_bound_distance1.1}
  &\max_{k \in {\cal K}_n}\bigg| \frac{(U_k-\Qj[U_k])\delta X}{{\rm Var}_{\Qj}(U_k)}\left(\frac{1}{\hat{G}_n(X)}-\frac{1}{G(X)}\right)
  \bigg|1_{\mathcal{A}_n \cap \mathcal{B}_n(\tilde{K}) \cap\,\mathcal{C}_n}\\
  &\leq 1_{\mathcal{A}_n \cap \mathcal{B}_n(\tilde{K}) \cap\, \mathcal{C}_n}\sum_{r=1}^{\infty}\max_{k \in {\cal K}_n}\bigg|\frac{(U_k-\Qj[U_k])}{{\rm Var}_{\Qj}(U_k)}\tilde{Y}\bigg|\Big[\sup_{t}\Big|\frac{\hat{G}_n(t)}{G(t)}-1\Big|\,\Big]^r \nonumber \\
  &\leq \max_{k \in {\cal K}_n}\bigg|\frac{(U_k-\Qj[U_k])}{{\rm Var}_{\Qj}(U_k)}\tilde{Y}\bigg|1_{\mathcal{A}_n \cap \mathcal{B}_n(\tilde{K}) \cap\, \mathcal{C}_n}\,\sum_{r=1}^{\infty}\bigg[\sqrt{\frac{\log n}{n}}\,\bigg]^{r}
  \lesssim \big[K_{nj}+1\big]\sqrt{\frac{\log n}{n}} \leq K^{''}K_{nj} \nonumber,
\end{align}
where the first inequality holds by Taylor expansion and the triangle inequality, and the second inequality results from
$\sup_{t}|\hat{G}_n(t)/G(t)-1| \leq \sqrt{\log n/n}$ when $\mathcal{B}_n(\tilde{K})$ occurs. The last two steps in the above display hold because there exists some $(j,n)$-independent constant $K^{''} \in (1, \infty)$ such that
\begin{align*}
  \max_{k \in {\cal K}_n}\bigg|\frac{(U_k-\Qj[U_k])}{{\rm Var}_{\Qj}(U_k)}\tilde{Y}\bigg|\,\sum_{r=1}^{\infty}\bigg[\sqrt{\frac{\log n}{n}}\,\bigg]^{r}
  \lesssim \big[K_{nj}+1\big]\sqrt{\frac{\log n}{n}} \leq K^{''}K_{nj},
\end{align*}
following $|\tilde{Y}| = |\delta X/G(X)| < \infty$ and $\max_k\{|U_k-\Qj[U_k]|/{\rm Var}_{\Qj}(U_k)\} \lesssim K_{nj} + 1$, according to \ref{assump:Covariates}, \ref{eq:event_Cn_2} and \ref{eq:event_Cn_6} of Lemma \ref{lemma:An_Bn_Cn_prob_to_one} in view of the occurrence of ${\cal C}_n$, and $\sum_{r=1}^{\infty}[\sqrt{\log n/n}\,]^{r-1} < \infty$.
In addition, below we show ${\rm P}\big(\max_{j}\{P[\max_{k \in \mathcal{K}_n}L_{njk}]\} > 1/\sqrt{\log(n \lor p_n)}\,\big) = o(1)$.
By Markov's inequality and Jensen's inequality,
\begin{align} \label{eq:upper_bound_prob_max_Lnk}
  &{\rm P}\Big(\max_{j}\Big\{P\Big[\max_{k \in \mathcal{K}_n}L_{njk}\Big]\Big\} > 1/\sqrt{\log(n \lor p_n)}\Big)
  \leq \log(n \lor p_n)E\Big\{\max_{j}\Big\{P\Big[\max_{k \in \mathcal{K}_n}L_{njk}\Big]\Big\}^2\Big\} \\
  & = \log(n \lor p_n)\mathbb{E}\Big\{\max_{j}\Big\{P\Big[\max_{k \in \mathcal{K}_n}L_{njk}\Big]\Big\}^2\Big\}
  \leq \log(n \lor p_n)\mathbb{E}\Big\{P\Big[\max_{(j,k)}L^2_{njk}\Big]\Big\}. \nonumber
\end{align}
Recall $N_n(s)$ and $Y_n(s)$ as defined in Section \ref{sec:notation} except for removing $u$; $d\bar{M}(s) \equiv dN_n(s) - Y_n(s)d\Lambda(s)$ is a local martingale with respect to the aggregated filtration
\begin{align}\label{eq:filtration_Fprime.1}
  \mathcal{F}'_{s} = \sigma(\{N_n(s'),\, Y_n(s'),\,s' \le s \in \mathcal{T}\}, \{\bs{U}_i\}_{i=1}^n).
\end{align}
Applying the decomposition $d\hat{M}(s)-dM(s) = 1(X \geq s)(d\Lambda(s)-d\hat{\Lambda}_n(s))$ to the expression of $L_{njk}$ in \eqref{eq:Lnk} and using the inequalities $(a-b)^2 \leq 2(a^2 + b^2)$ and $1_{{\cal A}_n \cap \mathcal{B}_n(\tilde{K}) \cap \mathcal{C}_n} \leq 1_{\mathcal{B}_n(\tilde{K}) \cap\,\mathcal{C}_n}$
further bound $\mathbb{E}\big\{P\big[\max_{(j,k)}L^2_{njk}\big]\big\}$ above by the sum (multiplied by 2) of  
\begin{align*}
  \mathbb{E}\bigg\{P\Big[ &\max_{(j,k)}\Big\{\frac{(U_k-\Qj[U_k])}{{\rm Var}_{\Qj}(U_k)}\int_{\mathcal{T}}\big[\hat{E}_j(U_k,s,k)-E_0(U_k,s,k)\big]1(X \geq s)\{d\hat{\Lambda}_n(s)-d\Lambda(s)\}\Big\}^2\\
  & \times 1_{\mathcal{B}_n(\tilde{K}) \cap\,\mathcal{C}_n} \Big] \bigg\}
\end{align*}  
and
\[
  \mathbb{E}\bigg\{P\Big[ \max_{(j,k)}\Big\{\frac{(U_k-\Qj[U_k])}{{\rm Var}_{\Qj}(U_k)}\int_{\mathcal{T}}E_0(U_k,s,k)1(X \geq s)\{d\hat{\Lambda}_n(s)-d\Lambda(s)\}\Big\}^21_{\mathcal{B}_n(\tilde{K}) \cap\,\mathcal{C}_n} \Big] \bigg\},
\]
as respectively given in \eqref{eq:upper_bound_1_max_expected_Lnk} and \eqref{eq:upper_bound_2_max_expected_Lnk} below.

First note that $\min_{(j,k)}{\rm Var}_{\Qj}(U_k)$ is bounded away from zero in view of the occurrence of ${\cal C}_n$, as shown in \ref{eq:event_Cn_2} of Lemma \ref{lemma:An_Bn_Cn_prob_to_one}. Along with \ref{assump:Covariates} and that the total variation of $\hat{\Lambda}_n-\Lambda$ over ${\cal T}$ is bounded by $|\hat{\Lambda}_n(\tau)+\Lambda(\tau)|$, we have that
\begin{align*}
  &\max_{(j,k)}\Big\{\frac{(U_k-\Qj[U_k])}{{\rm Var}_{\Qj}(U_k)}\int_{\mathcal{T}}\big[\hat{E}_j-E_0\big](U_k,s,k)1(X \geq s)\{d\hat{\Lambda}_n(s)-d\Lambda(s)\}\Big\}^21_{\mathcal{B}_n(\tilde{K}) \cap\,\mathcal{C}_n} \\ 
  & \lesssim \sup_{(j,k,s)}\big|\hat{E}_j(U_k,s,k)-E_0(U_k,s,k)\big|^2\big|\hat{\Lambda}_n(\tau)+\Lambda(\tau)\big|^21_{\mathcal{B}_n(\tilde{K}) \cap\,\mathcal{C}_n}\\
  & \lesssim \sup_{(j,k,s)}\big|\hat{E}_j(U_k,s,k)-E_0(U_k,s,k)\big|^2\big\{\big|\hat{\Lambda}_n(\tau)-\Lambda(\tau)\big|^2+\Lambda^2(\tau)\big\}1_{\mathcal{B}_n(\tilde{K}) \cap\,\mathcal{C}_n}\\
  & \lesssim K^2_{nq_n}\bigg[\frac{\log(n)}{n}+\Lambda^2(\tau)\bigg],
\end{align*}
where the last line follows the occurrence of $\mathcal{B}_n(\tilde{K}) \cap\,\mathcal{C}_n$ that implies for each $j$,  $$\sup_{(k,s)}\big|\hat{E}_j(U_k,s,k)-E_0(U_k,s,k)\big| \le \tilde{K}K_{nj}$$ and $\sup_t|\hat{\Lambda}_n(t)-\Lambda(t)| \le \sqrt{\log(n)/n}$, giving that $\sup_{(j,k,s)}\big|\hat{E}_j(U_k,s,k)-E_0(U_k,s,k)\big| \le \tilde{K}K_{nq_n}$ because $K_{nj} = \sqrt{\log(n \lor p_n)/j}$ for $j \ge q_n$ and $\max_j\{K_{nj}\} = K_{nq_n}$.
Therefore, we have that
\begin{align}\label{eq:upper_bound_1_max_expected_Lnk}
  \mathbb{E}\bigg\{P\Big[ &\max_{(j,k)}\Big\{\frac{(U_k-\Qj[U_k])}{{\rm Var}_{\Qj}(U_k)}\int_{\mathcal{T}}\big[\hat{E}_j(U_k,s,k)-E_0(U_k,s,k)\big]1(X \geq s)\{d\hat{\Lambda}_n(s)-d\Lambda(s)\}\Big\}^2\\
  & \times 1_{\mathcal{B}_n(\tilde{K}) \cap\,\mathcal{C}_n} \Big] \bigg\}
  \lesssim K^2_{nq_n}\bigg[\frac{\log(n)}{n}+\Lambda^2(\tau)\bigg]. \nonumber
\end{align}

Moreover, we observe the decomposition of $\hat{\Lambda}_n-\Lambda$ analogously to \eqref{eq:martingale_est_CHF} without $u$:
\begin{align*}
  d\hat{\Lambda}_n(t)-d\Lambda(t) = 1(Y_n(t)>0)Y_n(t)^{-1}d\bar{M}(t) - 1(Y_n(t)=0)\,d\Lambda(t).
\end{align*}
Note that the occurrence of $\mathcal{B}_n(\tilde{K}) \cap \mathcal{C}_n$ eliminates $1(Y_n(t)=0)$, so
using \ref{assump:Covariates} and the occurrence of $\mathcal{B}_n(\tilde{K}) \cap \mathcal{C}_n$ similarly gives the upper bound:
\begin{align*}
  &\max_{(j,k)}\Big\{\frac{(U_k-\Qj[U_k])}{{\rm Var}_{\Qj}(U_k)}\int_{\mathcal{T}}E_0(U_k,s,k)1(X \geq s)\{d\hat{\Lambda}_n(s)-d\Lambda(s)\}\Big\}^21_{\mathcal{B}_n(\tilde{K}) \cap\,\mathcal{C}_n}\\
  &\lesssim \max_k\Big\{\int_{\mathcal{T}}E_0(U_k,s,k)1(X \geq s)\frac{1(Y_n(s)>0)}{Y_n(s)}d\bar{M}(s)\Big\}^21_{\mathcal{B}_n(\tilde{K}) \cap\,\mathcal{C}_n}.
\end{align*}
Therefore,
\begin{align}\label{eq:upper_bound_2_max_expected_Lnk}
  &\mathbb{E}\bigg\{P\Big[ \max_{(j,k)}\Big\{\frac{(U_k-\Qj[U_k])}{{\rm Var}_{\Qj}(U_k)}\int_{\mathcal{T}}E_0(U_k,s,k)1(X \geq s)\{d\hat{\Lambda}_n(s)-d\Lambda(s)\}\Big\}^21_{\mathcal{B}_n(\tilde{K}) \cap\,\mathcal{C}_n} \Big]\bigg\} \\
  &\lesssim \mathbb{E}\bigg\{P\bigg[ \max_k\Big\{\int_{\mathcal{T}}E_0(U_k,s,k)1(X \geq s)\frac{1(Y_n(s)>0)}{Y_n(s)}d\bar{M}(s)\Big\}^2 \bigg]1_{\mathcal{B}_n(\tilde{K}) \cap\,\mathcal{C}_n}\bigg\} \nonumber\\
  & \le P\bigg[\max_k \mathbb{E}\bigg\{\int_{\mathcal{T}}E_0(U_k,s,k)1(X \geq s)\frac{1(Y_n(s) \ge \sqrt{n})}{Y_n(s)}d\bar{M}(s)\bigg\}^2\bigg] \lesssim \frac{\Lambda(\tau)}{\sqrt{n}}, \nonumber
\end{align}  
where along with \ref{assump:Covariates} that $U_k$ is uniformly bounded on $k$ and \ref{assump:Conditional_mean_E0} that $E_0$ is uniformly bounded, the last inequality holds by using the quadratic variation with respect to the filtration $\mathcal{F}'_{s}$ that is defined in \eqref{eq:filtration_Fprime.1}:  
\begin{align*}
  &\mathbb{E}\bigg\{\int_{\mathcal{T}}E_0(U_k,s,k)1(X \geq s)\frac{1(Y_n(s) \ge \sqrt{n})}{Y_n(s)}d\bar{M}(s)\bigg\}^2\\
  & = \int_{\mathcal{T}}\mathbb{E}\Big\{E^2_0(U_k,s,k)1(X \geq s)\frac{1(Y_n(s) \ge \sqrt{n})}{Y_n(s)}\Big\}d\Lambda(s) \lesssim \frac{\Lambda(\tau)}{\sqrt{n}}.
\end{align*}
Collecting the results in \eqref{eq:upper_bound_prob_max_Lnk}, \eqref{eq:upper_bound_1_max_expected_Lnk} and \eqref{eq:upper_bound_2_max_expected_Lnk} leads to
\begin{align*}
  & {\rm P}\Big(\max_j\Big\{P\Big[\max_{k\in\mathcal{K}_n}L_{njk}\Big]\Big\} > 1/\sqrt{\log(n \lor p_n)}\, \Big)\\ 
  & \lesssim \log(n \lor p_n)\bigg[K^2_{nq_n}\bigg[\frac{\log(n)}{n}+\Lambda^2(\tau)\bigg] + \frac{\Lambda(\tau)}{\sqrt{n}}\bigg] = o(1),    
\end{align*}
following that $\Lambda(\tau)$ is bounded, $K^2_{nq_n} = \log(n \lor p_n)/q_n$, $q_n^{1/4}/\log(n \lor p_n) \to \infty$ and $n/q_n=O(1)$. 

Taking the maximum over $k \in \mathcal{K}_n$, applying the triangle inequality, and then taking the expectation with respect to $P$ on both sides of \eqref{eq:upper_bound_distance1}, we see that \eqref{eq:upper_bound_distance1.1} implies that 
$d(\hat{P}_{nj}, \hat{P}_j') \leq K^{''}K_{nj} + P[ \max_{k\in\mathcal{K}_n}L_{njk}]$. Hence, by the above,
\begin{align*}
  & {\rm P}\big( {\cal A}_n \cap {\cal B}_n(\tilde{K}) \cap {\cal C}_n \cap \big\{d(\hat{P}_{nj}, \hat{P}_j') \leq K^{''}K_{nj} + 1/\sqrt{\log(n \lor p_n)},\,j = q_n,\ldots,n\big\} \big)\\
  &\ge {\rm P}\Big( {\cal A}_n \cap {\cal B}_n(\tilde{K}) \cap {\cal C}_n \cap \Big\{\max_j\Big\{P\Big[\max_{k \in \mathcal{K}_n}L_{njk}\Big]\Big\} \leq 1/\sqrt{\log(n \lor p_n)} \Big\} \,\Big)\\
  & = {\rm P}\big( {\cal A}_n \cap {\cal B}_n(\tilde{K}) \cap {\cal C}_n \big)\\
  & \quad - {\rm P}\Big( {\cal A}_n \cap {\cal B}_n(\tilde{K}) \cap {\cal C}_n \cap \Big\{\max_j\Big\{P\Big[\max_{k \in \mathcal{K}_n}L_{njk}\Big]\Big\} > 1/\sqrt{\log(n \lor p_n)} \Big\}\Big)\\
  & \geq {\rm P}\big( {\cal A}_n \cap {\cal B}_n(\tilde{K}) \cap {\cal C}_n \big)
  - {\rm P}\Big(\max_j\Big\{P\Big[\max_{k \in \mathcal{K}_n}L_{njk}\Big]\Big\} > 1/\sqrt{\log(n \lor p_n)}\,\Big) \to 1,
\end{align*}
leading to 
\begin{align} \label{eq:prob_term_1}
  {\rm P}\big( {\cal A}_n \cap {\cal B}_n(\tilde{K}) \cap {\cal C}_n \cap \big\{d(\hat{P}_{nj}, \hat{P}_j') \leq K^{''}K_{nj} + 1/\sqrt{\log(n \lor p_n)},\,j=q_n,\ldots,n\big\}^c \big) \to 0.    
\end{align}

In addition for each $j$, we have that
\begin{align*}
   & \max_{k \in \mathcal{K}_n}\big|\IF^*_{k}(O|\hat{P}'_j) - \IF^*_{k}(O|P)\big|1_{{\cal A}_n \cap {\cal B}_n(\tilde{K}) \cap {\cal C}_n}\\
   &\leq 1_{{\cal A}_n \cap {\cal B}_n(\tilde{K}) \cap {\cal C}_n} \Big\{\frac{1}{\min_k{\rm Var}_{\Qj}(U_{k})}
   \Big(\tilde{Y}\max_k\big|(\Qj-Q_u)[U_k]\big|\\
   &\qquad + \max_k|U_k|\max_k\big|(\Qj-Q_u)[\hat{E}_j(U_k,k)]\big|\\
   &\qquad + \max_k|U_k|Q_u\big[\max_k\big|\hat{E}_j(U_k,k)-E_0(U_k,k)\big|\big]\\
   &\qquad + \Qj\big[\max_k\big|\hat{E}_j(U_k,k)\big|\big]\max_k\big|(\Qj-Q_u)[U_k]\big|\\
   &\qquad + \max_k|Q_u[U_k]|\max_k\big|(\Qj-Q_u)[\hat{E}_j(U_k,k)]\big|\\
   &\qquad + \max_k|Q_u[U_k]|Q_u\big[\max_k\big|\hat{E}_j(U_k,k)-E_0(U_k,k)\big|\big]\Big)\\
   &\quad + \max_k\Big|\frac{1}{{\rm Var}_{\Qj}(U_{k})}-\frac{1}{{\rm Var}_{Q_u}(U_{k})}\Big|\max_k\big|(U_k-Q_u[U_k])(\tilde{Y}-Q_u[E_0(U_k,k)])\big|\\
   &\quad +\frac{\max_k|{\rm Cov}_{\Qj}(U_k,\hat{E}_j(U_k,k))|}{\min_k{\rm Var}^2_{\Qj}(U_{k})}\Big(2\max_k|U_k|\max_k|(\Qj-Q_u)[U_k]|\\
   &\qquad +\max_k|(\Qj-Q_u)[U_k]|\max_k|(\Qj+Q_u)[U_k]|\Big)\\
   &\quad +\frac{\max_k(U_k-Q_u[U_k])^2}{\min_k{\rm Var}^2_{\Qj}(U_{k})}\Big(\max_k\big|{\rm Cov}_{\Qj}(U_k,\hat{E}_j(U_k,k))-{\rm Cov}_{Q_u}(U_k, E_0(U_k,k))\big|\Big)\\
   &\quad +\max_k\Big|\frac{1}{{\rm Var}^2_{\Qj}(U_{k})}-\frac{1}{{\rm Var}^2_{Q_u}(U_{k})}\Big|\max_k|{\rm Cov}_{Q_u}(U_k, E_0(U_k,k))|\max_k(U_k-Q_u[U_k])^2\\
   &\quad + \frac{(1+\Lambda(\tau))}{\min{\rm Var}_{\Qj}(U_{k})}\Big( \max_k|U_k-Q_u[U_k]|\sup_{(k,s)}|\hat{E}_j(U_k,s,k)-E_0(U_k,s,k)|\\
   &\qquad + \max_k|(\Qj-Q_u)[U_k]|\max_k|E_0(U_k,s,k)|\Big)\\
   &\quad + \max_k\Big|\frac{1}{{\rm Var}_{\Qj}(U_{k})}-\frac{1}{{\rm Var}_{Q_u}(U_{k})}\Big|
   \max_k|U_k-Q_u[U_k]|\max_k|E_0(U_k,s,k)|(1+\Lambda(\tau))\Big\}\\
   & \leq K^{'''}K_{nj}
\end{align*}
for some $(j,k,s,n)$-independent constant $K^{'''} \in (0, \infty)$ when $\mathcal{A}_n \cap \mathcal{B}_n(\tilde{K}) \cap {\cal C}_n$ occurs,
by \ref{assump:Covariates}--\ref{assump:At-risk prob} and \ref{assump:Variances}--\ref{assump:Conditional_mean_E0}. This gives 
\begin{align}\label{eq:prob_term_2}
  {\rm P}\big(\mathcal{A}_n \cap \mathcal{B}_n(\tilde{K}) \cap {\cal C}_n \cap 
  \big\{d(\hat{P}_j', P) \leq K^{'''}K_{nj},\; j=q_n,\ldots,n\big\}^c\big) \to 0.
\end{align}  
Then taking $K' = (K^{''}+1) \lor K^{'''}$ and using \eqref{eq:prob_term_1} and \eqref{eq:prob_term_2}, we have
${\rm P}(\mathcal{A}_n \cap \mathcal{B}_n(\tilde{K}) \cap {\cal C}_n \cap {\cal D}_n(K')^c) \to 0$.
Hence the result follows by Lemma \ref{lemma:An_Bn_Cn_prob_to_one}.
\end{proof}

In what follows, we let $\tilde{\sigma}_{nj}^2\equiv{\rm Var}(\IF^*_{k_j}(O|\hat{P}_{nj})|O_1,\ldots,O_j)$.
\begin{lemma}\label{lemma:bounded_var_deviance}
  Let $\tilde{K} \in (0, \infty)$ be given  in Lemma \ref{lemma:An_Bn_prob_to_one}. Then, under the conditions of Theorem \ref{Thm:stab_one_step}, on the event
  $\mathcal{A}_n \cap \mathcal{B}_n(\tilde{K})\cap {\cal C}_n$ we have $|\hat{\sigma}^2_{nj}-\tilde{\sigma}_{nj}^2| \lesssim K_{nj}$ almost surely for $j= q_n,\ldots,n$ and $n$ sufficiently large.
\end{lemma}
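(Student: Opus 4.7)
My plan is to decompose the difference using the sample-variance identity and then invoke the uniform empirical-process bound that defines $\mathcal{A}_n$. Let $f_{nj}(o)\equiv\IF^*_{k_j}(o|\hat{P}_{nj})$; since $m_j^2=1$, the sign drops out of the sample variance and one has
$$\hat{\sigma}^2_{nj}=\Pj[f_{nj}^2]-(\Pj[f_{nj}])^2,\qquad \tilde{\sigma}^2_{nj}=P[f_{nj}^2]-(P[f_{nj}])^2,$$
where in $\tilde{\sigma}^2_{nj}$ the plug-in $\hat{P}_{nj}$ is held fixed so that $P$ acts only on $o$, consistent with the conditional interpretation in its definition. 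Subtracting and using $a^2-b^2=(a-b)(a+b)$ yields
$$|\hat{\sigma}^2_{nj}-\tilde{\sigma}^2_{nj}|\le |(\Pj-P)[f_{nj}^2]|+\bigl(|\Pj[f_{nj}]|+|P[f_{nj}]|\bigr)\,|(\Pj-P)[f_{nj}]|.$$

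The central step is to show that on $\mathcal{A}_n\cap\mathcal{B}_n(\tilde{K})\cap\mathcal{C}_n$ both $f_{nj}$ and $f_{nj}^2$ can be written as finite sums of elements of $\tilde{\mathcal{F}}_n$ rescaled by scalars that are uniformly bounded in $(j,n)$. This uses the explicit form \eqref{eq:reexpressed_if} of $\IF^{ipw}-\IF^{car}$ with the plug-in $(\hat{E}_j,\Qj,\hat{G}_n)$, together with three inclusions: on $\mathcal{B}_n(\tilde{K})$, $\hat{G}_n\in\mathcal{G}$ (choosing $\tilde{\epsilon}$ slightly below $G(\tau)$), so $\delta x/\hat{G}_n(x)$ falls in the class defining $\tilde{\mathcal{F}}_1$; by Lemma \ref{lemma:inclusion_E0_Ehat}, $\hat{E}_j(\cdot,\cdot,k_j)\in\tilde{\mathcal{E}}(k_j,1)$; and by \ref{assump:Covariates} together with \ref{eq:event_Cn_1}--\ref{eq:event_Cn_4} of Lemma \ref{lemma:An_Bn_Cn_prob_to_one}, each of the scalar factors $\Qj[U_{k_j}]$, $\Qj[\hat{E}_j(U_{k_j},k_j)]$, $1/{\rm Var}_{\Qj}(U_{k_j})$, and ${\rm Cov}_{\Qj}(U_{k_j},\hat{E}_j(U_{k_j},k_j))$ is uniformly bounded in $(j,n,k_j)$, with the variance bounded away from zero. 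Expanding the product form of $f_{nj}$ and then squaring produces monomials with powers at most $r=4$ in $u_{k_j}$, $q=2$ in $\delta x/\hat{G}_n(x)$, $w=2$ in $\hat{E}_j$, and $v=2$ in the martingale integral--precisely the exponent ranges built into $\tilde{\mathcal{F}}(k)$.

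With this inclusion in hand, Lemma \ref{lemma:replace_Qn_Q} gives $|(\Pj-P)[f_{nj}]|\lesssim K_{nj}$ and $|(\Pj-P)[f_{nj}^2]|\lesssim K_{nj}$ on $\mathcal{A}_n$, with implicit constants independent of $(j,n)$. Moreover, the envelope $F$ constructed immediately after Lemma \ref{tildeFn_VC_hull} is uniformly bounded on $\mathcal{X}$, since $|\delta x|\le\tau$, $\hat{G}_n(\tau)>\tilde{\epsilon}$, and the $U_k$ are uniformly bounded under \ref{assump:Covariates}; hence $\|f_{nj}\|_\infty=O(1)$ and therefore $|\Pj[f_{nj}]|+|P[f_{nj}]|=O(1)$ on the event. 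Substituting these bounds into the decomposition above yields $|\hat{\sigma}^2_{nj}-\tilde{\sigma}^2_{nj}|\lesssim K_{nj}$ uniformly over $j\in\{q_n,\ldots,n\}$.

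The principal obstacle is the algebraic bookkeeping in the middle step: matching each monomial arising in the expansion of $f_{nj}$ and in its square to a specific $\tilde{\mathcal{F}}(k_j,q,r,v,w)$, verifying that the exponent ranges are respected, and checking that every extracted scalar coefficient stays bounded uniformly on the good event. Once that accounting is done, the empirical-process and envelope bounds above deliver the conclusion immediately.
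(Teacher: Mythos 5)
Your proof is correct and follows the same overall skeleton as the paper's: the variance decomposition into $|(\Pj-P)[f_{nj}^2]|$ plus a boundedness term times $|(\Pj-P)[f_{nj}]|$, the uniform empirical-process bound from $\mathcal{A}_n$ (Lemma~\ref{lemma:replace_Qn_Q}), and the uniform boundedness of $\IF^*_{k}(\cdot|\hat{P}_{nj})$ on the good event. The one place where you genuinely deviate from the paper is the middle step, and it is actually a simplification: the paper first replaces $\Qj$ by $Q_u$ everywhere, proving that $\IF^*_k(\cdot|\hat{E}_j,Q_u,\hat{G}_n)$ lies in $\tilde{\mathcal{F}}_n$, and then separately controls the replacement error via the bound $\max_{(j,k)}|\IF^*_k(o|\hat{P}_{nj})-\IF^*_k(o|\hat{E}_j,Q_u,\hat{G}_n)|\lesssim K_{nq_n}$ (eq.~\eqref{eq:upperbound_replace_Qj_Qu}); you instead leave the $\Qj$-dependent quantities as bounded scalar multipliers extracted from each monomial, so that $f_{nj}$ is a finite linear combination $\sum_l c_l g_l$ with $g_l\in\tilde{\mathcal{F}}_n$ and $|c_l|\lesssim 1$ on the good event, and $(\Pj-P)f_{nj}=\sum_l c_l(\Pj-P)g_l$ is then handled term by term by the uniform bound over $\tilde{\mathcal{F}}_n$. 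This avoids the extra $K_{nq_n}$ correction that the paper's replacement step introduces in eq.~\eqref{eq:var_decomposition.2}, so your route cleanly delivers the stated $K_{nj}$ rate.

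Two small points of rigor you should attend to when writing this out in full. First, for the boundedness of the scalar coefficients $\Qj[\hat{E}_j(U_{k_j},k_j)]$ and ${\rm Cov}_{\Qj}(U_{k_j},\hat{E}_j(U_{k_j},k_j))$ you need \ref{eq:event_Cn_5} of Lemma~\ref{lemma:An_Bn_Cn_prob_to_one} (uniform boundedness of $\hat{E}_j$) in addition to \ref{eq:event_Cn_1}--\ref{eq:event_Cn_4}; the four items you cite do not by themselves control these quantities. Second, when you square $f_{nj}$, the cross terms couple a monomial from the $\IF^{ipw}$ part with one from the martingale-integral part, producing expressions of the form $u_{k}^{r}(\delta x/\hat{G}_n(x))^{q}\cdot\big[\int\hat{E}_j\,d\hat{M}\big]^{v}$ with $q\le 1$, $v\le 1$; these land in $\tilde{\mathcal{F}}_1(k,q,r)\tilde{\mathcal{F}}_2(k,v)$ (not in the $\tilde{\mathcal{F}}_1\tilde{\mathcal{E}}$ piece), and it is worth noting explicitly that the union defining $\tilde{\mathcal{F}}(k,q,r,v,w)$ allows $q>0$ together with $v>0$ so that these cross terms are indeed covered; your remark that ``$w=2$ in $\hat{E}_j$'' is a mild mislabeling, since in $\IF^{ipw}$ the $\hat{E}_j$-dependence enters only through scalar coefficients (the pointwise $\hat{E}_j(u_k,s,k)$ appears exclusively inside the martingale integral, i.e., in the $v$ exponent).
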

\begin{proof}
Fix $j \in \{q_n, \ldots, n\}$. We first have that
\begin{align*}
  & |\hat{\sigma}^2_{nj}-\tilde{\sigma}_{nj}^2|1_{{\cal A}_n \cap {\cal B}_n(\tilde{K}) \cap {\cal C}_n}
  \leq \bigg\{\Big|\mathbb{P}_j \big[\IF^*_{k_j}(O|\hat{P}_{nj})\big]^2 - P\big[\IF^*_{k_j}(O|\hat{P}_{nj})\big]^2 \Big| \\
  &\hspace{0.6cm} + \Big|\big[\mathbb{P}_j\IF^*_{k_j}(O|\hat{P}_{nj})\big]^2 - \big[P\IF^*_{k_j}(O|\hat{P}_{nj})\big]^2\Big|\bigg\}1_{{\cal A}_n \cap {\cal B}_n(\tilde{K}) \cap {\cal C}_n} \nonumber \\
  &\hspace{0.2cm} \leq \bigg\{\max_{k \in \mathcal{K}_n}\Big|(\mathbb{P}_j - P)\big[\IF^*_{k}(O|\hat{P}_{nj})\big]^2 \Big|\\
  &\hspace{1cm} + \max_{k \in \mathcal{K}_n}\Big|(\mathbb{P}_j-P)\big[\IF^*_{k}(O|\hat{P}_{nj})\big]\, (\mathbb{P}_j+P)\big[\IF^*_{k}(O|\hat{P}_{nj})\big]\Big|\bigg\}1_{{\cal A}_n \cap {\cal B}_n(\tilde{K}) \cap {\cal C}_n}.
\end{align*}
At the conclusion of this proof, we'll show that there exists a constant $\bar{K}$ that does not depend on $n$ such that 
\begin{align}\label{eq:bounded_IF_star_hatPn}
\max_{(j,k)}\big|\IF^*_{k}(o|\hat{P}_{nj})1_{{\cal A}_n \cap {\cal B}_n(\tilde{K}) \cap {\cal C}_n}\big|\le \bar{K}
\end{align}
for all observations $o$ in the support of $P$. Then this will simplify the above display as
\begin{align} \label{eq:var_decomposition}
 & |\hat{\sigma}^2_{nj}-\tilde{\sigma}_{nj}^2|1_{{\cal A}_n \cap {\cal B}_n(\tilde{K}) \cap {\cal C}_n} \\
 & \quad\lesssim \bigg\{\max_{k \in \mathcal{K}_n}\Big|(\mathbb{P}_j - P)\big[\IF^*_{k}(O|\hat{P}_{nj})\big]^2 \Big| + \max_{k \in \mathcal{K}_n}
  \Big|(\mathbb{P}_j-P)\big[\IF^*_{k}(O|\hat{P}_{nj})\big]\Big|\bigg\}1_{{\cal A}_n \cap {\cal B}_n(\tilde{K}) \cap {\cal C}_n}, \nonumber
\end{align}
from which we continue showing that $|\hat{\sigma}^2_{nj}-\tilde{\sigma}_{nj}^2|1_{{\cal A}_n \cap {\cal B}_n(\tilde{K}) \cap {\cal C}_n} \lesssim K_{nj}$.

The proof below proceeds with assuming (without statement) that the event ${\cal A}_n \cap {\cal B}_n(\tilde{K}) \cap {\cal C}_n$ occurs.
Recall that $\IF^*_{k}(\cdot|E,Q,G)$ denotes the function $\IF^*_k(\cdot|P_{E,Q,G})$, where $P_{E,Q,G}$ is a distribution with the conditional residual life function $E$, the marginal distribution $Q$ of any given single predictor, and the censoring distribution $G$.
For each $(k,j)$, $\IF^*_{k}(o|\hat{E}_j,Q_u,\hat{G}_n)$ belongs to $\tilde{\mathcal{F}}_n$, observing that $\IF^*_{k}(o|\hat{E}_j,Q_u,\hat{G}_n)$ is equal to
\begin{align*}
  &\frac{(u_k-Q_u[U_k])\big(\delta x/\hat{G}_n(x)-Q_u[\hat{E}_j(U_k,k)]\big)}{{\rm Var}_{Q_u}(U_k)} -\frac{{\rm Cov}_{Q_u}(U_k, \hat{E}_j(U_k,k))}{{\rm Var}_{Q_u}^2(U_k)}(u_k-Q_u[U_k])^2 \nonumber\\ 
  &\quad + \frac{(u_k-Q_u[U_k])}{{\rm Var}_{Q_u}(U_k)}\int_{\mathcal{T}} \hat{E}_{j}(u_k,s,k)\{1(x \in ds, \delta=0)-1(x \geq s)d\hat{\Lambda}_n(s)\},
\end{align*}
in which the first two terms are contained in $\cup_{q=0}^1\cup_{r=0}^4\tilde{\mathcal{F}}_{1}(k,q,r)\tilde{{\cal E}}(k,1)$, and the last term is contained in $\cup_{r=0}^4\tilde{\mathcal{F}}_{1}(k,0,r)\tilde{\mathcal{F}}_{2}(k,1)$.
To take advantage of the fact that  $\IF^*_{k}(o|\hat{E}_j,Q_u,\hat{G}_n) \in \mathcal{F}_n$, we replace $\IF^*_{k}(O|\hat{P}_{nj})$ in \eqref{eq:var_decomposition} by $\IF^*_{k}(O|\hat{E}_j,Q_u,\hat{G}_n)$ and then \eqref{eq:var_decomposition} turns into
\begin{align} \label{eq:var_decomposition.2}
  & |\hat{\sigma}^2_{nj}-\tilde{\sigma}_{nj}^2|1_{{\cal A}_n \cap {\cal B}_n(\tilde{K}) \cap {\cal C}_n} \lesssim \bigg\{\max_{k \in \mathcal{K}_n}\Big|(\mathbb{P}_j - P)\big[\IF^*_{k}(O|\hat{E}_j,Q_u,\hat{G}_n)\big]^2 \Big|\\
  &\quad + \max_{k \in \mathcal{K}_n}
  \Big|(\mathbb{P}_j-P)\big[\IF^*_{k}(O|\hat{E}_j,Q_u,\hat{G}_n)\big]\Big|\bigg\}1_{{\cal A}_n \cap {\cal B}_n(\tilde{K}) \cap {\cal C}_n} + K_{nq_n}, \nonumber
\end{align}
following that on the event ${\cal A}_n \cap {\cal B}_n(\tilde{K}) \cap {\cal C}_n$,
\begin{align}\label{eq:upperbound_replace_Qj_Qu}
 \max_{(j,k)}\big|\IF^*_{k}(o|\hat{P}_{nj})-\IF^*_{k}(o|\hat{E}_j,Q_u,\hat{G}_n)\big| \lesssim K_{nq_n},
\end{align}
and that there exists a constant $\bar{K}_0$ that does not depend on $n$ such that 
\begin{align} \label{eq:bounded_IF_star_P_alt}
  \max_{(j,k)}\big|\IF^*_{k}(o|\hat{E}_j,Q_u,\hat{G}_n) 1_{{\cal A}_n \cap {\cal B}_n(\tilde{K}) \cap {\cal C}_n}\big| \le \bar{K}_0 
\end{align}
for all observations $o$ in the support of $P$. We show \eqref{eq:upperbound_replace_Qj_Qu} and \eqref{eq:bounded_IF_star_P_alt} in the sequel.

For \eqref{eq:upperbound_replace_Qj_Qu}, first we have that
\begin{align} \label{eq:upperbound_replace_Qj_Qu.1}
  &\big|\IF^*_{k}(o|\hat{P}_{nj})-\IF^*_{k}(o|\hat{E}_j,Q_u,\hat{G}_n)\big|\\ &=\big|\IF^*_{k}(o|\hat{E}_j,\mathbb{Q}_j,\hat{G}_n)-\IF^*_{k}(o|\hat{E}_j,Q_u,\hat{G}_n)\big| \lesssim
  |(\Qj-Q_u)[U_k]|\,|\tau/G(\tau)| \nonumber \\ 
  &\quad + |(\Qj-Q_u)[U_k]|\,|\tau/\hat{G}_n(\tau)-\tau/G(\tau)| \nonumber \\
  &\quad + \big|(\Qj-Q_u)[U_k]\big|\:\big|\Qj[\hat{E}_j(U_k,k)]\big|
  + \big|(\Qj-Q_u)[\hat{E}_j(U_k,k)]\big|\:\big|Q_u[U_k]\big| \nonumber\\
  &\quad + \Big|{\rm Cov}_{\Qj}(U_k, \hat{E}_j(U_k,k))-{\rm Cov}_{Q_u}(U_k, \hat{E}_j(U_k,k))\Big|(u_k-\Qj[U_k])^2 \nonumber\\
  &\quad + \big|(u_k-\Qj[U_k])^2-(u_k-Q_u[U_k])^2\big|\big|{\rm Cov}_{Q_u}(U_k, \hat{E}_j(U_k,k))\big|\nonumber\\ 
  &\quad +|(\Qj-Q_u)[U_k]|\:\Big|\int_{\mathcal{T}} \hat{E}_{j}(u_k,s,k)\{1(x \in ds, \delta=0)-1(x \geq s)d\hat{\Lambda}_n(s)\}\Big|. \nonumber
\end{align}
The inequality in \eqref{eq:upperbound_replace_Qj_Qu.1} holds because for all $j$, $\min_k{\rm Var}_{\Qj}(U_k)$ is bounded away from zero given the occurrence of ${\cal A}_n \cap {\cal B}_n(\tilde{K}) \cap {\cal C}_n$ as shown in \ref{eq:event_Cn_2} of Lemma \ref{lemma:An_Bn_Cn_prob_to_one}
and $\min_k {\rm Var}_{Q_u}(U_k) > 0$ by \ref{assump:Variances}. We continue expanding \eqref{eq:upperbound_replace_Qj_Qu} in what follows. Together with \ref{assump:Covariates} that the support of $U_k$ is uniformly bounded and \ref{assump:Survival function} that $G(\tau)>0$, a Taylor expansion of $\hat{G}_n$ around $G$ and the occurrence of ${\cal A}_n \cap {\cal B}_n(\tilde{K}) \cap {\cal C}_n$ imply that
\begin{align} \label{eq:taylor_expansion_hatG}
  \bigg|\frac{\tau}{\hat{G}_n(\tau)}-\frac{\tau}{G(\tau)}\bigg| \leq  \frac{|\tau|}{G(\tau)}\sum_{r=1}^{\infty}\bigg[\sup_{t}\bigg|\frac{\hat{G}_n(t)}{G(t)}-1\bigg|\,\bigg]^r
  \leq \frac{|\tau|}{G(\tau)}\sum_{r=1}^{\infty}\bigg[\sqrt{\frac{\log n}{n}}\bigg]^r
  \lesssim \sqrt{\frac{\log n}{n}}.
\end{align}
From \ref{assump:Covariates} and \ref{assump:Conditional_mean_E0}, we have that $\sup_{(k,s)}\big|E_0(U_k,s,k)\big| \le K_1$ for some positive finite constant $K_1$. Along with \ref{eq:event_Cn_5} of Lemma \ref{lemma:An_Bn_Cn_prob_to_one}, it gives that
$\max_{k}\big|\Qj[\hat{E}_j(U_k,k)]\big| \le \sup_{(k,s)}\big|\hat{E}_j(U_k,s,k)\big| \lesssim K_{nj} + K_1$. Then together with \ref{assump:Covariates}, these two results lead to
\begin{align} \label{eq:expansion_cov}
  &\big|{\rm Cov}_{\Qj}(U_k, \hat{E}_j(U_k,k))-{\rm Cov}_{Q_u}(U_k, \hat{E}_j(U_k,k))\big| = \big|(\Qj-Q_u)[U_k\hat{E}_j(U_k,k)]\big| \\
  &\quad + \big|\Qj[\hat{E}_j(U_k,k)]\big|\:\big|(\Qj-Q_u)[U_k]\big| + \big|Q_u[U_k]\big|\:\big|(\Qj-Q_u)[\hat{E}_j(U_k,k)]\big| \nonumber\\
  & \quad \lesssim \big|(\Qj-Q_u)[U_k\hat{E}_j(U_k,k)]\big|
  + (K_{nj} + 1)\:\big|(\Qj-Q_u)[U_k]\big| +    \big|(\Qj-Q_u)[\hat{E}_j(U_k,k)]\big| \nonumber
\end{align}  
and
\begin{align} \label{eq:expansion_integral} 
  &\Big|\int_{\mathcal{T}}\hat{E}_{j}(u_k,s,k)\{1(x \in ds, \delta=0)-1(x \geq s)d\hat{\Lambda}_n(s)\}\Big| \leq \sup_{(k,s)}\big|\hat{E}_{j}(u_k,s,k)\big|
  \big(\hat{\Lambda}_n(\tau)+1\big) \\
  &\quad \lesssim \big(\hat{\Lambda}_n(\tau)+1\big)(K_{nj} + 1). \nonumber
\end{align}
Inserting \eqref{eq:taylor_expansion_hatG}--\eqref{eq:expansion_integral} back along with using \ref{assump:Covariates} and $\sup_{(k,s)}\big|\hat{E}_j(U_k,s,k)\big| \lesssim K_{nj} + K_1$ again, \eqref{eq:upperbound_replace_Qj_Qu.1} turns into 
\begin{align*}
  &\big|\IF^*_{k}(o|\hat{P}_{nj})-\IF^*_{k}(o|\hat{E}_j,Q_u,\hat{G}_n)\big|=\big|\IF^*_{k}(o|\hat{E}_j,\mathbb{Q}_j,\hat{G}_n)-\IF^*_{k}(o|\hat{E}_j,Q_u,\hat{G}_n)\big| \nonumber\\
  &\quad \lesssim \bigg[\sqrt{\frac{\log n}{n}}+(K_{nj}+1)(\hat{\Lambda}_n(\tau)+1) \bigg]\big|(\Qj-Q_u)[U_k]\big| + \big|(\Qj-Q_u)[\hat{E}_j(U_k,k)]\big| \nonumber\\
  & \qquad + \big|(\Qj-Q_u)[U_k\hat{E}_j(U_k,k)]\big| \nonumber\\
  &\quad \lesssim \big(\hat{\Lambda}_n(\tau)+1\big)K_{nj} 
  \le \big(\,\big|\hat{\Lambda}_n(\tau)-\Lambda(\tau)\big| + 2\Lambda(\tau) +1 \big)K_{nj}. \nonumber
\end{align*}
Observing that $\sup_{t \in \mathcal{T}}|\hat{\Lambda}_n(t)-\Lambda(t)| \le \sqrt{\log n}/\sqrt{n} < 1$ on the event ${\cal A}_n \cap {\cal B}_n(\tilde{K}) \cap {\cal C}_n$ by Lemma \ref{lemma:An_Bn_prob_to_one}, it follows from the above display that
\begin{align*}
  \max_{(j,k)}\big|\IF^*_{k}(o|\hat{P}_{nj})-\IF^*_{k}(o|\hat{E}_j,Q_u,\hat{G}_n)\big| < 2(\Lambda(\tau)+1)\max_j\{K_{nj}\}
  \lesssim K_{nq_n},
\end{align*}
yielding \eqref{eq:upperbound_replace_Qj_Qu}. The proof of \eqref{eq:bounded_IF_star_P_alt} can be handled using similar arguments that are used to show \eqref{eq:bounded_IF_star_hatPn} and will appear later.
Then following that $\IF^*_{k}(\cdot|\hat{E}_j,Q_u,\hat{G}_n) \in \tilde{{\cal F}}_n$, \eqref{eq:var_decomposition.2} leads to
\begin{align*}
 |\hat{\sigma}^2_{nj}-\tilde{\sigma}_{nj}^2|1_{{\cal A}_n \cap {\cal B}_n(\tilde{K}) \cap {\cal C}_n} 
 \lesssim K_{nj}.
\end{align*}

To complete the proof, we now show that \eqref{eq:bounded_IF_star_hatPn}, and it suffices to show that 
$$\limsup_{n \to \infty}\,\sup_o\,\max_{(j,k)}|\IF^*_{k}(o|\hat{P}_{nj})|1_{{\cal A}_n \cap {\cal B}_n(\tilde{K}) \cap {\cal C}_n} < \infty.$$
In what follows, we assume without statement the occurrence of ${\cal A}_n \cap {\cal B}_n(\tilde{K}) \cap {\cal C}_n$. Note that, for any $o$,
\begin{align*}
  &\max_{k}\big|\IF^*_{k}(o|\hat{P}_{nj})| \\ 
  & = \max_k\bigg|\frac{(u_k-\Qj[U_k])\big(y-\Qj[\hat{E}_j(U_k,k)]\big)}{{\rm Var}_{\Qj}(U_k)}
  -\frac{{\rm Cov}_{\Qj}(U_k, \hat{E}_j(U_k,k))}{{\rm Var}_{\Qj}^2(U_k)}(u_k-\Qj[U_k])^2 \nonumber \\
  &\hspace{0.3cm} + \frac{(u_k-\Qj[U_k])}{{\rm Var}_{\Qj}(U_k)}\int_{\mathcal{T}} \hat{E}_{j}(u_k,s,k)\{1(x \in ds, \delta=0)-1(x \geq s)d\hat{\Lambda}_n(s)\}\bigg| \nonumber \\
  & \lesssim \max_k \big|u_k-\Qj[U_k]\big|\Big[\,\big|y-\tilde{y}\big|
  + \big|\tilde{y}\big| + \Qj\big[\max_k\big|\hat{E}_j(U_k,k)\big|\,\big]\Big] \nonumber\\
  & \quad + \max_k\big|{\rm Cov}_{\Qj}(U_k,\hat{E}_j(U_k,k))\big| \big[\max_k \big|u_k-\Qj[U_k]\big|\,\big]^2 \nonumber\\
  &\quad + \max_k \big|u_k-\Qj[U_k]\big|
  \max_{k}\Big|\int_{\mathcal{T}}\hat{E}_{j}(u_k,s,k)\{1(x \in ds, \delta=0)-1(x \geq s)d\hat{\Lambda}_n(s)\}\Big|, \nonumber
\end{align*}
where the last inequality holds by \ref{eq:event_Cn_2} of Lemma \ref{lemma:An_Bn_Cn_prob_to_one} and the triangle inequality. When ${\cal A}_n \cap {\cal B}_n(\tilde{K}) \cap {\cal C}_n$ occurs, similar techniques to the arguments for \eqref{eq:taylor_expansion_hatG} yields that $$|y-\tilde{y}| \leq |\tilde{y}|\,\sqrt{\log n}/(\sqrt{n}-\sqrt{\log n}\,) \lesssim \sqrt{\log n}/(\sqrt{n}-\sqrt{\log n}\,),$$
following that $|\tilde{y}| = |\delta x/G(x)|$ is bounded by some nonrandom finite constant that does not depend on $(j,n)$ due to \ref{assump:Survival function}.
Meanwhile, by \ref{assump:Conditional_mean_E0} and \ref{assump:Covariates} that the support of $U_k$ is uniformly bounded, there exist positive finite constants $K_0$ and $K_1$ so that $\max_k|u_k-\Qj[U_k]| \leq K_0$; $\max_{k}|E_0(U_k,k)| \leq K_1$ and $\sup_{(k,s)}|E_0(u_k,s,k)| \leq K_1$. Therefore when ${\cal A}_n \cap {\cal B}_n(\tilde{K}) \cap {\cal C}_n$ occurs, \ref{eq:event_Cn_5}--\ref{eq:event_Cn_6} of Lemma \ref{lemma:An_Bn_Cn_prob_to_one} imply that
\begin{align*}
  &\max_k|u_k-\Qj[U_k]|\big[\,\big|y-\tilde{y}\big|
  + \big|\tilde{y}\big| + \Qj\big[\max_k\big|\hat{E}_j(U_k,k)\big|\,\big]\,\big]\\ 
  &\quad \lesssim \,\sqrt{\log n}/(\sqrt{n}-\sqrt{\log n}\,) + K_{nj} + 1;\\
  &\max_k\big|{\rm Cov}_{\Qj}(U_k,\hat{E}_j(U_k,k))\big| \big[\max_k \big|u_k-\Qj[U_k]\big|\,\big]^2 \\
  &\quad \leq \big[\max_k \big|u_k-\Qj[U_k]\big|\,\big]^2 \Qj\Big[ \max_k \big|U_k-\Qj[U_k]\big| \max_k \big| \hat{E}_j(U_k,k) \big| \Big]
  \lesssim K_{nj} + 1;\\
  &\max_k \big|u_k-\Qj[U_k]\big|\max_{k}
  \Big|\int_{\mathcal{T}}\hat{E}_{j}(u_k,s,k)\{1(x \in ds, \delta=0)-1(x \geq s)d\hat{\Lambda}_n(s)\}\Big|\\
  &\quad \leq \max_k \big|u_k-\Qj[U_k]\big|\sup_{(k,s)}\big|\hat{E}_{j}(u_k,s,k)\big|
  \big(1+\hat{\Lambda}_n(\tau)\big)\\
  & \quad \lesssim \big(K_{nj} + K_1 \big)\big(1+\hat{\Lambda}_n(\tau)\big)
  \le \big(K_{nj} + K_1 \big)\big(\big|\hat{\Lambda}_n(\tau)-\Lambda(\tau)\big|+\Lambda(\tau)+1\big)\\
  & \quad < \big(K_{nj} + K_1 \big)\big(2+\Lambda(\tau)\big)
  \lesssim K_{nj} + 1,
\end{align*}
where the last line follows that $\sup_{t \in \mathcal{T}}|\hat{\Lambda}_n(t)-\Lambda(t)| \le \sqrt{\log n}/\sqrt{n} < 1$ on the event ${\cal A}_n \cap {\cal B}_n(\tilde{K}) \cap {\cal C}_n$, using Lemma \ref{lemma:An_Bn_prob_to_one}.
Therefore, the above results ensure that 
\begin{align*}
  \sup_o &\max_{(j,k)}\big|\IF^*_{k}(o|\hat{P}_{nj})\big|1_{{\cal A}_n \cap {\cal B}_n(\tilde{K}) \cap {\cal C}_n}
  \lesssim \sqrt{\log n}/\big(\sqrt{n}-\sqrt{\log n}\,\big) + \max_j\{K_{nj}\} + 1,
\end{align*}
so $\limsup_{n \to \infty} \sup_o \max_{(j,k)}|\IF^*_{k}(o|\hat{P}_{nj})|1_{{\cal A}_n \cap {\cal B}_n(\tilde{K}) \cap {\cal C}_n}$ is bounded by some $(j,n)$-independent constant, following that $\max_j\{K_{nj}\} \le K_{nq_n} \to 0$ by $q_n^{1/4}/\log(n \lor p_n) \to \infty$.
\end{proof}

\begin{lemma} \label{lemma:An_Bn_Cn_Dn_En_prob_to_one}
Suppose the conditions of Theorem \ref{Thm:stab_one_step} hold. Then there exists an event $\mathcal{E}_n$ that corresponds to the intersection of
\begin{lemmaitemenum}
  \item \label{eq:event_En_1} $\min_{j \in \{q_n,\ldots,n\}}\hat{\sigma}^2_{nj}$ is bounded away from zero by a constant;
  \item \label{eq:event_En_2} $|\sigma_{nj}/\hat{\sigma}_{nj}-1| \lesssim K_{nj}$ for all $j =q_n,\ldots,n$, where $\sigma^2_{nj} =\int \IF^*_{k_j}(o|P)^2 dP(o)$,
\end{lemmaitemenum}
where each of \ref{eq:event_En_1} and \ref{eq:event_En_2} relies on appropriately specified (non-random) constants that do not depend on $n$, such that $${\rm P} (\mathcal{A}_n \cap \mathcal{B}_n(\tilde{K})\cap {\cal C}_n \cap {\cal D}_n(K') \cap {\cal E}_n)\to 1,$$
with $\tilde{K}, K' \in (0, \infty)$ given in Lemma \ref{lemma:An_Bn_prob_to_one} and Lemma \ref{lemma:An_Bn_Cn_Dn_prob_to_one}, respectively.
\end{lemma}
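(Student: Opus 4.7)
The plan is to define $\mathcal{E}_n$ to be exactly the intersection of the events in \ref{eq:event_En_1} and \ref{eq:event_En_2} (with specific constants to be pinned down) and then to show that, on the intersection $\mathcal{A}_n \cap \mathcal{B}_n(\tilde K) \cap \mathcal{C}_n \cap \mathcal{D}_n(K')$, both of these inequalities hold deterministically for all sufficiently large $n$. Combined with Lemma \ref{lemma:An_Bn_Cn_Dn_prob_to_one}, this inclusion immediately yields the claimed convergence of the probability to one.

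The first step is to transfer the bound from Lemma \ref{lemma:bounded_var_deviance} (which compares $\hat{\sigma}_{nj}^2$ with $\tilde{\sigma}_{nj}^2$) to a bound comparing $\hat{\sigma}_{nj}^2$ with $\sigma_{nj}^2$. To do so I would show that
\[
|\tilde{\sigma}_{nj}^2 - \sigma_{nj}^2|\, 1_{\mathcal{A}_n \cap \mathcal{B}_n(\tilde K) \cap \mathcal{C}_n \cap \mathcal{D}_n(K')} \lesssim K_{nj}
\]
uniformly in $j\in\{q_n,\ldots,n\}$. Writing $\tilde{\sigma}_{nj}^2 - \sigma_{nj}^2 = P\big[\IF^*_{k_j}(\cdot|\hat P_{nj})^2 - \IF^*_{k_j}(\cdot|P)^2\big]$ and factoring $a^2 - b^2 = (a-b)(a+b)$, the bound reduces to controlling $P|\IF^*_{k_j}(\cdot|\hat P_{nj}) - \IF^*_{k_j}(\cdot|P)|$, which is exactly $d_n(\hat P_{nj},P) \le d_n(\hat P_{nj},\hat P_j') + d_n(\hat P_j',P) \le 2K' K_{nj}$ on $\mathcal{D}_n(K')$. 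The uniform boundedness of $\IF^*_{k_j}(\cdot|\hat P_{nj})$ on $\mathcal{A}_n \cap \mathcal{B}_n(\tilde K)\cap\mathcal{C}_n$ was already established as equation \eqref{eq:bounded_IF_star_hatPn} in the proof of Lemma \ref{lemma:bounded_var_deviance}, while the uniform-in-$k$ boundedness of $\IF^*_k(\cdot|P)$ follows from the explicit representation \eqref{eq:reexpressed_if} together with \ref{assump:Covariates}--\ref{assump:At-risk prob} and \ref{assump:Variances}. Combining the above display with Lemma \ref{lemma:bounded_var_deviance} by the triangle inequality yields $|\hat{\sigma}_{nj}^2 - \sigma_{nj}^2| \lesssim K_{nj}$ on the event in question.

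To obtain \ref{eq:event_En_1}, I would invoke \ref{assump:Variances} to fix a constant $\varepsilon_0>0$ with $\sigma_{nj}^2 \ge \inf_k \mathrm{Var}(\IF^*_k(O|P)) \ge \varepsilon_0$, and then use that $\max_{j\ge q_n} K_{nj} = K_{nq_n} = \sqrt{\log(n\vee p_n)/q_n}\to 0$ under the assumption $q_n^{1/4}/\log(n\vee p_n)\to\infty$. For $n$ large enough, $|\hat{\sigma}_{nj}^2 - \sigma_{nj}^2| \le \varepsilon_0/2$ uniformly in $j\in\{q_n,\ldots,n\}$, so $\hat{\sigma}_{nj}^2 \ge \varepsilon_0/2$ on $\mathcal{A}_n \cap \mathcal{B}_n(\tilde K)\cap\mathcal{C}_n \cap \mathcal{D}_n(K')$. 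This pins down the constant in \ref{eq:event_En_1}. For \ref{eq:event_En_2}, I would use the identity
\[
\frac{\sigma_{nj}}{\hat{\sigma}_{nj}} - 1 = \frac{\sigma_{nj}^2 - \hat{\sigma}_{nj}^2}{\hat{\sigma}_{nj}(\sigma_{nj}+\hat{\sigma}_{nj})},
\]
together with the two-sided lower bound $\sigma_{nj}\wedge\hat{\sigma}_{nj}\ge \sqrt{\varepsilon_0/2}$ just established, giving $|\sigma_{nj}/\hat{\sigma}_{nj}-1|\lesssim |\sigma_{nj}^2-\hat{\sigma}_{nj}^2|\lesssim K_{nj}$. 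Choosing this constant as the one required in \ref{eq:event_En_2} and defining $\mathcal{E}_n$ accordingly yields $\mathcal{A}_n \cap \mathcal{B}_n(\tilde K) \cap \mathcal{C}_n \cap \mathcal{D}_n(K') \subseteq \mathcal{E}_n$ for all $n$ large, so the probability in question equals $\mathrm{P}(\mathcal{A}_n \cap \mathcal{B}_n(\tilde K) \cap \mathcal{C}_n \cap \mathcal{D}_n(K'))$, which tends to one by Lemma \ref{lemma:An_Bn_Cn_Dn_prob_to_one}.

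The only real subtlety is checking the uniform $L^\infty$ bound on $\IF^*_k(\cdot|P)$ needed for the $a+b$ factor; since this is essentially the same calculation as the one used in Lemma \ref{lemma:bounded_var_deviance} to bound $\IF^*_k(\cdot|\hat P_{nj})$ (but now with the true features of $P$, which are deterministic and satisfy all the same uniformity assumptions), no new difficulty arises and the rest of the argument is routine algebra.
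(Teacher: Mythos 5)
Your proposal follows the paper's own argument in its overall structure (the triangle inequality with Lemma \ref{lemma:bounded_var_deviance}, the $d_n$ control from Lemma \ref{lemma:An_Bn_Cn_Dn_prob_to_one}, the uniform $L^\infty$ bounds on the efficient influence functions, and the same elementary algebra to extract \ref{eq:event_En_1} and \ref{eq:event_En_2}). However, there is a genuine gap at the central step. You write $\tilde{\sigma}_{nj}^2 - \sigma_{nj}^2 = P\big[\IF^*_{k_j}(\cdot|\hat{P}_{nj})^2 - \IF^*_{k_j}(\cdot|P)^2\big]$, but $\tilde{\sigma}_{nj}^2$ is the conditional variance ${\rm Var}(\IF^*_{k_j}(O|\hat{P}_{nj})\mid O_1,\ldots,O_j)$, which equals $P[\IF^*_{k_j}(\cdot|\hat{P}_{nj})^2] - \big(P[\IF^*_{k_j}(\cdot|\hat{P}_{nj})]\big)^2$, not the conditional second moment alone. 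Your identity therefore silently drops the squared conditional-mean term $\big(P[\IF^*_{k_j}(\cdot|\hat{P}_{nj})]\big)^2$, which is not automatically zero: $\hat{P}_{nj}$ carries the estimated censoring distribution $\hat{G}_n$, so $P\,\IF^*_{k_j}(\cdot|\hat{P}_{nj})$ does not vanish the way $P\,\IF^*_{k_j}(\cdot|P)$ does.

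The paper addresses this extra term explicitly by inserting $\hat{P}'_j = (\hat{E}_j,\Qj,G)$, noting that $P\,\IF^*_{k_j}(\cdot|\hat{P}'_j)=0$, and bounding $\big(P[\IF^*_{k_j}(\cdot|\hat{P}_{nj})]\big)^2$ by $P\big|\IF^*_{k_j}(\cdot|\hat{P}_{nj}) + \IF^*_{k_j}(\cdot|\hat{P}'_j)\big|\cdot P\big|\IF^*_{k_j}(\cdot|\hat{P}_{nj}) - \IF^*_{k_j}(\cdot|\hat{P}'_j)\big|$, using $d_n(\hat{P}_{nj},\hat{P}'_j)\le K'K_{nj}$ on $\mathcal{D}_n(K')$ together with the uniform $L^\infty$ bound on $\IF^*_k$. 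You already have both of those ingredients in hand, so the gap is readily patched; but as written the algebraic identity underlying your bound on $|\tilde\sigma_{nj}^2-\sigma_{nj}^2|$ is incorrect, and the squared-mean term must be accounted for before the argument is complete.
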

\begin{proof}
When $\mathcal{A}_n \cap \mathcal{B}_n(\tilde{K}) \cap {\cal C}_n \cap {\cal D}_n(K')$ occurs, to show that $\hat{\sigma}^2_{nj}$ is uniformly bounded away from zero, it suffices to show that $|\hat{\sigma}^2_{nj}-\sigma^2_{nj}| \lesssim K_{nj}$ for $j=q_n,\ldots,n$ on this event. This gives
\begin{align*}
  \hat{\sigma}^2_{nj} \gtrsim \sigma^2_{nj} - K_{nj} \geq \sigma^2_{nj} - \sqrt{\log(n \lor p_n)/q_n} > \zeta'
\end{align*}  
for some universal constant $\zeta' > 0$, where the final inequality is a direct consequence of the conditions for Theorem \ref{Thm:stab_one_step}: $q_n^{1/4}/\log(n \lor p_n) \to \infty$ and $\sigma_{nj}^2$ is uniformly bounded away from zero by \ref{assump:Variances}.

We now show that $|\hat{\sigma}^2_{nj}-\sigma^2_{nj}| \lesssim K_{nj}$ when $\mathcal{A}_n \cap \mathcal{B}_n(\tilde{K}) \cap {\cal C}_n \cap {\cal D}_n(K')$ occurs. Recall that $\tilde{\sigma}_{nj}^2 = {\rm Var}(\IF^*_{k_j}(O|\hat{P}_{nj})|O_1,\ldots,O_j)$. We first use the triangle inequality to give $|\hat{\sigma}^2_{nj}-\sigma^2_{nj}| \leq |\hat{\sigma}^2_{nj}-\tilde{\sigma}_{nj}^2| + |\tilde{\sigma}_{nj}^2 - \sigma^2_{nj}|$. Because $|\hat{\sigma}^2_{nj}-\tilde{\sigma}_{nj}^2|1_{\mathcal{A}_n \cap \mathcal{B}_n(\tilde{K}) \cap {\cal C}_n \cap {\cal D}_n(K')} \lesssim K_{nj}$ is given by Lemma \ref{lemma:bounded_var_deviance}, it suffices to show that
\begin{align*} 
  |\tilde{\sigma}_{nj}^2 - \sigma^2_{nj}|1_{\mathcal{A}_n \cap \mathcal{B}_n(\tilde{K}) \cap {\cal C}_n \cap {\cal D}_n(K')} \lesssim K_{nj}.
\end{align*}
Recalling that $\hat{P}'_j = (\hat{E}_j, \Qj, G)$ and noting that $P\IF^*_{k_j}(O|\hat{P}_j')=0$ and $P\IF^*_{k_j}(O|P)=0$, Jensen's inequality and the triangle inequality give that
\begin{align*}
  &|\tilde{\sigma}_{nj}^2 - \sigma^2_{nj}|1_{\mathcal{A}_n \cap \mathcal{B}_n(\tilde{K}) \cap {\cal C}_n \cap {\cal D}_n(K')}\\ 
  &\quad \le \Big\{P\big|\IF^*_{k_j}(O|\hat{P}_{nj})
  + \IF^*_{k_j}(O|P)\big|\big|\IF^*_{k_j}(O|\hat{P}_{nj}) - \IF^*_{k_j}(O|P)\big|\\ 
  &\qquad + P\big|\IF^*_{k_j}(O|\hat{P}_{nj})+\IF^*_{k_j}(O|\hat{P}_j')\big|P\big|\IF^*_{k_j}(O|\hat{P}_{nj})-\IF^*_{k_j}(O|\hat{P}_j')\big|\Big\}1_{\mathcal{A}_n \cap \mathcal{B}_n(\tilde{K}) \cap {\cal C}_n \cap {\cal D}_n(K')}\\
  &\quad \leq \Big\{P\big|\IF^*_{k_j}(O|\hat{P}_{nj})
  + \IF^*_{k_j}(O|P)\big|\big|\IF^*_{k_j}(O|\hat{P}_{nj}) - \IF^*_{k_j}(O|\hat{P}'_j)\big|\\
  &\hspace{0.8cm} + P\big|\IF^*_{k_j}(O|\hat{P}_{nj}) + \IF^*_{k_j}(O|P)\big|\big|\IF^*_{k_j}(O|\hat{P}'_j) - \IF^*_{k_j}(O|P)\big|\\
  &\hspace{0.8cm} + P\big|\IF^*_{k_j}(O|\hat{P}_{nj}) + \IF^*_{k_j}(O|\hat{P}'_j)\big|P\big|\IF^*_{k_j}(O|\hat{P}_{nj}) - \IF^*_{k_j}(O|\hat{P}'_j)\big|\Big\}1_{\mathcal{A}_n \cap \mathcal{B}_n(\tilde{K}) \cap {\cal C}_n \cap {\cal D}_n(K')}\\
  &\quad \lesssim  P\Big[\max_{k \in \mathcal{K}_n}|\IF^*_{k}(O|\hat{P}_{nj})
  - \IF^*_{k}(O|\hat{P}'_j)|1_{\mathcal{A}_n \cap \mathcal{B}_n(\tilde{K}) \cap {\cal C}_n \cap {\cal D}_n(K')}\Big]\\
  &\qquad + P\Big[\max_{k \in \mathcal{K}_n}|\IF^*_{k}(O|\hat{P}'_j) - \IF^*_{k}(O|P)|1_{\mathcal{A}_n \cap \mathcal{B}_n(\tilde{K}) \cap {\cal C}_n \cap {\cal D}_n(K')}\Big],
\end{align*}
where the last inequality holds because $\max_k|\IF^*_{k}(\cdot|\tilde{P})|1_{\mathcal{A}_n \cap \mathcal{B}_n(\tilde{K}) \cap {\cal C}_n}$ is uniformly bounded by some $(j,n)$-independent constant for $\tilde{P} \in \{\hat{P}_{nj}, \hat{P}'_j, P\}$, using the arguments for \eqref{eq:bounded_IF_star_hatPn} in Lemma \ref{lemma:bounded_var_deviance} together with \ref{assump:Covariates}--\ref{assump:At-risk prob} and \ref{assump:Variances}--\ref{assump:Conditional_mean_E0}. From the above display, Lemma~\ref{lemma:An_Bn_Cn_Dn_prob_to_one} further implies
\begin{align*}
  &|\tilde{\sigma}_{nj}^2 - \sigma^2_{nj}|1_{\mathcal{A}_n \cap \mathcal{B}_n(\tilde{K}) \cap {\cal C}_n \cap {\cal D}_n(K')}
  \lesssim P\Big[\max_{k \in \mathcal{K}_n}|\IF^*_{k}(O|\hat{P}_{nj})
  - \IF^*_{k}(O|\hat{P}'_j)|1_{\mathcal{A}_n \cap \mathcal{B}_n(\tilde{K}) \cap {\cal C}_n}\Big]\\
  &\quad + P\Big[\max_{k \in \mathcal{K}_n}|\IF^*_{k}(O|\hat{P}'_j) - \IF^*_{k}(O|P)|1_{\mathcal{A}_n \cap \mathcal{B}_n(\tilde{K}) \cap {\cal C}_n}\Big] \lesssim K_{nj}.      
\end{align*}
This completes the proof of the fact that when $\mathcal{A}_n \cap \mathcal{B}_n(\tilde{K}) \cap {\cal C}_n \cap {\cal D}_n(K')$ occurs, $\hat{\sigma}^2_{nj}$ is uniformly bounded away from zero by a non-random positive lower bound, for $j=q_n,\ldots,n$. 

When $\mathcal{A}_n \cap \mathcal{B}_n(\tilde{K}) \cap {\cal C}_n \cap {\cal D}_n(K')$ occurs,
the statement in \ref{eq:event_En_2} is an immediate consequence of the already-established \ref{eq:event_En_1}:
\[
  \bigg|\frac{\sigma_{nj}}{\hat{\sigma}_{nj}}-1\bigg| \leq \bigg|\bigg[\frac{\sigma_{nj}}{\hat{\sigma}_{nj}}-1\bigg]\bigg[\frac{\sigma_{nj}}{\hat{\sigma}_{nj}}+1\bigg]\bigg|
  = \bigg|\frac{\sigma^2_{nj}}{\hat{\sigma}^2_{nj}}-1\bigg|= \frac{|\hat{\sigma}^2_{nj}-\sigma^2_{nj}|}{\hat{\sigma}^2_{nj}} \lesssim K_{nj}.
\]
Hence, we have shown that $\mathcal{A}_n \cap \mathcal{B}_n(\tilde{K})\cap {\cal C}_n \cap {\cal D}_n(K')$ implies ${\cal E}_{n}$, where $\mathcal{E}_n$ is the event that  \ref{eq:event_En_1} and \ref{eq:event_En_2} hold with the constants that were shown to exist earlier in this proof. As ${\rm P}(\mathcal{A}_n \cap \mathcal{B}_n(\tilde{K})\cap {\cal C}_n \cap {\cal D}_n(K'))\rightarrow 1$ (Lemma \ref{lemma:An_Bn_Cn_Dn_prob_to_one}), this implies that
\begin{align*}
  {\rm P}(\mathcal{A}_n \cap \mathcal{B}_n(\tilde{K})\cap {\cal C}_n \cap {\cal D}_n(K') \cap {\cal E}_{n})&\rightarrow 1.
\end{align*}
\end{proof}

We also need the lemma below that concerns the probability of the event $\mathcal{H}_{n} = \cap_{j=q_n}^{n-1} \mathcal{H}_{nj}$, where
\begin{align*}
  \mathcal{H}_{nj} = \bigg\{\sup_{t \in \mathcal{T}}\Big|\frac{\hat{G}_j(t)}{G(t)}-1\Big| \le \frac{1}{S(\tau)}\sqrt{\frac{\log n}{j}}\: \bigg\},\; j=q_n,\ldots,n-1
\end{align*}
 with $S(\tau)={\rm P}(T \ge \tau)$. Note that $S(\tau)>0$, following that $G(\tau)>0$ by \ref{assump:Survival function}, that ${\rm P}(X \ge \tau) > 0$ by \ref{assump:At-risk prob}, and the independent censoring assumption that implies ${\rm P}(X \ge \tau)=S(\tau)G(\tau)$.
\begin{lemma} \label{lemma:Hn_prob_to_one}
Under the conditions of Theorem \ref{Thm:stab_one_step}, ${\rm P}(\mathcal{H}_{n}) \to 1$. 
\end{lemma}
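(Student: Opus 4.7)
The plan is to bound ${\rm P}(\mathcal{H}_n^c)$ by combining a high-probability exponential deviation inequality for the Kaplan--Meier estimator at each sub-sample size $j$ with a union bound over $j \in \{q_n,\ldots,n-1\}$. First I would invoke the product-integral (Duhamel) identity
\begin{align*}
\frac{\hat{G}_j(t)}{G(t)} - 1 = -\int_{-\infty}^t \frac{\hat{G}_j(s-)\,1(Y_j(s)>0)}{G(s)\,Y_j(s)}\,d\bar{M}_j(s) + \int_{-\infty}^t \frac{\hat{G}_j(s-)}{G(s)}\,1(Y_j(s)=0)\,d\Lambda(s),
\end{align*}
where $\bar{M}_j$ denotes the aggregated censoring counting-process martingale (as in Section~\ref{sec:notation}) built from $O_1,\ldots,O_j$. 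Independent censoring together with \ref{assump:Survival function} and \ref{assump:At-risk prob} implies $Y_j(\tau) \sim \mathrm{Binomial}(j, S(\tau)G(\tau))$ with $S(\tau)G(\tau)>0$, so Hoeffding's inequality gives ${\rm P}(Y_j(\tau) \le j S(\tau)G(\tau)/2) \le \exp(-j S(\tau)^2 G(\tau)^2/2)$. On the complementary event the residual term vanishes, the jumps of the martingale integrand are uniformly $O(1/[j S(\tau)G(\tau)])$, and its predictable variation is $O(\Lambda(\tau)/[j S(\tau)G(\tau)])$.

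Second, I would apply an exponential inequality for counting-process martingale integrals, such as Bennett's inequality or the bound of \citet{vandeGeer1995}, to obtain constants $c_0,c_1>0$ depending only on $(\tau, S(\tau), G(\tau), \Lambda(\tau))$ such that, for all $j \ge q_n$ and $n$ large enough,
\begin{align*}
{\rm P}\Bigl(\sup_{t\in\mathcal{T}}\Bigl|\frac{\hat{G}_j(t)}{G(t)} - 1\Bigr| > \frac{1}{S(\tau)}\sqrt{\frac{\log n}{j}}\Bigr) \le 2\exp(-c_0 \log n) + \exp(-c_1 j).
\end{align*}
The factor $1/S(\tau)$ in the threshold is chosen precisely so that, after cancelling the $S(\tau)$ factors that emerge from the lower bound on $Y_j(s)/j$, the rate constant $c_0$ strictly exceeds $1$; the Bennett bound enters through the quadratic term because $\sqrt{(\log n)/j} \to 0$ for $j \ge q_n$.

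Third, because $q_n \to \infty$ guarantees $\exp(-c_1 q_n) = o(n^{-1})$ uniformly over $j \ge q_n$, a union bound yields
\begin{align*}
{\rm P}(\mathcal{H}_n^c) \;\le\; \sum_{j=q_n}^{n-1}{\rm P}(\mathcal{H}_{nj}^c) \;\lesssim\; (n-q_n)\bigl[n^{-c_0} + e^{-c_1 q_n}\bigr] \;\longrightarrow\; 0,
\end{align*}
which proves the lemma. The main obstacle is verifying that the rate constant $c_0$ in the exponential inequality is strictly greater than $1$, so that the union bound over the $n-q_n\le n$ sub-sample sizes closes: this amounts to carefully tracking how $S(\tau), G(\tau)$ and $\Lambda(\tau)$ enter the Bennett/Bernstein bound at the chosen threshold, and in particular to exploiting the quadratic-in-$\varepsilon$ exponent to amplify the $\log n$ factor into a power of $n$ that dominates the size of the range of $j$.
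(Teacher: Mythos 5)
Your overall strategy --- an exponential tail bound for $\sup_t|\hat{G}_j(t)/G(t)-1|$ at scale $\sqrt{\log n/j}$, then a union bound over $j$ --- matches the paper's, but the crucial step is a hand-wave and the hand-wave is wrong. You claim that a Bennett/Bernstein or \citet{vandeGeer1995} martingale inequality applied at the prescribed threshold $S(\tau)^{-1}\sqrt{\log n/j}$ yields a rate constant $c_0>1$ ``after cancelling the $S(\tau)$ factors.'' It does not. Tracking constants: on the high-probability event $\inf_s Y_j(s)\gtrsim j\,S(\tau)G(\tau)$, the Duhamel integrand has jumps of order $1/(j\,G(\tau)^2 S(\tau))$ and predictable variation of order $\Lambda(\tau)/(j\,G(\tau)^3 S(\tau))$. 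Plugging the threshold $a=S(\tau)^{-1}\sqrt{\log n/j}$ into a bound of the form $2\exp(-a^2/[2(aK+b^2)])$ gives an exponent of order $-\{G(\tau)^3/(2S(\tau)\Lambda(\tau))\}\log n$. The coefficient $G(\tau)^3/(2S(\tau)\Lambda(\tau))$ is model-dependent and can be arbitrarily small (e.g.\ as $G(\tau)\to 0$, since $\Lambda(\tau)=-\log G(\tau)\to\infty$), so the union bound over the roughly $n$ values of $j$ does not close. The factor $1/S(\tau)$ in the threshold does nothing to rescue this; generic martingale concentration at this scale simply does not control the coefficient.

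The paper's proof relies on a qualitatively different ingredient: Theorem~1 of \citet{Wellner2007}, a DKW-type exponential inequality for the Kaplan--Meier process weighted by the survival function $S$, with leading exponent $-2\lambda^2$ (a \emph{universal} constant $2$, plus a lower-order correction $K\lambda$). Applying it with $\lambda=\sqrt{\log n}$ and using $\|S\cdot(\hat{G}_j-G)\|_\infty\ge S(\tau)\sup_t|\hat{G}_j(t)-G(t)|$ gives a tail of order $n^{-(2-\xi)}$ for any $\xi\in(0,1)$, and $2-\xi>1$ is exactly what closes the union bound. The $1/S(\tau)$ in the definition of $\mathcal{H}_{nj}$ exists to undo the $S$-weighting in Wellner's process, not to tune a Bernstein exponent. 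Your Hoeffding step for the at-risk count is also unnecessary once you invoke Wellner's result, which already internalizes it. To repair your argument, you would need to replace the generic martingale bound with a Kaplan--Meier inequality whose leading $\lambda^2$ coefficient is a universal constant exceeding $1$.
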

\begin{proof}
Fix $\xi \in (0,1)$.
We will use the exponential bound for the Kaplan--Meier estimator that is presented in Theorem 1 of \cite{Wellner2007}. For $\lambda > 0$ and some constant $K>0$, this inequality takes the form
\begin{align*}
  {\rm P}\Big(\sqrt{j}\big\|S(\hat{G}_j - G)\big\|_{\infty} > \lambda\Big) \le 2.5\exp(-2\lambda^2+K\lambda).    
\end{align*}
Noting that
\begin{align*}
  &{\rm P}\Big(\sqrt{j}\big\|S(\hat{G}_j - G)\big\|_{\infty} > \lambda\Big) \ge
  {\rm P}\Big(\sqrt{j}\sup_{t \in {\cal T}}\big|S(t)(\hat{G}_j(t) - G(t))\big| > \lambda\Big)\\
  &\quad \ge {\rm P}\Big(\sqrt{j}S(\tau)\sup_{t \in {\cal T}}\big|\hat{G}_j(t) - G(t)\big| > \lambda\Big)
\end{align*}
and taking $\lambda=\sqrt{\log n}$, we see that
\begin{align*}
    \log {\rm P}\bigg(\sup_{t \in {\cal T}}\big|\hat{G}_j(t) - G(t)\big| > \frac{1}{S(\tau)}\sqrt{\frac{\log n}{j}} \bigg) & \le \log(2.5)-2\log n  + K\sqrt{\log n}\,.
\end{align*}
For all $n$ large enough, $K \le \xi \sqrt{\log n}$. Combining this with the fact that $S(\tau) > 0$ shows that
\begin{align*}
    \log {\rm P}\bigg(\sup_{t \in {\cal T}}\big|\hat{G}_j(t) - G(t)\big| > \frac{1}{S(\tau)}\sqrt{\frac{\log n}{j}} \bigg)& \le \log(2.5)-(2-\xi)\log(n).
\end{align*}
Hence,
\begin{align*}
  {\rm P}&\bigg(\bigcup_{j=q_n}^{n-1}\bigg\{\sup_{t \in {\cal T}}\Big|\frac{\hat{G}_j(t)}{G(t)}-1\Big| > \frac{1}{S(\tau)}\sqrt{\frac{\log n}{j}}\bigg\}\bigg)\\
  & \le \sum_{j=q_n}^{n-1} {\rm P}\bigg(\sup_{t \in {\cal T}}\Big|\frac{\hat{G}_j(t)}{G(t)}-1\Big| > \frac{1}{S(\tau)}\sqrt{\frac{\log n}{j}} \bigg)
  \le 2.5 \sum_{j=q_n}^{n-1} n^{\xi-2} \le 2.5 n^{\xi-1} \to 0.
\end{align*}
\end{proof}

Let ${\cal L}_n =
\mathcal{A}_n \cap \mathcal{B}_n(\tilde{K}) \cap \mathcal{C}_{n} \cap \mathcal{D}_{n}(K') \cap \mathcal{E}_{n} \cap \mathcal{H}_{n}$ and note that, from the above lemmas,
\begin{align}
    {\rm P}({\cal L}_n)\to 1 \label{eq:Ln1}
\end{align}
when the conditions of Theorem \ref{Thm:stab_one_step} hold. The upcoming lemmas give the asymptotic negligibility of {\rm (I),\; (II),\; (III)} and {\rm (V)} in \eqref{eq:decomposition_rootn_Sn_star}. By \eqref{eq:Ln1}, it suffices to show the asymptotically negligibility after multiplication by $1_{{\cal L}_n}$.

To show the lemmas of asymptotic negligibility, we need additional properties that are given below.
\begin{lemma} \label{lemma:bounded_multiplicative_term}
Let ${\cal X}$ be the sample space, $f_n:{\cal X} \to \mathbb{R}$ be a random function that depends on the $n$ observations with $\sup_{o \in {\cal X}}|f_n(o)1_{{\cal L}_n}| \lesssim \sqrt{\log(n \lor p_n)}/q_n^{1/4}$ with probability tending to one, $O_{j+1, k_j} \equiv (X_{j+1}, \delta_{j+1}, U_{j+1, k_j})$ and $\sigma^2_{nj} \equiv {\rm Var}(\IF^*_{k_j}(O|P))$, for $j=q_n,\ldots,n-1$.
Under the conditions of Theorem \ref{Thm:stab_one_step},
\begin{align*}
  &\bigg|\frac{1}{\sqrt{n-q_n}}\sum_{j=q_n}^{n-1} \frac{m_j(U_{j+1,k_j}-\Qj[U_{k_j}])}{\hat{\sigma}_{nj}{\rm Var}_{\Qj}(U_{k_j})}f_n(O_{j+1, k_j})1_{{\cal L}_n}\bigg|\\ 
  &\lesssim \bigg|\frac{1}{\sqrt{n-q_n}}\sum_{j=q_n}^{n-1}\frac{m_j(U_{j+1,k_j}-Q_u[U_{k_j}])}{\sigma_{nj}{\rm Var}_{Q_u}(U_{k_j})}f_n(O_{j+1}, k_j)1_{{\cal L}_n}\bigg| + o_p(1).
\end{align*}
\end{lemma}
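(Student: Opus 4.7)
The plan is to reduce the claim to a deterministic bound after multiplying by $1_{{\cal L}_n}$, using the fact that ${\rm P}({\cal L}_n)\to 1$ so the indicator can be inserted at cost $o_p(1)$. Let $A_{nj}\equiv m_j(U_{j+1,k_j}-\Qj[U_{k_j}])/(\hat{\sigma}_{nj}{\rm Var}_{\Qj}(U_{k_j}))$ and $B_{nj}$ its population analogue. By the triangle inequality it suffices to show that
\begin{align*}
  \frac{1}{\sqrt{n-q_n}}\sum_{j=q_n}^{n-1}|A_{nj}-B_{nj}|\,|f_n(O_{j+1,k_j})|\,1_{{\cal L}_n}=o_p(1),
\end{align*}
after which the lemma follows by absorbing that term into the $o_p(1)$ on the right-hand side.

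I would decompose $A_{nj}-B_{nj}$ by adding and subtracting hybrid quantities. Writing
\begin{align*}
  A_{nj}-B_{nj}=m_j\frac{(Q_u-\Qj)[U_{k_j}]}{\hat{\sigma}_{nj}{\rm Var}_{\Qj}(U_{k_j})}+m_j(U_{j+1,k_j}-Q_u[U_{k_j}])\bigg[\frac{1}{\hat{\sigma}_{nj}{\rm Var}_{\Qj}(U_{k_j})}-\frac{1}{\sigma_{nj}{\rm Var}_{Q_u}(U_{k_j})}\bigg],
\end{align*}
and further splitting the second bracket as $\hat{\sigma}_{nj}^{-1}[{\rm Var}_{\Qj}(U_{k_j})^{-1}-{\rm Var}_{Q_u}(U_{k_j})^{-1}]+{\rm Var}_{Q_u}(U_{k_j})^{-1}[\hat{\sigma}_{nj}^{-1}-\sigma_{nj}^{-1}]$. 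On ${\cal L}_n$ the denominators $\hat{\sigma}_{nj}$, $\sigma_{nj}$, ${\rm Var}_{\Qj}(U_{k_j})$, ${\rm Var}_{Q_u}(U_{k_j})$ are uniformly bounded away from zero (by \ref{eq:event_Cn_2} of Lemma~\ref{lemma:An_Bn_Cn_prob_to_one}, \ref{assump:Variances}, and \ref{eq:event_En_1} of Lemma~\ref{lemma:An_Bn_Cn_Dn_En_prob_to_one}). Moreover $|(\Qj-Q_u)[U_{k_j}]|\lesssim K_{nj}$ on $\mathcal{A}_n$, $|{\rm Var}_{\Qj}(U_{k_j})^{-1}-{\rm Var}_{Q_u}(U_{k_j})^{-1}|\lesssim K_{nj}$ on ${\cal C}_n$ (via \ref{eq:event_Cn_4} of Lemma~\ref{lemma:An_Bn_Cn_prob_to_one}), and $|\hat{\sigma}_{nj}^{-1}-\sigma_{nj}^{-1}|\lesssim K_{nj}$ on ${\cal E}_n$ (via \ref{eq:event_En_2} of Lemma~\ref{lemma:An_Bn_Cn_Dn_En_prob_to_one}). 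Combining these with the uniform boundedness of $U_{k_j}$ from \ref{assump:Covariates} yields $|A_{nj}-B_{nj}|\,1_{{\cal L}_n}\lesssim K_{nj}$ uniformly in $j$.

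Inserting this bound together with the assumed envelope $\sup_o|f_n(o)1_{{\cal L}_n}|\lesssim\sqrt{\log(n\lor p_n)}/q_n^{1/4}$ (which holds with probability tending to one), the target quantity is bounded, with probability tending to one, by
\begin{align*}
  \frac{1}{\sqrt{n-q_n}}\sum_{j=q_n}^{n-1}K_{nj}\cdot\frac{\sqrt{\log(n\lor p_n)}}{q_n^{1/4}}\lesssim\sqrt{n-q_n}\cdot\sqrt{\frac{\log(n\lor p_n)}{q_n}}\cdot\frac{\sqrt{\log(n\lor p_n)}}{q_n^{1/4}}=\sqrt{\frac{n-q_n}{q_n}}\cdot\frac{\log(n\lor p_n)}{q_n^{1/4}},
\end{align*}
where I used $K_{nj}\le K_{nq_n}=\sqrt{\log(n\lor p_n)/q_n}$ for $j\ge q_n$. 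Under the conditions $n/q_n=O(1)$ and $q_n^{1/4}/\log(n\lor p_n)\to\infty$ of Theorem~\ref{Thm:stab_one_step}, the first factor is $O(1)$ and the second is $o(1)$, so the whole expression vanishes. This gives the desired $o_p(1)$ remainder.

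The main obstacle is largely bookkeeping: getting the rate matching to come out correctly requires the uniform bound on $f_n$ to carry exactly the factor $\sqrt{\log(n\lor p_n)}/q_n^{1/4}$, which is precisely what is needed to absorb the combined losses $K_{nj}$ from estimating both $Q_u$ and $\sigma_{nj}$, the $\sqrt{n-q_n}$ factor from the outer normalization, and the summation of $n-q_n$ terms, under the tuning condition $q_n^{1/4}/\log(n\lor p_n)\to\infty$. Once the decomposition of $A_{nj}-B_{nj}$ is in place and one recognizes that every plug-in error is controlled at rate $K_{nj}$ on ${\cal L}_n$, the remainder is a straightforward computation.
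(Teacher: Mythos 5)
Your proof is correct and follows essentially the same route as the paper's. You decompose the difference between the plug-in coefficient $A_{nj} = m_j(U_{j+1,k_j}-\Qj[U_{k_j}])/(\hat\sigma_{nj}{\rm Var}_{\Qj}(U_{k_j}))$ and its population analogue $B_{nj}$ into three pieces (one from $\Qj$ vs.\ $Q_u$ in the mean, one from ${\rm Var}_{\Qj}$ vs.\ ${\rm Var}_{Q_u}$, one from $\hat\sigma_{nj}$ vs.\ $\sigma_{nj}$), each controlled at rate $K_{nj}$ on ${\cal L}_n$ via \ref{eq:event_Cn_2}, \ref{eq:event_Cn_4} of Lemma~\ref{lemma:An_Bn_Cn_prob_to_one} and \ref{eq:event_En_1}--\ref{eq:event_En_2} of Lemma~\ref{lemma:An_Bn_Cn_Dn_En_prob_to_one}, and then combine with the envelope $\sup_o|f_n(o)1_{{\cal L}_n}|\lesssim\sqrt{\log(n\lor p_n)}/q_n^{1/4}$ and the tuning conditions $n/q_n=O(1)$, $q_n^{1/4}/\log(n\lor p_n)\to\infty$. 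The paper's proof does exactly this, only sequencing the substitutions in the reverse order (first $\hat\sigma_{nj}\to\sigma_{nj}$, then $\Qj\to Q_u$ and ${\rm Var}_{\Qj}\to{\rm Var}_{Q_u}$), which is cosmetic. One small slip worth noting: your first sentence speaks of ``inserting'' $1_{{\cal L}_n}$ at $o_p(1)$ cost, but the lemma's statement already carries the indicator on both sides, so no such insertion step is needed; the rest of the argument doesn't depend on it.
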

\begin{proof}
To show the result, we first observe that
\begin{align} \label{eq:upper_bound}
  &\bigg|\frac{1}{\sqrt{n-q_n}}\sum_{j=q_n}^{n-1} \frac{m_j(U_{j+1,k_j}-\Qj[U_{k_j}])}{\hat{\sigma}_{nj}{\rm Var}_{\Qj}(U_{k_j})}f_n(O_{j+1, k_j})1_{{\cal L}_n}\bigg| \\
  &\quad \leq \bigg|\frac{1}{\sqrt{n-q_n}}\sum_{j=q_n}^{n-1}\frac{m_j(U_{j+1,k_j}-\Qj[U_{k_j}])}{\sigma_{nj}{\rm Var}_{\Qj}(U_{k_j})}f_n(O_{j+1, k_j})1_{{\cal L}_n}\bigg| \nonumber\\
  &\qquad + \bigg|\frac{1}{\sqrt{n-q_n}}\sum_{j=q_n}^{n-1}\frac{m_j(U_{j+1,k_j}-\Qj[U_{k_j}])}{\sigma_{nj}{\rm Var}_{\Qj}(U_{k_j})}\Big[\frac{\sigma_{nj}}{\hat{\sigma}_{nj}}-1\Big]f_n(O_{j+1, k_j})1_{{\cal L}_n}\bigg|. \nonumber
\end{align}
The triangle inequality upper-bounds the first term of the right-hand-side in \eqref{eq:upper_bound} by
\begin{align} \label{eq:upper_bound.1}
  &\bigg|\frac{1}{\sqrt{n-q_n}}\sum_{j=q_n}^{n-1}\frac{m_j(U_{j+1,k_j}-Q_u[U_{k_j}])}{\sigma_{nj}{\rm Var}_{Q_u}(U_{k_j})}f_n(O_{j+1, k_j})1_{{\cal L}_n}\bigg|\\
  &\quad + \bigg|\frac{1}{\sqrt{n-q_n}}\sum_{j=q_n}^{n-1}\frac{m_j(\Qj-Q_u)[U_{k_j}]}{\sigma_{nj}{\rm Var}_{Q_u}(U_{k_j})}f_n(O_{j+1, k_j})1_{{\cal L}_n}\bigg| \nonumber\\
  &\quad +
  \bigg|\frac{1}{\sqrt{n-q_n}}\sum_{j=q_n}^{n-1}\frac{m_j}{\sigma_{nj}}(U_{j+1,k_j}-\Qj[U_{k_j}])\Big[\frac{1}{{\rm Var}_{\Qj}(U_{k_j})}-\frac{1}{{\rm Var}_{Q_u}(U_{k_j})}\Big]f_n(O_{j+1, k_j})1_{{\cal L}_n}\bigg| \nonumber\\
  & \lesssim \bigg|\frac{1}{\sqrt{n-q_n}}\sum_{j=q_n}^{n-1}\frac{m_j(U_{j+1,k_j}-Q_u[U_{k_j}])}{\sigma_{nj}{\rm Var}_{Q_u}(U_{k_j})}f_n(O_{j+1}, k_j)1_{{\cal L}_n}\bigg|
  + \sqrt{n-q_n}\frac{\log(n \lor p_n)}{q_n^{3/4}} + o_p(1), \nonumber
\end{align}
where the last inequality holds because $U_k$ is uniformly bounded on $k$ by \ref{assump:Covariates}; that $\sigma_{nj}$ and ${\rm Var}_{Q_u}(U_k)$ are bounded away from zero uniformly over $(j,k)$ by \ref{assump:Variances}, and by the occurrence of ${\cal A}_n \cap {\cal C}_n \supset {\cal L}_n$, along with
$\sup_{o}|f_n(o)1_{{\cal L}_n}| \lesssim \sqrt{\log(n \lor p_n)}/q_n^{1/4}$ with probability tending to one. Note that $\sqrt{n-q_n}\log(n \lor p_n)/q_n^{3/4} \to 0$ by the conditions of Theorem \ref{Thm:stab_one_step}.

The second term of the right-hand-side in \eqref{eq:upper_bound} is bounded by
\begin{align*}  
  &\bigg|\frac{1}{\sqrt{n-q_n}}\sum_{j=q_n}^{n-1}\frac{m_j(U_{j+1,k_j}-\Qj[U_{k_j}])}{\sigma_{nj}{\rm Var}_{\Qj}(U_{k_j})}\Big[\frac{\sigma_{nj}}{\hat{\sigma}_{nj}}-1\Big]f_n(O_{j+1, k_j})1_{{\cal L}_n}\bigg|\\
  &\quad \lesssim \sup_{o}|f_n(o)1_{{\cal L}_n}|\frac{1}{\sqrt{n-q_n}}\sum_{j=q_n}^{n-1}\Big|\frac{\sigma_{nj}}{\hat{\sigma}_{nj}}-1\Big|
  \leq \sup_{o}|f_n(o)1_{{\cal L}_n}|\sqrt{n-q_n}\frac{\sqrt{\log(n \lor p_n)}}{\sqrt{q_n}}\\
  &\quad \lesssim \sqrt{n-q_n}\frac{\log(n \lor p_n)}{q_n^{3/4}} + o_p(1)\to 0.
\end{align*}  
In above display, the first inequality holds because
$|m_j(U_{j+1,k_j}-\Qj[U_{k_j}])/{\rm Var}_{\Qj}(U_{k_j})|$ is uniformly bounded by some constant that does not depend on $(j,n)$ almost surely for $j=q_n,\ldots,n-1$, which is a consequence of \ref{assump:Covariates} and \ref{eq:event_Cn_2} of Lemma \ref{lemma:An_Bn_Cn_prob_to_one}. The penultimate inequality follows \ref{eq:event_En_2} of Lemmas \ref{lemma:An_Bn_Cn_Dn_En_prob_to_one}.
\end{proof}

For $j=q_n,\ldots,n-1$, we define random functions $\tilde{e}_{nj}, e_{nj} : \mathbb{R} \times {\cal T} \times {\cal K}_n \to \mathbb{R}$ by
\begin{align} 
  \label{eq:random_func_en}
  \tilde{e}_{nj}(u,s,k) &= E_0(u,s,k)1(X_{j+1} \geq s)1\big(\inf_s Y_n(s) \ge \sqrt{n}\,\big), \mbox{ and}\\
  \label{eq:random_func_enj}
  e_{nj}(u,s,k) &= [E_0(u,s,k)-\hat{E}_j(u,s,k)]1(X_{j+1} \geq s)\\
  &\hspace{0.5cm} \times 1\Big( \,\sup_{(k,s,u)}|E_0(u,s,k)-\hat{E}_j(u,s,k)|\lesssim K_{nj},\, \inf_s Y_n(s) \ge \sqrt{n}\,\Big). \nonumber
\end{align}
Recall that $N_n(s)$ and $Y_n(s)$ are the aggregated counting process for the censored outcomes and the size of the risk set at time $s$, as
defined in Section \ref{sec:notation} (except for removing $u$), and also note that $d\bar{M}(s) \equiv dN_n(s) - Y_n(s) d\Lambda(s)$ is a local martingale with respect to the simpler filtration
\begin{align} \label{eq:filtration_Fprime.2}
  \mathcal{F}'_{s} = \sigma(\{N_n(s'),\, Y_n(s'):\,s' \le s \in \mathcal{T}\}).
\end{align}
Observing the decomposition of $\hat{\Lambda}_n-\Lambda$ analogously to \eqref{eq:martingale_est_CHF} without $u$ gives that
\begin{align} \label{eq:martingale_est_CHF.1}
  \hat{\Lambda}_n(t)-\Lambda(t) = \int_{-\infty}^{t} \frac{1(Y_n(s)>0)}{Y_n(s)}d\bar{M}(s) - \int_{-\infty}^{t} 1(Y_n(s)=0)\,d\Lambda(s).  
\end{align}

In what follows, we will need an exponential inequality for martingales with bounded jumps:
\begin{lemma} \label{lemma:martingale_exp_ineq}
Let $W_n(t), t \in \mathcal{T}$ be a martingale with jumps bounded by a constant $K_n > 0$, and the quadratic variation $\langle W_n \rangle (t) \le b_n^2$ for a constant $b_n > 0$ with respect to the filtration
$\sigma(\{N_n(s),\, Y_n(s): s \le t\}, \{\bs{U}_{i}\}_{i=1}^n)$, where both $K_n$ and $b_n$ go to zero for sufficiently large $n$. In particular,
\begin{align*}
  W_n(t) \equiv \int_{-\infty}^t w(s)\frac{1(Y_n(s) \ge \sqrt{n})}{Y_n(s)}d\bar{M}(s),
\end{align*}
where the function $w$ is uniformly bounded and left-continuous in $s$, and adapted to the given filtration.
Let $\epsilon_n$ be any sequence with values in $(0,1)$ and $\epsilon_n \to 0$; then 
$${\rm P}\big(\,|W_{n}(\tau)| \ge \epsilon_n \big) \le 2\exp\bigg(-\frac{\epsilon_n^2}{2(\epsilon_nK_n+b_n^2)}\bigg).$$
\end{lemma}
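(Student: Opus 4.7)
The plan is to recognize the stated inequality as an immediate instance of a Freedman/Bennett-type exponential bound for martingales with bounded jumps, so the proof amounts to verifying that the displayed hypotheses on $K_n$ and $b_n^2$ do hold, and then quoting the inequality.

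First I would note that, by standard counting-process theory, $\bar M(s) = dN_n(s) - Y_n(s)\, d\Lambda(s)$ is a local square-integrable martingale with respect to the filtration $\sigma(\{N_n(s'), Y_n(s'): s'\le s\}, \{\bs U_i\}_{i=1}^n)$. The integrand $s\mapsto w(s)\,1(Y_n(s)\ge \sqrt n)/Y_n(s)$ is left-continuous in $s$ and adapted, hence predictable, so the stochastic integral defining $W_n$ is again a local martingale; boundedness of $b_n^2$ upgrades it to a square-integrable martingale. Its jumps are $\Delta W_n(s) = w(s)\,1(Y_n(s)\ge \sqrt n)\,\Delta N_n(s)/Y_n(s)$, and since at most one failure can occur at each jump of the aggregated counting process $N_n$ under the continuity of $G$, these jumps are bounded by $\|w\|_\infty/\sqrt n$; this verifies the jump bound by some $K_n \to 0$. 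The predictable quadratic variation is $\langle W_n\rangle(t) = \int_{-\infty}^t w(s)^2 \,1(Y_n(s)\ge \sqrt n)/Y_n(s)\,d\Lambda(s) \le \|w\|_\infty^2 \Lambda(\tau)/\sqrt n$, giving the bound by some $b_n^2\to 0$.

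Next I would apply Freedman's (1975) exponential inequality for martingales with bounded jumps (equivalently, Lemma~2.1 of \cite{vandeGeer1995} in its continuous-time form): if $W$ is a martingale with $|\Delta W|\le K$ and $\langle W\rangle_t \le b^2$, then for every $x>0$
\[
    {\rm P}\bigl(|W(t)|\ge x\bigr) \;\le\; 2\exp\!\left(-\frac{x^{2}}{2(xK+b^{2})}\right).
\]
Taking $W=W_n$, $t=\tau$, $K=K_n$, $b=b_n$ and $x=\epsilon_n$ yields exactly the stated bound. The condition $\epsilon_n\in(0,1)$ is only needed to ensure the denominator $\epsilon_n K_n + b_n^2$ is of the right order for the bound to be informative; the inequality itself holds for arbitrary $x>0$.

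I do not expect any real obstacle here: the only substantive work is checking the two bounds on jumps and on $\langle W_n\rangle$ for the particular form of $W_n$ given in the second half of the statement, both of which are straightforward counting-process calculations using predictability of $w$ and $Y_n$ and the truncation $1(Y_n(s)\ge \sqrt n)$. Once these are in hand, the conclusion is a direct citation of Freedman's inequality. (The harder, uniform-in-the-integrand version alluded to in the introduction — needed for dealing with the class over which the integrand varies in Lemmas~\ref{lemma:asymptotic_negligibility_(I)}--\ref{lemma:asymptotic_negligibility_(V)} — is the genuine use of \cite{vandeGeer1995}, but for the present pointwise lemma the classical bound suffices.)
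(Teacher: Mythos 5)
Your proposal is correct and follows essentially the same route as the paper: verify that the jumps of $W_n$ are bounded by $K_n = \|w\|_\infty/\sqrt{n}$ (using continuity of $\Lambda$ so that $\Delta\bar M = \Delta N_n$, with $|\Delta N_n|\le 1$), bound the predictable quadratic variation by $b_n^2 = \|w\|_\infty^2\Lambda(\tau)/\sqrt{n}$, and then quote Lemma~2.1 of \cite{vandeGeer1995} at $t=\tau$. Your closing remark — that the $\epsilon_n\in(0,1)$ restriction is immaterial to the validity of the inequality itself — and your observation that the genuinely uniform use of \cite{vandeGeer1995} (Theorem~3.1) appears only in the later Lemma~\ref{lemma:convergence_sum_Dnj.2} are both accurate.
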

\begin{proof}
Let $\Delta W_{n}(t)$ be the jump of $W_{n}$ at time $t$:
\begin{align*}
  \Delta W_{n}(t) = W_{n}(t)-W_{n}(t-) = w(t)\frac{1(Y_n(t) \ge \sqrt{n})}{Y_n(t)}\Delta\bar{M}(t),
\end{align*}
where $\Delta\bar{M}(t)=\Delta N_n(t)-Y_n(t)\Delta\Lambda(t)=\Delta N_n(t) $, since \ref{assump:Survival function} implies that $\Lambda$ is continuous. Note also that $|\Delta N_n(t)| \le 1$ because no two individual counting processes that are aggregated in $\{N_n(t): t \in \mathcal{T}\}$ jump at the same time. Therefore, $|\Delta\bar{M}(t)|=|\Delta N_n(t)| \le 1$; along with $w$ being uniformly bounded: $|w(t)| \le K^*$ for all $t$ and a constant $K^*$ that could depend on $n$,
\begin{align*}
  |\Delta W_{n}(t)| \leq \Big|w(t)\frac{1(Y_n(t) \ge \sqrt{n})}{Y_n(t)}\Big| \le \frac{K^*}{\sqrt{n}} \equiv K_n.
\end{align*}
Meanwhile with respect to the given filtration, the predictable quadratic variation of $W_{n}(t)$ is
\begin{align*}
  \langle W_{n} \rangle (t) &= \int_{-\infty}^{t} w^2(s)\frac{1(Y_n(s) \ge \sqrt{n})}{Y_n(s)}d\Lambda(s) \le \frac{(K^*)^2\Lambda(t)}{\sqrt{n}} \equiv b_n^2(t).
\end{align*}
Obviously, $b_n^2(t) \le b_n^2(\tau) \equiv b_n^2$.
By an exponential inequality for martingales with bounded jumps \cite[cf., Lemma 2.1 of][]{vandeGeer1995}, 
\begin{align*}
  & {\rm P}\big(\,|W_{n}(t)| \ge \epsilon_n \mbox{ and } \langle W_{n} \rangle (t) \le b_n^2 \mbox{ for some } t \in {\cal T} \,\big) \le 2\exp\bigg(-\frac{\epsilon_n^2}{2\big(\epsilon_nK_n+b_n^2\big)}\bigg).
\end{align*}
Along with $\langle W_{n} \rangle (\tau) \le b_n^2$, the above display gives the same exponential bound on ${\rm P}(\,|W_{n}(\tau)| \ge \epsilon_n\,)$.
\end{proof}

\begin{lemma}\label{lemma:properties_estCHF}
For $\mathcal{I}_n = \{(j,k,u): j \in \{q_n,\ldots,n-1\}, k \in \mathcal{K}_n, u \in [-1,1]\}$, $\tilde{e}_{nj}$ and $e_{nj}$ are as defined in \eqref{eq:random_func_en} and \eqref{eq:random_func_enj}. Under the conditions of Theorem \ref{Thm:stab_one_step}, with probability tending to one,
\begin{lemmaitemenum}
  \item \label{eq:property_estCHF_1} $\sup_{(j,k,u) \in \mathcal{I}_n}\big| \int_{\mathcal{T}}\tilde{e}_{nj}(u,s,k)\{d\hat{\Lambda}_n(s)-d\Lambda(s)\}\big| \le \sqrt{\log(n \lor p_n)}/q_n^{1/4}$;

  \item \label{eq:property_estCHF_2}
   $\sup_{(j,k,u) \in \mathcal{I}_n}\big|\int_{\mathcal{T}}e_{nj}(u,s,k)\{d\hat{\Lambda}_n(s)-d\Lambda(s)\}\big| \lesssim \sqrt{\log(n \lor p_n)}/q_n^{1/4}$, relying on an appropriately specified (non-random) constant that does not depend on $n$.
\end{lemmaitemenum}
\end{lemma}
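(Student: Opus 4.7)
The plan is to combine the martingale decomposition \eqref{eq:martingale_est_CHF.1} of $\hat\Lambda_n-\Lambda$, the linear-in-$u$ representation of $E_0(u,s,k)$ supplied by Lemma~\ref{lemma:inclusion_E0_Ehat}, the exponential tail bound of Lemma~\ref{lemma:martingale_exp_ineq}, and a bracketing-entropy argument for BV classes. First I would observe that the indicator $1(\inf_s Y_n(s)\ge\sqrt{n})$ built into both $\tilde e_{nj}$ and $e_{nj}$ kills the $\int 1(Y_n(s)=0)\,d\Lambda(s)$ remainder in \eqref{eq:martingale_est_CHF.1}, so that on that event,
\begin{align*}
  \int_{\mathcal{T}}\tilde e_{nj}(u,s,k)\{d\hat\Lambda_n(s)-d\Lambda(s)\} \;=\; \int_{\mathcal{T}} E_0(u,s,k)\,1(X_{j+1}\ge s)\,\frac{1(Y_n(s)\ge\sqrt{n})}{Y_n(s)}\,d\bar M(s),
\end{align*}
with the analogous identity holding for $e_{nj}$. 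The resulting integrand is predictable with respect to $\mathcal{F}'_s$ (since $E_0$ is left-continuous in $s$ by \ref{assump:Conditional_mean_E0}, $s\mapsto 1(X_{j+1}\ge s)$ is left-continuous and $\mathcal{F}'_s$-adapted, and $s\mapsto 1(Y_n(s)\ge\sqrt{n})/Y_n(s)$ is left-continuous), so the stochastic integral is a square-integrable martingale up to time $\tau$.

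For part~(i), Lemma~\ref{lemma:inclusion_E0_Ehat} decomposes $E_0(u,s,k)=a(s,k)+b(s,k)u$ with $a(\cdot,k),b(\cdot,k)\in\mathcal{BV}(\mathcal{T},\tilde M_0,\tilde M_1)$, and the triangle inequality reduces $\sup_{u\in[-1,1]}$ of the martingale integral above to the sum of the two integrals against $a(\cdot,k)$ and $b(\cdot,k)$. For each $(j,k)$ and each of these two integrals, Lemma~\ref{lemma:martingale_exp_ineq} applies with envelope $K^*\le\tilde M_0$, jump bound $K_n\lesssim n^{-1/2}$, and quadratic variation $b_n^2\lesssim\tilde M_0^2\Lambda(\tau)\,n^{-1/2}$. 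Choosing $\epsilon_n=C_0\sqrt{\log(n\lor p_n)}/q_n^{1/4}$ and using $n/q_n=O(1)$ yields $\epsilon_n^2\sqrt{n}\ge c\,C_0^2\log(n\lor p_n)$ for a positive constant $c$, so the tail bound takes the form $2(n\lor p_n)^{-cC_0^2/2}$. A union bound over the at most $2(n-q_n)p_n\le 2(n\lor p_n)^2$ pairs $(j,k)$, with $C_0$ chosen large enough that $cC_0^2/2>2$, drives the exceedance probability to zero.

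For part~(ii), the same reduction writes $\hat E_j-E_0=(\hat a_j-a)+(\hat b_j-b)u$, where $\hat a_j-a$ and $\hat b_j-b$ lie in $\mathcal{BV}(\mathcal{T},2\tilde M_0,2\tilde M_1)$ (by Lemmas~\ref{lemma:preservation_BV} and \ref{lemma:inclusion_E0_Ehat} applied to both $\hat E_j$ and $E_0$), with sup norms $\lesssim K_{nj}$ on the event encoded in the indicator inside $e_{nj}$. This shrinks the envelope, jump bound, and quadratic-variation bound in Lemma~\ref{lemma:martingale_exp_ineq} by an additional factor of $K_{nj}$ relative to part~(i). The main obstacle is that $\hat a_j-a$ and $\hat b_j-b$ are random elements of an infinite-dimensional class, so a naive discrete union bound no longer suffices. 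My plan is to invoke the uniform exponential inequality of \cite{vandeGeer1995} for counting-process martingale integrals with integrands varying in a class whose $L^2(P)$ bracketing entropy is of order $1/\delta$, namely (a $K_{nj}$-rescaling of) $\mathcal{BV}(\mathcal{T},\tilde M_0,\tilde M_1)$. After rescaling, the bracketing integral remains finite and, combined with the reduced envelope and quadratic variation, produces an exponential tail of the same order as in part~(i); a final union bound over $j\in\{q_n,\ldots,n-1\}$ and $k\in\mathcal{K}_n$ then delivers the stated bound.
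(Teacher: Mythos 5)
Your treatment of part~(i) tracks the paper closely: the indicator $1(\inf_s Y_n(s)\ge\sqrt{n})$ eliminates the $\int 1(Y_n(s)=0)\,d\Lambda(s)$ remainder from \eqref{eq:martingale_est_CHF.1}, the $E_0(u,s,k)=a_k(s)+b_k(s)u$ representation from Lemma~\ref{lemma:inclusion_E0_Ehat} reduces the sup over $u$ to two integrals, Lemma~\ref{lemma:martingale_exp_ineq} supplies the per-$(j,k)$ exponential tail, and a union bound over the roughly $p_n\,(n-q_n)$ pairs finishes. One small imprecision: the paper takes $\epsilon_n=\sqrt{\log(n\lor p_n)}/q_n^{1/4}$ with no multiplicative constant (the exponent already diverges fast enough), so the stated ``$\le$'' is achieved without your $C_0$; introducing $C_0>1$ proves only the $\lesssim$ version. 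Not a real gap, just a slightly looser conclusion.

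For part~(ii) you take a genuinely different, and considerably heavier, route than the paper, and in doing so you overlook the observation that makes part~(ii) essentially trivial. By construction \eqref{eq:random_func_enj} bakes the indicator $1\big(\sup_{(k,s,u)}|E_0(u,s,k)-\hat E_j(u,s,k)|\lesssim K_{nj}\big)$ directly into $e_{nj}$, so $\sup_{(j,k,s,u)}|e_{nj}(u,s,k)|\lesssim K_{nq_n}$ \emph{deterministically}. The paper then simply bounds
\[
  \Big|\int_{\mathcal{T}}e_{nj}(u,s,k)\{d\hat\Lambda_n(s)-d\Lambda(s)\}\Big|
  \le \sup_{(j,k,s,u)}|e_{nj}(u,s,k)|\cdot\big(\hat\Lambda_n(\tau)+\Lambda(\tau)\big),
\]
controls $|\hat\Lambda_n(\tau)-\Lambda(\tau)|$ by the tightness of $\{\sqrt{n}[\hat\Lambda_n(t)-\Lambda(t)]\}$, and concludes with $K_{nq_n}\cdot O_p(1)\le\sqrt{\log(n\lor p_n)}/q_n^{1/4}$. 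No martingale argument, no entropy, no union bound over $(j,k)$ is needed, because the bound is deterministic in $(j,k,u)$ and only the single high-probability event $\sup_t|\hat\Lambda_n(t)-\Lambda(t)|\le\sqrt{\log n/n}$ is invoked. In part~(i) this simple argument fails precisely because $\tilde e_{nj}$ has an $O(1)$ envelope, which is why the martingale machinery is required there. Your instinct to reach for van de Geer is not misdirected in general (the paper does apply it in Lemma~\ref{lemma:convergence_sum_Dnj.2} where an integrand class genuinely must be handled uniformly inside a sum of martingale differences), but for Lemma~\ref{lemma:properties_estCHF}\ref{eq:property_estCHF_2} it is unnecessary; moreover, your sketch leaves the rate analysis after the $K_{nj}$-rescaling at the level of ``produces an exponential tail of the same order,'' which is not a verification. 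The short version: part~(ii) is proved by a pointwise sup-times-total-variation bound, not by concentration, and the $e_{nj}$ indicator is the device that makes this available.
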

\begin{proof}
We prove \ref{eq:property_estCHF_1} and \ref{eq:property_estCHF_2} sequentially.
By the decomposition of $\hat{\Lambda}_n-\Lambda$ in \eqref{eq:martingale_est_CHF.1}, we have that for $(j,k,u) \in \mathcal{I}_n$,
\begin{align*}  
  \int_{\mathcal{T}}\tilde{e}_{nj}(u,s,k)\{d\hat{\Lambda}_n(s)-d\Lambda(s)\}
  = \int_{\mathcal{T}}\tilde{e}_{nj}(u,s,k) \frac{1(Y_n(s)>0)}{Y_n(s)}d\bar{M}(s),
\end{align*}
where $d\bar{M}(s) = dN_n(s)-Y_n(s)d\Lambda(s)$.
Note that by Lemma \ref{lemma:inclusion_E0_Ehat} and \eqref{eq:random_func_en},
there exist functions $a_k, b_k \in {\cal BV}({\cal T}, \tilde{M}_0, \tilde{M}_1)$ such that $E_0(u,s,k)=a_k(s)+b_k(s)u$ and 
\begin{align}
  \tilde{e}_{nj}(u,s,k) = [a_k(s)+b_k(s)u]1(X_{j+1} \geq s)1(Y_n(s) \ge \sqrt{n}),
\end{align}
where $a_k$ and $b_k$ are uniformly bounded by $\tilde{M}_0$ on $(k,s)$ and left-continuous in $s$, inheriting from the properties of $E_0$ that are assumed in \ref{assump:Conditional_mean_E0}.
Moreover, we see that $a_k(s)$, $b_k(s)$ and $Y_n(s)$ are predictable with respect to ${\cal F}'_{s}$ defined in \eqref{eq:filtration_Fprime.2}, because they are left-continuous in $s$ and adapted to ${\cal F}'_{s}$. Therefore, \ref{assump:Covariates} enables us to suppose that $U$ takes values in $[-1,1]$ without loss of generality, leading to
\begin{align} \label{eq:int_upper_bound}
  &\sup_{(j,k,u) \in \mathcal{I}_n}\Big|\int_{\mathcal{T}}\tilde{e}_{nj}(u,s,k) \frac{1(Y_n(s)>0)}{Y_n(s)}d\bar{M}(s) \Big|\\
  & = \sup_{(j,k,u) \in \mathcal{I}_n}\Big|\int_{\mathcal{T}}[a_k(s)+b_k(s)u]1(X_{j+1} \geq s)1(Y_n(s) \ge \sqrt{n}) \frac{1(Y_n(s)>0)}{Y_n(s)}d\bar{M}(s) \Big| \nonumber \\
  &\le \max_{(j,k)}\Big|\int_{\mathcal{T}}a_k(s)\frac{1(X_{j+1} \geq s,Y_n(s) \ge \sqrt{n})}{Y_n(s)}d\bar{M}(s) \Big| \nonumber \\
  & \hspace{0.5cm} + \max_{(j,k)}\Big|\int_{\mathcal{T}}b_k(s)\frac{1(X_{j+1} \geq s,Y_n(s) \ge \sqrt{n})}{Y_n(s)}d\bar{M}(s) \Big|. \nonumber
\end{align}
So to show the desired result, it suffices to show that each term on the right-hand-side of \eqref{eq:int_upper_bound} is bounded above by $\sqrt{\log(n \lor p_n)}/q_n^{1/4}$ with probability tending to one. Here we only tackle the first term on the right-hand-side of \eqref{eq:int_upper_bound}, and the second term can be handled using nearly identical arguments. 

To use Lemma \ref{lemma:martingale_exp_ineq} for showing the desired result, we define the required notations as follows, especially here we have the martingale $W_n$ further be indexed by $(j,k)$ (with $n$ omitted) and the function $w_k(\cdot) \equiv a_k(\cdot)$, where $w$ now is indexed by $k$. 
For $(j,k) \in \{q_n,\ldots,n-1\} \times {\cal K}_n$, let 
\begin{align*}
  W_{jk}(t) \equiv \int_{-\infty}^t w_k(s)\frac{1(X_{j+1} \geq s, Y_n(s) \ge \sqrt{n})}{Y_n(s)}d\bar{M}(s),
\end{align*}
and $\Delta W_{jk}(t)$ be the jump of $W_{jk}$ at time $t$:
\begin{align*}
  \Delta W_{jk}(t) = W_{jk}(t)-W_{jk}(t-) = w_k(t)\frac{1(X_{j+1} \geq s, Y_n(t) \ge \sqrt{n})}{Y_n(t)}\Delta\bar{M}(t);
\end{align*}
together with $|w_k(t)| \le \tilde{M}_0$ for all $t$,
\begin{align*}
  |\Delta W_{jk}(t)| \leq \Big|w_k(t)\frac{1(X_{j+1} \geq s, Y_n(t) \ge \sqrt{n})}{Y_n(t)}\Big| \le \frac{\tilde{M}_0}{\sqrt{n}} \equiv K_n.
\end{align*}
Meanwhile, the predictable quadratic variation of $W_{jk}(t)$ is
\begin{align*}
  \langle W_{jk} \rangle (t) &= \Big|\int_{-\infty}^{t} w^2_k(s)\frac{1(X_{j+1} \geq s,Y_n(t) \ge \sqrt{n})}{Y_n(s)}d\Lambda(s) \Big| \le \frac{\tilde{M}^2_0\Lambda(t)}{\sqrt{n}} \equiv b_n^2(t),
\end{align*}
and $b_n^2(t) \le b_n^2(\tau) \equiv b_n^2$. Let $\epsilon_n = \sqrt{\log(n \lor p_n)}/q_n^{1/4}$; then Lemma \ref{lemma:martingale_exp_ineq} implies that
\begin{align*}
  & {\rm P}\big(\,|W_{jk}(\tau)| \ge \epsilon_n \,\big) \le 2\exp\bigg(-\frac{\epsilon_n^2}{2\big(\epsilon_nK_n+b_n^2\big)}\bigg)\\
  & = 2\exp\bigg(-\frac{\sqrt{n}\log(n \lor p_n)}{2q_n^{1/2}}\frac{1}{\big(q_n^{-1/4}\sqrt{\log(n \lor p_n)}\tilde{M}_0+\Lambda(\tau)\tilde{M}_0^2\big)}\bigg)
\end{align*}
for $(j,k) \in \{q_n,\ldots,n-1\} \times {\cal K}_n$, leading to
\begin{align*}
  & {\rm P}\Big(\,\max_{(j,k)}|W_{jk}(\tau)| \ge \epsilon_n \,\Big) \le
  \sum_{j=q_n}^{n-1}\sum_{k=1}^{p_n}{\rm P}\Big(|W_{jk}(\tau)| \ge \epsilon_n \Big) \\
  & \quad \le 2p_n(n-q_n)\exp\bigg(-\frac{\sqrt{n}\log(n \lor p_n)}{2q_n^{1/2}}\frac{1}{\big(q_n^{-1/4}\sqrt{\log(n \lor p_n)}\tilde{M}_0+\Lambda(\tau)\tilde{M}_0^2\big)}\bigg)\\
  & \quad \le 2\exp\big(2\log(n \lor p_n)\big)\exp\big(-n^{1/4}\sqrt{\log(n \lor p_n)}\,\big)\\
  &\quad = 2\exp\Big(\Big[2\sqrt{\log(n \lor p_n)}/n^{1/4}-1\Big]n^{1/4}\sqrt{\log(n \lor p_n)}\Big).    
\end{align*}
Therefore $\limsup_{n \to \infty}{\rm P}\big(\max_{(j,k)}|W_{jk}(\tau)| \ge a_n \big)=0$, following $n/q_n = O(1)$,
$\log(n \lor p_n)/q_n^{1/4} \to 0$ and 
$n^{1/4}\sqrt{\log(n \lor p_n)} \rightarrow \infty$. Hence, we complete the proof of \ref{eq:property_estCHF_1}.

Below we present the proof of \ref{eq:property_estCHF_2}. Since the total variation of $\hat{\Lambda}_n-\Lambda$ is bounded by $\hat{\Lambda}_n(\tau)+\Lambda(\tau)$
we have
\begin{align} \label{eq:int_upper_bound.1} 
  &\sup_{(j,k,u) \in \mathcal{I}_n}\bigg|\int_{\mathcal{T}}e_{nj}(u,s,k)\{d\hat{\Lambda}_n(s)-d\Lambda(s)\}\bigg|\\
  &\quad \le \sup_{(j,k,s,u)}|e_{nj}(u,s,k)|\big|\hat{\Lambda}_n(\tau)+\Lambda(\tau)\big| \nonumber \\
  &\quad \le \sup_{(j,k,s,u)}|e_{nj}(u,s,k)|\big|\hat{\Lambda}_n(\tau)-\Lambda(\tau)\big| + 2\sup_{(j,k,s,u)}|e_{nj}(u,s,k)|\Lambda(\tau). \nonumber
\end{align}
Observing that $\{\sqrt{n}[\hat{\Lambda}_n(t)-\Lambda(t)] : t\in\mathcal{T}\}$ converges to a tight Gaussian process, and then the continuous mapping theorem gives that $\sqrt{n}\sup_{t\in\mathcal{T}}|\hat{\Lambda}_n(t)-\Lambda(t)|$ converges to the supremum of the absolute value of this Gaussian process. Therefore for any sequence $\varepsilon_n \rightarrow \infty$, ${\rm P}(\sup_{t \in \mathcal{T}}\sqrt{n}|\hat{\Lambda}_n(t)-\Lambda(t)| > \varepsilon_n \big) \rightarrow 0$, in particular, $\varepsilon_n=\sqrt{\log n}$. Consequently, $\sup_{t \in \mathcal{T}}|\hat{\Lambda}_n(t)-\Lambda(t)| \le \sqrt{\log n}/\sqrt{n}$ with probability tending to one. Following from \eqref{eq:random_func_enj} that $\sup_{(j,k,s,u)}|e_{nj}(u,s,k)| \le K_*\max_jK_{nj}$ for some positive constant $K_*$ that does not depend on $n$, we have
that with probability tending to one,
\begin{align*}
  \sup_{(j,k,s,u)}|e_{nj}(u,s,k)|\big|\hat{\Lambda}_n(\tau)-\Lambda(\tau)\big|  \le K_*\max_{j}\{K_{nj}\}\frac{\sqrt{\log n}}{\sqrt{n}} < K_*\max_{j}\{K_{nj}\}. 
\end{align*}
Hence from \eqref{eq:int_upper_bound.1},
\begin{align*}
  &\sup_{(j,k,u) \in \mathcal{I}_n}\bigg|\int_{\mathcal{T}}e_{nj}(u,s,k)\{d\hat{\Lambda}_n(s)-d\Lambda(s)\}\bigg| < K_*(1+2\Lambda(\tau))\max_{j}\{K_{nj}\}\\ 
  &\quad \le K_*(1+2\Lambda(\tau))\frac{\sqrt{\log(n \lor p_n)}}{q_n^{1/4}}
\end{align*}  
with probability tending to one, which gives \ref{eq:property_estCHF_2}.
\end{proof}


In upcoming lemmas, we show that $|\sum_{j=q_n}^{n-1}D_{n,\,j+1}|$ converges to zero in probability as $n$ goes to infinity, where for $j=q_n+1,\ldots, n$,
\begin{align*}
  &D_{nj} \equiv \frac{1}{\sqrt{n-q_n}} \frac{m_{j-1}(U_{j,k_{j-1}}-Q_u[U_{k_{j-1}}])}{\sigma_{n{j-1}}{\rm Var}_{Q_u}(U_{k_{j-1}})}\int_{{\cal T}}\tilde{e}_{n,\,j-1}(U_{j,k_{j-1}},s,k_{j-1})\{d\hat{\Lambda}_n(s)-d\Lambda(s)\}\\
  & = \frac{1}{\sqrt{n-q_n}}\frac{m_{j-1}(U_{j,k_{j-1}}-Q_u[U_{k_{j-1}}])}{\sigma_{n{j-1}}{\rm Var}_{Q_u}(U_{k_{j-1}})}\int_{{\cal T}}\tilde{e}_{n,\,j-1}(U_{j,k_{j-1}},s,k_{j-1})\frac{1(Y_n(s)>0)}{Y_n(s)}d\bar{M}(s),
\end{align*}
with $\tilde{e}_{n,\,j-1}$ as defined in \eqref{eq:random_func_en} with $j$ replaced by $j-1$, where the second equality holds by the decomposition of $\hat{\Lambda}_n-\Lambda$ in \eqref{eq:martingale_est_CHF.1}. Following $\tilde{e}_{n,\,j-1}(u,s,k)=E_0(u,s,k)1(X_{j} \ge s)1(Y_n(s) \ge \sqrt{n})$, we have a decomposition $D_{nj}=\widetilde{D}_{nj} + \widehat{D}_{nj}$, where
\begin{align} \label{eq:defs.martingale.diff}
  &\widetilde{D}_{nj} \equiv \frac{m_{j-1}(U_{j,k_{j-1}}-Q_u[U_{k_{j-1}}])}{\sqrt{n-q_n}\sigma_{n{j-1}}{\rm Var}_{Q_u}(U_{k_{j-1}})}\int_{{\cal T}}\big\{E_0(U_{j,k_{j-1}},s,k_{j-1})-E_0(U_{j-1,k_{j-2}},s,k_{j-2})\big\} \\
  &\hspace{6.5cm} \times \{d\hat{\Lambda}_n(s)-d\Lambda(s)\}; \nonumber \\
  &\widehat{D}_{nj} \equiv \frac{1}{\sqrt{n-q_n}}\frac{m_{j-1}(U_{j,k_{j-1}}-Q_u[U_{k_{j-1}}])}{\sigma_{n{j-1}}{\rm Var}_{Q_u}(U_{k_{j-1}})}\int_{{\cal T}}E_0(U_{j-1,k_{j-2}},s,k_{j-2})1(X_{j} \ge s) \nonumber \\
  &\hspace{7.5cm} \times \frac{1(Y_n(s) \ge \sqrt{n})}{Y_n(s)}d\bar{M}(s), \nonumber
\end{align}
so that $|\sum_{j=q_n}^{n-1}D_{n,\,j+1}| \le |\sum_{j=q_n}^{n-1}\widetilde{D}_{n,\,j+1}| + |\sum_{j=q_n}^{n-1}\widehat{D}_{n,\,j+1}|$. To show the desired result, it therefore suffices to show that both $|\sum_{j=q_n}^{n-1}\widetilde{D}_{n,\,j+1}|$ and $ |\sum_{j=q_n}^{n-1}\widehat{D}_{n,\,j+1}|$ converge to zero in probability, which will be presented in Lemmas \ref{lemma:convergence_sum_widetildeDnj} and \ref{lemma:convergence_sum_widehatDnj}, respectively. Henceforth we state that $(j,s) \in \{q_n,\ldots,n-1\} \times {\cal T}$; $(j',s) \in \{q_n,\ldots,n-1\} \times {\cal T}$ and $(k,u) \in {\cal K}_n \times \mathbb{R}$ in which the ranges will be omitted for succinct presentation in forthcoming displays.

\begin{lemma} \label{lemma:convergence_sum_widetildeDnj}
Under the conditions of Theorem \ref{Thm:stab_one_step}, $\sum_{j=q_n}^{n-1}\widetilde{D}_{n,\,j+1}$ converges to zero in probability as $n$ goes to infinity, where $\widetilde{D}_{nj}$ is as defined in \eqref{eq:defs.martingale.diff}
for $j=q_n+1,\ldots, n$.
\end{lemma}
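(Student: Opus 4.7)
The plan is to apply Markov's inequality: to show $\sum_{j=q_n}^{n-1}\widetilde{D}_{n,j+1} = o_p(1)$, it suffices to establish $\sum_{j=q_n}^{n-1} E|\widetilde{D}_{n,j+1}| = o(1)$. First I would extract the uniformly bounded prefactor $|m_{j-1}(U_{j,k_{j-1}}-Q_u[U_{k_{j-1}}])|/(\sigma_{n,j-1}{\rm Var}_{Q_u}(U_{k_{j-1}}))$, which is bounded by a constant using \ref{assump:Covariates}, \ref{assump:Variances}, and the uniform lower bound on $\sigma^2_{n,j-1}$ from Lemma \ref{lemma:An_Bn_Cn_Dn_En_prob_to_one}. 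The remaining integral against the signed measure $d\hat{\Lambda}_n-d\Lambda$ is controlled by $|\int f\,(d\hat{\Lambda}_n-d\Lambda)|\le \sup_s|f|\cdot(\hat{\Lambda}_n(\tau)+\Lambda(\tau))$, whose second factor is bounded on the high-probability event $\mathcal{L}_n$ since $\|\hat{\Lambda}_n-\Lambda\|_\infty\lesssim\sqrt{\log n/n}$ there and $\Lambda(\tau)<\infty$.

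This reduces matters to bounding $E[\sup_s|E_0(U_{j,k_{j-1}},s,k_{j-1})-E_0(U_{j-1,k_{j-2}},s,k_{j-2})|]$. I would decompose by adding and subtracting $E_0(U_{j,k_{j-2}},s,k_{j-2})$, splitting the difference into a ``same-observation, different-index'' piece $E_0(U_{j,k_{j-1}},s,k_{j-1})-E_0(U_{j,k_{j-2}},s,k_{j-2})$, and a ``same-index, different-observation'' piece which, by the linearity $E_0(u,s,k)=a_k(s)+b_k(s)u$ from Lemma \ref{lemma:inclusion_E0_Ehat}, equals $b_{k_{j-2}}(s)(U_{j,k_{j-2}}-U_{j-1,k_{j-2}})$. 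For the first piece, conditioning on $O_1,\ldots,O_{j-1}$ makes $U_j$ an independent generic copy, so \ref{assump:Conditional_mean_E0} directly yields expected supremum $o(n^{-1/2})$. Summed, this piece contributes $\sqrt{n-q_n}\cdot o(n^{-1/2})=o(\sqrt{(n-q_n)/n})=o(1)$, invoking $n/q_n=O(1)$.

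The second piece is more delicate, since its pointwise supremum is only $O(1)$; the smallness must be extracted from inside the integral. Integration by parts combined with $b_{k_{j-2}}\in\mathcal{BV}(\mathcal{T})$ (Lemma \ref{lemma:inclusion_E0_Ehat}) and $\|\hat{\Lambda}_n-\Lambda\|_\infty\lesssim\sqrt{\log n/n}$ on $\mathcal{B}_n(\tilde{K})$ gives $|\int b_{k_{j-2}}(s)\,(d\hat{\Lambda}_n-d\Lambda)|\lesssim\sqrt{\log n/n}$. The hard part is that the resulting crude contribution $O(\sqrt{(n-q_n)\log n/n})$ is only $O(\sqrt{\log n})$, not $o(1)$. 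To close this gap, the proof must exploit the martingale representation in \eqref{eq:martingale_est_CHF.1}, writing $d\hat{\Lambda}_n-d\Lambda=Y_n^{-1}1(Y_n>0)\,d\bar{M}-1(Y_n=0)\,d\Lambda$ and applying an $L^2$-isometry estimate on the resulting martingale integral (on the event $\inf_s Y_n(s)\ge\sqrt{n}$), in the spirit of Lemma \ref{lemma:properties_estCHF}, to obtain a bound of order $n^{-1/4}$ uniformly over the random index $k_{j-2}$. Combining this sharper bound with $q_n^{1/4}/\log(n\lor p_n)\to\infty$ then drives the sum to $o(1)$.
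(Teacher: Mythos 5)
You correctly begin with Markov's inequality, the uniformly bounded prefactor, and the total variation bound $\bigl|\int f\,(d\hat{\Lambda}_n-d\Lambda)\bigr|\le \sup_s|f|\cdot(\hat{\Lambda}_n(\tau)+\Lambda(\tau))$, and you correctly reduce the problem to bounding $E\bigl[\sup_{(j,s)}\bigl|E_0(U_{j,k_{j-1}},s,k_{j-1})-E_0(U_{j-1,k_{j-2}},s,k_{j-2})\bigr|\bigr]$. But then you go astray: you decompose the difference by adding and subtracting $E_0(U_{j,k_{j-2}},s,k_{j-2})$, on the (mistaken) premise that \ref{assump:Conditional_mean_E0} only controls the ``same-observation, different-index'' piece. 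In fact, \ref{assump:Conditional_mean_E0} is stated precisely for the full quantity $E_0(U_{j,k_{j-1}},s,k_{j-1})-E_0(U_{j-1,k_{j-2}},s,k_{j-2})$ (after the index shift $j\mapsto j-1$ in the assumption's notation $U_{k_j}\equiv U_{j+1,k_j}$): it asserts that the expected supremum over $(j,s)$ of this exact difference is $o(n^{-1/2})$. Multiplying by $\sqrt{n-q_n}\cdot O(1)$ then gives $o(\sqrt{(n-q_n)/n})=o(1)$ immediately. The assumption is tailor-made for this lemma; no decomposition is needed.

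Worse, the decomposition you introduce creates a gap you cannot close. The ``same-index, different-observation'' piece $b_{k_{j-2}}(s)(U_{j,k_{j-2}}-U_{j-1,k_{j-2}})$ has a multiplicative factor that is genuinely $O(1)$, and your proposed fix via the martingale $L^2$-isometry (in the spirit of Lemma \ref{lemma:properties_estCHF}) yields a bound of order $\sqrt{\log(n\lor p_n)}/q_n^{1/4}$ on the integral. After multiplying by $\sqrt{n-q_n}$, with $n/q_n=O(1)$ one gets $\sqrt{n-q_n}\cdot\sqrt{\log(n\lor p_n)}/q_n^{1/4}\asymp q_n^{1/4}\sqrt{\log(n\lor p_n)}\to\infty$, not $o(1)$ — the very gap you flagged remains open. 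So the proposal does not produce a proof; the missing insight is simply that \ref{assump:Conditional_mean_E0} applies verbatim to the undecomposed difference.
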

\begin{proof}
Using \ref{assump:Covariates}, \ref{assump:Variances} and the total variation of $\hat{\Lambda}_n-\Lambda$ over ${\cal T}$ is bounded by $|\hat{\Lambda}_n(\tau)+\Lambda(\tau)|$, we have that
\begin{align*} 
  & \bigg|\sum_{j=q_n}^{n-1}\widetilde{D}_{n,\,j+1}\bigg|\\
  &\lesssim \sqrt{n-q_n}
  \max_{j}\bigg|\int_{\mathcal{T}}\big\{E_0(U_{j,{k_{j-1}}},s,k_{j-1})-E_0(U_{j-1,k_{j-2}},s,k_{j-2})\big\}\{d\hat{\Lambda}_n(s)-d\Lambda(s)\}\bigg| \\
  &\le \sqrt{n-q_n}\sup_{(j,s)}\Big|E_0(U_{j,k_{j-1}},s,k_{j-1})-E_0(U_{j-1,k_{j-2}},s,k_{j-2})\Big|\big|\hat{\Lambda}_n(\tau)+\Lambda(\tau)\big| \nonumber \\
  &\le \sqrt{n-q_n}\Big[\big|\hat{\Lambda}_n(\tau)-\Lambda(\tau)\big| + 2\Lambda(\tau)\Big]\sup_{(j,s)}\Big|E_0(U_{j,k_{j-1}},s,k_{j-1})-E_0(U_{j-1,k_{j-2}},s,k_{j-2})\Big|, \nonumber
\end{align*}
where the last line holds by the triangle inequality.

Since $\sup_{t \in \mathcal{T}}|\hat{\Lambda}_n(t)-\Lambda(t)| \le \sqrt{\log n}/\sqrt{n}$ with probability tending to one, taking the expectation on the above display gives that
\begin{align*} 
  & E\bigg[\,\bigg|\sum_{j=q_n}^{n-1}\widetilde{D}_{n,\,j+1}\bigg|\,\bigg]\\
  &\le \sqrt{n-q_n}\bigg[\frac{\sqrt{\log n}}{\sqrt{n}} + 2\Lambda(\tau)\bigg]E\bigg[\sup_{(j,s)}\big|E_0(U_{j,k_{j-1}},s,k_{j-1})-E_0(U_{j-1,k_{j-2}},s,k_{j-2})\big|\bigg] \to 0, \nonumber
\end{align*}
where the convergence follows \ref{assump:Conditional_mean_E0}. Hence we complete the proof.
\end{proof}

\begin{lemma} \label{lemma:convergence_sum_widehatDnj}
Under the conditions of Theorem \ref{Thm:stab_one_step}, $\sum_{j=q_n}^{n-1}\widehat{D}_{n,\,j+1}$ converges to zero in probability as $n$ goes to infinity, where $\widehat{D}_{nj}$ is as defined in \eqref{eq:defs.martingale.diff}
for $j=q_n+1,\ldots, n$.
\end{lemma}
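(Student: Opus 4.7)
The strategy is to view $\sum_{j=q_n}^{n-1}\widehat{D}_{n,j+1}$ as a single stochastic integral of a predictable integrand against the aggregated martingale $\bar{M}$, and then to apply an exponential martingale tail inequality in the spirit of \cite{vandeGeer1995} (via Lemma \ref{lemma:martingale_exp_ineq}). Interchanging the sum over $j$ and the integral over $s$, and setting
\begin{align*}
\tilde{H}_n(s) \;\equiv\; \frac{1}{\sqrt{n-q_n}}\sum_{j=q_n}^{n-1}\frac{m_j\bigl(U_{j+1,k_j}-Q_u[U_{k_j}]\bigr)}{\sigma_{nj}\,\mathrm{Var}_{Q_u}(U_{k_j})}\,E_0\!\left(U_{j,k_{j-1}},s,k_{j-1}\right)\,1(X_{j+1}\ge s),
\end{align*}
I would write
\begin{align*}
\sum_{j=q_n}^{n-1}\widehat{D}_{n,j+1} \;=\; \int_{\mathcal{T}}\tilde{H}_n(s)\,\frac{1(Y_n(s)\ge\sqrt{n})}{Y_n(s)}\,d\bar{M}(s).
\end{align*}
All non-$s$ factors in $\tilde{H}_n(s)$ are $\bar{\mathcal{F}}_{-\infty}$-measurable, and the only $s$-dependence enters through left-continuous indicators, so the integrand is predictable with respect to $\bar{\mathcal{F}}_s$ and the right-hand side is a zero-mean local martingale in its upper limit.

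The jump bound is routine. By \ref{assump:Covariates}, \ref{assump:Variances}, and the uniform bound on $E_0$ in \ref{assump:Conditional_mean_E0}, every summand constituting $\tilde{H}_n(s)$ is dominated in absolute value by a constant multiple of $1(X_{j+1}\ge s)$, yielding $|\tilde{H}_n(s)|\le (C/\sqrt{n-q_n})\,Y_n(s)$. On $\{Y_n(s)\ge\sqrt{n}\}$ the scaled integrand therefore satisfies $|\tilde{H}_n(s)\,1(Y_n(s)\ge\sqrt{n})/Y_n(s)|\le K_n\equiv C/\sqrt{n-q_n}\to 0$.

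The main obstacle is controlling the predictable quadratic variation $\int_{\mathcal{T}}\tilde{H}_n^2(s)\,1(Y_n(s)\ge\sqrt{n})\,Y_n(s)^{-1}\,d\Lambda(s)$, since the crude pointwise bound $\tilde{H}_n^2(s)/Y_n(s)\le C^2$ is only $O(1)$ and the summands of $\tilde{H}_n(s)$ are not martingale differences in $j$ relative to the natural filtration $\mathcal{G}_j=\sigma(O_1,\ldots,O_j)$; indeed, given $\mathcal{G}_j$, the factor $\xi_j\,1(X_{j+1}\ge s)$ has the nonzero conditional mean $m_j\,\mathrm{Cov}(U_{k_j},1(X\ge s))/[\sigma_{nj}\mathrm{Var}_{Q_u}(U_{k_j})]$. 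To extract cancellation, I would decompose $\tilde{H}_n(s)=\tilde{H}_n^{(1)}(s)+\tilde{H}_n^{(2)}(s)$ into a discrete-time martingale-difference piece in $j$ and a predictable drift. The martingale-difference piece is controlled via $j$-wise orthogonality, combined with the bounded-variation structure of the map $s\mapsto E_0(\,\cdot\,,s,\,\cdot\,)$ established in Lemma \ref{lemma:inclusion_E0_Ehat} and the uniform bracketing bound \eqref{eq:bounded_uniform_entropy_goal}, to show that its contribution to the quadratic variation is $o_p(1)$. The drift piece is reduced to a Lemma \ref{lemma:convergence_sum_widetildeDnj}-type argument: \ref{assump:Conditional_mean_E0} together with the signal-strength assumption \ref{assump:Signal strength} force $k_j$ to stabilize for $j\ge q_n$, so that the sum $\sum_j\mu_j(s)$ telescopes to a lower-order residual whose integral against $d\bar{M}/Y_n$ is negligible.

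Once $K_n\to 0$ and a high-probability bound $b_n^2=o_p(1)$ on the quadratic variation are in hand, I would apply Lemma \ref{lemma:martingale_exp_ineq} with a sequence $\epsilon_n\to 0$ chosen so that $\epsilon_n^2/(\epsilon_n K_n+b_n^2)\to\infty$, yielding $\mathrm{P}\bigl(\bigl|\sum_{j=q_n}^{n-1}\widehat{D}_{n,j+1}\bigr|>\epsilon_n\bigr)\to 0$, which is the desired convergence in probability. The hardest part of this plan is the drift-side control, because unlike the martingale-difference part it must rely on the global structure of the selection mechanism through \ref{assump:Signal strength}, rather than on pure orthogonality arguments.
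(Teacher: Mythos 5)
Your plan to view $\sum_{j}\widehat{D}_{n,j+1}$ as a single stochastic integral against $\bar{M}$ in $s$ and apply Lemma \ref{lemma:martingale_exp_ineq} directly fails at the predictability step, and the failure is not reparable within that framework. You assert that ``all non-$s$ factors in $\tilde{H}_n(s)$ are $\bar{\mathcal{F}}_{-\infty}$-measurable,'' but $m_j$, $k_j$, $\sigma_{nj}$, ${\rm Var}_{Q_u}(U_{k_j})$, $U_{j+1,k_j}$, and $U_{j,k_{j-1}}$ all depend on the event times and censoring indicators $\{(X_i,\delta_i)\}_{i\le j+1}$ (through the selection $k_j=\arg\max_k|\Psi_{\hat{G}_j,k}(\mathbb{P}_j)|$, $m_j={\rm sgn}(\Psi_{\hat{G}_j,k_j}(\mathbb{P}_j))$, and the subscript-selected covariates). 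These quantities are therefore not measurable with respect to $\bar{\mathcal{F}}_{-\infty}=\sigma(\{\bs{U}_i\}_{i\le n})$, nor adapted to any counting-process filtration before the data are fully revealed. Augmenting $\bar{\mathcal{F}}_{-\infty}$ to include $\{(X_i,\delta_i)\}_{i\le n}$ would make $\tilde{H}_n$ predictable but would destroy the martingale property of $\bar{M}$, so Lemma \ref{lemma:martingale_exp_ineq} is simply not applicable to your integrand. Your subsequent discussion of quadratic-variation control and the drift/martingale-difference split in $j$, while sensible in spirit, does not resolve this: no further decomposition restores $s$-predictability of the selection-dependent weights.

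The paper resolves the selection dependence through a discrete-time argument rather than a continuous-time one. It keeps the sum over $j$ and shows that $(\widehat{D}_{n,j+1})_j$ is a martingale-difference array with respect to the augmented filtration $\mathcal{O}_{nj}=\sigma(O_1,\ldots,O_j,\mathcal{F}'_\tau)$. Crucially, conditioning on $\mathcal{F}'_\tau$ (the entire aggregated counting-process history) makes the stochastic-integral factor $\mathcal{O}_{nj}$-measurable, so the conditional mean reduces to $E[U_{j+1,k_j}-Q_u[U_{k_j}]\mid\mathcal{O}_{nj}]=0$ and the conditional variance $\sum_j E[\widehat{D}^2_{n,j+1}\mid\mathcal{O}_{nj}]$ is bounded by a constant times $\sup_{(j,k,u)}\bigl\{\int E_0(u,s,k)\,1(X_{j+1}\ge s,Y_n(s)\ge\sqrt{n})Y_n(s)^{-1}d\bar{M}(s)\bigr\}^2$. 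That sup is then controlled uniformly in $(j,k,u)$ by Lemma \ref{lemma:properties_estCHF}\ref{eq:property_estCHF_1}, which is where the van de Geer/exponential machinery you correctly anticipate actually enters --- but applied to a family of integrals with genuinely predictable integrands (functions of $E_0$, the observed $X_{j+1}$, and $Y_n$, none of which involve the selection). The conclusion then follows from a conditional martingale law of large numbers (Theorem 1.2 of Kundu, 2000). Note also that your telescoping argument invoking \ref{assump:Signal strength} to ``stabilize $k_j$'' is not used here; the handling of $E_0(U_{j,k_{j-1}},s,k_{j-1})$ versus $E_0(U_{j+1,k_j},s,k_j)$ is delegated to the companion Lemma \ref{lemma:convergence_sum_widetildeDnj}, which is where \ref{assump:Conditional_mean_E0} is invoked.
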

\begin{proof}
First note that, with respect to the filtration ${\cal O}_{nj} \equiv \sigma(O_1,\ldots,O_j, {\cal F}'_{\tau})$ with ${\cal F}'_s$ defined in \eqref{eq:filtration_Fprime.2},
we have that $\widehat{D}_{nj}$ is a martingale difference sequence: 
\begin{align*}
  E[\widehat{D}_{n,\,j+1}|{\cal O}_{nj}] = & \frac{m_j}{\sqrt{n-q_n}\sigma_{nj}{\rm Var}_{Q_u}(U_{k_j})}\Big\{\int_{{\cal T}}E_0(U_{j, k_{j-1}},s,k_{j-1})\\
  &\times \frac{1(X_{j+1} \ge s, Y_n(s) \ge \sqrt{n})}{Y_n(s)}d\bar{M}(s)\Big\}
  E\big[U_{j+1,k_j}-Q_u[U_{k_j}]\big|{\cal O}_{nj}\big] = 0
\end{align*} 
for $j=q_n,\ldots, n-1$.

We also provide an upper bound on  the conditional variance
\begin{align*} 
  &\sum_{j=q_n}^{n-1}E[\widehat{D}^2_{n,\,j+1}|{\cal O}_{nj}] = \frac{1}{(n-q_n)} 
  \sum_{j=q_n}^{n-1}\frac{1}{\sigma^2_{nj}{\rm Var}^2_{Q_u}(U_{k_j})}E\big[(U_{j+1,k_j}-Q_u[U_{k_j}])^2\big|{\cal O}_{nj}\big] \\
  & \hspace{4cm} \times \Big\{\int_{{\cal T}}E_0(U_{j, k_{j-1}},s,k_{j-1})\frac{1(X_{j+1} \ge s, Y_n(s) \ge \sqrt{n})}{Y_n(s)}d\bar{M}(s)\Big\}^2 \nonumber \\
  & \le K'_0\sup_{(j,k,u)}\Big\{\int_{{\cal T}}E_0(u,s,k)\frac{1(X_{j+1} \ge s, Y_n(s) \ge \sqrt{n})}{Y_n(s)}d\bar{M}(s)\Big\}^2, \nonumber
\end{align*}
where $K'_0 > 0$ is a constant such that for all $n$
\begin{align*}
  \max_{j}\bigg\{\frac{E\big[(U_{j+1,k_j}-Q_u[U_{k_j}])^2\big|{\cal O}_{nj}\big]}{\sigma^2_{nj}{\rm Var}^2_{Q_u}(U_{k_j})}\bigg\} \le 
  \max_{(j,k)}\bigg\{\frac{E\big[(U_{j+1,k}-Q_u[U_{k}])^2 \big|{\cal O}_{nj}\big]}{\sigma^2_{nj}{\rm Var}^2_{Q_u}(U_{k})}\bigg\} \le K'_0
\end{align*}
almost surely; this constant exists by \ref{assump:Covariates} and \ref{assump:Variances}.
We see that
\begin{align*}
  \sup_{(j,k,u)}\Big\{\int_{{\cal T}}E_0(u,s,k)\frac{1(X_{j+1} \ge s, Y_n(s) \ge \sqrt{n})}{Y_n(s)}d\bar{M}(s)\Big\}^2 \le \frac{\log(n \lor p_n)}{\sqrt{q_n}}.
\end{align*}
with probability tending to one, which can be seen from \ref{eq:property_estCHF_1} of Lemma \ref{lemma:properties_estCHF}. We therefore find that the conditional variance $\sum_{j=q_n}^{n-1}E[\widehat{D}^2_{n,\,j+1}|{\cal O}_{nj}] \to 0$ in probability. 

By the martingale central theorem given in Theorem 1.2 of \cite{Kundu2000},
$\sum_{j=q_n}^{n-1}\widehat{D}_{n,\,j+1}$ converges in probability to zero.  
Indeed, the conditional Lindeberg condition  holds trivially: for every $\varepsilon > 0$,
\begin{align*}
 \sum_{j=q_n}^{n-1}E\big[\widehat{D}^2_{n,\,j+1}1\big(\big|\widehat{D}_{n,\,j+1}\big| > \varepsilon \big) \big| {\cal O}_{nj} \big] \le \sum_{j=q_n}^{n-1}E[\widehat{D}^2_{n,\,j+1}|{\cal O}_{nj}] \lcrarrow{p} 0.
\end{align*}
This completes the proof.
\end{proof}

Recall that for each $(j,k)$,
\begin{align*}
  e_{nj}(u,s,k) = & [E_0(u,s,k)-\hat{E}_{j}(u,s,k)]1(X_{j+1} \ge s) 1\Big(\,\sup_{(k,s,u)}|E_0(u,s,k)-\hat{E}_{j}(u,s,k)|\\ 
  & \le \tilde{K}K_{nj},\, \inf_s Y_n(s) \ge \sqrt{n}\,\Big).
\end{align*}

\begin{lemma} \label{lemma:convergence_sum_Dnj.2}
Under the conditions of Theorem \ref{Thm:stab_one_step}, $\sum_{j=q_n}^{n-1}Q_{n,\,j+1}$ converges to zero in probability as $n$ goes to infinity, where for $j=q_n+1,\ldots, n$,
\begin{align*}
  Q_{nj} \equiv \frac{1}{\sqrt{n-q_n}} \frac{m_{j-1}(U_{j,k_{j-1}}-Q_u[U_{k_{j-1}}])}{\sigma_{n{j-1}}{\rm Var}_{Q_u}(U_{k_{j-1}})}\int_{{\cal T}}e_{n,\,j-1}(U_{j,k_{j-1}},s,k_{j-1})\{d\hat{\Lambda}_n(s)-d\Lambda(s)\},
\end{align*}
and $e_{n,\,j-1}$ is as defined in \eqref{eq:random_func_enj} with $j$ replaced by $j-1$.
\end{lemma}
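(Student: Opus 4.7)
The plan is to imitate the two-step decomposition used in Lemma~\ref{lemma:convergence_sum_widehatDnj}, with $e_{n,j-1}$ playing the role that $\tilde e_{n,j-1}$ played there. Because the indicator $1(\inf_s Y_n(s) \ge \sqrt n)$ is already built into $e_{n,j-1}$, the $1(Y_n(s)=0)\,d\Lambda(s)$ piece in the decomposition \eqref{eq:martingale_est_CHF.1} of $d\hat\Lambda_n - d\Lambda$ drops out, so that $Q_{nj}$ is really an integral against the local martingale $d\bar M(s)$ with integrand $e_{n,j-1}(U_{j,k_{j-1}},s,k_{j-1})\,1(Y_n(s)\ge\sqrt n)/Y_n(s)$. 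I then split
\[
Q_{nj} = \widetilde Q_{nj} + \widehat Q_{nj},
\]
where $\widehat Q_{nj}$ corresponds to replacing the integrand's evaluation point $(U_{j,k_{j-1}},k_{j-1})$ by the ${\cal O}_{n,j-1}$-measurable point $(U_{j-1,k_{j-2}},k_{j-2})$, and $\widetilde Q_{nj}$ absorbs the residual difference.

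For $\widehat Q_{n,j+1}$, the argument of Lemma~\ref{lemma:convergence_sum_widehatDnj} transfers essentially verbatim: conditioning on ${\cal O}_{nj}$, the factor $(U_{j+1,k_j}-Q_u[U_{k_j}])$ has mean zero, so $\{\widehat Q_{n,j+1}\}$ is a martingale difference sequence, and its conditional second moment is bounded up to a universal constant by
\[
\frac{1}{n-q_n}\sup_{(j,k,u)} \bigg|\int_{\cal T} e_{nj}(u,s,k)\{d\hat\Lambda_n(s)-d\Lambda(s)\}\bigg|^2.
\]
By \ref{eq:property_estCHF_2} of Lemma~\ref{lemma:properties_estCHF}, this supremum is $O_p(\sqrt{\log(n\lor p_n)}/q_n^{1/4})$, so the total conditional variance is $O_p(\log(n\lor p_n)/\sqrt{q_n}) \to 0$ under the rate condition $q_n^{1/4}/\log(n\lor p_n)\to\infty$. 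The conditional Lindeberg condition is immediate from the same variance bound, so Kundu and Mitra's martingale CLT (Theorem 1.2 of \cite{Kundu2000}) gives $\sum_{j=q_n}^{n-1}\widehat Q_{n,j+1}\stackrel{p}{\to} 0$.

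The residual $\widetilde Q_{n,j+1}$ is the main obstacle. Its integrand is
$\{[E_0-\hat E_{j-1}](U_{j,k_{j-1}},s,k_{j-1})-[E_0-\hat E_{j-1}](U_{j-1,k_{j-2}},s,k_{j-2})\}\,1(X_j\ge s)\,1(\cdot)$,
and I split it into an $E_0$-increment and an $\hat E_{j-1}$-increment. The $E_0$-increment has expected $\sup_{(j,s)}$-norm of order $o(n^{-1/2})$ by condition~\ref{assump:Conditional_mean_E0}, so combining with the bounded total variation of $\hat\Lambda_n+\Lambda$ produces an $o_p(1)$ contribution to $\sum|\widetilde Q_{n,j+1}|$, mirroring Lemma~\ref{lemma:convergence_sum_widetildeDnj}. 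The $\hat E_{j-1}$-increment is harder: a naive triangle-inequality bound using $|E_0-\hat E_{j-1}|\le\tilde K K_{n,j-1}$ would contribute $\sqrt{n-q_n}\,K_{nq_n}$, which fails to vanish. To recover cancellation, I plan to exploit the linear representation \eqref{eq:Ejlinear} of $\hat E_j$, which factorizes the difference at two predictor--index pairs through the estimation errors of the intercept function $a_k-\hat a_{j-1,k}$ and the slope function $b_k-\hat b_{j-1,k}$, each uniformly controlled at rate $K_{n,j-1}$ on the event ${\cal L}_n$. Inserting these back into the integral against $d\bar M$ and invoking the martingale exponential bound of Lemma~\ref{lemma:martingale_exp_ineq} in the spirit of the proof of Lemma~\ref{lemma:properties_estCHF}\ref{eq:property_estCHF_1} should then supply the additional $\sqrt{\log(n\lor p_n)}/q_n^{1/4}$ factor needed to make $\sum\widetilde Q_{n,j+1}\stackrel{p}{\to} 0$.
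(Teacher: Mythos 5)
Your proposal takes a genuinely different route from the paper and, as written, has a gap that the paper's proof was specifically designed to avoid. You mimic the $\widetilde D_{nj}+\widehat D_{nj}$ decomposition of Lemmas~\ref{lemma:convergence_sum_widetildeDnj}--\ref{lemma:convergence_sum_widehatDnj}, which works for $D_{nj}$ precisely because there the integrand $E_0(\cdot,\cdot,k)$ is a \emph{fixed} function indexed only by $(j,k,u)$, so that the union bound underlying Lemma~\ref{lemma:properties_estCHF} is over a finite index set. The paper does \emph{not} apply this decomposition to $Q_{nj}$: instead it collapses the sum to $\sqrt{n-q_n}\max_{(j,k)}|\int e_{nj}(U_{j+1,k},\cdot,k)\cdots\,d\bar M|$, recognizes that the normalized integrand lies (with high probability) in the deterministic BV class $\mathcal{H}_n(\tilde K)={\cal BV}({\cal T},\tilde K K_{nq_n},\tilde M_1)$, and then controls $\sup_{h\in\mathcal{H}_n(\tilde K)}|\int h\,d\bar M|$ uniformly via van de Geer's Theorem~3.1 for counting-process integrals, which is a bracketing-entropy result.

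The gap in your sketch is in the handling of the residual $\widetilde Q_{n,j+1}$. You correctly note that the naive bound $\sqrt{n-q_n}\,K_{nq_n}=O(\sqrt{\log(n\vee p_n)})$ fails to vanish, and you propose to recover the missing decay by feeding the $\hat E_j$-increment into Lemma~\ref{lemma:martingale_exp_ineq}. But Lemma~\ref{lemma:martingale_exp_ineq} bounds $|W_n(\tau)|=|\int w\,1(Y_n\ge\sqrt n)/Y_n\,d\bar M|$ for a \emph{single} predictable integrand $w$; to apply it here you must first fix a $w$, whereas $\hat E_j(\cdot,s,k)$ is a data-dependent member of an infinite-dimensional class. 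A union bound over a finite index set (as in Lemma~\ref{lemma:properties_estCHF}\ref{eq:property_estCHF_1}, which unions over $(j,k)$ with the deterministic $E_0$ held fixed) does not cover the random variation of the intercept and slope functions $\hat a_{jk}, \hat b_{jk}$ in the linear representation \eqref{eq:Ejlinear}; you would instead need a chaining argument over a $\delta$-net of ${\cal BV}({\cal T},\cdot,\cdot)$, with the corresponding bracketing-entropy term in the exponent. That is precisely what van de Geer's Theorem~3.1 supplies and what Lemma~\ref{lemma:martingale_exp_ineq} alone does not. In short, your plan would require re-deriving a uniform-over-class version of the martingale exponential inequality, at which point you are effectively reconstructing the paper's argument rather than giving an alternative to it.
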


\begin{proof}
First using \ref{assump:Covariates} and \ref{assump:Variances} to bound the middle part of $Q_{nj}$, together with
the expression of $\hat{\Lambda}_n-\Lambda$ in \eqref{eq:martingale_est_CHF.1}, it upper bounds $\big|\sum_{j=q_n}^{n-1}Q_{n,\,j+1}\big|$ by (up to a constant that does not depend on $(j,n)$)
\begin{align*} 
  \sqrt{n-q_n}
  \max_{(j,k)}\bigg|\int_{\mathcal{T}}e_{nj}(U_{j+1,k},s,k)1(X_{j+1} \ge s)\frac{1(Y_n(s) \ge \sqrt{n})}{Y_n(s)}d\bar{M}(s)\bigg|.
\end{align*}
Recall that $\mathcal{I}_n = \{(j,k) : j \in \{q_n,\ldots,n-1\}, k \in {\cal K}_n\}$ and $K_{nq_n} = \sqrt{\log(n \lor p_n)/q_n}$. For each $(j,k) \in \mathcal{I}_n$, Lemma \ref{lemma:inclusion_E0_Ehat}, along with \ref{assump:Covariates} and \ref{assump:Conditional_mean_E0}, indicates that the integrand in the above martingale integral, namely 
\begin{align*}
  s \mapsto e_{nj}(U_{j+1,k},s,k)\sqrt{n-q_n}\frac{1(X_{j+1} \ge s\,, Y_n(s) \ge \sqrt{n})}{Y_n(s)},   
\end{align*}
belongs to $\mathcal{H}_n(\tilde{K})$ with probability tending to one, where $\mathcal{H}_n(\tilde{K})$ is the class of c\`{a}gl\`{a}d functions in ${\cal BV}({\cal T},\tilde{K}K_{nq_n},\tilde{M}_1)$.
Therefore to show that $\sum_{j=q_n}^{n-1}Q_{n,\,j+1} \crarrow{p} 0$, it suffices to show that for $\eta > 0$,
\begin{align*}
  {\rm P}\bigg(\,\sup_{h \in \mathcal{H}_n(\tilde{K})}\bigg| \int_{\mathcal{T}} h(s)\, d\bar{M}(s) \bigg| > \eta \bigg) \to 0
\end{align*}
under the assumed conditions for Theorem \ref{Thm:stab_one_step}: $n/q_n = O(1)$ and $\log(n \lor p_n)/q_n^{1/4} \to 0$.

To show the desired result, we use the exponential inequality in Theorem 3.1 of \cite{vandeGeer1995}. For  fixed $n$, the aggregated counting process $\{N_n(t): t \in \mathcal{T}\}$ plays the role of  $\{N(t): t \in [0,T]\}$, the compensator $\{A_n(t): t \in \mathcal{T}\}$ is $\{A(t): t \in [0,T]\}$,
the integrand  $h$ in the  class $\mathcal{H}_n(\tilde{K})$ corresponds to  $g \in \mathscr{G}$, $\tau$ is $T$, $\sigma_{n\tau}^2$ is $\sigma_{T}^2$, and
$d_{n\tau}^2(h,0)$ is $d_{T}^2(g,0)$. Then we have the martingale process $\{\bar{M}(t) = N_n(t) - A_n(t): t \in \mathcal{T}\}$.
In addition, $n$ is now added as a subscript to van de Geer's various constants $L$, $K$, $\varepsilon$, $b$ and $C_1, C_2, C_3, C_4$. The upper bound $H_n(\delta, b_n, B)$ on the $\delta$-bracketing entropy of $\mathcal{H}_n(\tilde{K})$ is as defined in \cite{vandeGeer1995}, where $B$ is a measurable subset of $\{A_{n}(\tau) \le \sigma^2_{n\tau}\}$.

The compensator in our case is
\begin{align*}
  A_{n}(t) = \int_{-\infty}^tY_n(s)\, d\Lambda(s) 
\end{align*}  
so that $A_{n}(\tau) \le \sigma_{n\tau}^2 = n\Lambda(\tau)$. 
Note also that for any $h \in \mathcal{H}_n(\tilde{K})$,
\begin{align*}
  d_{n\tau}^2(h,0) = \frac{1}{2}\int_{-\infty}^{\tau}\big[\exp(h(s))-\exp(0)\big]^2dA_{n}(s)
  \le \frac{1}{2}\exp\big(4\tilde{K}K_{nq_n}\big)n\Lambda(\tau) \equiv b_n^2,
\end{align*}
where we have used the inequality $|e^x-1|\le |x|e^{|x|} \le e^{2|x|}$. Moreover, we may take $H_n(\delta, b_n, B) = c_0\sqrt{n}/\delta$ for some constant $c_0 > 0$ \cite[Example 19.11 of][]{Vaart1998}.

We need to check condition (3.2) of \cite{vandeGeer1995}, namely that
\begin{align*}
  \frac{\varepsilon_n b^2_n}{C_{n1}} \ge \int^{b_n}_{\varepsilon_n b_n^2/(C_{n2}\sigma_{n\tau}) \wedge b_n/8} \sqrt{H_n(x, b_n, B)}\, dx \lor b_n,
\end{align*}
for appropriate choices of the various constants. For now assume  $\varepsilon_n \in (0,1]$, and take  $C_{n2} = 8\varepsilon_nb_n/\sqrt{n\Lambda(\tau)}.$
Together with the definition of $\sigma_{n\tau}$, this leads to
\begin{align*}
  \frac{\varepsilon_n b_n^2}{C_{n2}\sigma_{n\tau}} &= \varepsilon_n \frac{\sqrt{n\Lambda(\tau)}}{8\varepsilon_nb_n}b_n^2\frac{1}{\sqrt{n\Lambda(\tau)}} \ge \frac{b_n}{8} \;\mbox{ and }\\
  \int^{b_n}_{\varepsilon_n b_n^2/(C_{n2}\sigma_{n\tau}) \wedge b_n/8} \sqrt{H(x, b_n, B)} dx &= \sqrt{c_0}n^{1/4}\int^{b_n}_{b_n/8} \frac{1}{\sqrt{x}} dx = 2\sqrt{c_0}\bigg\{1-\frac{1}{\sqrt{8}}\bigg\}n^{1/4}\sqrt{b_n}.
\end{align*}
Taking  $C_{n1} = \varepsilon_n b_n^{3/2}/\big\{2\sqrt{c_0}\big[1-8^{-1/2}\big]n^{1/4}\big\}$
shows that van de Geer's condition (3.2) is  satisfied. Then, applying her result with  $B \subset \{A_{n}(\tau) \le \sigma^2_{n\tau}\}$  having  probability one, gives 
\begin{align} \label{eq:exp_prob}
  & {\rm P}\bigg(\,\sup_{h \in \mathcal{H}_n(\tilde{K})}\bigg|\int_{\mathcal{T}}h(s)d\bar{M}(s)\bigg| > \varepsilon_n b_n^2 \bigg) \\
  & \le {\rm P}\bigg(\,\bigg\{\bigg|\int_{\mathcal{T}}h(s)d(N_n-A_n)(s)\bigg| \ge \varepsilon_n b_n^2 \mbox{ and } d^2_{n\tau}(h,0) \le b_n^2 \mbox{ for some } h \in \mathcal{H}_n(\tilde{K})\bigg\} \cap B\bigg) \nonumber \\
  &\le C_{n3}\exp\bigg(-\frac{\varepsilon^2_nb_n^2}{C_{n4}}\bigg), \nonumber
\end{align}
where $C_{n3}$ and $C_{n4}$ are specified below.

From the proof of van de Geer's theorem, we can take $C_{n3}=2$ and $C_{n4} \ge 2(\varepsilon_nK_n + c_n)$, where
\begin{align*}
  c_n = \frac{4\{\exp(-L_n)-1-L_n\}}{\{\exp(-L_n)-1\}^2},
\end{align*}
$L_n=\tilde{K}K_{nq_n}$ and $K_n= \tilde{K}K_{nq_n}$. Note that $c_{n} < 0$ since $L_n>0$, so we can take
$C_{n4} = 2\varepsilon_nK_n$.
Finally, specifying $\varepsilon_n=n^{-5/4}$, we have
\begin{align*} 
  &\frac{\varepsilon_n^2b_n^2}{C_{n4}} = \frac{n^{-5/2}2^{-1}\exp\big(4\tilde{K}K_{nq_n}\big)n\Lambda(\tau)}{2n^{-5/4}K_n} = \frac{n^{-3/2}\exp\big(4\tilde{K}K_{nq_n}\big)\Lambda(\tau)}{4n^{-5/4}K_n} \nonumber \\
  & = \frac{\exp\big(4\tilde{K}\sqrt{\log(n \lor p_n)}q_n^{-1/2}\big)\Lambda(\tau)}{4\tilde{K}\sqrt{\log(n \lor p_n)}n^{1/4}q_n^{-1/2}} \\
  & = \frac{\exp\big(4\tilde{K}\sqrt{\log(n \lor p_n)}q_n^{-1/2}\big)\Lambda(\tau)}{4\tilde{K}}\frac{q_n^{1/4}}{\sqrt{\log(n \lor p_n)}}\frac{q_n^{1/4}}{n^{1/4}} \to \infty,
\end{align*}
under the conditions of our Theorem \ref{Thm:stab_one_step}: $n/q_n = O(1)$ and $\log(n \lor p_n)/q_n^{1/4} \to 0$, and note that $\tilde{K}>0$. Thus, from \eqref{eq:exp_prob}, we have 
\begin{align*}
 & \limsup_{n \to \infty}{\rm P}\bigg(\,\sup_{h \in \mathcal{H}_n(\tilde{K})}\bigg| \int_{\mathcal{T}} h(s)d\bar{M}(s) \bigg| > \varepsilon_nb_n^2 \bigg) = 0,
\end{align*}
and since  $\varepsilon_nb_n^2 = n^{-5/4}\exp\big(4\tilde{K}\sqrt{\log(n \lor p_n)}q_n^{-1/2}\big)n\Lambda(\tau)/2 \to 0$,  the proof is complete.
\end{proof}

\begin{lemma} \label{lemma:asymptotic_negligibility_(I)}
Under the conditions of Theorem \ref{Thm:stab_one_step}, $\mbox{(I)}1_{{\cal L}_n}$ is asymptotically negligible.
\end{lemma}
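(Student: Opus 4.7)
The plan is to first compute the difference of the two one-step estimators entering (I). Because $\hat P_{nj}=(\hat E_j,\mathbb{Q}_j,\hat G_n)$ and $\hat P''_{nj}=(E_0,\mathbb{Q}_j,\hat G_n)$ differ only in the conditional residual life function, expression \eqref{eq:if_onestep} for $S_{k_j}$ evaluated at the Dirac mass $\delta_{O_{j+1}}$ gives
\begin{align*}
S_{k_j}(\delta_{O_{j+1}},\hat P_{nj})-S_{k_j}(\delta_{O_{j+1}},\hat P''_{nj})
=-\frac{U_{j+1,k_j}-\mathbb{Q}_{j}[U_{k_j}]}{{\rm Var}_{\mathbb{Q}_{j}}(U_{k_j})}
\int_{\mathcal T}[\hat E_j-E_0](U_{j+1,k_j},s,k_j)\,\hat M(ds),
\end{align*}
where $\hat M(ds)=1(X_{j+1}\in ds,\,\delta_{j+1}=0)-1(X_{j+1}\ge s)\,d\hat\Lambda_n(s)$. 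I split $\hat M(ds)=dM(s)+1(X_{j+1}\ge s)[d\Lambda(s)-d\hat\Lambda_n(s)]$ and decompose $\mathrm{(I)}\,1_{\mathcal L_n}=\mathrm{(I.a)}+\mathrm{(I.b)}$, with $\mathrm{(I.b)}$ carrying the $d\hat\Lambda_n-d\Lambda$ piece and $\mathrm{(I.a)}$ the $dM$ piece.

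For $\mathrm{(I.b)}$, on $\mathcal L_n$ the inner integrand equals $-e_{nj}(U_{j+1,k_j},s,k_j)$ in the notation of \eqref{eq:random_func_enj}, and Lemma \ref{lemma:properties_estCHF}\ref{eq:property_estCHF_2} supplies the uniform bound $\sqrt{\log(n\lor p_n)}/q_n^{1/4}$ required by Lemma \ref{lemma:bounded_multiplicative_term}. Invoking the latter to swap $(\mathbb{Q}_j,\hat\sigma_{nj},{\rm Var}_{\mathbb{Q}_j})$ for $(Q_u,\sigma_{nj},{\rm Var}_{Q_u})$ reduces $\mathrm{(I.b)}$, modulo $o_p(1)$, to exactly the sum $\sum_{j=q_n}^{n-1}Q_{n,j+1}$ appearing in Lemma \ref{lemma:convergence_sum_Dnj.2}, which is $o_p(1)$.

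For $\mathrm{(I.a)}$, on $\mathcal L_n$ the inner integral $\int[\hat E_j-E_0]\,dM$ is bounded by $\tilde K K_{nj}(1+\Lambda(\tau))$, again well within the hypothesis of Lemma \ref{lemma:bounded_multiplicative_term}; after the same replacement $\mathrm{(I.a)}$ equals, up to $o_p(1)$,
\begin{align*}
-\frac{1}{\sqrt{n-q_n}}\sum_{j=q_n}^{n-1}\frac{m_j(U_{j+1,k_j}-Q_u[U_{k_j}])}{\sigma_{nj}{\rm Var}_{Q_u}(U_{k_j})}\int_{\mathcal T}[\hat E_j-E_0](U_{j+1,k_j},s,k_j)\,dM(s).
\end{align*}
Let $\mathcal O_{nj}=\sigma(O_1,\ldots,O_j)$. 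Conditional on $\mathcal O_{nj}$ the factors $\hat E_j$, $k_j$, $m_j$, $\sigma_{nj}$ are determined and $M$ for observation $j+1$ is a mean-zero martingale (with respect to its own filtration, under the true $\Lambda$); further conditioning on $U_{j+1,k_j}$ makes the integrand a deterministic function of $s$, so the conditional expectation of the integral vanishes, and the prefactor being $U_{j+1,k_j}$-measurable makes the whole summand a martingale difference with respect to $\{\mathcal O_{nj}\}$.

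Finally, the predictable variation identity gives $E\bigl[\bigl(\int[\hat E_j-E_0]\,dM_{j+1}\bigr)^2\bigm| U_{j+1,k_j}=u,\mathcal O_{nj}\bigr]\le \sup_s|\hat E_j(u,s,k_j)-E_0(u,s,k_j)|^2\Lambda(\tau)\lesssim K_{nj}^2$ on $\mathcal L_n$; combined with the boundedness of the prefactor guaranteed by \ref{assump:Covariates} and \ref{assump:Variances}, the conditional second moment of the $j$-th summand is $O(K_{nj}^2)$, and summing yields total variance $O\bigl((n-q_n)^{-1}\sum_jK_{nj}^2\bigr)=O(K_{nq_n}^2)=O(\log(n\lor p_n)/q_n)\to 0$ under the conditions of Theorem \ref{Thm:stab_one_step}. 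Chebyshev then delivers $\mathrm{(I.a)}=o_p(1)$, completing the proof. The main obstacle is the subtle filtration issue: since $\hat\sigma_{nj}$ and $\hat G_n$ depend on \emph{all} $n$ observations, the raw summands are not martingale differences; one must first use Lemma \ref{lemma:bounded_multiplicative_term} to strip away these future-dependent factors before the martingale Chebyshev argument becomes valid.
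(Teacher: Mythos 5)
Your proof is correct and follows essentially the same path as the paper's: you split $\hat M$ into the true compensator martingale plus the $d\hat\Lambda_n-d\Lambda$ correction, invoke Lemma~\ref{lemma:bounded_multiplicative_term} to strip out the future-dependent factors $\hat\sigma_{nj}$ and $\mathbb{Q}_j$, dispatch the compensator-correction piece via Lemma~\ref{lemma:convergence_sum_Dnj.2}, and handle the remaining $dM$ piece by a martingale-difference Chebyshev argument bounded using $\sup_s|\hat E_j-E_0|\lesssim K_{nj}$ on $\mathcal L_n$. The paper's version works with the filtration $\sigma(O_1,\ldots,O_j,\bs U_{j+1})$ and an almost-sure uniform bound $B_n$ on the summand rather than your conditional second-moment computation, but the two variants are equivalent and yield the same rate $\log(n\lor p_n)/q_n\to0$.
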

\begin{proof}
Recall that
\begin{align*}
  \mbox{(I)} &= \frac{1}{\sqrt{n-q_n}}\sum_{j=q_n}^{n-1}\Big\{\frac{\hat{\sigma}_{nj}^{-1}m_j}{{\rm Var}_{\Qj}(U_{k_j})}(U_{j+1,k_j}-\Qj[U_{k_j}])\\
  &\hspace{3.0cm} \times\int_{\mathcal{T}}[E_0(U_{j+1, k_j},s,k_j)-\hat{E}_j(U_{j+1, k_j},s,k_j)]d\hat{M}_{j+1}(s)\Big\}.
\end{align*}
Let $dM_{j+1}(s) = 1(X_{j+1} \in ds, \delta_{j+1}=0)-1(X_{j+1} \geq s)d\Lambda(s)$, and $d\hat{M}_{j+1}(s) - dM_{j+1}(s)
= -1(X_{j+1} \geq s)\big\{d\hat{\Lambda}_n(s)-d\Lambda(s)\big\}$, so that
\begin{align*}
  &\Big| 1_{{\cal L}_n}\int_{\mathcal{T}}[E_0(U_{j+1, k_j},s,k_j)-\hat{E}_j(U_{j+1, k_j},s,k_j)]\big\{d\hat{M}_{j+1}(s)-dM_{j+1}(s)\big\} \Big| \\
  &\quad = \Big| 1_{{\cal L}_n}\int_{\mathcal{T}}[E_0(U_{j+1, k_j},s,k_j)-\hat{E}_j(U_{j+1, k_j},s,k_j)]1(X_{j+1} \geq s) \big\{d\hat{\Lambda}_n(s)-d\Lambda(s)\big\} \Big|.
\end{align*}
Along with the triangle inequality, the above results imply that
\begin{align*}
  &|\mbox{(I)}1_{{\cal L}_n}| \le \bigg|\frac{1}{\sqrt{n-q_n}}\sum_{j=q_n}^{n-1}\frac{\hat{\sigma}_{nj}^{-1}m_j}{{\rm Var}_{\Qj}(U_{k_j})}(U_{j+1,k_j}-\Qj[U_{k_j}])\\
  &\hspace{4cm} \times 1_{{\cal L}_n}\int_{\mathcal{T}}\big[E_0(U_{j+1, k_j},s,k_j)-\hat{E}_j(U_{j+1, k_j},s,k_j)\big]dM_{j+1}(s)\bigg|\\
  &\quad + \bigg|\frac{1}{\sqrt{n-q_n}}\sum_{j=q_n}^{n-1}\frac{\hat{\sigma}_{nj}^{-1}m_j}{{\rm Var}_{\Qj}(U_{k_j})}(U_{j+1,k_j}-\Qj[U_{k_j}])\\
  &\hspace{1cm} \times 1_{{\cal L}_n}\int_{\mathcal{T}}\big[E_0(U_{j+1, k_j},s,k_j)-\hat{E}_j(U_{j+1, k_j},s,k_j)\big]1(X_{j+1} \geq s) \big\{d\hat{\Lambda}_n(s)-d\Lambda(s)\big\}\bigg|.
\end{align*}
We further apply Lemma \ref{lemma:bounded_multiplicative_term} to the above display, taking
\begin{align*}
  f_{n1}(O_{j+1,k_j}) &= 1_{{\cal L}_n}\int_{\mathcal{T}}\big[E_0(U_{j+1, k_j},s,k_j)-\hat{E}_j(U_{j+1, k_j},s,k_j)\big]dM_{j+1}(s),\mbox{ with}\\
  \sup_{o}\big|f_{n1}(o)\big| &\leq \sup_{(j,k,s,u)}\big|E_0(u,s,k)-\hat{E}_j(u,s,k)\big|(1+\Lambda(\tau)) \lesssim K_{nq_n},
\end{align*}
and
\begin{align*}
  &f_{n2}(O_{j+1,k_j}) = 1_{{\cal L}_n}\int_{\mathcal{T}}\big[E_0(U_{j+1, k_j},s,k_j)-\hat{E}_j(U_{j+1, k_j},s,k_j)\big]1(X_{j+1} \geq s)\big\{d\hat{\Lambda}_n(s)-d\Lambda(s)\big\};\\  
  &\sup_{o}\big|f_{n2}(o)\big| = \sup_{(j,k,u)}\bigg|1_{{\cal L}_n}\int_{\mathcal{T}}\big[E_0(u,s,k)-\hat{E}_j(u,s,k)\big]1(X_{j+1} \ge s)\big\{d\hat{\Lambda}_n(s)-d\Lambda(s)\big\}\bigg|.
\end{align*}
By the definition of $e_{nj}$ in \eqref{eq:random_func_enj} that implies
\begin{align} \label{eq:random_func_enj.1}
  e_{nj}(u,s,k) = 1_{{\cal L}_n}\big[E_0(u,s,k)-\hat{E}_j(u,s,k)\big]1(X_{j+1} \geq s),
\end{align}
so we see from \ref{eq:property_estCHF_2} of Lemma \ref{lemma:properties_estCHF} that $\sup_{o}\big|f_{n2}(o)\big| \lesssim \sqrt{\log(n \lor p_n)}/q_n^{1/4}$ with probability tending to one.
Using Lemma \ref{lemma:bounded_multiplicative_term}, it is implied that 
\begin{align*}
  &|\mbox{(I)}1_{{\cal L}_n}| \lesssim
  \bigg|\frac{1}{\sqrt{n-q_n}}\sum_{j=q_n}^{n-1}\frac{\sigma_{nj}^{-1}m_j}{{\rm Var}_{Q_u}(U_{k_j})}(U_{j+1,k_j}-Q_u[U_{k_j}])\\
  &\hspace{4cm} \times 1_{{\cal L}_n}\int_{\mathcal{T}}\big[E_0(U_{j+1, k_j},s,k_j)-\hat{E}_j(U_{j+1, k_j},s,k_j)\big]dM_{j+1}(s)\bigg|\\
  &+ \bigg|\frac{1}{\sqrt{n-q_n}}\sum_{j=q_n}^{n-1}\frac{\sigma_{nj}^{-1}m_j}{{\rm Var}_{Q_u}(U_{k_j})}(U_{j+1,k_j}-Q_u[U_{k_j}])1_{{\cal L}_n} \\
  &\quad \times \int_{\mathcal{T}}\big[E_0(U_{j+1, k_j},s,k_j)-\hat{E}_j(U_{j+1, k_j},s,k_j)\big]1(X_{j+1} \geq s) \big\{d\hat{\Lambda}_n(s)-d\Lambda(s)\big\}\bigg| + o_p(1).
\end{align*}
Moreover, note that $1_{{\cal L}_n} \le 1\big( \,\sup_{(j,k,s,u)}|E_0(u,s,k)-\hat{E}_j(u,s,k)|\lesssim K_{nq_n}\,\big)$, and then define
\begin{align} \label{eq:random_func_bar_enj}
  &\bar{e}_{nj}(u,s,k) = [E_0(u,s,k)-\hat{E}_j(u,s,k)]1(X_{j+1} \geq s)\\
  &\qquad \times 1\bigg(\,\sup_{(j,k,s,u)}\big|E_0(u,s,k)-\hat{E}_j(u,s,k)\big| \lesssim K_{nq_n} \,\bigg). \nonumber
\end{align}
Then by \eqref{eq:random_func_enj.1} and \eqref{eq:random_func_bar_enj}, we have that
\begin{align*}
  &|\mbox{(I)}1_{{\cal L}_n}| \lesssim
  \bigg|\frac{1}{\sqrt{n-q_n}}\sum_{j=q_n}^{n-1}\frac{m_j(U_{j+1,k_j}-Q_u[U_{k_j}])}{\sigma_{nj}{\rm Var}_{Q_u}(U_{k_j})}\int_{\mathcal{T}}\bar{e}_{nj}(U_{j+1,k_j},s,k_j)dM_{j+1}(s)\bigg|\\
  & \quad + \bigg|\frac{1}{\sqrt{n-q_n}}\sum_{j=q_n}^{n-1}\frac{m_j(U_{j+1,k_j}-Q_u[U_{k_j}])}{\sigma_{nj}{\rm Var}_{Q_u}(U_{k_j})} \int_{\mathcal{T}}e_{nj}(U_{j+1,k_j},s,k_j) \{d\hat{\Lambda}_n(s)-d\Lambda(s)\}\bigg|\\
  & \quad + o_p(1),
\end{align*}
where the middle term converges to zero in probability by Lemma \ref{lemma:convergence_sum_Dnj.2}. 

Now we deal with the first term on the right-hand-side of the above inequality. Fix $n$ and for $j \in \{q_n+1,\ldots, n\}$,
\begin{align*}
  \bar{H}_{nj} \equiv \frac{m_{j-1}(U_{j,k_{j-1}}-Q_u[U_{k_{j-1}}])}{\sqrt{n-q_n}\sigma_{n,\,j-1}{\rm Var}_{Q_u}(U_{k_{j-1}})}\int_{\mathcal{T}}\bar{e}_{n,\,j-1}(U_{j,k_{j-1}},s,k_{j-1})dM_{j}(s).
\end{align*}
From \eqref{eq:random_func_bar_enj} and by \ref{assump:Covariates},
\begin{align}\label{eq:upper_bound_enj.1}
  \sup_{(j,s)}|\bar{e}_{nj}(U_{j+1,k_{j}},s,k_{j})| \le 
  \sup_{(j,k,s,u)}\big|E_0(u,s,k)-\hat{E}_j(u,s,k)\big| \lesssim
  K_{nq_n} \; \mbox{ a.s.}
\end{align}
Along with the uniform boundedness of $U_k$ almost surely in \ref{assump:Covariates} and that $\sigma_{nj}$ and ${\rm Var}_{Q_u}(U_k)$ are uniformly bounded away from zero in \ref{assump:Variances}, \eqref{eq:upper_bound_enj.1} implies that
there exists a $B_n\equiv K'\sqrt{\log(n \lor p_n)}/\sqrt{(n-q_n)q_n}$ for some constant $K'>0$ such that $\max_j|\bar{H}_{nj}| \le B_n$ almost surely. 

Define the filtration ${\cal O}_{nj} \equiv \sigma(O_1,\ldots,O_j, \bs{U}_{j+1})$.
We know that $\{(\bar{H}_{nj}, {\cal O}_{nj}), j=q_n+1,\ldots, n\}$ is a martingale difference sequence because $E|\bar{H}_{nj}| < \infty$; $\bar{H}_{nj}$ is ${\cal O}_{nj}$-measurable, and for $j=q_n,\ldots, n-1$,
\begin{align*}  
  & E\big[\bar{H}_{n,\,j+1}\big|{\cal O}_{nj}\big]\\
  & = \frac{1}{\sqrt{n-q_n}}\frac{m_{j}(U_{j+1,k_{j}}-Q_u[U_{k_{j}}])}{\sigma_{nj}{\rm Var}_{Q_u}(U_{k_{j}})}\int_{\mathcal{T}}\bar{e}_{nj}(U_{j+1,k_{j}},s,k_{j})E\big[dM_{j+1}(s)\big|{\cal O}_{nj}\big]\\
  &=\frac{1}{\sqrt{n-q_n}}\frac{m_j(U_{j+1,k_j}-Q_u[U_{k_j}])}{\sigma_{nj}{\rm Var}_{Q_u}(U_{k_j})} \int_{\mathcal{T}}\bar{e}_{nj}(U_{j+1, k_j},s,k_j)E[1(T_{j+1} \ge s)|\bs{U}_{j+1}] \\
  &\hspace{6cm} \times E\left[1(C_{j+1}\in ds)-1(C_{j+1}\ge s)d\Lambda(s)\right]\\
  & = 0,
\end{align*}
where the second equality holds by the independent censoring assumption, and the last step follows from the definition $d\Lambda(s) = {\rm P}(C \in ds)/{\rm P}(C \ge s)$. Then for $\varepsilon > 0$,
\begin{align*}
  & {\rm P}\bigg(\,\bigg|\sum_{j=q_n}^{n-1}\bar{H}_{n,\,j+1}\bigg| \ge \varepsilon \bigg) \le \varepsilon^{-2}E\bigg[\bigg(\sum_{j=q_n}^{n-1}\bar{H}_{n,\,j+1}\bigg)^2\,\bigg] \\
  &= \varepsilon^{-2}\bigg(\sum_{j=q_n}^{n-1}E\big[\bar{H}_{n,\,j+1}^2\big] + 2\sum_{q_n\le i< j\le n-1}E\Big[\bar{H}_{n,\,i+1}E\Big[\bar{H}_{n,\,j+1} \Big| \mathcal{O}_{nj}\Big]\Big]\bigg) \\
  &= \varepsilon^{-2}\sum_{j=q_n}^{n-1}E\big[ \bar{H}_{n,\,j+1}^2 \big]
  \le \varepsilon^{-2}(n-q_n)B_n^2 \to 0.
\end{align*}
As $|\mbox{(I)}1_{{\cal L}_n}| \lesssim |\sum_{j=q_n}^{n-1}\bar{H}_{n,\,j+1}| + o_p(1)$, we conclude that $\mbox{(I)}1_{{\cal L}_n}=o_p(1)$.
\end{proof}

\begin{lemma} \label{lemma:asymptotic_negligibility_(II)}
Under the conditions of Theorem \ref{Thm:stab_one_step}, $\mbox{(II)}1_{{\cal L}_n}$ is asymptotically negligible.
\end{lemma}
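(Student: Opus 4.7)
The plan closely mirrors that of Lemma~\ref{lemma:asymptotic_negligibility_(I)}. Since $\hat{P}''_{nj}=(E_0,\mathbb{Q}_j,\hat{G}_n)$ and $\hat{P}'''_{nj}=(E_0,\mathbb{Q}_j,G)$ differ only in their censoring piece and $\Psi$ does not depend on $G$, I first observe that $S_{k_j}(\delta_{O_{j+1}},\hat{P}''_{nj})-S_{k_j}(\delta_{O_{j+1}},\hat{P}'''_{nj})=\IF^*(O_{j+1}|\hat{P}''_{nj})-\IF^*(O_{j+1}|\hat{P}'''_{nj})$. Reading off the $G$-dependence of $\IF^{ipw}$ and $\IF^{car}$ in \eqref{eq:reexpressed_if} (through $\tilde y=\delta x/G(x)$ and through $d\Lambda$, respectively) gives the explicit form
\begin{align*}
\frac{U_{j+1,k_j}-\mathbb{Q}_j[U_{k_j}]}{{\rm Var}_{\mathbb{Q}_j}(U_{k_j})}\Big\{\delta_{j+1}X_{j+1}\big[\tfrac{1}{\hat{G}_n(X_{j+1})}-\tfrac{1}{G(X_{j+1})}\big]+\int_{\mathcal T}E_0(U_{j+1,k_j},s,k_j)1(X_{j+1}\geq s)(d\hat\Lambda_n-d\Lambda)\Big\},
\end{align*}
yielding the split $\mbox{(II)}=\mbox{(II.a)}+\mbox{(II.b)}$ corresponding to the two square brackets. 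Both pieces fit the template of Lemma~\ref{lemma:bounded_multiplicative_term}: a geometric-series Taylor expansion of $1/\hat G_n-1/G$ under $\mathcal{B}_n(\tilde K)\subseteq\mathcal L_n$ gives the uniform bound $\sup_o|\delta x(1/\hat G_n(x)-1/G(x))|1_{\mathcal L_n}\lesssim \sqrt{\log n/n}\le \sqrt{\log(n\vee p_n)}/q_n^{1/4}$ needed for (II.a), while part~\ref{eq:property_estCHF_1} of Lemma~\ref{lemma:properties_estCHF} supplies the matching bound for (II.b) once one notices that on $\mathcal L_n$ the indicator $1(\inf_s Y_n(s)\geq\sqrt n)$ is unity, so that the integrand coincides with $\tilde e_{nj}(U_{j+1,k_j},s,k_j)$ of \eqref{eq:random_func_en}.

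After applying Lemma~\ref{lemma:bounded_multiplicative_term}, which replaces $(\hat\sigma_{nj},\mathbb{Q}_j[U_{k_j}],{\rm Var}_{\mathbb{Q}_j}(U_{k_j}))$ by $(\sigma_{nj},Q_u[U_{k_j}],{\rm Var}_{Q_u}(U_{k_j}))$ at the cost of an $o_p(1)$ error, the reduced version of (II.b)$1_{\mathcal L_n}$ is exactly $-\sum_{j=q_n}^{n-1}D_{n,j+1}$ with $D_{nj}$ as in \eqref{eq:defs.martingale.diff}, hence $o_p(1)$ by Lemmas~\ref{lemma:convergence_sum_widetildeDnj}--\ref{lemma:convergence_sum_widehatDnj}. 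For (II.a) I would carry out two further Taylor expansions on $\mathcal L_n$: $1/\hat G_n(x)-1/G(x)=-(\hat G_n(x)-G(x))/G(x)^2+O((\hat G_n-G)^2/G^3)$ and, via the product-integral identity with \ref{assump:Survival function}, $(\hat G_n(t)-G(t))/G(t)=-(\hat\Lambda_n(t)-\Lambda(t))+O((\hat\Lambda_n(t)-\Lambda(t))^2+1/\sqrt n)$; both quadratic remainders contribute at most $\sqrt{n-q_n}\cdot\log n/n=o(1)$ after summation. Substituting the representation \eqref{eq:martingale_est_CHF.1} for $\hat\Lambda_n-\Lambda$, which on $\mathcal L_n$ reduces to $\int d\bar M/Y_n$, the leading piece of (II.a)$1_{\mathcal L_n}$ equals
\begin{align*}
\frac{1}{\sqrt{n-q_n}}\sum_{j=q_n}^{n-1}\sigma_{nj}^{-1}m_j\frac{U_{j+1,k_j}-Q_u[U_{k_j}]}{{\rm Var}_{Q_u}(U_{k_j})}\int_{\mathcal T}\tilde Y_{j+1}\,1(X_{j+1}\geq s)\frac{1(Y_n(s)\geq\sqrt n)}{Y_n(s)}\,d\bar M(s)\cdot 1_{\mathcal L_n}.
\end{align*}

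To dispatch this residual I would imitate the $\widetilde D_{nj}+\widehat D_{nj}$ split of \eqref{eq:defs.martingale.diff}: pair each term with its predecessor to separate (i) an adjacent-$j$ difference---controlled here just by $|\tilde Y|\le\tau/G(\tau)$ and the triangle inequality, which is strictly simpler than the $E_0$-stability argument of Lemma~\ref{lemma:convergence_sum_widetildeDnj}---from (ii) a genuine martingale-difference piece whose conditional second moment admits the It\^o-isometry bound $E[(\int \tilde Y_{j+1}1(X_{j+1}\geq s)1(Y_n\geq\sqrt n)/Y_n\,d\bar M)^2]=E[\int \tilde Y_{j+1}^2 1(X_{j+1}\geq s)1(Y_n\geq\sqrt n)/Y_n\,d\Lambda]\lesssim 1/\sqrt n$ on $\mathcal L_n$. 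This gives total variance $O(n^{-1/2})\to 0$ and Chebyshev then closes the argument. The main obstacle lies in exhibiting the martingale-difference structure of piece (ii): unlike in Lemma~\ref{lemma:convergence_sum_widehatDnj}, where the index-shifted integrand $E_0(U_{j-1,k_{j-2}},s,k_{j-2})1(X_j\geq s)$ cleanly decouples from the outer multiplier $U_{j+1,k_j}-Q_u[U_{k_j}]$, here the factor $\tilde Y_{j+1}$ re-couples the $(j+1)$-th observation into the integrand, and the AFT-induced dependence between $\bs U_{j+1}$ and $(X_{j+1},\delta_{j+1})$ means one cannot naively condition on $(X_{j+1},\delta_{j+1})$; the remedy will be to enlarge the filtration $\mathcal O_{nj}$ carefully while exploiting the assumed independence $C_{j+1}\perp(T_{j+1},\bs U_{j+1})$ so that the required conditional-mean-zero property is preserved.
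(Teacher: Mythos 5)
Your split of (II) into the two pieces $\delta_{j+1}X_{j+1}(1/\hat G_n-1/G)$ and $\int E_0\,(d\hat\Lambda_n-d\Lambda)$, the application of Lemma~\ref{lemma:bounded_multiplicative_term} with the bound $\sup_o|f_n(o)1_{\mathcal L_n}|\lesssim\sqrt{\log(n\vee p_n)}/q_n^{1/4}$, and the reduction of the second piece to $\sum_j D_{n,j+1}$ handled by Lemmas~\ref{lemma:convergence_sum_widetildeDnj}--\ref{lemma:convergence_sum_widehatDnj} all match the paper's proof exactly.

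The treatment of the first piece (your (II.a)) is where you diverge, and it does not close. The Taylor expansions relating $1/\hat G_n-1/G$ to $\hat\Lambda_n-\Lambda$ and then to $\int d\bar M/Y_n$ are sound in their error accounting, but they leave you with
\[
\sum_{j=q_n}^{n-1}\frac{m_j(U_{j+1,k_j}-Q_u[U_{k_j}])}{\sigma_{nj}{\rm Var}_{Q_u}(U_{k_j})}\,\tilde Y_{j+1}\int_{\mathcal T}1(X_{j+1}\geq s)\frac{1(Y_n(s)\geq\sqrt n)}{Y_n(s)}\,d\bar M(s),
\]
and here the obstruction is not merely the re-coupling you identify but that $\bar M$ aggregates the censoring contributions of \emph{all} $n$ observations. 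So $\int(\cdot)\,d\bar M/Y_n$ is not adapted to any filtration built by adding one observation at a time, and no index-shift of the form used for $\widehat D_{nj}$ (which shifted only the integrand $E_0(U_{j-1,k_{j-2}},\cdot)1(X_j\ge\cdot)$, not the driving process $\bar M$) can change that. Moreover, the two $(j+1)$-dependent factors $\tilde Y_{j+1}$ and $U_{j+1,k_j}-Q_u[U_{k_j}]$ are not conditionally independent under the alternative: conditioning on $(X_{j+1},\delta_{j+1})$ biases $E[U_{j+1,k_j}]$ away from $Q_u[U_{k_j}]$, while conditioning on $\bs U_{j+1}$ makes $E[\tilde Y_{j+1}(1/\tilde G(X_{j+1})-1/G(X_{j+1}))]$ nonzero. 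You flag this correctly, but the proposed remedy of ``enlarging the filtration carefully'' is not carried out, and for the reasons above I do not see how it can be.

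The paper's resolution is a genuinely different device that your proposal lacks: it replaces the random $\hat G_n$ by a supremum over the \emph{deterministic} class $\mathcal G_r$ of candidate survival functions (to which $\hat G_n$ belongs with probability tending to one on $\mathcal L_n$). With $\tilde G$ fixed and non-random, the summand $V_{n,j+1}(\tilde G)$ involves no global objects; the only component not measurable with respect to ${\cal O}_{nj}=\sigma(O_1,\ldots,O_j,\delta_{j+1},X_{j+1})$ is $U_{j+1,k_j}$, so $\{V_{nj}(\tilde G)\}$ is a martingale-difference array indexed by $\mathcal G_r$. The paper then applies the uniform (over the class) martingale CLT of Theorem~1 in \cite{Bae2010}, using the finite-uniform-entropy property of $\mathcal G_r$ and verifying that $\sup_{\tilde G,G'}\tilde\sigma_n^2(\tilde G,G')/d^{(2)}_{\mu_n}(\tilde G,G')^2$ is bounded in probability, to get $\sup\{|\tilde S_n(\tilde G)|:d^{(2)}_{\mu_n}(\tilde G,G)\le\eta\}\to_p 0$; since $d^{(2)}_{\mu_n}(\hat G_n,G)\to_p 0$, the claim follows. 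This sup-over-a-fixed-class argument, and the appeal to a functional martingale CLT rather than a pointwise Chebyshev bound, is the key missing ingredient in your treatment of (II.a).
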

\begin{proof}
First we have 
\begin{align*}
  &\mbox{(II)} = \frac{1}{\sqrt{n-q_n}}\sum_{j=q_n}^{n-1}\frac{m_j(U_{j+1,k_j}-\Qn[U_{k_j}])}{\hat{\sigma}_{nj}{\rm Var}_{\Qn}(U_{k_j})} \Big[\delta_{j+1}X_{j+1} \Big(\frac{1}{\hat{G}_n(X_{j+1})}-\frac{1}{G(X_{j+1})}\Big)\\ 
  &\hspace{3.5cm} + \int_{\mathcal{T}} E_0(U_{j+1, k_j},s,k_j)1(X_{j+1} \geq s)\{d\hat{\Lambda}_{n}(s)-d\Lambda(s)\} \Big].
\end{align*}
By \ref{assump:Survival function}, we can fix an $\tilde{\epsilon}$ such that $0 < \tilde{\epsilon} < G(\tau)$. 
Fix $0 < r < 1$, and let ${\cal G}_r$ be the collection of monotone nonincreasing c\`{a}dl\`{a}g functions $\tilde{G} \colon \mathcal{T} \rightarrow [0,1]$ such that $\tilde{G}(\tau) > \tilde{\epsilon}$, and $\sup_{s\in\mathcal{T}}|\tilde{G}(s)/G(s)-1| \le r$. Note that $G\in {\cal G}_{r} \subset {\cal G}$ that was defined right before \eqref{eq:class_tilde_F}, and ${\rm P}(\hat{G}_n \in {\cal G}_{r}) \to 1$, using the argument involving $\mathcal{R}_{n1}$
in the proof of Lemma \ref{lemma:An_Bn_prob_to_one}.

We first give an upper-bound of $|\mbox{(II)}1_{{\cal L}_n}|$ by using Lemma \ref{lemma:bounded_multiplicative_term}, taking
\begin{align*}
  &f_n(O_{j+1,k_j}) = \delta_{j+1}X_{j+1}
  \Big(\frac{1}{\hat{G}_n(X_{j+1})}-\frac{1}{G(X_{j+1})}\Big)\\
  & \qquad + \int_{\mathcal{T}} E_0(U_{j+1, k_j},s,k_j)1(X_{j+1} \geq s)\{d\hat{\Lambda}_n(s)-d\Lambda(s)\},  
\end{align*}
and showing that $\sup_{o}|f_n(o)1_{{\cal L}_n}| \lesssim \sqrt{\log(n \lor p_n)}/q_n^{1/4}$ below. First, using a Taylor expansion of $z\mapsto \delta x1_{\mathcal{L}_n}/z$ around $z=G(x)$ gives that
\begin{align*} 
  &\sup_{o}\Big|\delta x\Big(\frac{1}{\hat{G}_n(x)}-\frac{1}{G(x)}\Big)1_{{\cal L}_n}\Big| \leq
  \frac{\tau}{G(\tau)}\bigg[\sum_{r=1}^{\infty}\Big(\sup_{t}\Big|\frac{\hat{G}_n(t)}{G(t)}-1\Big|\Big)^{r}\bigg]1_{{\cal L}_n}\\
  & \quad \leq \frac{\tau}{G(\tau)}\sum_{r=1}^{\infty}\Big(\sqrt{\frac{\log n}{n}}\,\Big)^{r} = \frac{\tau}{G(\tau)}\frac{\sqrt{\log n}}{\sqrt{n}-\sqrt{\log n}} \lesssim \sqrt{\frac{\log n}{n}}, \nonumber
\end{align*}
where the result follows by the occurrence of ${\cal B}_n(\tilde{K}) \supset {\cal L}_n$, $\sqrt{\log n/n} \leq 1$ and $G(\tau) > 0$ in \ref{assump:Survival function}. Along with the above result and \ref{eq:property_estCHF_1} of Lemma \ref{lemma:properties_estCHF}, the triangle inequality gives that with probability tending to one,
\begin{align*}
  &\sup_{o}|f_n(o)1_{{\cal L}_n}|
  \lesssim \sqrt{\frac{\log n}{n}} + \frac{\sqrt{\log(n \lor p_n)}}{{q_n^{1/4}}} \leq 2\frac{\sqrt{\log(n \lor p_n)}}{{q_n^{1/4}}}.
\end{align*}
Therefore along with the above results, Lemma \ref{lemma:bounded_multiplicative_term} implies that
\begin{align*}
  &|\mbox{(II)}1_{{\cal L}_n}| 
  \lesssim \bigg| \frac{1}{\sqrt{n-q_n}} \sum_{j=q_n}^{n-1}\frac{m_j(U_{j+1,k_j}-Q_u[U_{k_j}])}{\sigma_{nj}{\rm Var}_{Q_u}(U_{k_j})} \bigg[\delta_{j+1}X_{j+1}\Big(\frac{1}{\hat{G}_n(X_{j+1})}-\frac{1}{G(X_{j+1})}\Big) \nonumber \\ 
  &\hspace{3cm} + \int_{\mathcal{T}} E_0(U_{j+1, k_j},s,k_j)1(X_{j+1} \ge s)\{d\hat{\Lambda}_n(s)-d\Lambda(s)\}\bigg]\bigg|1_{{\cal L}_n} + o_p(1).
\end{align*}  

According to the definition of ${\cal G}_r$, the events contained in ${\cal L}_n$ and the definition of $\tilde{e}_{nj}$ as in \eqref{eq:random_func_en},
the above display further leads to
\begin{align} \label{eq:upper_bound_(II)_on_Ln_new}
  &|\mbox{(II)}1_{{\cal L}_n}| \le \bigg|\frac{1}{\sqrt{n-q_n}} \sum_{j=q_n}^{n-1}  \frac{m_j(U_{j+1,k_j}-Q_u[U_{k_j}])}{\sigma_{nj}{\rm Var}_{Q_u}(U_{k_j})}\int_{{\cal T}}\tilde{e}_{nj}(U_{j+1, k_j},s,k_j)\{d\hat{\Lambda}_n(s)-d\Lambda(s)\}\bigg| \\
  &\quad + \frac{1}{\sqrt{n-q_n}}\sup_{\tilde{G} \in {\cal G}_r} \bigg|\sum_{j=q_n}^{n-1}\frac{m_j(U_{j+1,k_j}-Q_u[U_{k_j}])}{\sigma_{nj}{\rm Var}_{Q_u}(U_{k_j})} \delta_{j+1}X_{j+1}\Big(\frac{1}{\tilde{G}(X_{j+1})}-\frac{1}{G(X_{j+1})}\Big)\bigg|1_{{\cal L}_n} \nonumber \\
  &\quad + o_p(1), \nonumber
\end{align}
where the first term converges to zero in probability, applying Lemmas \ref{lemma:convergence_sum_widetildeDnj} and \ref{lemma:convergence_sum_widehatDnj}. Therefore it remains to show that the middle term on the right-hand-side converges to zero in probability, using the properties of martingale difference arrays.

Fix $n$. Define a process $\{\tilde{S}_n(\tilde{G}): \tilde{G} \in {\cal G}_r\}$ by
$\tilde{S}_n(\tilde{G}) \equiv \sum_{j=q_n+1}^{n}V_{nj}(\tilde{G})$, where
\begin{align*}
  V_{nj}(\tilde{G})&=\frac{m_{j-1}(U_{j,k_{j-1}}-Q_u[U_{k_{j-1}}])}{\sqrt{n-q_n}\sigma_{n,\,j-1}{\rm Var}_{Q_u}(U_{k_{j-1}})} \delta_{j}X_{j}\Big(\frac{1}{\tilde{G}(X_{j})}-\frac{1}{G(X_{j})}\Big).
\end{align*}
We see that $E|V_{nj}(\tilde{G})| < \infty$ by \ref{assump:Covariates}, \ref{assump:Survival function}
and \ref{assump:Variances}.
Define the filtration ${\cal O}_{nj} = \sigma(O_1,\ldots,O_j,\delta_{j+1},X_{j+1})$ and we have that, for each $\tilde{G}$, $E[V_{n,\,j+1}(\tilde{G})|{\cal O}_{nj}] = 0$.
Therefore $\{(V_{nj}(\tilde{G}),{\cal O}_{nj}):j=q_n+1,\ldots,n,\, \tilde{G} \in {\cal G}_r\}$ is an array of martingale-differences of adapted
processes indexed by ${\cal G}_r$. Note that the class ${\cal G}_r$ has a finite uniform entropy integral \citep[see Lemma~2.6.13 in][]{Vaart1996}.
For all $\tilde{G}, G' \in {\cal G}_r$,
\begin{align*}
  & \tilde{\sigma}_n^2(\tilde{G}, G') \equiv \sum_{j=q_n}^{n-1}E\big[\{V_{n,\,j+1}(\tilde{G})-V_{n,\,j+1}(G')\}^2|{\cal O}_{nj}\big]\\
  & = \frac{1}{(n-q_n)}\sum_{j=q_n}^{n-1}X_{j+1}^2\Big(\frac{1}{\tilde{G}(X_{j+1})}-\frac{1}{G'(X_{j+1})}\Big)^2E\bigg[\frac{(U_{j+1,k_j}-Q_u[U_{k_j}])^2}{\sigma^2_{nj}{\rm Var}^2_{Q_u}(U_{k_j})}\bigg|{\cal O}_{nj}\bigg]\\
  & \le \frac{\bar{K}_0^2\tau^2}{\tilde{\epsilon}^4}\frac{1}{(n-q_n)}\sum_{j=q_n}^{n-1}\big\{\tilde{G}(X_{j+1})-G'(X_{j+1})\big\}^2, 
\end{align*}
where the inequality holds because $X \le \tau$; for each $j$, $G(X_{j+1}) \ge G(\tau) > \tilde{\epsilon} > 0$ by \ref{assump:Survival function}; for each $j$ and any $\bar{G} \in {\cal G}_r$, $\bar{G}(X_{j+1}) \ge \bar{G}(\tau) > \tilde{\epsilon} > 0$ and $\tilde{G}, G' \in {\cal G}_r$;
for each $j$, $|U_{j+1,k_j}-Q_u[U_{k_j}]|/[\sigma_{nj}{\rm Var}_{Q_u}(U_{k_j})] \le \bar{K}_0$ almost surely by \ref{assump:Covariates} and \ref{assump:Variances}, for some positive constant $\bar{K}_0$.

Let $\mu_n$ be the empirical distribution of $\{O_{q_n},\ldots,O_{n-1}\}$ normalized by $\sup_{s \in {\cal T}}\big|\hat{G}_n(s)-G(s)\big|$. Following the notation of Theorem 1 of  \cite{Bae2010}, it gives that
for any $\tilde{G}, G' \in {\cal G}_r$, 
\begin{align*}
 d^{(2)}_{\mu_n}(\tilde{G}, G')^2 \equiv \frac{1}{(n-q_n)}\sum_{j=q_n}^{n-1}\frac{\big\{\tilde{G}(X_{j+1})-G'(X_{j+1})\big\}^2}{\sup_{s \in {\cal T}}\big|\hat{G}_n(s)-G(s)\big|}.    
\end{align*}
Therefore, checking condition (3) of Theorem 1 of \cite{Bae2010} in our case, we have that for any positive constant $L$,
\begin{align*}\color{red}
  &{\rm P}\bigg(\sup_{\tilde{G},G' \in {\cal G}_r}\frac{\tilde{\sigma}_n^2(\tilde{G},G')}{d^{(2)}_{\mu_n}(\tilde{G}, G')^2} \ge L\bigg) \le {\rm P}\bigg(\mathcal{L}_n,\sup_{\tilde{G},G' \in {\cal G}_r}\frac{\tilde{\sigma}_n^2(\tilde{G},G')}{d^{(2)}_{\mu_n}(\tilde{G}, G')^2} \ge L\bigg) + {\rm P}\bigg(\mathcal{L}_n^c\bigg) \\
  &\quad \le \frac{1}{L}E\bigg[\sup_{\tilde{G},G' \in {\cal G}_r}\frac{\tilde{\sigma}_n^2(\tilde{G},G')}{d^{(2)}_{\mu_n}(\tilde{G}, G')^2}1_{\mathcal{L}_n}\bigg] + {\rm P}\bigg(\mathcal{L}_n^c\bigg)\\
  &\quad \le \frac{\bar{K}_0^2\tau^2}{L\tilde{\epsilon}^4}E\bigg[\sup_{s \in {\cal T}}\Big|\hat{G}_n(s)-G(s)\Big|1_{{\cal L}_n}\bigg] + {\rm P}\bigg(\mathcal{L}_n^c\bigg) \\
  &\quad \le \frac{\bar{K}_0^2\tau^2}{L\tilde{\epsilon}^4}E\bigg[\sup_{s \in {\cal T}}\bigg|\frac{\hat{G}_n(s)}{G(s)}-1\bigg|1_{{\cal L}_n}\bigg] + {\rm P}\bigg(\mathcal{L}_n^c\bigg) \\
  &\quad \le \frac{\bar{K}_0^2\tau^2}{L\tilde{\epsilon}^4}\sqrt{\frac{\log n}{n}}
  + {\rm P}\bigg(\mathcal{L}_n^c\bigg) \to 0
  \mbox{ as } n \to \infty.
\end{align*}
The Lindeberg condition holds trivially in our case: note that $|V_{n,\,j+1}(\tilde{G})| \lesssim 1/\sqrt{n-q_n}$, so
for any fixed $\epsilon > 0$ and for all $n$ sufficiently large we have
\begin{align*}
 &\frac{1}{\epsilon}\sum_{j=q_n}^{n-1}E[V^2_{n,\,j+1}(\tilde{G})1(V_{n,\,j+1}(\tilde{G})>\epsilon)]=0.
\end{align*}
Now appealing to Theorem 1 of \cite{Bae2010}, for given $\gamma > 0$ and $\epsilon > 0$, there exists an $\eta > 0$ for which
\begin{align} \label{eq:limsup_prob}
  \limsup_{n \to \infty}{\rm P}\bigg(\sup_{\tilde{G} \in {\cal G}_r \,:\,  d^{(2)}_{\mu_n}(\tilde{G}, G) \le \eta}|\tilde{S}_n(\tilde{G})| > 5\gamma \bigg) \le 3\epsilon.
\end{align}
Note by the arguments in the proof of Lemma~\ref{lemma:An_Bn_prob_to_one},
${\rm P}(\hat{G}_n \in {\cal G}_r) \to 1$, and
\begin{align*}
  & d^{(2)}_{\mu_n}(\hat{G}_n, G) \le \Big\{\sup_{s \in {\cal T}}\big|\hat{G}_n(s)-G(s)\big|\Big\}^{1/2} \le \bigg\{\sup_{s \in {\cal T}}\bigg|\frac{\hat{G}_n(s)}{G(s)}-1\bigg|\bigg\}^{1/2} \to 0
\end{align*}
with probability tending to one.
Hence, $\hat{G}_n\in \{\tilde{G}\in\mathcal{G}_r : d^{(2)}_{\mu_n}(\tilde{G}, G) \le \eta\}$ with probability tending to one, and so
\begin{align*}
  &\limsup_{n \to \infty}{\rm P}\Big(|\tilde{S}_n(\hat{G}_n)| > 5\gamma \Big) \le \limsup_{n \to \infty}{\rm P}\bigg(\sup_{\tilde{G} \in {\cal G}_{r}\,: \,d^{(2)}_{\mu_n}(\tilde{G}, G) \le \eta}|\tilde{S}_n(\tilde{G})| > 5\gamma \bigg) 
  \le 3\epsilon.
\end{align*}
As $\epsilon>0$ was arbitrary, $\limsup_{n \to \infty}{\rm P}\Big(|\tilde{S}_n(\hat{G}_n)| > 5\gamma \Big)=0$ and, as $\gamma>0$ was arbitrary, this shows that $\tilde{S}_n(\hat{G}_n)=o_p(1)$. The argument following \eqref{eq:upper_bound_(II)_on_Ln_new} then shows that $\mbox{(II)}1_{{\cal L}_n}=o_p(1)$.
\end{proof}

To prove the next lemma, we need to develop a decomposition involving three types of martingale differences. The filtrations for these martingale differences are
\begin{align} \label{eq:filtrations}
  {\cal O}^{\dagger}_{nj} &\equiv \sigma \big(O_1,\ldots,O_j,U_{j+1,k_j}\big);\\
  {\cal O}^{*}_{nj} &\equiv \sigma \big(O_1,\ldots,O_{j-1},k_j, m_j, E\big[\tilde{Y}\big|U_{j+1,k_j}\big],{\rm Var}_{\Qj}(U_{k_j}), U_{j,k_{j-1}}\big); \nonumber \\
  \tilde{{\cal O}}_{nj} &\equiv \sigma(O_1,\ldots,O_j, \bs{U}_{j+1}), \nonumber
\end{align}
$j=q_n,\ldots,n-1$. 
Define
\begin{align} \label{eq:def_hnj}
  h_{nj}(u) \equiv & \bigg[\frac{(u-\Qj[U_{k_j}])}{{\rm Var}_{\Qj}(U_{k_j})}-\frac{(u-Q_u[U_{k_j}])}{{\rm Var}_{Q_u}(U_{k_j})}\bigg]1\Big(\max_k\big|(\Qj-Q_u)[U_k]\big| \lesssim K_{nj}, \nonumber \\ 
  &\quad \max_{k}\big|{\rm Var}^{-1}_{\Qj}(U_{k})-{\rm Var}^{-1}_{Q_u}(U_{k})\big| \lesssim K_{nj}\Big),
\end{align}
where in $\lesssim$ the implicit constants are independent of $(j,n)$. 
The martingale differences to be used in the proof are then defined by 
\begin{align} \label{eq:decomp_representations}
  H^{\dagger}_{nj} &\equiv \frac{1}{\sqrt{n-q_n}}\frac{m_{j-1}}{\sigma_{n,\,j-1}}h_{n,\,j-1}(U_{j,k_{j-1}})\big\{\tilde Y_{j}-E\big[\tilde{Y}\big|U_{j,k_{j-1}}\big]\big\},\\
  H^*_{nj} &\equiv \frac{1}{\sqrt{n-q_n}}\frac{m_{j-1}}{\sigma_{n,\,j-1}}h_{n,\,j-1}(U_{j,k_{j-1}})E\big[\tilde{Y}\big|U_{j,k_{j-1}}\big], \mbox{ and } \nonumber \\
  \tilde{H}_{nj} &\equiv \frac{1}{\sqrt{n-q_n}}\frac{m_{j-1}}{\sigma_{n,\,j-1}}h_{n,\,j-1}(U_{j,k_{j-1}})
  \int_{\mathcal{T}}E_0(U_{j, k_{j-1}},s,k_{j-1})dM_{j}(s). \nonumber
\end{align}

Note that $(H^{\dagger}_{nj}, {\cal O}^{\dagger}_{nj})$, $(H^*_{nj}, {\cal O}^*_{nj})$, $(\tilde{H}_{nj}, \tilde{{\cal O}}_{nj})$, $j= q_n+1,\ldots,n$, are martingale difference sequences.
In particular,
\begin{align} \label{eq:conditional_mean_zero.1}
  &\sqrt{n-q_n}\,E\big[H^{\dagger}_{n,\,j+1} \big| {\cal O}^{\dagger}_{nj}\big] = E\Big[\frac{m_{j}}{\sigma_{n,j}}h_{n,j}(U_{j+1,k_j})\Big[\tilde Y_{j+1}-E\big[\tilde{Y}\big|U_{j+1,k_j}\big]\Big]\Big|{\cal O}^{\dagger}_{nj}\Big]\\
  & = \frac{m_{j}}{\sigma_{nj}}h_{n,j}(U_{j+1,k_j})\Big[E\big[\tilde{Y}_{j+1}\big|U_{j+1,k_j}\big]-E\big[\tilde{Y}\big|U_{j+1,k_j}\big]\Big] = 0, \nonumber
\end{align}
and
\begin{align} \label{eq:conditional_mean_zero.2}
  &\sqrt{n-q_n}E\big[H^*_{n,\,j+1} \big| {\cal O}^*_{nj}\big] = E\Big[\frac{m_{j}}{\sigma_{n,j}}h_{n,j}(U_{j+1,k_j})E\big[\tilde{Y}\big|U_{j+1,k_j}\big]\Big|{\cal O}^*_{nj}\Big]\\
  & = \frac{m_jE\big[\tilde{Y}\big|U_{j+1,k_j}\big]}{\sigma_{nj}}
  \bigg\{\frac{E[(Q_u-\Qj)[U_{k_j}]|{\cal O}^*_{nj}]}{{\rm Var}_{\Qj}(U_{k_j})} \nonumber \\
  &\quad + E\big[U_{j+1,k_j}-Q_u[U_{k_j}]\big|{\cal O}^*_{nj}\big]\Big[\frac{1}{{\rm Var}_{\Qj}(U_{k_j})}-\frac{1}{{\rm Var}_{Q_u}(U_{k_j})}\Big]\bigg\} \nonumber \\
  & = 0, \nonumber
\end{align}
where the first step of \eqref{eq:conditional_mean_zero.2} holds by
\begin{align} \label{eq:decomp_for_(III)}
  &\frac{(U_{j+1,k_j}-\Qj[U_{k_j}])}{{\rm Var}_{\Qj}(U_{k_j})}-\frac{(U_{j+1,k_j}-Q_u[U_{k_j}])}{{\rm Var}_{Q_u}(U_{k_j})}\\
  &\quad = \frac{(Q_u-\Qj)[U_{k_j}]}{{\rm Var}_{\Qj}(U_{k_j})} + (U_{j+1,k_j}-Q_u[U_{k_j}])\Big[\frac{1}{{\rm Var}_{\Qj}(U_{k_j})}-\frac{1}{{\rm Var}_{Q_u}(U_{k_j})}\Big]. \nonumber
\end{align}
In addition,
\begin{align} \label{eq:conditional_mean_zero.3} 
  & \sqrt{n-q_n}E\big[\tilde{H}_{n,\,j+1}\big|\tilde{{\cal O}}_{nj}\big]
  = \frac{m_{j}}{\sigma_{nj}}h_{n{j}}(U_{j+1,k_{j}})\int_{\mathcal{T}}E_0(U_{j+1,k_{j}},s,k_{j})E\big[dM_{j+1}(s)\big|\tilde{{\cal O}}_{nj}\big] \\
  & = \frac{m_{j}}{\sigma_{nj}}h_{n{j}}(U_{j+1,k_{j}}) \int_{\mathcal{T}}E_0(U_{j+1, k_j},s,k_j)E[1(T_{j+1} \ge s)|\bs{U}_{j+1}] \nonumber \\
  &\hspace{6cm} \times E\left[1(C_{j+1}\in ds)-1(C_{j+1}\ge s)d\Lambda(s)\right] \nonumber \\
  & = 0, \nonumber 
\end{align}
where the first step holds by the independent censoring assumption, and the second step follows from the definition $d\Lambda(s) = {\rm P}(C \in ds)/{\rm P}(C \ge s)$.

\begin{lemma} \label{lemma:asymptotic_negligibility_(III)}
Under the conditions of Theorem \ref{Thm:stab_one_step}, $\mbox{(III)}1_{{\cal L}_n}$ is asymptotically negligible.
\end{lemma}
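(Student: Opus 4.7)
The plan is to exploit the factored form of $S_{k_j}(\delta_{O_{j+1}},\cdot)$ coming from the one-step formula \eqref{eq:if_onestep} to express (III) as a sum of martingale differences, and then apply Chebyshev. Because $\hat P'''_{nj}=(E_0,\mathbb{Q}_j,G)$ differs from $P$ only through $\mathbb{Q}_j$ versus $Q_u$ (both $\tilde Y$ and $M$ are unchanged because $G$ is common), substituting $\delta_{O_{j+1}}$ and $\hat P'''_{nj}$ directly into \eqref{eq:if_onestep} would yield
\begin{align*}
    S_{k_j}(\delta_{O_{j+1}},\hat P'''_{nj})-S_{k_j}(\delta_{O_{j+1}},P) = \bigg[\frac{U_{j+1,k_j}-\mathbb{Q}_j[U_{k_j}]}{{\rm Var}_{\mathbb{Q}_j}(U_{k_j})}-\frac{U_{j+1,k_j}-Q_u[U_{k_j}]}{{\rm Var}_{Q_u}(U_{k_j})}\bigg]\bigg[\tilde Y_{j+1}-\int_{\mathcal{T}} E_0(U_{j+1,k_j},s,k_j)\,dM_{j+1}(s)\bigg].
\end{align*}
On $\mathcal{L}_n$, both indicators in the definition \eqref{eq:def_hnj} of $h_{nj}$ equal one by \ref{eq:event_Cn_3}--\ref{eq:event_Cn_4} of Lemma~\ref{lemma:An_Bn_Cn_prob_to_one}, so the bracketed first factor is exactly $h_{nj}(U_{j+1,k_j})$; the decomposition \eqref{eq:decomp_for_(III)}, combined with the same bounds and \ref{assump:Covariates}, further gives $|h_{nj}(u)|\lesssim K_{nj}$ uniformly over $u$ in the (bounded) support of $U_{k_j}$.

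Next I would split $\tilde Y_{j+1}=\{\tilde Y_{j+1}-E[\tilde Y\mid U_{j+1,k_j}]\}+E[\tilde Y\mid U_{j+1,k_j}]$ in the second bracket and use \ref{eq:event_En_2} of Lemma~\ref{lemma:An_Bn_Cn_Dn_En_prob_to_one} to replace $\hat\sigma_{nj}^{-1}$ by $\sigma_{nj}^{-1}$, absorbing the $(\sigma_{nj}/\hat\sigma_{nj}-1)$ correction into an $o_p(1)$ martingale-difference remainder in the style of the first summand of (IV) in \eqref{IV_decomposition}. After re-indexing $j\mapsto j-1$ this produces
\begin{align*}
    \mbox{(III)}\,1_{\mathcal{L}_n} = \sum_{j=q_n+1}^{n}H^{\dagger}_{nj}+\sum_{j=q_n+1}^{n}H^{*}_{nj}-\sum_{j=q_n+1}^{n}\tilde H_{nj}+o_p(1),
\end{align*}
with $H^{\dagger}_{nj},\,H^{*}_{nj},\,\tilde H_{nj}$ as in \eqref{eq:decomp_representations}; their martingale-difference property with respect to the three filtrations \eqref{eq:filtrations} is precisely \eqref{eq:conditional_mean_zero.1}--\eqref{eq:conditional_mean_zero.3}.

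The last step is a Chebyshev bound. On $\mathcal{L}_n$, combining $|h_{n,j-1}|\lesssim K_{n,j-1}$ with boundedness of $\tilde Y_j$ (from \ref{assump:Survival function}), of $E_0$ (from \ref{assump:Conditional_mean_E0}), of $1/\sigma_{n,j-1}$ (from \ref{assump:Variances}), together with $|\int_{\mathcal{T}} E_0\,dM_j|\le \sup|E_0|\,(1+\Lambda(\tau))$, gives $\max(|H^{\dagger}_{nj}|,|H^{*}_{nj}|,|\tilde H_{nj}|)\lesssim K_{nq_n}/\sqrt{n-q_n}$ almost surely. Orthogonality of martingale differences then yields
\begin{align*}
    E\Big[\Big(\sum_{j=q_n+1}^{n}H^{\sharp}_{nj}\Big)^{\!2}\Big]=\sum_{j=q_n+1}^{n}E[(H^{\sharp}_{nj})^2]\lesssim K_{nq_n}^2=\frac{\log(n\vee p_n)}{q_n}\to 0
\end{align*}
for each $H^{\sharp}\in\{H^{\dagger},H^{*},\tilde H\}$, by the rate condition $q_n^{1/4}/\log(n\vee p_n)\to\infty$, and Chebyshev gives $\sum_j H^{\sharp}_{nj}=o_p(1)$. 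Combining with ${\rm P}(\mathcal{L}_n)\to 1$ from \eqref{eq:Ln1} then delivers $\mbox{(III)}\,1_{\mathcal{L}_n}=o_p(1)$ and hence $\mbox{(III)}=o_p(1)$. The main obstacle is not the probabilistic bound itself but the clean factorization in the first display: it is crucial that the $({\rm Cov}/{\rm Var}^2)(u-\mathbb{Q}_j[U])^2$ piece of $\IF^*(\cdot|\hat P'''_{nj})$ cancels exactly against $\Psi(\hat P'''_{nj})$ inside \eqref{eq:if_onestep}. Without this cancellation, a leftover $O(K_{nj})$ term whose conditional mean is nonzero would contribute a compensator of order $\sqrt{n-q_n}\,K_{nq_n}$, which is not $o(1)$ under the stated rates, so ruling out such remainders is where the argument's tightness resides.
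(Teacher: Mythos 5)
Your proposal follows the paper's proof almost line for line: the same factorization of $S_{k_j}(\delta_{O_{j+1}},\hat P'''_{nj})-S_{k_j}(\delta_{O_{j+1}},P)$, the same identification of the bracketed factor with $h_{nj}$ on $\mathcal{L}_n$, the same split into $H^\dagger,H^*,\tilde H$ and the same Chebyshev bound, and the cancellation you flag in the ``obstacle'' paragraph is indeed the one the paper relies on when passing from \eqref{eq:if_onestep} to the factored form of (III).

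There is one genuine gap, though, in how you dispose of the $\hat\sigma_{nj}$-correction. You write that the $(\sigma_{nj}/\hat\sigma_{nj}-1)$ piece can be ``absorbed into an $o_p(1)$ martingale-difference remainder in the style of the first summand of (IV).'' That style does not transfer. In (IV) the factor multiplying $(\sigma_{nj}/\hat\sigma_{nj}-1)$ is $m_j\IF^*_{k_j}(O_{j+1}|P)/\sigma_{nj}$, which has conditional mean zero given $\sigma(O_1,\ldots,O_j)$; here the corresponding factor is $m_j h_{nj}(U_{j+1,k_j})\bigl[\tilde Y_{j+1}-\int_{\mathcal{T}}E_0\,dM_{j+1}\bigr]/\sigma_{nj}$, whose conditional mean given $\sigma(O_1,\ldots,O_j)$ equals $m_j E\bigl[h_{nj}(U_{j+1,k_j})E[\tilde Y|U_{j+1,k_j}]\,\big|\,\sigma(O_1,\ldots,O_j)\bigr]/\sigma_{nj}$. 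Expanding $h_{nj}$ via \eqref{eq:decomp_for_(III)} shows this is $\bigl(1/v_1-1/v_2\bigr){\rm Cov}(U_k,E[\tilde Y|U_k])+\bigl((b-a)/v_1\bigr)E[\tilde Y]$ (with $a=\Qj[U_{k_j}]$, $b=Q_u[U_{k_j}]$, $v_1={\rm Var}_{\Qj}(U_{k_j})$, $v_2={\rm Var}_{Q_u}(U_{k_j})$, $k=k_j$), a random compensator of exact order $K_{nj}$, not zero. Splitting $\tilde Y$ first does not rescue the martingale structure either: the $H^*$-type piece only becomes a martingale difference with respect to the paper's reduced filtration $\mathcal{O}^*_{nj}$, but $\hat\sigma_{nj}$ is built from $\hat P_{nj}$ and depends on the full $O_j$, so $m_j(\sigma_{nj}/\hat\sigma_{nj}-1)$ is \emph{not} $\mathcal{O}^*_{nj}$-measurable and cannot be pulled out of the conditional expectation. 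What does work --- and is what the paper does --- is a purely deterministic bound for the $\hat\sigma$-correction piece: on $\mathcal{L}_n$ each summand is $\lesssim K_{nj}^2/\sqrt{n-q_n}$ (one $K_{nj}$ from \ref{eq:event_En_2}, one from $|h_{nj}|\lesssim K_{nj}$, the rest bounded by \ref{assump:Covariates}--\ref{assump:At-risk prob}, \ref{assump:Variances}--\ref{assump:Conditional_mean_E0}), and
\begin{align*}
 \frac{1}{\sqrt{n-q_n}}\sum_{j=q_n}^{n-1}K_{nj}^2
 = \frac{\log(n\vee p_n)}{\sqrt{n-q_n}}\sum_{j=q_n}^{n-1}\frac{1}{j}
 \le \frac{\log(n\vee p_n)\sqrt{n-q_n}}{q_n}
 \lesssim \frac{\log(n\vee p_n)}{\sqrt{q_n}} \to 0,
\end{align*}
using $n/q_n=O(1)$ and $q_n^{1/4}/\log(n\vee p_n)\to\infty$. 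The quadratic-in-$K_{nj}$ gain from pairing the $\hat\sigma$-correction with $h_{nj}$ is exactly what lets the deterministic route succeed for this term (whereas a deterministic bound on the first summand of (IV), where only one $K_{nj}$ factor is available, would give $O(\sqrt{\log(n\vee p_n)})$ and fail). Once you make this repair, the remaining three $\sigma_{nj}^{-1}$-weighted sums $\sum H^\dagger_{nj}$, $\sum H^*_{nj}$, $\sum\tilde H_{nj}$ go through exactly as you describe.
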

\begin{proof}
Note that $dM_{j+1}(s) = 1(X_{j+1} \in ds, \delta_{j+1}=0)-1(X_{j+1} \geq s)d\Lambda(s)$; we re-express  (III) (from \eqref{eq:decomposition_rootn_Sn_star} in the main text) as
\begin{align*} 
 &\frac{1}{\sqrt{n-q_n}}\sum_{j=q_n}^{n-1}m_j\Big[\frac{1}{\hat{\sigma}_{nj}}-\frac{1}{\sigma_{nj}} + \frac{1}{\sigma_{nj}}\Big] \Big[\frac{(U_{j+1,k_j}-\Qj[U_{k_j}])}{{\rm Var}_{\Qj}(U_{k_j})}-\frac{(U_{j+1,k_j}-Q_u[U_{k_j}])}{{\rm Var}_{Q_u}(U_{k_j})}\Big] \\
 &\hspace{6.2cm} \times \Big[\tilde Y_{j+1} -\int_{\mathcal{T}}E_0(U_{j+1, k_j},s,k_j)dM_{j+1}(s)
 \Big]. \nonumber
\end{align*} 

Along with the fact that $U_k$ is uniformly bounded in \ref{assump:Covariates}, that $\sigma_{nj}$ is uniformly bounded away from zero in \ref{assump:Variances},
\ref{eq:event_Cn_2} and \ref{eq:event_Cn_4} of Lemma \ref{lemma:An_Bn_Cn_prob_to_one}, and
\ref{eq:event_En_2} of Lemma \ref{lemma:An_Bn_Cn_Dn_En_prob_to_one},
the above display further gives that
\begin{align*}
 &\bigg|\frac{1}{\sqrt{n-q_n}}\sum_{j=q_n}^{n-1}m_j\Big[\frac{1}{\hat{\sigma}_{nj}}-\frac{1}{\sigma_{nj}}\Big] \Big[\frac{(U_{j+1,k_j}-\Qj[U_{k_j}])}{{\rm Var}_{\Qj}(U_{k_j})}-\frac{(U_{j+1,k_j}-Q_u[U_{k_j}])}{{\rm Var}_{Q_u}(U_{k_j})}\Big]\\
 &\hspace{5cm} \times \Big[\tilde Y_{j+1} -\int_{\mathcal{T}}E_0(U_{j+1, k_j},s,k_j)dM_{j+1}(s)
 \Big]1_{{\cal L}_n}\bigg| \nonumber\\
 &\quad \lesssim \Big[\frac{\tau}{G(\tau)}+(1+\Lambda(\tau))\sup_{(k,s,u)}\big|E_0(u,s,k)\big|\Big]\frac{\log(n \lor p_n)}{\sqrt{n-q_n}}\sum_{j=q_n}^{n-1}\frac{1}{j} \nonumber\\
 &\quad \le \Big[\frac{\tau}{G(\tau)}+(1+\Lambda(\tau))\sup_{(k,s,u)}\big|E_0(u,s,k)\big|\Big]\frac{\log(n \lor p_n)}{\sqrt{n-q_n}}\log\Big(\frac{n}{q_n}\Big)
 \to 0, \nonumber
\end{align*} 
where the convergence to zero follows by $G(\tau) > 0$ in \ref{assump:Survival function}, that $\sup_{(k,s,u)}\big|E_0(u,s,k)\big|$ is bounded in \ref{assump:Conditional_mean_E0}, and the conditions: $n/q_n=O(1)$ and $q_n^{1/4}/\log(n \lor p_n) \to \infty$.

Combining all the above results, we have that
\begin{align*}
 |\mbox{(III)}1_{{\cal L}_n}| \le &  \bigg|\frac{1}{\sqrt{n-q_n}}\sum_{j=q_n}^{n-1}\frac{m_j}{\sigma_{nj}} h_{nj}(U_{j+1,k_j})\Big[\tilde Y_{j+1} -
 \int_{\mathcal{T}}E_0(U_{j+1, k_j},s,k_j)dM_{j+1}(s)\Big]\bigg|.
\end{align*}
Therefore, we have that
\begin{align}\label{eq:upper_bound_(III)_on_Ln}
  |\mbox{(III)}1_{{\cal L}_n}|
  \le \bigg|\sum_{j=q_n}^{n-1}H^{\dagger}_{n,\,j+1}\bigg| + \bigg|\sum_{j=q_n}^{n-1}H^*_{n,\,j+1}\bigg|
  + \bigg|\sum_{j=q_n}^{n-1}\tilde{H}_{n,\,j+1}\bigg|,
\end{align}
where $H^{\dagger}_{nj}$, $H^*_{nj}$ and $\tilde{H}_{nj}$ are defined in \eqref{eq:decomp_representations}.
To complete the proof, it suffices to show that the three terms on the right-hand-side of \eqref{eq:upper_bound_(III)_on_Ln} are $o_p(1)$.

First note  that for all $n$,
\begin{align}\label{eq:upper_bound_hnj}
  \max_{j \in \{q_n,\ldots,n-1\}}|h_{nj}(U_{j+1,k_{j}})| \lesssim \frac{\sqrt{\log(n \lor p_n)}}{\sqrt{q_n}} \; \mbox{ a.s.}
\end{align}
from the decomposition in \eqref{eq:decomp_for_(III)}, the definition of $h_{nj}$ in \eqref{eq:def_hnj}, and \ref{assump:Covariates}; here the implicit constant in $\lesssim$ does not depend on $n$.
Also, note that $\tilde{Y}_{j+1}-E[\tilde{Y}|U_{j+1,k_j}]$ is bounded almost surely using \ref{assump:Covariates} and \ref{assump:Survival function}. Then, with $\sigma_{nj}$ uniformly bounded away from zero in \ref{assump:Variances}, \eqref{eq:upper_bound_hnj} further implies that
\begin{align*}
  \max_{j}|H^{\dagger}_{nj}| \lesssim \sqrt{\frac{\log(n \lor p_n)}{q_n(n-q_n)}} \equiv B_n,
\end{align*}
where in $\lesssim$ the implicit constant is also independent of $n$. Then for $\varepsilon > 0$,
\begin{align*}
  & {\rm P}\bigg(\,\bigg|\sum_{j=q_n}^{n-1}H^{\dagger}_{n,\,j+1}\bigg| \ge \varepsilon \bigg) \le \varepsilon^{-2}E\bigg[\bigg(\sum_{j=q_n}^{n-1}H^{\dagger}_{n,\,j+1}\bigg)^2\,\bigg] \\
  &= \varepsilon^{-2}\bigg(\sum_{j=q_n}^{n-1}E\big[\big(H^{\dagger}_{n,\,j+1}\big)^2\big] + 2\sum_{q_n\le i< j\le n-1}E\Big[H^{\dagger}_{n,\,i+1}E\Big[H^{\dagger}_{n,\,j+1} \Big| \mathcal{O}_{nj}\Big]\Big]\bigg) \\
  &= \varepsilon^{-2}\sum_{j=q_n}^{n-1}E\big[ \big(H^{\dagger}_{n,\,j+1}\big)^2 \big]
  \lesssim \varepsilon^{-2}(n-q_n)B_n^2 \to 0.
\end{align*}
This shows the first term on the right-hand-side of \eqref{eq:upper_bound_(III)_on_Ln} converges to zero in probability. The second and the last terms on the right-hand-side of \eqref{eq:upper_bound_(III)_on_Ln} can be handled, using similar arguments.
\end{proof}

\begin{lemma} \label{lemma:asymptotic_negligibility_(V)}
Under the conditions of Theorem \ref{Thm:stab_one_step}, $\mbox{(V)}1_{{\cal L}_n}$ is asymptotically negligible.
\end{lemma}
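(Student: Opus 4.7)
The argument splits into the null and alternative cases. Under the null hypothesis $\Psi(P)=0$ we have $\Psi_{k_j}(P)=0$ for every $j$, so each summand in $(V)$ is identically zero and the claim is immediate. Henceforth assume $\Psi(P)>0$, so that \ref{assump:Signal strength} provides a nonempty set $\mathcal{K}_n^*$ of near-optimal predictors separated from $\mathcal{K}_n\setminus\mathcal{K}_n^*$ by a gap of order $\sqrt{\log(n\vee p_n)/q_n}$.

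I would introduce a ``correct selection'' event $\mathcal{G}_n$ on which, for every $j\in\{q_n,\ldots,n-1\}$, both (i) $k_j\in\mathcal{K}_n^*$ and (ii) the sign $m_j$ equals the sign of $\Psi_{k_j}(P)$. The reduction to these two facts is the key step. To prove ${\rm P}(\mathcal{G}_n)\to 1$, I would adapt the deviation arguments from the proof of Lemma \ref{lemma:An_Bn_prob_to_one} (in particular, bounds of the type \ref{eq:conditions_for_Bn_2}--\ref{eq:conditions_for_Bn_3}) to show that on $\mathcal{A}_n$ one has $\max_{k\in\mathcal{K}_n}|\Psi_{\hat G_j,k}(\mathbb{P}_j)-\Psi_k(P)|\lesssim K_{nj}\le K_{nq_n}$ uniformly in $j\ge q_n$. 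Choosing the constant $c$ in \ref{assump:Signal strength} large enough then forces the empirical maximizer $k_j$ to lie in $\mathcal{K}_n^*$, giving (i); since the gap in \ref{assump:Signal strength} also yields $\min_{k\in\mathcal{K}_n^*}|\Psi_k(P)|\ge c K_{nq_n}$ (otherwise it could not exceed a nonnegative supremum by $cK_{nq_n}$), the same deviation bound forces $\Psi_{\hat G_j,k_j}(\mathbb{P}_j)$ to agree in sign with $\Psi_{k_j}(P)$, giving (ii). Finally, the vanishing diameter ${\rm Diam}(\mathcal{K}_n^*)=o(n^{-1/2})$ eventually becomes smaller than $\min_{k\in\mathcal{K}_n^*}|\Psi_k(P)|$, so all $\Psi_k(P)$, $k\in\mathcal{K}_n^*$, share a common sign; consequently on $\mathcal{G}_n$ the factor $m_j$ is constant in $j$.

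With (i), (ii), and the common-sign property, $m_j\Psi_{k_j}(P)=|\Psi_{k_j}(P)|$ and $m_j\Psi(P)=|\Psi_{k_j}(P)|+O({\rm Diam}(\mathcal{K}_n^*))$ on $\mathcal{G}_n$, so uniformly in $j$,
\begin{align*}
\big|m_j[\Psi_{k_j}(P)-\Psi(P)]\big|\,1_{\mathcal{G}_n}\le 2\,{\rm Diam}(\mathcal{K}_n^*)=o(n^{-1/2}).
\end{align*}
Combining this with \ref{eq:event_En_1} of Lemma \ref{lemma:An_Bn_Cn_Dn_En_prob_to_one} (so $\hat\sigma_{nj}^{-1}$ is uniformly bounded on $\mathcal{L}_n$) and $n/q_n=O(1)$,
\begin{align*}
|(V)|\,1_{\mathcal{L}_n\cap\mathcal{G}_n}\lesssim\sqrt{n-q_n}\cdot o(n^{-1/2})=o(1),
\end{align*}
and since ${\rm P}(\mathcal{L}_n\cap\mathcal{G}_n)\to 1$, we conclude $(V)\cdot 1_{\mathcal{L}_n}=o_p(1)$. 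The main obstacle is making the deviation $\max_k|\Psi_{\hat G_j,k}(\mathbb{P}_j)-\Psi_k(P)|\lesssim K_{nj}$ uniform both in $k\in\mathcal{K}_n$ and in $j\in\{q_n,\ldots,n-1\}$; this is handled by the same union-bound-over-$j$ mechanism underlying the event $\mathcal{A}_n$ in Lemma \ref{lemma:replace_Qn_Q}, combined with the condition $q_n^{1/4}/\log(n\vee p_n)\to\infty$ which lets $K_{nj}$ absorb all relevant logarithmic factors.
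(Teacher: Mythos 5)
Your overall strategy matches the paper's: both reduce (V) to showing, on a high-probability event, that the empirical argmax $k_j$ lands in $\mathcal{K}_n^*$ and that $m_j$ recovers the sign of $\Psi_{k_j}(P)$, and then exploit ${\rm Diam}(\mathcal{K}_n^*)=o(n^{-1/2})$. The paper obtains the same two facts simultaneously from the argmax optimality inequality \eqref{eq:bounded_condition_for_(V)} together with the uniform deviation bound $\max_k|\Psi_{\hat G_j,k}(\mathbb{P}_j)-\Psi_k(P)|\lesssim K_{nj}$ developed in \eqref{eq:decomp_Psi_hatG}--\eqref{eq:third_term_decomp}, rather than through a named event $\mathcal{G}_n$, but the substance is the same, and your appeal to \ref{assump:Signal strength} for both the index selection and the sign control is the right mechanism.

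Two steps in your writeup, however, do not hold as written. First, ${\rm Diam}(\mathcal{K}_n^*)\to 0$ does not force a common sign of $\Psi_k(P)$ over $k\in\mathcal{K}_n^*$: the diameter controls $\bigl|\,|\Psi_k(P)|-|\Psi_l(P)|\,\bigr|$ and is insensitive to a sign flip, so two near-optimal indices could have $\Psi_k(P)=a$ and $\Psi_l(P)=-a$ with diameter zero. Second, the equality $m_j\Psi(P)=|\Psi_{k_j}(P)|+O({\rm Diam}(\mathcal{K}_n^*))$ fails whenever $m_j=-1$, in which case $m_j\Psi(P)=-\Psi(P)$ while $|\Psi_{k_j}(P)|\approx\Psi(P)$, so the displayed bound $|m_j[\Psi_{k_j}(P)-\Psi(P)]|\le 2\,{\rm Diam}(\mathcal{K}_n^*)$ breaks down. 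Both slips are avoidable, because neither the common-sign property nor constancy of $m_j$ across $j$ is actually needed: the quantity that the telescoping of (I)--(V) produces is $\hat\sigma_{nj}^{-1}[m_j\Psi_{k_j}(P)-\Psi(P)]$, and on your event $\mathcal{G}_n$ items (i) and (ii) alone give $m_j\Psi_{k_j}(P)=|\Psi_{k_j}(P)|$ with $k_j\in\mathcal{K}_n^*$, hence $|m_j\Psi_{k_j}(P)-\Psi(P)|=\Psi(P)-|\Psi_{k_j}(P)|\le{\rm Diam}(\mathcal{K}_n^*)$. With that bookkeeping correction the argument coincides with the paper's.
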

\begin{proof}
It is trivial to see that $\mbox{(V)}1_{{\cal L}_n}=o_p(1)$ under the null. To verify it under the alternative, we first have
\begin{align*}
  &|\mbox{(V)}1_{{\cal L}_n}| = \bigg|\frac{1}{\sqrt{n-q_n}}\sum_{j=q_n}^{n-1}\frac{m_j}{\hat{\sigma}_{nj}}1_{{\cal L}_n}[\Psi_{k_j}(P) - \Psi(P)]\bigg|\\
  &\quad \leq \max_{j \in \{q_n,\ldots,n-1\}}\left|\frac{\sigma_{nj}}{\hat{\sigma}_{nj}}1_{{\cal L}_n}\right| 
  \frac{1}{\sqrt{n-q_n}}\sum_{j=q_n}^{n-1}\frac{1}{\sigma_{nj}}|\Psi_{k_j}(P) - \Psi(P)|1_{{\cal L}_n}.
\end{align*}
Because $\sigma_{nj}$ is assumed to be bounded away from zero in \ref{assump:Variances} and $\max_{j}\left|\sigma_{nj}1_{{\cal L}_n}/\hat{\sigma}_{nj}\right|$ is bounded above using \ref{eq:event_En_2} of Lemma \ref{lemma:An_Bn_Cn_Dn_En_prob_to_one}, it suffices to show that 
\begin{align*}
  \frac{1}{\sqrt{n-q_n}}\sum_{j=q_n}^{n-1}|\Psi_{k_j}(P) - \Psi(P)|1_{{\cal L}_n} = o_p(1).
\end{align*}
Recall that 
\begin{align*}
  k_j = {\rm arg}\max_{k \in \mathcal{K}_n}\left|\frac{{\rm 
  Cov}_{\mathbb{P}_{j}}(U_k, \delta X/\hat G_j(X))}{{\rm Var}_{\mathbb{P}_{j}}(U_k)}\right| \:\: \mbox{and} \:\:
  m_j = \mbox{sgn}\left[\frac{{\rm  
  Cov}_{\mathbb{P}_{j}}(U_k, \delta X/\hat G_j(X))}{{\rm Var}_{\mathbb{P}_{j}}(U_k)}\right],
\end{align*}
and let 
\begin{align*}
  k_0 = {\rm arg}\max_{k \in \mathcal{K}_n}\left|\frac{{\rm Cov}(U_{k}, T)}{{\rm Var}(U_{k})}\right| \:\: \mbox{and} \:\:
  m_0 = \mbox{sgn}\left[\frac{{\rm Cov}(U_{k_0}, T)}{{\rm Var}(U_{k_0})}\right].
\end{align*}
Because $m{\rm Cov}_{\mathbb{P}_{j}}(U_k, \delta X/\hat G_j(X))/{\rm Var}_{\mathbb{P}_{j}}(U_k)$ is maximized at $(k_j, m_j)$, we observe that
\begin{align} \label{eq:bounded_condition_for_(V)}
  0 &\geq m_0\frac{{\rm 
  Cov}_{\mathbb{P}_{j}}(U_{k_0}, \delta X/\hat G_j(X))}{{\rm Var}_{\mathbb{P}_{j}}(U_{k_0})}-m_j\frac{{\rm 
  Cov}_{\mathbb{P}_{j}}(U_{k_j}, \delta X/\hat G_j(X))}{{\rm Var}_{\mathbb{P}_{j}}(U_{k_j})}\\
  & =\bigg[m_0\frac{{\rm Cov}(U_{k_0}, T)}{{\rm Var}(U_{k_0})}-m_j\frac{{\rm Cov}(U_{k_j}, T)}{{\rm Var}(U_{k_j})}\bigg] + m_0\bigg[\frac{{\rm 
  Cov}_{\mathbb{P}_{j}}(U_{k_0}, \delta X/\hat G_j(X))}{{\rm Var}_{\mathbb{P}_{j}}(U_{k_0})} \nonumber\\
  &\hspace{0.8cm} - \frac{{\rm Cov}(U_{k_0}, T)}{{\rm Var}(U_{k_0})}\bigg]
  - m_j\bigg[\frac{{\rm 
  Cov}_{\mathbb{P}_{j}}(U_{k_j}, \delta X/\hat G_j(X))}{{\rm Var}_{\mathbb{P}_{j}}(U_{k_j})}-\frac{{\rm Cov}(U_{k_j}, T)}{{\rm Var}(U_{k_j})}\bigg] \nonumber\\
  &\geq \Psi(P) - \Psi_{k_j}(P) - 2\max_{k \in \mathcal{K}_n}|\Psi_{\hat{G}_j,k}(\mathbb{P}_j)-\Psi_{G,k}(P)|,\nonumber
\end{align}
where $\Psi_{k}(P)=\Psi_{G,k}(P) \equiv {\rm Cov}(U_k, \delta X/G(X))/{\rm Var}(U_{k})$ by ${\rm Cov}(U_k, T)={\rm Cov}(U_k, \delta X/G(X))$.
Moreover,
\begin{align} \label{eq:decomp_Psi_hatG}
  &|\Psi_{\hat{G}_j,k}(\mathbb{P}_j)-\Psi_{G,k}(P)|1_{{\cal L}_n}
  = \left|\frac{{\rm Cov}_{\mathbb{P}_{j}}(U_{k}, \delta X/\hat G_j(X))}{{\rm Var}_{\mathbb{P}_{j}}(U_{k})} -
  \frac{{\rm Cov}(U_{k}, \delta X/G(X))}{{\rm Var}(U_{k})} \right|1_{{\cal L}_n} \\
  & = \frac{1}{{\rm Var}_{\mathbb{P}_j}(U_{k})}1_{{\cal L}_n}\Big| 
  {\rm Cov}_{\mathbb{P}_j}(U_{k}, \delta X/\hat G_j(X))
  - {\rm Cov}_{\mathbb{P}_j}(U_{k}, \delta X/G(X)) \nonumber \\
  & \hspace{4cm} 
  + {\rm Cov}_{\mathbb{P}_j}(U_{k}, \delta X/G(X))
  - {\rm Cov}(U_{k}, \delta X/G(X))
  \Big| \nonumber \\
  & \quad + 1_{{\cal L}_n}\Big|{\rm Cov}(U_{k}, \delta X/G(X)) \Big|\left|\frac{1}{{\rm Var}_{\mathbb{P}_j}(U_{k})}-\frac{1}{{\rm Var}(U_{k})}\right| \nonumber \\
  & \leq 1_{{\cal L}_n}\frac{| 
  {\rm Cov}_{\mathbb{P}_j}(U_{k}, \delta X/\hat G_j(X)-\delta X/G(X))|}{{\rm Var}_{\mathbb{P}_j}(U_k)} \nonumber \\
  &\quad + 1_{{\cal L}_n}\frac{| 
  {\rm Cov}_{\mathbb{P}_j}(U_{k}, \delta X/G(X))
  - {\rm Cov}(U_{k}, \delta X/G(X))|}{{\rm Var}_{\mathbb{P}_j}(U_k)} \nonumber \\
  &\quad + 1_{{\cal L}_n}|{\rm Cov}(U_{k}, \delta X/G(X))|\left|\frac{1}{{\rm Var}_{\mathbb{P}_j}(U_{k})}-\frac{1}{{\rm Var}(U_{k})}\right|,\nonumber
\end{align}
where the second equality holds by using the identity $a_nb_n-ab=(a_n-a)b_n+(b_n-b)a$, and the ensuing step follows by the triangle inequality.

Below we further tackle each term in the upper bound of $|\Psi_{\hat{G}_j,k}(\mathbb{P}_j)-\Psi_{G,k}(P)|1_{{\cal L}_n}$
from \eqref{eq:decomp_Psi_hatG}. To address the first term,
\begin{align} \label{eq:first_term_decomp}
  &1_{{\cal L}_n}\frac{|{\rm Cov}_{\mathbb{P}_j}(U_{k}, \delta X/\hat G_j(X)-\delta X/G(X))|}{{\rm Var}_{\mathbb{P}_j}(U_k)} \\
  &\quad \leq 1_{{\cal L}_n}\frac{1}{{\rm Var}_{\mathbb{P}_j}(U_k)}
  \bigg[\mathbb{P}_j\bigg|U_k\tilde{Y}\sum_{r=1}^{\infty}\Big(\sup_{t}\Big|\frac{\hat{G}_j(t)}{G(t)}-1\big|\,\Big)^{r}\bigg| \nonumber \\
  &\hspace{3.5cm} + 
  \mathbb{P}_j|U_k|\mathbb{P}_j\bigg|\tilde{Y}\sum_{r=1}^{\infty}\Big(\sup_{t}\Big|\frac{\hat{G}_j(t)}{G(t)}-1\Big|\,\Big)^{r} \bigg|\;\bigg] \nonumber \\
  &\quad \leq \frac{1}{{\rm Var}_{\mathbb{P}_j}(U_k)}
  \bigg[\mathbb{P}_j\bigg|U_k\tilde{Y}\sum_{r=1}^{\infty}\bigg(\sqrt{\frac{\log n}{j}}\;\bigg)^{r}\bigg| + 
  \mathbb{P}_j|U_k|\mathbb{P}_j\bigg|\tilde{Y}\sum_{r=1}^{\infty}\bigg(\sqrt{\frac{\log n}{j}}\;\bigg)^{r}\bigg|\;\bigg] \nonumber \\
  &\quad \leq \frac{1}{{\rm Var}_{\mathbb{P}_j}(U_k)}\bigg[\sum_{r=1}^{\infty}
  \bigg(\sqrt{\frac{\log n}{q_n}}\;\bigg)^r\,\bigg]\big[\mathbb{P}_j|U_k\tilde{Y}| + \mathbb{P}_j|U_k|\mathbb{P}_j|\tilde{Y}|\;\big] \nonumber \\
  &\quad \lesssim \bigg(\frac{\sqrt{\log n}}{\sqrt{q_n}-\sqrt{\log n}}\bigg)
  \max_{j,k}\bigg[\mathbb{P}_j\Big|U_k\frac{\delta X}{G(X)}\Big|
  + \mathbb{P}_j|U_k|\mathbb{P}_j\Big|\frac{\delta X}{G(X)}\Big| \bigg] \to 0 \mbox{ a.s.}, \nonumber 
\end{align}
where the first inequality holds by the triangle inequality and Taylor expansion with respect to $\hat{G}_j$ around $G$; the second inequality results from $\sup_{t}\big|\hat{G}_j(t)/G(t)-1\big| \leq \sqrt{\log n /j}$ given by the occurrence of ${\cal L}_n$ and
Lemma \ref{lemma:Hn_prob_to_one}; the third inequality holds by
$j \ge q_n$; the last inequality follows from the fact that ${\rm Var}_{\mathbb{P}_j}(U_k)={\rm Var}_{\mathbb{Q}_j}(U_k)$ that we have showed bounded away from zero in \ref{eq:event_Cn_2} of Lemma \ref{lemma:An_Bn_Cn_prob_to_one}, and the final convergence to zero results from the uniform boundedness of $\mathbb{P}_j|U_k\delta X/G(X)|$, $\mathbb{P}_j|U_k|$ and $\mathbb{P}_j|\delta X/G(X)|$ almost surely, which is implied by \ref{assump:Covariates}, $X \le \tau$, $G(\tau)>0$ in \ref{assump:Survival function} and the condition $q_n^{1/4}/\log(n \lor p_n) \to \infty$.
This gives the first term on the right-hand-side of \eqref{eq:decomp_Psi_hatG} is $o_p(1)$ uniformly in $(j, k)$.

To tackle the second term on the right-hand-side of \eqref{eq:decomp_Psi_hatG}, again using the identity $a_nb_n-ab=(a_n-a)b_n+(b_n-b)a$ and the triangle inequality gives that for some positive finite $(j,n)$-independent constant $\zeta'_1$,
\begin{align} \label{eq:second_term_decomp}
  &1_{{\cal L}_n}\frac{|{\rm Cov}_{\mathbb{P}_j}(U_{k}, \delta X/G(X))
  - {\rm Cov}(U_{k}, \delta X/G(X))|}{{\rm Var}_{\mathbb{P}_j}(U_k)}\\
  &\leq
  \frac{1_{{\cal L}_n}}{\min_{(j,k)}{\rm Var}_{\mathbb{P}_j}(U_k)}
  \max_{k}\Big\{\big|(\mathbb{P}_j-P)[U_{k}\delta X/G(X)]\big| +
  \big|(\mathbb{Q}_j-Q_u)[U_{k}]P[\delta X/G(X)]\big| \nonumber\\
  &\hspace{4.5cm} + \big|(\mathbb{P}_j-P)[\delta X/G(X)]\mathbb{Q}_j[U_{k}]\big| \Big\} \nonumber \\
  &\le \frac{\zeta}{\tilde{\varepsilon}}\Big[1 + \big|P[\delta X/G(X)]\big| + \max_{(j,k)}\big\{\big|\mathbb{Q}_j[U_{k}]\big|\big\}\Big] K_{nj} \equiv \zeta'_1 K_{nj}, \nonumber
\end{align}
where the second inequality holds given the occurrence of 
${\cal A}_n \supset {\cal L}_n$ for a sufficiently large constant $\zeta$ that does not depend on $(j,n)$, and that
$\min_{(j,k)}{\rm Var}_{\mathbb{P}_j}(U_k)=\min_{(j,k)}{\rm Var}_{\mathbb{Q}_j}(U_k)$ is bounded away from zero by \ref{eq:event_Cn_2} of Lemma \ref{lemma:An_Bn_Cn_prob_to_one}, so that $\min_{(j,k)}{\rm Var}_{\mathbb{P}_j}(U_k) > \tilde{\varepsilon}$ for some sufficiently small positive constant $\tilde{\varepsilon}$ that is independent of $(j,n)$.
Similarly for some positive $(j,n)$-independent constant $\zeta'_2$, the third term on the right-hand-side of \eqref{eq:decomp_Psi_hatG} is 
\begin{align} \label{eq:third_term_decomp}
  &1_{{\cal L}_n}\big|{\rm Cov}(U_{k}, \delta X/G(X))\big|\left|\frac{1}{{\rm Var}_{\mathbb{P}_j}(U_{k})}-\frac{1}{{\rm Var}(U_{k})}\right|\\
  &\quad = 1_{{\cal L}_n}\big|{\rm Cov}(U_{k}, \delta X/G(X))\big|\frac{|{\rm Var}_{\mathbb{Q}_j}(U_{k})-{\rm Var}_{Q_u}(U_{k})|}{{\rm Var}_{Q_u}(U_{k}){\rm Var}_{\mathbb{Q}_j}(U_{k})} \nonumber \\
  &\quad \leq 1_{{\cal L}_n}\max_{k}\big|{\rm Cov}(U_{k}, \delta X/G(X))\big|\frac{\max_{k}|{\rm Var}_{\mathbb{Q}_j}(U_{k})-{\rm Var}_{Q_u}(U_{k})|}{\min_k{\rm Var}_{Q_u}(U_{k})\min_{(j,k)}{\rm Var}_{\mathbb{Q}_j}(U_{k})} \nonumber\\
  &\quad \le \zeta'_2K_{nj}, \nonumber
\end{align}
where the last step holds by the presence of
$1_{{\cal L}_n}$ in which $\min_{(j,k)}{\rm Var}_{\mathbb{Q}_j}(U_{k})$ is bounded away from zero by \ref{eq:event_Cn_2} of Lemma \ref{lemma:An_Bn_Cn_prob_to_one}, along with that $\max_{k}\big|{\rm Cov}(U_{k}, \delta X/G(X))\big|$ is assumed to be bounded using \ref{assump:Covariates} and \ref{assump:Survival function}, and
$\min_k{\rm Var}_{Q_u}(U_{k})$ is assumed to be bounded away from zero in \ref{assump:Variances}.

Let $\zeta'=\zeta'_1+\zeta'_2$.
Collecting  the above  results in \eqref{eq:decomp_Psi_hatG}--\eqref{eq:third_term_decomp} gives that
$|\Psi_{\hat{G}_j,k}(\mathbb{P}_j)-\Psi_{G,k}(P)|$ is bounded above by $\zeta'K_{nj}$ and then $\zeta'\sqrt{\log(n \lor p_n)/q_n}$ for $j \ge q_n$.
Inserting this result back into \eqref{eq:bounded_condition_for_(V)} leads to
\begin{align*}
  |\Psi_{k_j}(P)-\Psi(P)|1_{{\cal L}_n} \leq 2\zeta' K_{nj} \le 2\zeta'K_{nq_n}.
\end{align*}
Together with $\Psi(P) \equiv \max_{k}|\Psi_{k}(P)|$ in \eqref{eq:def_parameter}, the above display implies that on the event ${\cal L}_n$,
\begin{align} \label{eq:deviance_condition_for_(V)}
  0 \le \Psi(P)-|\Psi_{k_j}(P)| \le |\Psi_{k_j}(P)-\Psi(P)| < \epsilon K_{nq_n},
\end{align}
where $\epsilon >2\zeta' $ is chosen in connection with \ref{assump:Signal strength}. 
Recall that $k_0$ is the label of the predictor that attains $\Psi(P)$ under the alternatives, so it is easy to see that $k_0 \in \mathcal{K}^*_n$, where $\mathcal{K}^*_n$ contains the predictors that have stronger association with $T$ than the other predictors in $\mathcal{K}_n$, as indicated in \ref{assump:Signal strength}. Therefore \eqref{eq:deviance_condition_for_(V)} implies that $k_j \in \mathcal{K}^*_n$, because $\Psi(P)-|\Psi_{k_j}(P)|$ is then under the threshold specified in \ref{assump:Signal strength}.
By \ref{assump:Signal strength}, we conclude that
\begin{align*}
  \frac{1}{\sqrt{n-q_n}}\sum_{j=q_n}^{n-1}|\Psi_{k_j}(P) - \Psi(P)|1_{{\cal L}_n}
  = O_p(\sqrt{n-q_n}\,{\rm Diam}(\mathcal{K}^*_n)\,) = o_p(1).
\end{align*}
\end{proof}

\section{Supplementary results for simulation studies and real data application} \label{sec:simulation_data_analysis_results}
In Tables \ref{tab:most_correlated_features}-\ref{tab:features_codes} we report the names of the identified features based on the various competing approaches, separately  for the Subtype B and C datasets. For the stabilized one-step estimator, in Subtype B the most correlated binary feature is the interaction of $hxb2.677.K.1mer$ and $hxb2.460.sequon\, actual.1mer$; the most correlated count feature is the interaction of $sequons.total.gp120$ and $sequons.total.v5$.
In other words, the presence of specific amino acid (coded by $K$) at position $677$ and the presence of some enzymatic processes starting at position $460$ are found to  have a synergistic  influence on $\mbox{IC}_{50}$.
Similarly, the change in $\mbox{IC}_{50}$ appears to be simultaneously affected by the total numbers of observed chemical reactions in the region of gp120 and of V5.

\begin{table}[htb]
\centering
\caption{The most correlated features identified by the competing methods, according to data type. The interactions are coded as in Table \ref{tab:features_codes}, with $\alpha \times \beta$ denoting the interaction between $\alpha$ and $\beta$.}
\begin{tabular}[t]{p{2cm}ccc} %
\toprule
\parbox{2cm}{} & Method & \multicolumn{2}{c}{Feature} \\ 
\cmidrule{3-4}
 &  & {\centering Binary predictors} & {\centering Count predictors} \\
\midrule
\multirow{2}{*}{Subtype B} &
 Bonferroni One-Step & $\alpha_2 \times \alpha_6$ & $\gamma_3 \times \delta$ \\
\cmidrule{2-4}
 & Stabilized One-Step & $\alpha_3 \times \beta_2$ & $\gamma_2 \times \gamma_4$\\
\midrule 
\multirow{2}{*}{Subtype C} &
 Bonferroni One-Step & $\alpha_1 \times \alpha_4$ & $\delta \times \zeta$\\
\cmidrule{2-4}
 & Stabilized One-Step & $\alpha_5 \times \beta_1$ & $\gamma_1 \times \epsilon$\\
\bottomrule
\end{tabular}
\label{tab:most_correlated_features}
\end{table}

\begin{table}[htb]
\centering
\caption{Coding of feature names used in Table \ref{tab:most_correlated_features}
based on  source code for \cite{Magaret2019}.}
\begin{tabular}{|ll|}
\hline
$\alpha_1 \colon hxb2.389.G.1mer$ & $\alpha_2 \colon hxb2.130.K.1mer$ \\
$\alpha_3 \colon hxb2.677.K.1mer$ & $\alpha_4 \colon hxb2.462.N.1mer$ \\
$\alpha_5 \colon hxb2.363.S.1mer$ & $\alpha_6 \colon hxb2.132.T.1mer$\\ \hline
$\beta_1 \colon hxb2.142.sequon\, actual.1mer$ & $\beta_2 \colon hxb2.460.sequon\, actual.1mer$ \\ \hline
$\gamma_1 \colon sequons.total.env$ & $\gamma_2 \colon sequons.total.gp120$\\
$\gamma_3 \colon sequons.total.loop.e$ & $\gamma_4 \colon sequons.total.v5$ \\ \hline 
$\delta \colon cysteines.total.gp120$ & $\epsilon \colon length.env$ \\
$\zeta \colon taylor.small.total.cd4$ & \\ \hline
\end{tabular}
\label{tab:features_codes}
\end{table}

\begin{figure}[H]
\begin{center}
 \includegraphics[width=0.9\textwidth]{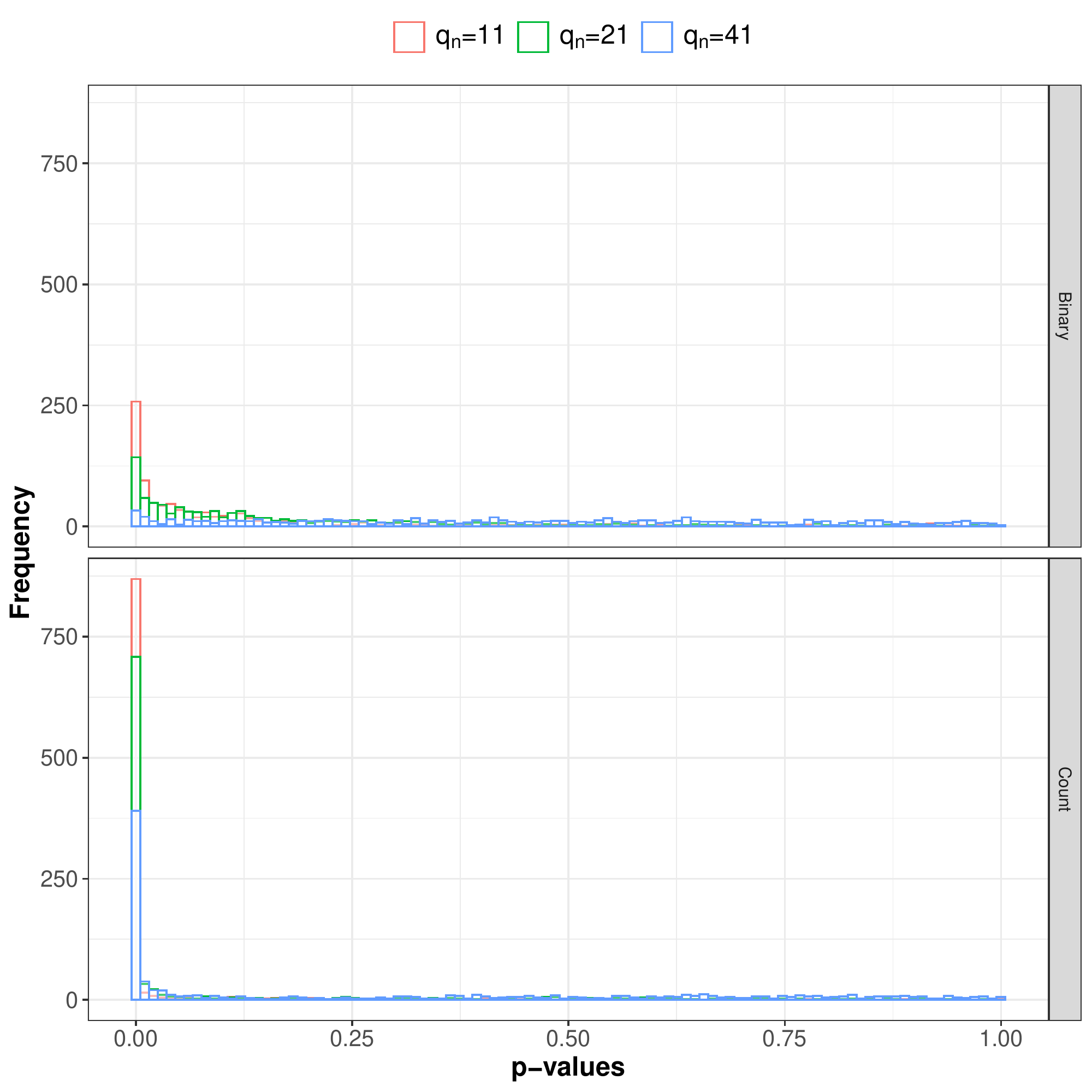}
 \caption{Histogram of p-values obtained by applying the stabilized one-step estimator to $1000$ random orderings of the Subtype B data for various values of $q_n$,  separated according to binary and count predictors.}
 \label{fig:histogram_1000pvalues_subtypeB}
 \end{center}
\end{figure}

\begin{figure}[H]
\begin{center}
 \includegraphics[width=0.9\textwidth]{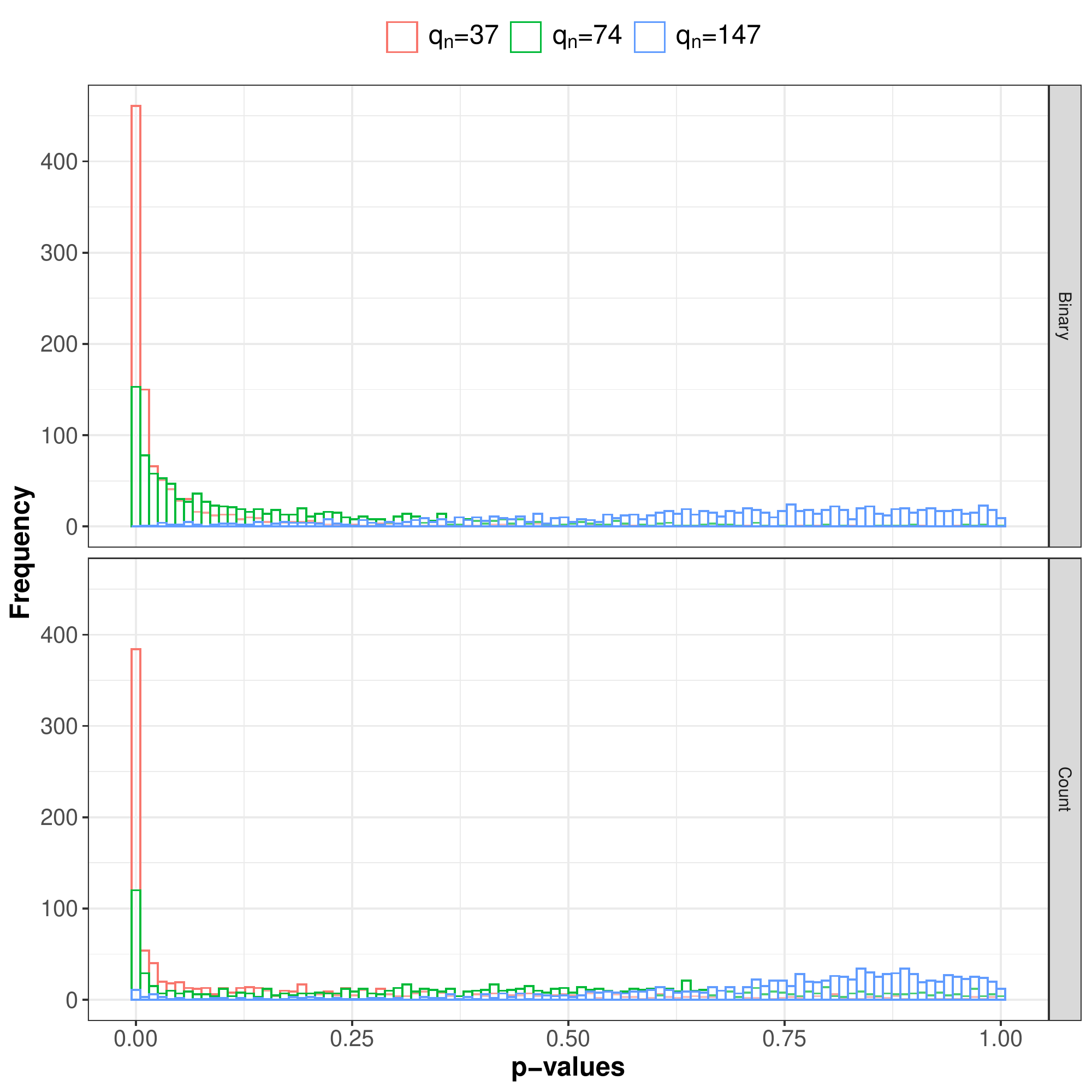}
\caption{As in Figure \ref{fig:histogram_1000pvalues_subtypeB}, except for the data on virus subtype C.}
\label{fig:histogram_1000pvalues_subtypeC}
 \end{center}
\end{figure}

\begin{figure}[H]
\begin{center}
 \includegraphics[width=0.9\textwidth]{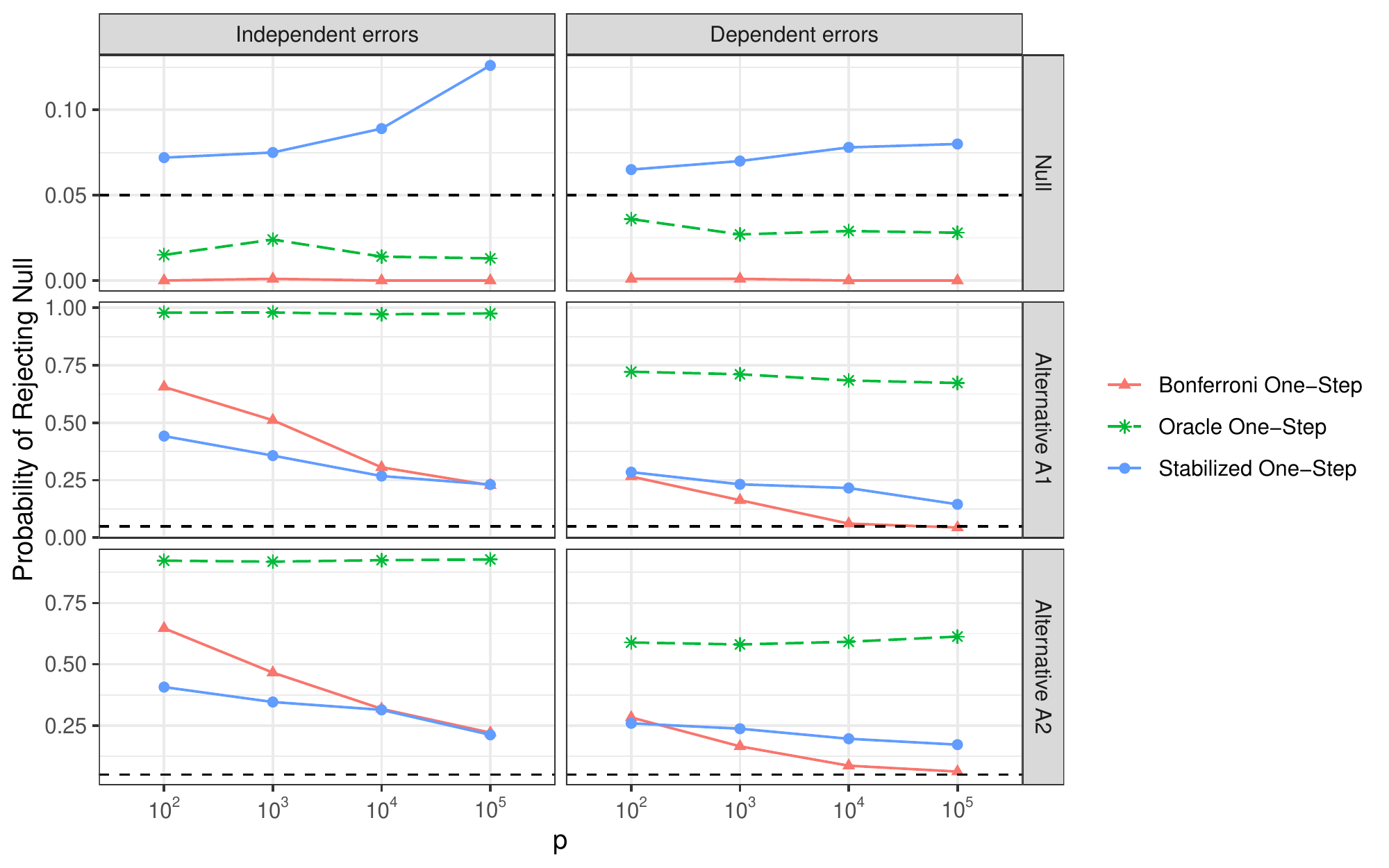}
 \caption{As in Figure \ref{fig:empirical_rejectionrate_lightcensoring_partialsample}, except under heavy censoring ($30 \%$).}
 \label{fig:empirical_rejectionrate_heavycensoring_partialsample}
\end{center}
\end{figure}

\begin{figure}[H]
\begin{center}
 \includegraphics[width=0.9\textwidth]{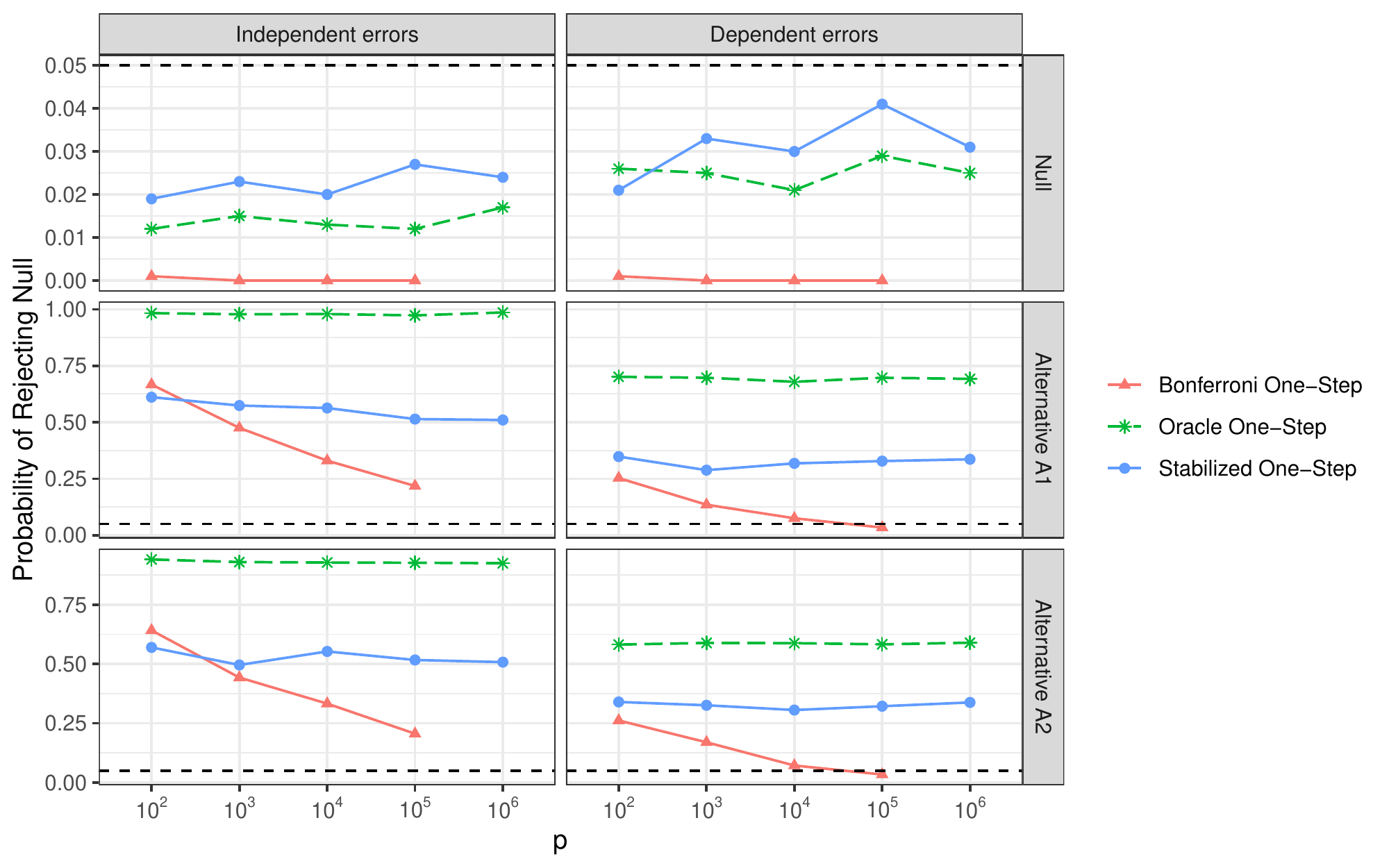}
 \caption{As in Figure \ref{fig:empirical_rejectionrate_heavycensoring_partialsample}, except using the whole sample to estimate nuisance parameters $E_0$ and $Q_u$ that are estimated by the partial sample $q_n=n/2$ in Fig \ref{fig:empirical_rejectionrate_heavycensoring_partialsample}.}
 \label{fig:empirical_rejectionrate_heavycensoring_wholesample}
\end{center}
\end{figure}
\end{appendix}

\vspace{1in}
\section*{Acknowledgements}  
We thank Peter Gilbert for suggesting the application in Section \ref{sec:real_data_application}.
AL was supported by the National Institutes of Health (NIH) under award number DP2-LM013340. IWM was supported by NIH under award 1R01 AG062401 and by the National Science Foundation (NSF) under award DMS-2112938. The content is solely the responsibility of the authors and does not necessarily represent the official views of NIH or NSF.

\renewcommand\bibname{}
\bibliographystyle{abbrvnat}
\bibliography{refs_maxCorrSurv}

\begin{thebibliography}{70}
\providecommand{\natexlab}[1]{#1}
\providecommand{\url}[1]{\texttt{#1}}
\expandafter\ifx\csname urlstyle\endcsname\relax
  \providecommand{\doi}[1]{doi: #1}\else
  \providecommand{\doi}{doi: \begingroup \urlstyle{rm}\Url}\fi

\bibitem[Antoniadis et~al.(2010)Antoniadis, Fryzlewicz, and
  Letu{\'{e}}]{Antoniadis2010}
A.~Antoniadis, P.~Fryzlewicz, and F.~Letu{\'{e}}.
\newblock The {D}antzig selector in {C}ox's proportional hazards model.
\newblock \emph{Scandinavian Journal of Statistics}, 37:\penalty0 531--552,
  2010.

\bibitem[Bae et~al.(2010)Bae, Jun, and Levental]{Bae2010}
J.~Bae, D.~Jun, and S.~Levental.
\newblock {The uniform CLT for martingale difference arrays under the uniformly
  integrable entropy}.
\newblock \emph{Bulletin of the Korean Mathematical Society}, 47\penalty0
  (1):\penalty0 39--51, 2010.

\bibitem[Binder et~al.(2011)Binder, Porzelius, and Schumacher]{Binder2011}
H.~Binder, C.~Porzelius, and M.~Schumacher.
\newblock {An overview of techniques for linking high-dimensional molecular
  data to time-to-event endpoints by risk prediction models.}
\newblock \emph{Biometrical Journal}, 53:\penalty0 170--189, 2011.

\bibitem[B{\o}velstad et~al.(2009)B{\o}velstad, Nyg{\aa}rd, and
  Borgan]{Bovelstad2009}
H.~M. B{\o}velstad, S.~Nyg{\aa}rd, and {\O}.~Borgan.
\newblock {Survival prediction from clinico-genomic models--A comparative
  study.}
\newblock \emph{BMC bioinformatics}, 10:\penalty0 Article 413, 2009.

\bibitem[Bradic et~al.(2011)Bradic, Fan, and Jiang]{Bradic2011}
J.~Bradic, J.~Fan, and J.~Jiang.
\newblock {Regularization for Cox's proportional hazards model with
  NP-dimensionality}.
\newblock \emph{Annals of Statistics}, 39:\penalty0 3092--3120, 2011.

\bibitem[Buckley and James(1979)]{Buckley1979}
J.~Buckley and I.~James.
\newblock {Linear regression with censored data}.
\newblock \emph{Biometrika}, 66:\penalty0 429--436, 1979.

\bibitem[Bunea and McKeague(2005)]{Bunea2005}
F.~Bunea and I.~W. McKeague.
\newblock {Covariate selection for semiparametric hazard function regression
  models}.
\newblock \emph{Journal of Multivariate Analysis}, 92:\penalty0 186--204, 2005.

\bibitem[Cai et~al.(2009)Cai, Huang, and Tian]{Cai2009}
T.~Cai, J.~Huang, and L.~Tian.
\newblock {Regularized estimation for the accelerated failure time model}.
\newblock \emph{Biometrics}, 65:\penalty0 394--404, 2009.

\bibitem[Datta et~al.(2007)Datta, Le-Rademacher, and Datta]{Datta2007}
S.~Datta, J.~Le-Rademacher, and S.~Datta.
\newblock {Predicting patient survival from microarray data by accelerated
  failure time modeling using partial least squares and LASSO.}
\newblock \emph{Biometrics}, 63:\penalty0 259--271, 2007.

\bibitem[Devlin et~al.(1975)Devlin, Gnanadesikan, and Kettenring]{Devlin1975}
S.~J. Devlin, R.~Gnanadesikan, and J.~R. Kettenring.
\newblock {Robust estimation and outlier detection with correlation
  coefficients}.
\newblock \emph{Biometrika}, 62\penalty0 (3):\penalty0 531--545, 1975.

\bibitem[Engler and Li(2009)]{Engler2009}
D.~Engler and Y.~Li.
\newblock {Survival analysis with high-dimensional covariates: An application
  in microarray studies}.
\newblock \emph{Statistical Applications in Genetics and Molecular Biology},
  8:\penalty0 Article 14, 2009.

\bibitem[Fan and Li(2002)]{Fan2002}
J.~Fan and R.~Li.
\newblock {Variable selection for Cox's proportional hazards model and frailty
  model}.
\newblock \emph{Annals of Statistics}, 30:\penalty0 74--99, 2002.

\bibitem[Fan et~al.(2010)Fan, Feng, and Wu]{Fan2010}
J.~Fan, Y.~Feng, and Y.~Wu.
\newblock {High-dimensional variable selection for Cox's proportional hazards
  model}.
\newblock \emph{Borrowing Strength: Theory Powering Applications -- A
  Festschrift for Lawrence D. Brown. Institute of Mathematical Statistics;
  Beachwood OH}, 6:\penalty0 70--86, 2010.

\bibitem[Fang et~al.(2017)Fang, Ning, and Liu]{Fang2016}
E.~X. Fang, Y.~Ning, and H.~Liu.
\newblock {Testing and confidence intervals for high dimensional proportional
  hazards models}.
\newblock \emph{Journal of the Royal Statistical Society: Series B (Statistical
  Methodology)}, 79:\penalty0 1415--1437, 2017.

\bibitem[Gaenssler et~al.(1978)Gaenssler, Strobel, and Stute]{Gaenssler1978}
P.~Gaenssler, J.~Strobel, and W.~Stute.
\newblock On central limit theorems for martingale triangular arrays.
\newblock \emph{Acta Math Hungar}, 31\penalty0 (3):\penalty0 205--216, 1978.

\bibitem[Gilbert et~al.(2017)Gilbert, Juraska, DeCamp, Karuna, Edupuganti,
  Mgodi, and et~al.]{Gilbert2017}
P.~Gilbert, M.~Juraska, A.~DeCamp, S.~Karuna, S.~Edupuganti, N.~Mgodi, and
  et~al.
\newblock Basis and statistical design of the passive hiv-1 antibody mediated
  prevention (amp) test-of-concept efficacy trials.
\newblock \emph{Statistical Communications in Infectious Diseases}, 9\penalty0
  (1), 2017.

\bibitem[Gorst-Rasmussen and Scheike(2013)]{Gorst-Rasmussen2013}
A.~Gorst-Rasmussen and T.~Scheike.
\newblock {Independent screening for single-index hazard rate models with
  ultra-high dimensional features}.
\newblock \emph{Journal of the Royal Statistical Society: Series B (Statistical
  Methodology)}, 75:\penalty0 217--245, 2013.

\bibitem[He et~al.(2013)He, Wang, and Hong]{He2013}
X.~He, L.~Wang, and H.~G. Hong.
\newblock {Quantile-adaptive model-free variable screening for high-dimensional
  heterogeneous data}.
\newblock \emph{Annals of Statistics}, 41:\penalty0 342--369, 2013.

\bibitem[Hong et~al.(2018{\natexlab{a}})Hong, Chen, Christiani, and
  Li]{Hong2017}
H.~G. Hong, X.~Chen, D.~C. Christiani, and Y.~Li.
\newblock {Integrated powered density: screening ultra-high dimensional
  covariates with survival outcomes}.
\newblock \emph{Biometrics}, 74:\penalty0 421--429, 2018{\natexlab{a}}.

\bibitem[Hong et~al.(2018{\natexlab{b}})Hong, Kang, and Li]{Hong2016}
H.~G. Hong, J.~Kang, and Y.~Li.
\newblock {Conditional screening for ultra-high dimensional covariates with
  survival outcomes}.
\newblock \emph{Lifetime Data Analysis}, 24:\penalty0 45--71,
  2018{\natexlab{b}}.

\bibitem[Hong et~al.(2020)Hong, Chen, Kang, and Li]{Hong2020}
H.~G. Hong, X.~Chen, J.~Kang, and Y.~Li.
\newblock {The {$L_q$}-norm learning for ultrahigh-dimensional survival data:
  an integrative framework}.
\newblock \emph{Statistica Sinica}, 30\penalty0 (3):\penalty0 1213--1233, 2020.

\bibitem[Huang and Ma(2010)]{Huang2010}
J.~Huang and S.~Ma.
\newblock {Variable selection in the accelerated failure time model via the
  bridge method}.
\newblock \emph{Lifetime Data Analysis}, 16:\penalty0 176--195, 2010.

\bibitem[Huang et~al.(2006)Huang, Ma, and Xie]{Huang2006}
J.~Huang, S.~Ma, and H.~Xie.
\newblock {Regularized estimation in the accelerated failure time model with
  high-dimensional covariates}.
\newblock \emph{Biometrics}, 62:\penalty0 813--820, 2006.

\bibitem[Huang et~al.(2019)Huang, McKeague, and Qian]{Huang2019}
T.-J. Huang, I.~W. McKeague, and M.~Qian.
\newblock Marginal screening for high-dimensional predictors of survival
  outcomes.
\newblock \emph{Statistica Sinica}, 29:\penalty0 2105--2139, 2019.

\bibitem[Jin et~al.(2003)Jin, Lin, Wei, and Ying]{Jin2003}
Z.~Jin, D.~Y. Lin, L.~J. Wei, and Z.~Ying.
\newblock {Rank-based inference for the accelerated failure time model}.
\newblock \emph{Biometrika}, 90:\penalty0 341--353, 2003.

\bibitem[Johnson(2008)]{Johnson2008}
B.~A. Johnson.
\newblock {Variable selection in semiparametric linear regression with censored
  data}.
\newblock \emph{Journal of the Royal Statistical Society: Series B (Statistical
  Methodology)}, 70:\penalty0 351--370, 2008.

\bibitem[Johnson et~al.(2008)Johnson, Lin, and Zeng]{Johnson2008a}
B.~A. Johnson, D.~Y. Lin, and D.~Zeng.
\newblock {Penalized estimating functions and variable selection in
  semiparametric regression models}.
\newblock \emph{Journal of the American Statistical Association}, 103:\penalty0
  672--680, 2008.

\bibitem[Kosorok(2008)]{Kosorok2008}
M.~R. Kosorok.
\newblock \emph{{Introduction to Empirical Processes and Semiparametric
  Inference}}.
\newblock Springer Series in Statistics. Springer, 2008.

\bibitem[Koul et~al.(1981)Koul, Susarla, and {Van Ryzin}]{Koul1981}
H.~Koul, V.~Susarla, and J.~{Van Ryzin}.
\newblock {Regression analysis with randomly right-censored data}.
\newblock \emph{The Annals of Statistics}, 9\penalty0 (6):\penalty0 1276--1288,
  1981.

\bibitem[Kundu et~al.(2000)Kundu, Majumdar, and Mukherjee]{Kundu2000}
S.~Kundu, S.~Majumdar, and K.~Mukherjee.
\newblock Central limit theorems revisited.
\newblock \emph{Statistics \& Probability Letters}, 47\penalty0 (3):\penalty0
  265--275, 2000.

\bibitem[Lai and Ying(1991{\natexlab{a}})]{Lai1991}
T.~L. Lai and Z.~Ying.
\newblock {Large sample theory of a modified Buckley-James estimator for
  regression analysis with censored data}.
\newblock \emph{The Annals of Statistics}, 19:\penalty0 1370--1402,
  1991{\natexlab{a}}.

\bibitem[Lai and Ying(1991{\natexlab{b}})]{Lai1991a}
T.~L. Lai and Z.~Ying.
\newblock {Rank regression methods for left-truncated and right-censored data}.
\newblock \emph{The Annals of Statistics}, 19:\penalty0 531--556,
  1991{\natexlab{b}}.

\bibitem[Li et~al.(2016)Li, Zheng, Peng, and Huang]{Li2016}
J.~Li, Q.~Zheng, L.~Peng, and Z.~Huang.
\newblock {Survival impact index and ultra-high dimensional model-free
  screening with survival outcomes}.
\newblock \emph{Biometrics}, 72:\penalty0 1145--1154, 2016.

\bibitem[Li et~al.(2014)Li, Dicker, and Zhao]{Li2014}
Y.~Li, L.~Dicker, and S.~D. Zhao.
\newblock {The Dantzig selector for censored linear regression models}.
\newblock \emph{Statistica Sinica}, 24:\penalty0 251--268, 2014.

\bibitem[Liu et~al.(2020)Liu, Chen, and Li]{LiuChenLi2020}
Y.~Liu, X.~Chen, and G.~Li.
\newblock A new joint screening method for right-censored time-to-event data
  with ultra-high dimensional covariates.
\newblock \emph{Statistical Methods in Medical Research}, 29:\penalty0
  1499--1513, 2020.

\bibitem[Lockhart et~al.(2014)Lockhart, Taylor, Tibshirani, and
  Tibshirani]{Lockhart2014}
R.~Lockhart, J.~Taylor, R.~J. Tibshirani, and R.~Tibshirani.
\newblock {A significance test for the lasso}.
\newblock \emph{Annals of Statistics}, 42:\penalty0 413--468, 2014.

\bibitem[Luedtke and van~der Laan(2018)]{Luedtke2018}
A.~R. Luedtke and M.~J. van~der Laan.
\newblock Parametric-rate inference for one-sided differentiable parameters.
\newblock \emph{Journal of American Statistical Association}, 113:\penalty0
  780--788, 2018.

\bibitem[Ma and Du(2012)]{Ma2012}
S.~Ma and P.~Du.
\newblock {Variable selection in partly linear regression model with diverging
  dimensions for right censored data.}
\newblock \emph{Statistica Sinica}, 22:\penalty0 1003--1020, 2012.

\bibitem[Magaret et~al.(2019)Magaret, Benkeser, Williamson, Borate, Carpp,
  Georgiev, Setliff, Dingens, Simon, Carone, Simpkins, Montefiori, Alter, Yu,
  Juraska, Edlefsen, Karuna, Mgodi, Edugupanti, and Gilbert]{Magaret2019}
C.~A. Magaret, D.~C. Benkeser, B.~D. Williamson, B.~R. Borate, L.~N. Carpp,
  I.~S. Georgiev, I.~Setliff, A.~S. Dingens, N.~Simon, M.~Carone, C.~Simpkins,
  D.~Montefiori, G.~Alter, W.-H. Yu, M.~Juraska, P.~T. Edlefsen, S.~Karuna,
  N.~M. Mgodi, S.~Edugupanti, and P.~B. Gilbert.
\newblock Prediction of vrc01 neutralization sensitivity by hiv-1 gp160
  sequence features.
\newblock \emph{PLoS Comput Biol}, 15\penalty0 (4):\penalty0 e1006952, 2019.

\bibitem[Pan et~al.(2019)Pan, Wang, Xiao, and Zhu]{Pan2019}
W.~Pan, X.~Wang, W.~Xiao, and H.~Zhu.
\newblock {A Generic Sure Independence Screening Procedure}.
\newblock \emph{Journal of the American Statistical Association}, 114\penalty0
  (526):\penalty0 928--937, 2019.

\bibitem[Pfanzagl(1982)]{Pfanzagl1982}
J.~Pfanzagl.
\newblock Contributions to a general asymptotic statistical theory.
\newblock \emph{Lecture Notes in Statistics}, 13, 1982.

\bibitem[Pfanzagl(1990)]{Pfanzagl1990}
J.~Pfanzagl.
\newblock \emph{Estimation in Semiparametric Models}.
\newblock Springer, 1990.

\bibitem[Ritov(1990)]{Ritov1990}
Y.~Ritov.
\newblock {Estimation in linear regression with censored data}.
\newblock \emph{Annals of Statistics}, 18:\penalty0 303--328, 1990.

\bibitem[Rosenwald et~al.(2002)Rosenwald, Wright, Chan, Connors, Campo, Fisher,
  Gascoyne, Muller-Hermelink, Smeland, Giltnane, Hurt, Zhao, Averett, Yang,
  Wilson, Jaffe, Simon, Klausner, Powell, Duffey, Longo, Greiner, Weisenburger,
  Sanger, Dave, Lynch, Vose, Armitage, Montserrat, López-Guillermo, Grogan,
  Miller, LeBlanc, Ott, Kvaloy, Delabie, Holte, Krajci, Stokke, and
  Staudt]{Rosenwald2002}
A.~Rosenwald, G.~Wright, W.~C. Chan, J.~M. Connors, E.~Campo, R.~I. Fisher,
  R.~D. Gascoyne, H.~K. Muller-Hermelink, E.~B. Smeland, J.~M. Giltnane, E.~M.
  Hurt, H.~Zhao, L.~Averett, L.~Yang, W.~H. Wilson, E.~S. Jaffe, R.~Simon,
  R.~D. Klausner, J.~Powell, P.~L. Duffey, D.~L. Longo, T.~C. Greiner, D.~D.
  Weisenburger, W.~G. Sanger, B.~J. Dave, J.~C. Lynch, J.~Vose, J.~O. Armitage,
  E.~Montserrat, A.~López-Guillermo, T.~M. Grogan, T.~P. Miller, M.~LeBlanc,
  G.~Ott, S.~Kvaloy, J.~Delabie, H.~Holte, P.~Krajci, T.~Stokke, and L.~M.
  Staudt.
\newblock {The use of molecular profiling to predict survival after
  chemotherapy for diffuse large-B-cell lymphoma.}
\newblock \emph{The {N}ew {E}ngland {J}ournal of {M}edicine}, 346:\penalty0
  1937--1947, 2002.

\bibitem[Sinnott and Cai(2016)]{Sinnott2016}
J.~A. Sinnott and T.~Cai.
\newblock {Inference for survival prediction under the regularized Cox model}.
\newblock \emph{Biostatistics}, 17:\penalty0 692--707, 2016.

\bibitem[Smola et~al.(2006)Smola, Gretton, and Borgwardt]{smola2006maximum}
A.~J. Smola, A.~Gretton, and K.~Borgwardt.
\newblock Maximum mean discrepancy.
\newblock In \emph{13th International Conference, ICONIP 2006, Hong Kong,
  China, October 3-6, 2006: Proceedings}, 2006.

\bibitem[Song et~al.(2014)Song, Lu, Ma, and {Jessie Jeng}]{Song2014}
R.~Song, W.~Lu, S.~Ma, and X.~{Jessie Jeng}.
\newblock {Censored rank independence screening for high-dimensional survival
  data}.
\newblock \emph{Biometrika}, 101:\penalty0 799--814, 2014.

\bibitem[Stute and Wang(1993)]{Stute1993}
W.~Stute and J.-L. Wang.
\newblock {The strong law under random censorship}.
\newblock \emph{The Annals of Statistics}, 21:\penalty0 1591--1607, 1993.

\bibitem[Sz{\'e}kely et~al.(2007)Sz{\'e}kely, Rizzo, Bakirov,
  et~al.]{szekely2007measuring}
G.~J. Sz{\'e}kely, M.~L. Rizzo, N.~K. Bakirov, et~al.
\newblock Measuring and testing dependence by correlation of distances.
\newblock \emph{The Annals of Statistics}, 35\penalty0 (6):\penalty0
  2769--2794, 2007.

\bibitem[Taylor and Tibshirani(2018)]{Taylor2017}
J.~Taylor and R.~Tibshirani.
\newblock Post-selection inference for $\ell_1$-penalized likelihood models.
\newblock \emph{Canadian Journal of Statistics}, 46:\penalty0 41--61, 2018.

\bibitem[Tibshirani(1997)]{Tibshirani1997}
R.~Tibshirani.
\newblock {The lasso method for variable selection in the Cox model.}
\newblock \emph{Statistics in Medicine}, 16:\penalty0 385--395, 1997.

\bibitem[Tsiatis(1990)]{Tsiatis1990}
A.~A. Tsiatis.
\newblock {Estimating regression parameters using linear rank tests for
  censored data}.
\newblock \emph{The Annals of Statistics}, 18:\penalty0 354--372, 1990.

\bibitem[van~de Geer(1995)]{vandeGeer1995}
S.~van~de Geer.
\newblock {Exponential inequalities for martingales with application to maximum
  likelihood estimation for counting processes}.
\newblock \emph{The Annals of Statistics}, 23\penalty0 (5):\penalty0
  1779--1801, 1995.

\bibitem[van~der Laan and Hubbard(1998)]{vanderLaan1998}
M.~J. van~der Laan and A.~E. Hubbard.
\newblock {Locally efficient estimation of the survival distribution with
  right-censored data and covariates when collection of data is delayed}.
\newblock \emph{Biometrika}, 85\penalty0 (4):\penalty0 771--783, 1998.

\bibitem[van~der Laan and Robins(2003)]{vanderLaan&Robins2003}
M.~J. van~der Laan and J.~M. Robins.
\newblock \emph{Unified Methods for Censored Longitudinal Data and Causality}.
\newblock Springer, New York, NY, USA, 2003.

\bibitem[van~der Laan et~al.(2000)van~der Laan, Gill, and
  Robins]{vanderLaanetal2000}
M.~J. van~der Laan, R.~D. Gill, and J.~M. Robins.
\newblock Locally efficient estimation in censored data models: Theory and
  examples.
\newblock Technical report, Division of Biostatistics, University of
  California, Berkeley, 2000.

\bibitem[van~der Vaart(1998)]{Vaart1998}
A.~W. van~der Vaart.
\newblock \emph{{Asymptotic Statistics}}.
\newblock Cambridge University Press, 1998.

\bibitem[van~der Vaart and Wellner(1996)]{Vaart1996}
A.~W. van~der Vaart and J.~A. Wellner.
\newblock \emph{{Weak Convergence and Empirical Processes with Applications to
  Statistics}}.
\newblock Springer, New York, NY, USA, 1996.

\bibitem[van~der Vaart and Wellner(2007)]{vanderVaart&Wellner2007}
A.~W. van~der Vaart and J.~A.~W. Wellner.
\newblock {Empirical processes indexed by estimated functions}.
\newblock \emph{IMS Lecture Notes–Monograph Series Asymptotics: Particles,
  Processes and Inverse Problems}, 55:\penalty0 234--252, 2007.

\bibitem[Wellner(2007)]{Wellner2007}
J.~A. Wellner.
\newblock {On an exponential bound for the Kaplan-Meier estimator}.
\newblock \emph{Lifetime Data Analysis}, 13\penalty0 (4):\penalty0 481--496,
  2007.

\bibitem[Whitney et~al.(2019)Whitney, Shojaie, and Carone]{whitney2019comment}
D.~Whitney, A.~Shojaie, and M.~Carone.
\newblock Comment: Models as (deliberate) approximations.
\newblock \emph{Statistical Science}, 34\penalty0 (4):\penalty0 591--598, 2019.

\bibitem[Wu(2012)]{Wu2012}
Y.~Wu.
\newblock {Elastic net for Cox's proportional hazards model with a solution
  path algorithm.}
\newblock \emph{Statistica Sinica}, 22:\penalty0 271--294, 2012.

\bibitem[Xia et~al.(2019)Xia, Li, and Fu]{Xia2019}
X.~Xia, J.~Li, and B.~Fu.
\newblock Conditional quantile correlation learning for ultrahigh dimensional
  varying coefficient models and its application in survival analysis.
\newblock \emph{Statistica Sinica}, 29\penalty0 (2):\penalty0 645--669, 2019.

\bibitem[Ying(1993)]{Ying1993}
Z.~Ying.
\newblock {A large sample study of rank estimation for censored regression
  data}.
\newblock \emph{The Annals of Statistics}, 21:\penalty0 76--99, 1993.

\bibitem[Yoon et~al.(2015)Yoon, Macke, West, Foley, Bjorkman, Korber, and
  et~al.]{Yoon2015}
H.~Yoon, J.~Macke, A.~J. West, B.~Foley, P.~Bjorkman, B.~Korber, and et~al.
\newblock Catnap: a tool to compile, analyze and tally neutralizing antibody
  panels.
\newblock \emph{Nucleic Acids Res.}, 43\penalty0 (W213–-W219), 2015.

\bibitem[{Yu} et~al.(2021){Yu}, {Bradic}, and {Samworth}]{Yu2021}
Y.~{Yu}, J.~{Bradic}, and R.~J. {Samworth}.
\newblock Confidence intervals for high-dimensional cox models.
\newblock \emph{Statistica Sinica}, 31:\penalty0 243--267, 2021.

\bibitem[Zhang and Lu(2007)]{Zhang2007}
H.~H. Zhang and W.~Lu.
\newblock {Adaptive Lasso for Cox's proportional hazards model}.
\newblock \emph{Biometrika}, 94:\penalty0 691--703, 2007.

\bibitem[Zhao and Li(2012)]{Zhao2012}
S.~D. Zhao and Y.~Li.
\newblock {Principled sure independence screening for Cox models with
  ultra-high dimensional covariates.}
\newblock \emph{Journal of Multivariate Analysis}, 105:\penalty0 397--411,
  2012.

\bibitem[Zhao and Li(2014)]{Zhao2014}
S.~D. Zhao and Y.~Li.
\newblock {Score test variable screening.}
\newblock \emph{Biometrics}, 70:\penalty0 862--871, 2014.

\bibitem[Zhong et~al.(2015)Zhong, Hu, and Li]{Zhong2015}
P.-S. Zhong, T.~Hu, and J.~Li.
\newblock {Tests for coefficients in high-dimensional additive hazard models}.
\newblock \emph{Scandinavian Journal of Statistics}, 42:\penalty0 649--664,
  2015.

\end{thebibliography}
\end{document}